\newtheorem{thm}{Theorem}[section]
\newtheorem{prop}[thm]{Proposition}
\newtheorem{defn}[thm]{Definition}
\newtheorem{lemma}[thm]{Lemma}
\newtheorem{cor}[thm]{Corollary}
\newtheorem{remark}[thm]{Remark}
\newtheorem{example}[thm]{Example}
\newcommand{\bmb}{\left( \begin{array}{rr}}
\newcommand{\enm}{\end{array}\right)}
\newcommand{\cM}{\mathcal M}
\newcommand{\wy}{\widehat{y}}
\newcommand{\cP}{\mathcal P}
\newcommand{\cA}{\mathcal A}
\newcommand{\bZ}{{\mathbf Z}}
\newcommand{\bP}{{\mathbf P}}
\newcommand{\Z}{{\mathbb Z}}
\newcommand{\bF}{{\mathbf F}}
\newcommand{\bG}{{\mathbf G}}
\newcommand{\bT}{{\mathbf T}}
\newcommand{\ba}{{\mathbf a}}
\newcommand{\bc}{{\mathbf c}}
\newcommand{\bd}{{\mathbf d}}
\newcommand{\be}{{\mathbf e}}
\newcommand{\bbf}{{\mathbf f}}
\newcommand{\bb}{{\mathbf b}}
\newcommand{\bm}{{\mathbf m}}
\newcommand{\bx}{{\mathbf x}}
\newcommand{\by}{{\mathbf y}}
\newcommand{\bu}{{\mathbf u}}
\newcommand{\bw}{{\mathbf w}}
\newcommand{\bR}{{\mathbf R}}
\newcommand{\bS}{{\mathbf S}}
\newcommand{\bX}{{\mathbf X}}
\newcommand{\bJ}{{\mathbf J}}
\newcommand{\bC}{{\mathbf C}}
\newcommand{\bK}{{\mathbf K}}
\newcommand{\bL}{{\mathbf L}}
\newcommand{\bz}{{\mathbf z}}
\newcommand{\bbw}{{\mathbf w}}
\newcommand{\bW}{{\mathbf W}}
\newcommand{\bone}{{\mathbf 1}}
\newcommand{\al}{{\alpha}}
\numberwithin{equation}{section}
\begin{document}
\title{Noncommutative integrability, paths and quasi-determinants}
\author{Philippe Di Francesco} 
\address{Department of Mathematics, University of Michigan,
530 Church Street, Ann Arbor, MI 48190, USA
and Institut de Physique Th\'eorique du Commissariat \`a l'Energie Atomique, 
Unit\'e de Recherche associ\'ee du CNRS,
CEA Saclay/IPhT/Bat 774, F-91191 Gif sur Yvette Cedex, 
FRANCE. e-mail: philippe.di-francesco@cea.fr}

\author{Rinat Kedem}
\address{Department of Mathematics, University of Illinois MC-382, Urbana, IL 61821, U.S.A. e-mail: rinat@illinois.edu}
\date{June 2010}
\begin{abstract}
In previous work, we showed that the solution of certain systems of discrete integrable equations, 
notably $Q$ and $T$-systems, is given in terms of partition functions of positively weighted paths,
thereby proving the positive Laurent phenomenon of Fomin and Zelevinsky for these cases. 
This method of solution is amenable to generalization to non-commutative weighted paths. 
Under certain circumstances, these describe solutions of discrete evolution equations in 
non-commutative variables: Examples are the corresponding quantum cluster 
algebras \cite{BZ}, the Kontsevich evolution \cite{DFK09b} and the $T$-systems 
themselves \cite{DFK09a}. In this paper, we formulate certain non-commutative 
integrable evolutions by considering paths with non-commutative weights, together with 
an evolution of the weights that reduces to cluster algebra mutations in the 
commutative limit. The general weights are expressed as Laurent monomials
of quasi-determinants of path partition functions, allowing for a non-commutative 
version of the positive Laurent phenomenon.
We apply this construction to the known systems, and obtain Laurent positivity results for
their solutions in terms of initial data.
\end{abstract}

\maketitle
\tableofcontents

\section{Introduction}

\subsection{Discrete integrable systems and positivity}
Discrete integrable systems are evolution equations in a discrete time
variable, which possess sufficiently many conserved quantities
(discrete integrals of motion) so that they can be solved explicitly in terms of
some discrete Cauchy initial data.

Cluster algebra mutations \cite{FZI} can be viewed as discrete
evolution equations for certain commutative variables, where at each
step, the evolution
is a rational transformation.
The evolution is known to result in variables which are Laurent polynomials as
functions of the initial data, even though the evolution equations are
rational transformations \cite{FZLaurent}.  

Cluster algebras are algebras over commutative variables (the cluster
variables) in which sets of variables (the clusters) are associated
with the nodes of a regular $n$-tree, and cluster variables on
connected nodes are related by an evolution called a mutation.  One
can choose a cluster at any fixed node to be the initial data for the
evolution. Fomin and Zelevinsky conjectured that in terms of any
initial data, the cluster variables at any node are positive Laurent
polynomials \cite{FZI}. This is known as the positivity conjecture.

The $A_r$ $Q$-systems and $T$-systems are discrete integrable
evolution equations which are a subset of a cluster algebra structure
\cite{Ke07,DFK08}. These systems were
solved, and thus their positivity property proved, in \cite{DFK09},
\cite{DFK09a} (for the $T$-systems, see \cite{DFT} for generic boundary
conditions).  The approach of \cite{DFK09} and \cite{DFK09a} involves
the interpretation of the solutions of the systems as partition
functions for paths on some discrete target space, with weights
explicitly related to the (discrete Cauchy) initial data of the
evolution equations.  

The $T$-system can be considered as a particular {\it non-commutative}
version of the $Q$-system, in which we adjoin a non-commuting element
to the commutative algebra generated by $Q$-system solutions \cite{DFK09a}.  
The $T$-system equations are matrix elements of the cluster algebra
mutations twisted by the non-commutative element.  The solutions are
then expressed in terms of partition functions for paths as for the
$Q$-system, but with non-commutative weights.

In all these examples, the positive Laurent
property follows from the fact that the weights are explicit positive
Laurent monomials of the initial data.

The aim of the present paper is to formalize this underlying path
structure in the more general case of arbitrary non-commutative
weights, with the idea of defining non-commutative (integrable)
cluster algebra mutations with
a built-in non-commutative
positive Laurent property.

We note that there exists an example of a non-commutative cluster
algebra with these properties: the quantum cluster algebra
\cite{BZ}, where the cluster variables $q$-commute. This is the mildest
form of non-commutativity, and in this case, much more can be said
about the structure of the solutions.

\subsection{The $A_r$ $T$-system and $Q$-system}
The $A_r$ $T$-system is a system of discrete evolution equations for
the commuting variables $\{T_{i,j,k}\}$. It can be written in the
following form:\footnote{The relative minus sign which appears in the
  original version \cite{BazResh,KNS} is normalized away in this paper.}
\begin{equation}\label{Tsystem}
T_{i,j,k+1}T_{i,j,k-1}=T_{i,j+1,k}T_{i,j-1,k}+T_{i+1,j,k}T_{i-1,j,k}, \quad 1\leq i \leq r, j,k\in \Z,
\end{equation}
with the convention that $T_{0,j,k}=T_{r+1,j,k}=1$.

Equation \eqref{Tsystem} (with a sign change on the right-hand side) was first introduced \cite{BazResh,KNS} in the
context of the generalized Heisenberg spin chain in statistical
mechanics. With appropriate boundary conditions, it is the ``fusion relation'' satisfied by its transfer
matrices. Its solutions are the
$q$-characters of $U_q(\mathfrak{sl}_{r+1})$  \cite{FR,Nakajima}.

A closely related system is the $A_r$ $Q$-system \cite{KR},
\begin{equation}\label{Qsystem}
R_{i,k+1}R_{i,k-1}=R_{i,k}R_{i,k}+R_{i+1,k}R_{i-1,k},\quad 1\leq i \leq r, k\in \Z,
\end{equation}
with $R_{0,k}=R_{r+1,k}=1$.  
This system is obtained from the $T$-system \eqref{Tsystem} by
dropping the index $j$, for example, by considering solutions
of \eqref{Tsystem} which are $2$-periodic in $j$,
$T_{i,j+2,k}=T_{i,j,k}$ for all $i,j,k$. Then $R_{i,k}=T_{i,i+k\, {\rm
    mod}\, 2,k}$ is a solution to \eqref{Qsystem}.

\begin{remark} Given the appropriate boundary conditions
($R_{i,0} = (-1)^{\lfloor i/2\rfloor} $) this system is satisfied by
the characters $Q_{i,n}=(-1)^{\lfloor i/2\rfloor}R_{i,n}$ ($n>0$) of
the irreducible $A_r$-module with special highest weights which are
multiples of one of the fundamental weights. Using Equation
\eqref{Qsystem}, these are expressed as polynomials in the characters
of the fundamental modules, identified as $\{Q_{i,1}:\ 1\leq i \leq r\}$.
\end{remark}

In reference to the origin of the equation, we refer to Equation
\eqref{Qsystem} as the $A_r$ $Q$-system throughout this paper, without
imposing this special boundary condition.

To solve such a system means to express the variables for any $n$ in
terms of the initial data set. For example, valid initial data sets
for the $Q$-system are sets of $2r$ variables
$\{Q_{i,m_i},Q_{i,m_{i}+1}: 1\leq i \leq r\}$ such that
$|m_i-m_{i+1}|\leq 1$. For fixed values of $(i,k)$ or $(i,j,k)$,
Equations \eqref{Tsystem},\eqref{Qsystem} are mutations of cluster
variables in a cluster algebra.  We will therefore investigate the
systems from the point of view of local mutations of their initial
data.

\subsection{Example: Commutative and non-commutative $A_1$ $Q$-systems}
\label{cononcosec}

The simplest case of the $Q$-system occurs for $r=1$. This example was
previously considered from a different point of view by several
authors \cite{CZ,SZ,PM,FV}.
Let us recall
the solutions of both the commutative and non-commutative versions of the
$A_1$ $Q$-system \cite{DFK3,DFK09b}. We wish to stress the similarity
of their path solutions, which relies on the existence of a sufficient
number of conserved quantities and a linear recursion relation
involving these.

Consider the recursion relation for the commutative variables $\{R_{n}\}$,
\begin{equation} \label{aoneQsystem}
R_{n+1}R_{n-1}=(R_n)^2+1
\end{equation}
The system has a manifest translational invariance. We thus compute
the solution $R_n$, $n\geq 0$ in terms of the initial data
$(R_0,R_1)$.

The system is integrable: The quantity $c={R_{n+1}\over R_n}+{1\over
  R_nR_{n+1}}+{R_n\over R_{n+1}}$ is independent of $n$. Moreover, the
solutions of \eqref{aoneQsystem} obey the linear recursion relation
$R_{n+1}-cR_n+R_{n-1}=0$.  Let 
$$y_1=R_1/R_0,\ 
y_2=1/(R_0R_1),\ y_3=1/y_1=R_0/R_1.$$ 
and define the
generating function
\begin{equation}
\label{cfaoneco}
F_0(t)=\sum_{n=0}^\infty t^n {R_n\over R_0}= {1\over 1- {ty_1 \over 1-{t y_2\over 1-ty_3}}}.
\end{equation}
This is the partition function for weighted paths on the integer segment
$[1,4]$, from $1$ to $1$, and with steps $\pm 1$, with weights $1$ for step
$i\to i+1$ and $ty_i$ for step $i+1\to i$, $i=1,2,3$. As mentioned above,
this interpretation makes positivity manifest, as the $y_i$'s are
explicit Laurent monomials of the initial data $(R_0,R_1)$.

The non-commutative $A_1$ $Q$-system is in a class of
wall crossing formulas introduced by
Kontsevich, and involves the variables
$\bR_n\in\cA$, a skew field of rational fractions in the variables
$\bR_0^{\pm 1},\bR_1^{\pm1}$. It can be written as \cite{DFK09b} 
\begin{equation} \label{aoneNCQsystem}
\bR_{n+1}(\bR_n)^{-1}\bR_{n-1}=\bR_n+(\bR_n)^{-1}
\end{equation}
The goal is to express $\bR_n$, $n\geq 0$ in terms of the
initial data $(\bR_0,\bR_1)$.

The system is integrable, in the sense that there are two
conserved quantities \cite{DFK09b}, independent of $n$:
\begin{equation*}
\bC=(\bR_{n+1})^{-1}\bR_n\bR_{n+1}(\bR_n)^{-1}, \quad 
\bK=\bR_{n+1}(\bR_n)^{-1}+(\bR_{n+1})^{-1}(\bR_n)^{-1}+(\bR_{n+1})^{-1}\bR_n
\end{equation*}
and the solutions to \eqref{aoneQsystem} obey the linear recursion
relation $\bR_{n+1}-\bK\, \bR_n+\bC\bR_{n-1}=0$. Define
$\by_1=\bR_1(\bR_0)^{-1}$, $\by_2=(\bR_{1})^{-1}(\bR_0)^{-1}$ and
$y_3=(\bR_{1})^{-1}\bR_0$, with $\by_3\by_1=\bC$. Then
the generating function
\begin{equation}
\label{cfaonenc}
\bF_0(t)=\sum_{n=0}^\infty t^n \bR_n(\bR_0)^{-1}= 
\left( 1- \big( 1- (1-t \by_3)^{-1}t\by_2\big)^{-1}t\by_1\right)^{-1}.
\end{equation}
This can be interpreted as the partition function of weighted paths on
the integer segment $[1,4]$, from $1$ to $1$, and with steps $\pm 1$,
with non-commutative weights $1$ for $i\to i+1$ and $t\by_i$ for
$i+1\to i$, $i=1,2,3$. In this case, the weight of a path is the
product of step weights, ordered according to the order in
which they are taken. 

We see that the non-commutative $A_1$ $Q$-system, despite its
complexity, has essentially the {\it same} solution as the commutative
one: it is expressed in terms of paths on the same graph, with
similar but non-commutative weights related to the initial data, and
with the condition that the order in which weights are multiplied is
that in which the corresponding steps are taken.

We recall that in the commutative case, the mutation of initial data in
terms of which $R_n$ is expressed
may be implemented via local {\it
  rearrangements} of the finite continued fraction expression
\eqref{cfaoneco} for $F_0(t)$ \cite{DFK3}. This property was shown to
carry over in the non-commutative case \cite{DFK09b}, where mutations of
the initial data are obtained via local non-commutative rearrangements
of the non-commutative finite continued fraction expression
\eqref{cfaonenc} for $\bF_0(t)$.

\subsection{Plan of the paper}

The example above suggests the following strategy for exploration
of the non-commutative evolutions.
Instead of starting from an evolution equation, we start from the ``solution", namely some
path models on the {\it same} target graphs as those considered in the solution of commutative
systems, but with non-commutative weights. It turns out that the partition functions of these paths
have a structure similar to that observed in the commutative case. In particular, we will define
non-commutative mutations connecting them via non-commutative finite continued fraction
rearrangements. Moreover, by investigating
the relation between the coefficients of these continued fractions
(i.e. the weights of the path models) and the coefficients of their series expansion, we 
will be led naturally to express results in terms of the quasi-determinants introduced 
in \cite{GGRW}. Our main result is the construction of a chain
of path models with non-commutative weights, all related via mutations, and whose
partition functions involve a basic set of non-commuting variables, 
in a manifestly positive Laurent polynomial way.

The paper is organized as follows. 

We first review in Section \ref{sectiontwo} the solution of the
(commutative) $A_r$ $Q$-system, in a new path formulation related to that of \cite{DFK3} (the equivalence is described in detail in Appendix \ref{appendixa}). Each initial data, and therefore
each graph and associated set of weights is coded by a Motzkin path of length $r-1$.
The corresponding partition functions of weighted paths display
a simple finite continued fraction structure, and the mutations are shown to be expressed through
local rearrangements.

Section \ref{section3} is devoted to the non-commutative generalization of this structure, and involves
paths on the same graphs but with non-commutative weights. Mutations are shown to be expressed
again as some non-commutative local rearrangements of the finite continued fraction expressions
for the partition functions of paths, giving rise to simple evolution equations 
for the corresponding non-commutative weights. 
We then define non-commutative variables by discrete 
quasi-Wronskian determinants of the coefficients of the series expansion of the partition functions.
These variables are shown to satisfy an evolution equation (the discrete non-commutative Hirota equation,
Theorem \ref{dischironc})
playing the role of the $Q$- or $T$-system in the non-commutative setting. Finally, we express the weights
of all the mutated path models in terms of these discrete quasi-Wronskian variables (Theorem \ref{ncweightcalc}),
as manifestly positive Laurent monomials.

Sections \ref{section4} and \ref{section5} are applications respectively to the $T$-system, as an example of
non-commutative $Q$-system, and to the quantum $Q$-system, in relation with the corresponding
quantum cluster algebra \cite{BZ}. In both cases, we compute the discrete quasi-Wronskian variables in terms of
the initial data, and obtain Laurent positivity theorems for the solutions of the corresponding systems.

In Section 6, we give some further examples of non-commutative evolutions which have a Laurent positivity property. Section 7 considers the problem of generalizing of the Lindstr\"om-Gessel-Viennot theorem to the non-commutative setting. We conclude with some remarks about the possible generalization of cluster algebra relations and variables to the generic non-commutative setting.

\medskip
\noindent{\bf Acknowledgments.} We would like to thank S. Fomin for discussions, and L. Faddeev
for pointing out Refs.\cite{FKV,FV} to us.
We thank the Mathematisches ForschungsInstitut Oberwolfach, Germany, Research in Pairs program
(Aug. 2009) during which this work was initiated.
PDF also thanks S. Fomin for hospitality 
at the Dept. of Mathematics of the University of Michigan, Ann Arbor (Spring 2010).
RK thanks the Institut Henri Poincar\'e, Paris, France, for hospitality during the
semester ``Statistical Physics, Combinatorics and Probability" (Fall 2009), 
and the Institut de Physique Th\'eorique du CEA Saclay, France.
PDF received partial support from the ANR Grant GranMa, the
ENIGMA research training network MRTN-CT-2004-5652,
and the ESF program MISGAM. RK is supported by NSF grant DMS-0802511.

\section{Continued fraction solutions to Q-systems}\label{sectiontwo}
\subsection{The $A_r$ $Q$-system}
We consider the set of discrete evolution equations
\begin{equation}\label{Qsys}
  R_{i,k+1} R_{i,k-1} = R_{i,k}^2 + R_{i-1,k}R_{i+1,k},
\quad 1\leq i\leq r,\ k\in Z
\end{equation}
with the boundary conditions $R_{0,k}=R_{r+1,k}=1$ for all
$k\in \Z$.

\subsection{Initial data sets}
For fixed $\al$ and $n$, the variable $R_{\al,n+1}$ is a rational
function of $R_{\beta,n}$ with $|\al-\beta|\leq 1$ and
$R_{\al,n-1}$. Therefore the recursion relation \eqref{Qsys} has a
solution for a given set of initial data, which must include the $2r$
variables $R(\mathbf m)=\{R_{\al,m_\al},R_{\al,m_{\al}+1} : 1\leq \al
\leq r\}$ where the sequence $\bm =
(m_1,...,m_r)$ is such that $|m_i-m_{i+1}|\leq 1$. That is, initial data sets
of the $Q$-system are in bijection with Motzkin paths of length $r-1$.

Any variable $R_{\al,k}$ which is a solution of \eqref{Qsys} can then
be expressed as a function of the initial data set $R(\bm)$. Moreover,
due to translational invariance of Eq.\eqref{Qsys}, the expression
for $R_{\al,n}$ as a function of $R(\bm)$ is the {\it same} as that of
$R_{\al,n+k}$ as a function of $R(k+\bm)$, with the translated Motzkin
path $k+\bm=(k+m_1,...,k+m_r)$. We may therefore restrict ourselves to
a fundamental domain of Motzkin paths modulo translations, $\cM=\{\bm\vert
{\rm Min}(m_i)=0\}$.

In \cite{Ke07}, we showed that the relations \eqref{Qsys} are
mutations in a coefficient-free cluster algebra, defined from the seed
cluster variables $(\bx_0,B)$, where 
$$\bx_0=(R_{1,0},...,R_{r,0};R_{1,1},...,R_{1,r}),\quad
B=\bmb 0 & -C \\ C & 0 \enm,$$ where $C$ is the Cartan matrix of the Lie
algebra $A_r$. This seed corresponds to the fundamental Motzkin path,
$\bm = (0,...,0)$.  

Not all mutations in the cluster algebra are of
the form \eqref{Qsys}. However, all the variables $\{R_{\al,n}:\ 1\leq
\al\leq r, \ n\in\Z\}$ are contained in the subalgebra made up of the
union of seeds parametrized by the set of Motzkin paths of length
$r-1$. Thus, we consider the corresponding sub-cluster algebra, restricting
to the nodes of the Cayley tree that are labeled by Motzkin paths, and to the mutations
that connect them.

We note that mutations in the sub-cluster algebra under consideration
here have the following property: $\mu_i$ is given by one of the
$Q$-system equations if and only if it maps $R(\bm)$ to $R(\bm')$,
where $m_j'=m_j \pm \delta_{i,j}$. That is, it maps a Motzkin path to
another Motzkin path which differs from it in only one node.

\subsection{Path partition functions}

\subsubsection{Positivity from weighted paths on graphs}

In \cite{DFK3}, we first eliminated from the $Q$-system \eqref{Qsys}
the variables $R_{i,k}$ $i>1$, $k\in \Z$,
in terms of the $R_{1,n}$, $n\in \Z$, by noting that the discrete Wronskians
\begin{equation}\label{discwronskco}
W_{i,n}=\det_{1\leq a,b\leq i} \left( R_{1,n+i+1-a-b} \right) \end{equation}
actually satisfy the same recursion relation \eqref{Qsys}, as a consequence of
the Desnanot-Jacobi identity for the minors of any given matrix (also a particular case
of the Pl\"ucker relations, see \cite{DFK3} for details). As moreover $W_{0,n}=1$
and $W_{1,n}=R_{1,n}$, we conclude that $W_{i,n}=R_{i,n}$ for all $i\geq 0$, $n\in \Z$.

The $Q$-system therefore boils down to a single non-linear recursion relation for $R_{1,n}$,
in the form $W_{r+1,n}=1$.
In Ref.\cite{DFK3}, we explicitly solved this equation, by using its integrability. The latter
implied the existence of a linear recursion relation for the $R_{1,n}$'s, with constant coefficients
(independent of $n$, and expressible explicitly in terms of initial data). 
Exploiting this, we further showed that $R_{1,n}$ could be related to
the partition functions of certain weighted path models on target graphs 
$\Gamma_\bm$ (see definition in
Appendix \ref{appendixa}),
indexed by Motzkin paths $\bm$ of length $r-1$,
the weights being directly expressed as positive Laurent
monomials of the corresponding initial data $R(\bm)$. 

Finally, the quantity $R_{i,n}$,
expressed as the determinant $W_{i,n}$ \eqref{discwronskco}
was shown to also be related to path partition functions,
namely that of collections of $i$ strongly non-intersecting 
paths on the same target graphs $\Gamma_\bm$, with the
same weights. This followed from a slight generalization of the 
Lindstr\"om-Gessel-Viennot theorem \cite{LGV1} \cite{LGV2}.

Moreover, we obtained an explicit expression for the
generating function of the variables $\{R_{1,n}: n\geq 0\}$ as the
partition function of paths on the graph $\Gamma_\bm$, in the form of
an explicit finite continued fraction.

In this section, we present a similar but simpler description in terms
of paths on another, more compact graph $G_\bm$, with related weights.
The equivalence of the two formulations is proved in Appendix
\ref{appendixa}.

\subsubsection{The weighted graphs $G_\bm$}\label{Gmintro}

For fixed $r$, consider the weighted graph $G_\bm$. The graph itself
depends only on $r$. It has $r+1$ vertices labeled $1,...,r+1$ and
oriented edges connecting vertex $i$ to itself, to $i-1$ and to $i+1$.
This is pictured below,
\begin{center}

\psset{unit=2mm,linewidth=.4mm,dimen=middle}
\begin{pspicture}(-10,-2)(10,22)
\psline(5,0)(5,20)
\multips(.25,-2.5)(0,5){5}{\pscircle(2.5,2.5){2}}
\multips(5,0)(0,5){5}{\pscircle*[linecolor=red](0,0){.7}}
\rput(7,0){1}
\rput(7,5){2}
\rput(9,20){$r+1$}
\rput(-5,10){$G_\bm=$}

\end{pspicture}\end{center}
where each edge connecting two different vertices is to be considered
as a doubly oriented edge.  The map $\bm\mapsto \Gamma_\bm\mapsto
G_\bm$ is described in Appendix \ref{appendixa}.

To make the contact between the $G_\bm$ path formulation and that on
$\Gamma_\bm$, we need a few definitions. We start from a collection of $2r+1$
formal variables,
$y(\bm)=(y_1(\bm),...,y_{2r+1}(\bm))$. These will eventually be
identified with the skeleton weights of the graph $\Gamma_\bm$ (see Appendix
\ref{appendixa}).
\begin{defn}\label{weightdefco}
  To $y(\bm)$ we associate bijectively the collection $\widehat{
    y}(\bm)=(\wy_1(\bm),...,\wy_{2r+1}(\bm))$. This is defined according to the
  local structure of the Motzkin path $\bm$ as follows:
\begin{eqnarray}
\wy_{2i}&=& t y_{2i} (t y_{2i+1})^{m_{i+1}-m_{i}},\nonumber\\
\wy_{2i-1} &=& \left\{ \begin{array}{ll}
    t y_{2i-1}-\frac{y_{2i}}{y_{2i+1}},& m_{i+1}=m_i-1;\\
    t y_{2i-1} + t y_{2i}, & m_{i+1} = m_i+1; \\
    t y_{2i-1} ,& \hbox{otherwise}.\end{array} \right.\label{Gweights}
\end{eqnarray}
\end{defn}

\begin{example} The three basic examples are as follows:
\begin{itemize}
\item If $\bm=\bm_0=(0,...,0)$, the fundamental Motzkin path, we have
  $\wy_i=ty_i$, $i=1,2,...,2r+1$.  
\item If $\bm=\bm_1=(0,1,2,...,r-1)$, the strictly
  ascending Motzkin path, we have 
$$\wy_{2i-1}=t(y_{2i-1}+y_{2i}), \quad
  \wy_{2i}=t^2y_{2i}y_{2i+1} (1\leq i \leq r-1),$$ with
  $\wy_{a}=ty_{a}$ for $a\geq 2r-1$.  
\item If $\bm=\bm_2=(r-1,r-2,...,0)$, the strictly
  descending Motzkin path,  
$$\wy_{2i-1}=ty_{2i-1}-{y_{2i}\over
    y_{2i+1}},\quad \wy_{2i}={y_{2i}\over y_{2i+1}}, (1\leq i \leq
  r-1),$$
 with $\wy_{a}=ty_{a}$ for $a\geq 2r-1$.
\end{itemize}
\end{example}

We now attach weights to the edges of the graph $G_\bm$.  The edge
connecting vertex $i$ to vertex $j$ carries a weight $w_{i,j}:=
w_{i,j}(\bm)$, where
$$
w_{i,i+1}=1;\quad w_{i,i}=\wy_{2i-1}(\bm);\quad w_{i+1,i}=\wy_{2i}(\bm),
$$
with $\wy_i(\bm)$ as in \eqref{Gweights}.

\subsubsection{Partition functions of paths on 
$G_\bm$ and continued
  fractions}  
\label{commupf}
Each path $p$ on the graph $G_\bm$ from node $i$ to node $j$ carries a
weight $wt(p)$ which is the product of the weights $w_{i,j}$ of the
edges $i\to j$ traversed by the path. The partition function of paths
from $i$ to $j$ is the sum of the weights of all paths $p\in \mathcal
P_{i,j}(G_\bm)$ which start at vertex $i$ and end at vertex $j$:
$$
Z_{i,j}(G_\bm) = \sum_{p\in\mathcal P_{i,j}(G_\bm)} wt(p).
$$

Once expressed in terms of $\{y_i\}$, the weights of Definition \ref{weightdefco}
introduce terms with negative signs.
The following lemma shows why this does not introduce minus signs in the
generating function of paths.
\begin{lemma}\label{positinvol}
The partition function $Z_{i,j}(G_\bm)$ as a power series in $t$, has
coefficients which are positive Laurent polynomials in $\{y_{i}\}_i$.
\end{lemma}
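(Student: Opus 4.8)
The plan is to prove positivity by a sign-reversing, weight-preserving involution on the signed, ``decorated'' paths contributing to $Z_{i,j}(G_\bm)$. The only source of negative signs is the self-loop weight $\wy_{2v-1}=ty_{2v-1}-\tfrac{y_{2v}}{y_{2v+1}}$ at a \emph{descending} vertex $v$ (one with $m_{v+1}=m_v-1$); every other edge weight is a positive Laurent monomial, since up-steps weigh $1$, down-steps weigh $\wy_{2v}=ty_{2v}(ty_{2v+1})^{m_{v+1}-m_v}$, and the remaining self-loops weigh $ty_{2v-1}$ or $ty_{2v-1}+ty_{2v}$. The crucial algebraic identity is that at a descending vertex the negative part of the self-loop equals minus the weight of the elementary up-down excursion: since $\wy_{2v}=\tfrac{y_{2v}}{y_{2v+1}}$ and $w_{v,v+1}w_{v+1,v}=1\cdot\wy_{2v}$, we have $\wy_{2v-1}=ty_{2v-1}-w_{v,v+1}w_{v+1,v}$.

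First I would split each descending self-loop into its two monomials, replacing $Z_{i,j}(G_\bm)$ by a signed sum over \emph{decorated paths}: paths on $G_\bm$ in which every self-loop step taken at a descending vertex carries a label $+$ (with weight $ty_{2v-1}$) or $-$ (with weight $-\wy_{2v}$). The sign of a decorated path is $(-1)^{\#(-\text{labels})}$ and its magnitude is a positive Laurent monomial, so all negativity is now bookkept by the $-$ labels. Next I would define the involution $\phi$: scanning a decorated path in the order its steps are taken, call a position \emph{flippable} if it is either (A) a self-loop at a descending vertex $v$ carrying a $-$ label, or (B) an up-step $v\to v+1$ immediately followed by a down-step $v+1\to v$ with $v$ descending. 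Let $\phi$ toggle the type of the \emph{first} flippable position: A$\to$B replaces the negative self-loop by the excursion $v\to v+1\to v$, and B$\to$A does the reverse. By the identity above this preserves the magnitude of the weight and the power of $t$ (both events contribute $\tfrac{y_{2v}}{y_{2v+1}}$, of $t$-degree $0$) while flipping the parity of the number of $-$ labels, hence the sign. The fixed points are the decorated paths with no flippable position; these carry no $-$ label, so their sign is $+1$ and their weight is a positive Laurent monomial, and $Z_{i,j}(G_\bm)$ equals the sum of these positive weights.

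I expect two points to demand care. The first is checking that $\phi$ is a genuine involution: one must verify that toggling the first flippable position creates no flippable position strictly to its left (the preceding steps are untouched) and does not destroy flippability at that position, so that the ``first flippable position'' is preserved and $\phi^2=\mathrm{id}$; the one-step shift caused by inserting or deleting a step must be tracked here.

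The second, and I think the main obstacle, is \textbf{finiteness}. Because an elementary up-down excursion and a negatively labelled self-loop at a descending vertex both have $t$-degree $0$, infinitely many paths contribute to each coefficient of $t^n$, so a priori that coefficient is an infinite signed sum. I would argue that the surviving fixed points of any fixed $t$-degree are finite in number: a fixed point contains no elementary up-down excursion at a descending vertex and no $-$ label, so every self-loop and every non-descending down-step it uses has $t$-degree $\ge 1$, and an induction on height shows that every maximal excursion away from a vertex must traverse such a $t$-carrying step (the vertices $r,r+1$ furnish the base case, since $\wy_{2r-1}=ty_{2r-1}$ and $\wy_{2r}=ty_{2r}$ make oscillation there cost $t$). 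This bounds the length of a fixed point of $t$-degree $n$ and simultaneously shows that $Z_{i,j}(G_\bm)$ is a well-defined power series whose coefficients are the asserted positive Laurent polynomials; alternatively one may invoke the continued-fraction form of $Z_{i,j}(G_\bm)$, whose successive denominators have invertible constant term, or the equivalence with the manifestly positive $\Gamma_\bm$ model of Appendix \ref{appendixa}.
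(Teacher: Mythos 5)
Your proof is correct and is essentially the paper's own argument: the paper likewise splits each negative self-loop at a vertex $i\in I_-$ into two loops $\ell_i^{+}$ (weight $ty_{2i-1}$) and $\ell_i^{-}$ (weight $-y_{2i}/y_{2i+1}$) and cancels via the sign-reversing involution that toggles the \emph{first} occurrence of a step through $\ell_i^{-}$ or of the excursion $i\to i+1\to i$, leaving only manifestly positive fixed points. Your closing discussion of finiteness --- that the $t$-degree-zero excursions at descending vertices make each coefficient of $t^n$ a priori an infinite signed sum, so one must check that fixed points of a given $t$-degree are finite in number (or fall back on the continued-fraction or $\Gamma_\bm$ formulations) --- is a genuine well-definedness point that the paper's proof passes over silently, but it is a refinement of the same method rather than a different route.
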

\begin{proof}
Let $I_\pm=\{i: m_i=m_{i-1}\pm 1\}$.
The partition function for paths on $G_\bm$ is the same as that on the
graph $G_\bm'$, where each loop at vertex $i$ such that $i\in I_-$
is decomposed into two loops as follows:

\begin{center}
\psset{unit=2mm,linewidth=.4mm,dimen=middle}
\begin{pspicture}(0,3)(10,20)
\psline[linecolor=black](5,10)(5,15)
\psline[linestyle=dotted](5,5)(5,10)
\psline[linestyle=dotted](5,15)(5,20)
\rput(2.75,10){\pscircle(0,0){2}}

\rput(6.5,10){ $i$}
\rput(8,15){ $i+1$}
\rput(4,13){\blue $b$}
\rput(-2.5,10){\blue $a-b$}
\multips(5,10)(0,5){2}{\pscircle*[linecolor=red](0,0){.7}}

\psline{->}(10,11)(13,11)
\psline[linecolor=black](20,10)(20,15)
\psline[linestyle=dotted](20,15)(20,20)
\psline[linestyle=dotted](20,5)(20,10)
\rput(17.75,10){\pscircle(0,0){2}}
\rput(22.25,10){\pscircle(0,0){2}}
\rput(21,10){ $i$}
\rput(23,15){$i+1$}
\multips(20,10)(0,5){2}{\pscircle*[linecolor=red](0,0){.7}}
\rput(14.5,10){\blue $a$}
\rput(19,13){\blue $b$}

\rput(26,10){\blue $-b$}

\end{pspicture}
\end{center}

The corresponding piece of $G'_\bm$ is depicted on the right. In this
case, $a=t y_{2i-1}$ and $b=y_{2i}/y_{2i+1}$. We call the two
corresponding loops
on the right-hand side $\ell_i^+$ (with weight $a$) and $\ell_i^-$
(with weight $-b$). Note
that the weight of $\ell_i^-$ is equal to 
 $-w_{i+1,i}$, as shown in the picture.

We define an involution $\varphi$ on paths $p\in \cP_{i,j}(G_\bm')$ as
follows. 
\begin{itemize}
\item $\varphi(p)=p$ if $p$ does not pass through
any $\ell_i^-$ ($i\in I_-$) nor contains any sequence of steps $i\to i+1\to
i$ ($i\in I_-$);
\item Otherwise, consider $s$, which is either the first step of $p$
  passing through some 
  $\ell_i^-$, or the first sequence of the form $i\to i+1\to i$ ($i\in I_-$) in $p$,
whichever comes first.
The path $\varphi(p)$ is obtained from $p$ by replacing $s$ by 
the other sequence (a step through $\ell_i^-$ by $i\to i+1\to i$ and vice
versa).
\end{itemize} 
Clearly, $\varphi$ is an involution on
$\cP_{i,j}(G_\bm')$. Its effect is to reverse the sign of the weight,
$wt(\varphi(p))=-wt(p)$ if $\varphi(p)\neq p$. There is
therefore a pairwise cancellation of
contributions of paths such that
$p\neq \varphi(p)$. The terms which are left in the partition function
are the fixed points $p=\varphi(p)$ of the involution. These
have no step through any loop $\ell_i^-$ ($i\in I_-$) nor any sequence
$i\to i+1\to i$ ($i\in I_-$).

The remaining weights are all
positive Laurent polynomials of the $y_i$'s, and the lemma follows.
\end{proof}

Given the graph $G_\bm$ with weights $w_{i,j}(\bm)$, the partition
function of paths from vertex $1$ to itself can be written as a finite
continued fraction as follows.
\begin{defn}
The Jacobi-type (finite) continued fraction is defined inductively as
\begin{equation}\label{inducJ}
J^{(r)}(x_1,...,x_{2r+1}) = \frac{1}{1-x_1 - x_2 J^{(r-1)}(x_3,...,x_{2r+1})},
\quad J^{(1)}(x)=\frac{1}{1-x}.
\end{equation}
\end{defn}

\begin{lemma}
  The generating function of weighted paths on $G_\bm$ from and to the
  vertex $1$ can be written as
\begin{equation}
Z_{1,1}(G_\bm) = J^{(r)}(\wy(\bm)).
\end{equation}
\end{lemma}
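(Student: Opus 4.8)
The plan is to prove the identity by a first-return (or transfer-matrix) decomposition of paths, which reduces the partition function to exactly the inductive definition \eqref{inducJ} of the continued fraction. For $1\le i\le r+1$, let $Z^{(i)}$ denote the partition function of weighted paths on $G_\bm$ that start and end at vertex $i$ and never visit a vertex labeled strictly less than $i$. Since vertex $1$ is the lowest vertex of $G_\bm$, every path from $1$ to $1$ automatically stays in $\{1,\dots,r+1\}$, so $Z_{1,1}(G_\bm)=Z^{(1)}$, and it suffices to compute $Z^{(1)}$.

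First I would establish a recursion for $Z^{(i)}$. A nonempty path from $i$ to $i$ staying at heights $\ge i$ factors uniquely into a sequence of irreducible ``blocks,'' where each block is either a single loop at $i$, of weight $w_{i,i}=\wy_{2i-1}$, or an excursion consisting of the up-step $i\to i+1$ (weight $w_{i,i+1}=1$), followed by an arbitrary path from $i+1$ to $i+1$ at heights $\ge i+1$ (contributing $Z^{(i+1)}$), followed by the down-step $i+1\to i$ (weight $w_{i+1,i}=\wy_{2i}$). Because the variables commute, the total weight of a single block is $\wy_{2i-1}+\wy_{2i}\,Z^{(i+1)}$, and summing the geometric series over the number of blocks yields
\begin{equation}
Z^{(i)}=\frac{1}{1-\wy_{2i-1}-\wy_{2i}\,Z^{(i+1)}},\qquad 1\le i\le r,
\end{equation}
with base case $Z^{(r+1)}=\tfrac{1}{1-\wy_{2r+1}}$, since at the top vertex $r+1$ only loops (weight $\wy_{2r+1}$) are available.

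These two facts identify $Z^{(i)}$ with the continued fraction: comparing with \eqref{inducJ}, a downward induction on $i$ gives $Z^{(i)}=J^{(r+1-i)}(\wy_{2i-1},\wy_{2i},\dots,\wy_{2r+1})$, the base case $i=r+1$ being $Z^{(r+1)}=J^{(1)}(\wy_{2r+1})$ and the inductive step being exactly the recursion above matched against the defining relation for $J^{(r+1-i)}$ in terms of $J^{(r-i)}$. Setting $i=1$ then gives $Z_{1,1}(G_\bm)=Z^{(1)}=J^{(r)}(\wy(\bm))$, as claimed.

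The one point requiring care, and the main obstacle, is the legitimacy of the geometric summation over blocks: in the descending case the loop weight $\wy_{2i-1}=ty_{2i-1}-y_{2i}/y_{2i+1}$ and the down-weight $\wy_{2i}=y_{2i}/y_{2i+1}$ carry $t$-independent parts, so the number of paths contributing to a fixed power of $t$ is a priori infinite and the naive sums could diverge. This is precisely resolved by Lemma \ref{positinvol}: after the sign-cancelling involution, the surviving configurations are length-graded so that each coefficient of $t^n$ in $Z_{1,1}(G_\bm)$ is a finite sum, making the series manipulations above well defined in the ring of formal power series in $t$. Equivalently, one may regard $\wy_1,\dots,\wy_{2r+1}$ as formal indeterminates and read both sides as the same rational function, the path side being summed in the completion graded by path length; specialization to the weights of Definition \ref{weightdefco} then finishes the proof.
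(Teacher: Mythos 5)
Your proof is correct and is essentially the paper's own argument: the paper disposes of this lemma with the single line ``By induction,'' and your first-return block decomposition $Z^{(i)}=\bigl(1-\wy_{2i-1}-\wy_{2i}\,Z^{(i+1)}\bigr)^{-1}$ with base case $Z^{(r+1)}=J^{(1)}(\wy_{2r+1})$ is exactly the induction it has in mind, matching the defining recursion \eqref{inducJ} level by level. Your closing remark on well-definedness of the geometric summation when $i\in I_-$ (the $t$-free parts of $\wy_{2i-1},\wy_{2i}$, handled via Lemma \ref{positinvol} or by grading by path length) is a worthwhile precision the paper leaves implicit.
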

\begin{proof} By induction.
\end{proof}
We regard $J^{(r)}(\wy(\bm))$ as a power series in $t$. As we have seen in
Lemma \ref{positinvol}, the coefficients of $t^n$ are positive Laurent
polynomials in $y(\bm)$.

  Alternatively, the positivity of the generating function
  $Z_{1,1}(G_\bm)$ may be recovered by a simple rearrangement of the
  continued fraction $J^{(r)}(\wy(\bm))$. Starting with the bottom term in
  the continued fraction and moving upwards, one iteratively
  rearranges the fraction for each $i\in I_-$ using the identity
\begin{equation}\label{identoneco}
a-{b\over c} +{{b\over c}\over 1-c -u}=a+{b+{b\over c}u\over 1-c-u},
\end{equation}
with $a=ty_{2i-1}$, $b=ty_{2i}$ and $c=ty_{2i+1}$. (Here, $u$ denotes the
already rearranged remainder of the fraction). This eliminates
any negative weights in the fraction, at the cost of
a modification of the branching structure of the fraction (the
fraction $u$
appears in two places), and the introduction of new, positive weights
coming from the term ${b\over c} u$.




The fraction $J^{(r)}(\wy(m))$ can be re-expressed in a {\it canonical}, manifestly
  positive form as follows.  Going though the fraction iteratively
  from bottom to top, whenever $i\in I_-$, 
 apply the identity
  \eqref{identoneco}. Otherwise, if
  $m_i=m_{i-1}+1$,  apply the identity
\begin{equation}\label{identwoco}
a+b +{bc\over 1-c-u}=a+{b\over 1-{c\over 1-u}}
\end{equation}
with $a=ty_{2i-1}$, $b=ty_{2i}$ and $c=t y_{2i+1}$. (Here,
$u$ is the already rearranged remainder of the fraction.)

The resulting rearranged form of $J^{(r)}(\wy(\bm))$ has the property that
all its coefficients are Laurent monomials of $y(\bm)$ with
coefficient $t$. 

We call this form canonical.  
\begin{defn}
  The canonical form of the fraction $J^{(r)}(\wy(\bm))$ is the fraction
  resulting from the use of the rearrangements \eqref{identoneco} and
  \eqref{identwoco}, such that the final fraction has
  coefficients which are all Laurent monomials of $y(\bm)$, with
  coefficient $t$.
\end{defn}

Another important generating function is the partition function of
paths from $1'$ to $1'$ on the graph
$\Gamma_\bm$ of Appendix \ref{appendixa}. It can be written as follows:
\begin{equation}\label{defgeneF}
F_\bm(t)=1 + t y_1(\bm) J^{(r)}(\wy(\bm)).
\end{equation}

\begin{remark}
The canonical form of $F_\bm(t)$ itself (with the identity \eqref{identwoco}
applied also at the top level, with $a=0$, $b=1$ and $c=y_1(\bm)$) is related
directly to the path interpretation of $F_\bm(t)$ as generating
weighted paths on the graph $\Gamma_\bm$ (see Appendix
\ref{appendixa}, Eq.\eqref{resolvent}); while the $y_i$'s are the
skeleton weights of $\Gamma_\bm$, the redundant weights have been
produced by the rearrangements \eqref{identoneco}.
\end{remark}

\begin{example}
The fraction corresponding to the Motzkin path $(0,1,0)$ for $r=3$ reads:
\begin{eqnarray*}
F_{0,1,0}(t)&=&1+t{y_1\over 1-t(y_1+y_2)-t^2 {{y_2 y_3} \over 1-ty_3
+{y_4\over y_5}-{{y_4\over y_5}\over 1-t y_5 -t {y_6\over 1-t y_7}}}}\\
&=&{1\over 1-t {y_1\over 1-t {y_2\over 
1-t {y_3\over 1-t {y_4+{{y_4y_6\over y_5}\over 1-t y_7}\over 1-t y_5-t{y_6\over 1-t y_7}}}}}}
\end{eqnarray*}
Note the occurrence of a new ``redundant" weight $t y_4y_6/y_5$.
\end{example}

\subsection{Mutations}\label{mutacosec}

Without loss of generality we now restrict ourselves to forward 
mutations of $\bm$,
$\mu_i:\ \bm\mapsto \bm'$, with $m'_j=m_j+\delta_{j,i}$. 
We also restrict ourselves to two
special cases:
\begin{itemize}
\item{\rm  Case (i):} $\mu_i:\ (m_{i-1},m_i,m_{i+1})=(a,a,a+1)\mapsto
  (a,a+1,a+1)$
\item{\rm  Case (ii):} $\mu_i: (m_{i-1},m_i,m_{i-1})=(a,a,a)\mapsto
  (a,a+1,a)$. 
\end{itemize}
These two cases are sufficient to obtain any Motzkin path, up to a
global translation by an overall integer,
from the fundamental Motzkin path $\bm_0=(0,0,...,0)$.

We also define the action of a mutation on $\Z(y)$:
\begin{defn} \label{mutationony}
Let $\mu_i(\bm)=\bm'$. Define $y':=y(\bm')$ as a function
  of $y:=y(\bm)$ 
  as follows:
\begin{eqnarray}
{\rm\bf Case}\  {\bf (i)}\ :&& \left\{ \begin{matrix} 
y_{2i-1}' & = & y_{2i-1}+y_{2i} \\
y_{2i}' & = &  y_{2i+1}y_{2i}/(y_{2i-1}+y_{2i})\\
y_{2i+1}' & = & y_{2i+1}y_{2i-1}/( y_{2i-1}+y_{2i})
\end{matrix} \right. \label{caseone}\\
{\rm\bf Case}\  {\bf (ii)}\ :&&\left\{ \begin{matrix} 
y_{2i-1}' & = & y_{2i-1}+y_{2i} \\
y_{2i}' & = & y_{2i+1}y_{2i}/(y_{2i-1}+y_{2i}) \\
y_{2i+1}' & = & y_{2i+1}y_{2i-1}/(y_{2i-1}+y_{2i})\\
y_{2i+2}' & = & y_{2i+2}y_{2i-1}/(y_{2i-1}+y_{2i})
\end{matrix} \right. \label{casetwo}
\end{eqnarray}
while all other weights are left unchanged, 
\end{defn}
We then have the natural definitions of $\wy(\bm')$ in terms of $y(\bm')$,
using Equations \eqref{weightdefco}.

\begin{thm}\label{defymut}
Under the mutation $\mu_i$, the function $J^{(r)}(\wy(\bm))$ is transformed as
\begin{eqnarray}
J^{(r)}(\wy(\bm)) = \left\{ \begin{array}{ll}
J^{(r)}(\wy(\bm')), & i>1; \\
1 + t y_1(\bm') J^{(r)}(\wy(\bm'), & i=1.\end{array}\right.
\end{eqnarray}
\end{thm}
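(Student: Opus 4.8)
The plan is to verify the claimed transformation of the Jacobi continued fraction $J^{(r)}(\wy(\bm))$ under the two mutation cases by a purely local computation on the continued fraction. The key observation is that $J^{(r)}(\wy(\bm))$ is built inductively via \eqref{inducJ}, so that mutating at position $i$ only affects the weights $\wy_{2i-1},\wy_{2i},\wy_{2i+1}$ (and in Case (ii) also $\wy_{2i+2}$). Thus everything below level $2i+1$ (the ``tail'' $u$ of the continued fraction) and everything above level $2i-1$ is untouched, and I only need to check that a bounded-size local block of the fraction is invariant under the combined substitution $y\mapsto y'$ of Definition \ref{mutationony} and the reassignment $\wy\mapsto\wy'$ dictated by Definition \ref{weightdefco}. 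Because the mutation changes the local Motzkin-path profile $(m_{i-1},m_i,m_{i+1})$, the hatted weights $\wy$ on both sides are computed from \emph{different} branches of \eqref{Gweights}, so the first step is to write out explicitly what $\wy_{2i-1},\wy_{2i},\wy_{2i+1}$ (resp.\ primed) are in each case before and after the mutation.

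First I would treat Case (i), where $(a,a,a{+}1)\mapsto(a,a{+}1,a{+}1)$. On the left, $m_{i+1}=m_i+1$ so $\wy_{2i-1}=t(y_{2i-1}+y_{2i})$ and $\wy_{2i}=t^2 y_{2i}y_{2i+1}$, while $\wy_{2i+1}=ty_{2i+1}$ since $m_{i+1}=m_i+1$ makes position $i{+}1$ an ascent-endpoint only if the next profile dictates it — more precisely I would read off the ``otherwise'' clause or the ascent clause according to $m_{i+2}$, but the point is that the dependence on $y_{2i+1}$ enters only through the tail. On the right, after mutation $m_i=a{+}1$, the profile at $i$ is now flat $(a,a{+}1,a{+}1)$ so $\wy_{2i-1}'=ty_{2i-1}'$, and the ascent has moved to position $i{-}1$. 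The substitution \eqref{caseone} is precisely the one that makes the local sub-fraction $\tfrac{1}{1-\wy_{2i-1}-\wy_{2i}J^{(r-i)}(\ldots)}$ coincide: substituting $y_{2i-1}'=y_{2i-1}+y_{2i}$, $y_{2i}'y_{2i+1}'=y_{2i}y_{2i+1}$ (which one checks from \eqref{caseone}), and noting $y_{2i-1}'=y_{2i-1}+y_{2i}$ reproduces exactly $t(y_{2i-1}+y_{2i})$, I expect the two nested fractions to be literally equal as rational functions in $t$ and the $y$'s. This is essentially an instance of the rearrangement identities \eqref{identoneco}--\eqref{identwoco} already established, so I would aim to reduce the verification to applying \eqref{identwoco} (moving the ascent down one level) and then rewriting in the primed variables. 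When $i>1$ this gives $J^{(r)}(\wy(\bm))=J^{(r)}(\wy(\bm'))$; when $i=1$ the top level of the fraction is exposed, and the factor $ty_1$ sitting outside in \eqref{defgeneF} gets absorbed/produced, accounting for the extra $1+ty_1(\bm')$ prefactor — I would track this by comparing with the definition of $F_\bm(t)$ and using that $y_1$ is among the mutated variables when $i=1$.

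Case (ii), $(a,a,a)\mapsto(a,a{+}1,a)$, is analogous but the mutated profile creates a \emph{descent} at position $i{+}1$ on the right-hand side, which is why \eqref{casetwo} additionally modifies $y_{2i+2}'$. Here the left side has all-flat weights $\wy_{2i-1}=ty_{2i-1}$, $\wy_{2i}=ty_{2i}$, whereas on the right $m_{i+1}=m_i-1$ triggers the descent clause $\wy_{2i-1}'=ty_{2i-1}'-y_{2i}'/y_{2i+1}'$ and $\wy_{2i}'=ty_{2i}'(ty_{2i+1}')^{-1}$, together with the modification of $\wy_{2i+2}'$ through $y_{2i+2}'$. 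The natural tool is precisely the identity \eqref{identoneco}, which was introduced to handle exactly this descent situation: I expect that substituting \eqref{casetwo} and applying \eqref{identoneco} once at level $i$ converts $J^{(r)}(\wy(\bm))$ into $J^{(r)}(\wy(\bm'))$, with the $y_{2i+2}'$ adjustment ensuring the tail below matches. The main obstacle will be bookkeeping: verifying that the single rearrangement identity, applied to the correct level, simultaneously matches all of the three or four changed weights \emph{and} leaves the remainder $u$ of the fraction genuinely unchanged, including the subtle point that $\wy_{2i+1}$ and $\wy_{2i+2}$ may themselves depend on $y_{2i+1},y_{2i+2}$ through the clauses of \eqref{Gweights} governed by $m_{i+2}$. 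I would therefore split into subcases according to the value of $m_{i+2}$ relative to $m_{i+1}'$ (flat, ascent, or descent below) to confirm that the tail weights $\wy_{\geq 2i+1}'$ reassemble correctly; this case-checking, rather than any deep idea, is where the real work lies, but each subcase reduces to a direct application of \eqref{identoneco} or \eqref{identwoco} together with the substitution rules of Definition \ref{mutationony}.
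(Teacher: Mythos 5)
Your overall skeleton matches the paper's proof — localize to the levels $i-1,i,i+1$ of the fraction, split into cases (i) and (ii), treat $i=1$ by exposing the top level, and handle the clause-dependence of $\wy_{2i+1},\wy_{2i+2}$ on $m_{i+2}$ (which the mutation leaves unchanged, so this part is indeed harmless bookkeeping). But there is a genuine gap: the identities \eqref{identoneco} and \eqref{identwoco}, which you claim suffice, cannot implement a mutation. They only convert between the Jacobi-like and canonical (branched) forms of the fraction without altering the underlying weights; the step that actually produces the primed weights of \eqref{caseone}--\eqref{casetwo} is the separate rearrangement Lemma \ref{rear2lem},
\begin{equation*}
a+\frac{b}{1-\dfrac{c}{1-d}}=\frac{a'}{1-\dfrac{b'}{1-c'-d}},\qquad a'=a+b,\quad b'=\frac{bc}{a+b},\quad c'=\frac{ac}{a+b},
\end{equation*}
whose substitution \eqref{defabc} \emph{is} the mutation. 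The paper's case (i) is the chain: \eqref{identwoco} at level $i$, then \eqref{rear2} one level down, then \eqref{identwoco} backwards at level $i-1$; case (ii) is \eqref{rear2} at level $i$ followed by \eqref{identoneco} and \eqref{identwoco} applied backwards. Without \eqref{rear2} (or an equivalent computation you would have to supply), neither case closes: in case (i) the two fractions are \emph{not} literally equal after substitution — the ascent structure sits at level $i$ on one side and at level $i-1$ on the other, so a nontrivial rearrangement is unavoidable — and in case (ii) a single application of \eqref{identoneco} does not turn the flat-profile fraction into the descent-profile one.

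Moreover, the one explicit algebraic check you propose in case (i) is false: from \eqref{caseone}, $y_{2i}'y_{2i+1}'=y_{2i-1}y_{2i}y_{2i+1}^2/(y_{2i-1}+y_{2i})^2\neq y_{2i}y_{2i+1}$. The identities actually used are $y_{2i-1}'=y_{2i-1}+y_{2i}$, $y_{2i-1}'y_{2i}'=y_{2i}y_{2i+1}$ and $y_{2i-1}'y_{2i+1}'=y_{2i-1}y_{2i+1}$, matching $c'=ty_{2i-1}'$, $c'd'=de$, $e'=ty_{2i+1}'$ in Lemma \ref{rear2lem}. Relatedly, your remark that the dependence on $y_{2i+1}$ ``enters only through the tail'' is off: it enters the level-$i$ down-weight $\wy_{2i}=t^2y_{2i}y_{2i+1}$ on the unmutated side, and $y_{2i+1}'$ enters the level-$(i+1)$ diagonal weight additively in whichever clause of \eqref{Gweights} applies on the mutated side. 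Finally, your $i=1$ bookkeeping should be restated: the term $1+ty_1(\bm')J^{(r)}(\wy(\bm'))$ in Theorem \ref{defymut} arises inside the same local computation by setting $a=0$, $b=1$ at a fictitious level zero (where $c'=ty_1(\bm')$), not by absorbing the prefactor of $F_\bm(t)$ from \eqref{defgeneF} — the theorem concerns $J^{(r)}$ itself, and the statement about $F_\bm$ is the subsequent Corollary \ref{Ffunction}.
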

This has an immediate corollary.
\begin{cor}\label{Ffunction}
Under the same mutation $\mu_i$, $F_\bm(t)$ is transformed
as:
\begin{eqnarray}\label{alphanotonemut}\label{alphaonemut}
F_\bm(t) = \left\{ \begin{array}{ll}
F_{\bm'}(t), & i>1; \\ 
1+ty_1(\bm)F_{\bm'}(t),& i=1. \end{array}\right.
\end{eqnarray}
\end{cor}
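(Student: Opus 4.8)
The plan is to feed the transformation law for $J^{(r)}$ supplied by Theorem~\ref{defymut} into the defining relation \eqref{defgeneF}, namely $F_\bm(t)=1+ty_1(\bm)J^{(r)}(\wy(\bm))$. The only extra ingredient required is the behavior of the single weight $y_1$ under the mutation $\mu_i$, which I would read off directly from Definition~\ref{mutationony}.

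First I would treat the case $i>1$. From Definition~\ref{mutationony} the mutation $\mu_i$ modifies only the weights $y_{2i-1},y_{2i},y_{2i+1}$ (and $y_{2i+2}$ in Case~(ii)) and leaves all others unchanged; since $i>1$ forces $2i-1\geq 3$, we have $y_1(\bm')=y_1(\bm)$. Using this together with the equality $J^{(r)}(\wy(\bm))=J^{(r)}(\wy(\bm'))$ from Theorem~\ref{defymut}, and applying \eqref{defgeneF} at both $\bm$ and $\bm'$, I obtain
$$F_\bm(t)=1+ty_1(\bm)J^{(r)}(\wy(\bm))=1+ty_1(\bm')J^{(r)}(\wy(\bm'))=F_{\bm'}(t),$$
which is the asserted identity.

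For the case $i=1$, Theorem~\ref{defymut} gives $J^{(r)}(\wy(\bm))=1+ty_1(\bm')J^{(r)}(\wy(\bm'))$, and by \eqref{defgeneF} the right-hand side is precisely $F_{\bm'}(t)$. Substituting this into the definition of $F_\bm$ then yields
$$F_\bm(t)=1+ty_1(\bm)J^{(r)}(\wy(\bm))=1+ty_1(\bm)F_{\bm'}(t).$$

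I expect no genuine obstacle: all the analytic content---in particular the continued-fraction rearrangement identities \eqref{identoneco} and \eqref{identwoco}---has already been absorbed into the proof of Theorem~\ref{defymut}. The only step requiring a moment's care is the bookkeeping that $y_1$ is untouched by $\mu_i$ when $i>1$, after which both cases reduce to pure substitution into \eqref{defgeneF}.
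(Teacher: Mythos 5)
Your proof is correct and follows essentially the same route as the paper, which derives the corollary immediately from Theorem \ref{defymut} by substituting into the definition \eqref{defgeneF} (the authors note at the end of the theorem's proof that for $i>1$ one has $F_\bm(t)=F_{\bm'}(t)$, while for $i=1$ the identity $J_0(\bm)=1+ty_1(\bm')J_0(\bm')$ is equivalent to $F_\bm(t)=1+ty_1(\bm)F_{\bm'}(t)$). Your explicit bookkeeping that $\mu_i$ with $i>1$ leaves $y_1$ unchanged, read off from Definition \ref{mutationony}, is exactly the implicit observation the paper relies on.
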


To prove the theorem, we use two rearrangement lemmas in addition to
Equations \eqref{identoneco} and \eqref{identwoco}:
\begin{lemma}\label{rear1lem}
For all $a,b$ we have
\begin{equation}\label{rear1}
1+{a\over 1-a-b} ={1\over 1-{a\over 1-b}}
\end{equation}
\end{lemma}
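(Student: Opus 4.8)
The plan is to verify this purely algebraic identity by reducing both sides to a single common closed form, from which equality is immediate. First I would clear the left-hand side over a common denominator. Writing $1 = (1-a-b)/(1-a-b)$ and adding, the numerator collapses to $(1-a-b)+a = 1-b$, so that
\[
1+\frac{a}{1-a-b} = \frac{1-b}{1-a-b}.
\]
Next I would simplify the right-hand side by clearing the nested fraction in its denominator: since $1 - \frac{a}{1-b} = \frac{(1-b)-a}{1-b} = \frac{1-a-b}{1-b}$, taking the reciprocal gives
\[
\frac{1}{1-\frac{a}{1-b}} = \frac{1-b}{1-a-b}.
\]
Both sides thus equal $(1-b)/(1-a-b)$, which proves the lemma.

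The computation is elementary and there is no genuine obstacle in the commutative setting of this section; the only point requiring care is formal, namely that $a,b$ are to be treated as rational functions, so that all steps are valid wherever the denominators $1-a-b$ and $1-b$ are invertible. This is exactly the regime in which the rearrangement identities \eqref{identoneco} and \eqref{identwoco} were already applied in Subsection \ref{mutacosec}, so the hypotheses are automatically met when the lemma is invoked.

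Finally, with an eye toward Section \ref{section3}, I would record that the same reduction goes through verbatim in the non-commutative case, provided each fraction is read as a right quotient (so $\frac{a}{1-a-b}$ means $a(1-a-b)^{-1}$, and so on): both sides then reduce to $(1-b)(1-a-b)^{-1}$, with the factor $1-b$ appearing on the \emph{left} of the inverse. Keeping track of this side is the only subtlety one must respect in the non-commutative rearrangements, and this observation is what will make the lemma directly reusable there.
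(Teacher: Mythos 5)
Your computation is correct and coincides with the paper's (implicit) proof: Lemma \ref{rear1lem} is stated there without proof precisely because the intended argument is this direct reduction of both sides to $(1-b)/(1-a-b)$. One caveat on your closing remark: the paper's actual non-commutative analogue, Lemma \ref{rear1lemnc}, uses the opposite convention from your right-quotient reading --- it places inverses on the \emph{left}, $1+(1-\ba-\bb)^{-1}\ba=\bigl(1-(1-\bb)^{-1}\ba\bigr)^{-1}$, both sides reducing to $(1-\ba-\bb)^{-1}(1-\bb)$ --- so your version, while equally valid (it is the image of \eqref{rear1nc} under the anti-automorphism $x\mapsto x^{\rm op}$), is not the form that is actually reused in Section \ref{section3}, where the continued fractions \eqref{NCJfrac} systematically have the inverse acting on the left.
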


\begin{lemma}\label{rear2lem}
For all $a,b,c,d$, with $a+b$ invertible, we have
\begin{equation}\label{rear2}
a+{b\over 1-{c\over 1-d}}=
{a'\over 1-{b'\over 1-c'-d}}
\end{equation}
where $a',b',c'$ are defined by
\begin{equation}\label{defabc}
a'=a+b \qquad b'={b c\over a+b} \qquad c'={ac\over a+b}
\end{equation}
\end{lemma}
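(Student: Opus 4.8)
The plan is to prove Lemma~\ref{rear2lem} by direct algebraic manipulation of non-commutative fractions, being careful to never divide from the wrong side and to use only the identities already established, particularly Lemma~\ref{rear1lem}. The statement is an identity in a skew field, so ``$=$'' means equality of the two rational expressions, and the hypothesis that $a+b$ is invertible is exactly what lets me introduce $a'=a+b$ as a denominator. My overall strategy is to start from the left-hand side and massage it into the target right-hand side, rather than expanding both sides into a common normal form, since the nested structure of these continued-fraction expressions makes a forward rewriting cleaner.

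First I would simplify the inner fraction. Using Lemma~\ref{rear1lem} in the form $\frac{1}{1-\frac{c}{1-d}} = 1 + \frac{c}{1-c-d}$ (reading \eqref{rear1} with $a \mapsto c$, $b\mapsto d$), the left-hand side becomes
\begin{equation*}
a + b\Bigl(1 + \frac{c}{1-c-d}\Bigr) = (a+b) + \frac{bc}{1-c-d}.
\end{equation*}
Now I factor $a+b$ out on the left: writing $a' = a+b$, which is invertible by hypothesis, I get
\begin{equation*}
(a+b)\Bigl(1 + (a+b)^{-1}\frac{bc}{1-c-d}\Bigr) = a'\Bigl(1 + \frac{b'}{1-c-d}\Bigr),
\end{equation*}
where $b' = (a+b)^{-1}bc$ matches the definition $b' = \frac{bc}{a+b}$ in \eqref{defabc} (with the understanding that the bar denotes left division by $a+b$). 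The remaining task is to recognize $1 + \frac{b'}{1-c-d}$ as $\frac{1}{1-\frac{b'}{1-c'-d}}$ for the correct $c'$.

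For this last step I would again invoke Lemma~\ref{rear1lem}, now in the direction $1 + \frac{x}{1-x-y} = \frac{1}{1-\frac{x}{1-y}}$. The subtlety is matching the ``$x$'' and ``$y$'': I want $1 + \frac{b'}{1-c-d}$, so I need to write the denominator $1-c-d$ in the form $1 - b' - (\text{rest})$, i.e.\ as $1 - b' - (c - b' + d)$. Applying \eqref{rear1} then yields $\frac{1}{1-\frac{b'}{1-(c-b'+d)}} = \frac{a'}{\,a' - a'\frac{b'}{1-c'-d}\,}$ after restoring the $a'$ factor, and I must check that $c' = c - b' = c - (a+b)^{-1}bc = (a+b)^{-1}(a+b-b)c = (a+b)^{-1}ac$, which is precisely $c' = \frac{ac}{a+b}$ from \eqref{defabc}. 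The main obstacle I anticipate is purely bookkeeping in the non-commutative setting: I must verify at each division step that the scalar being factored appears on the correct side, since $a'$ and $b'$ no longer commute with $c$ or $d$; in particular the identity $c - b' = c'$ relies on the left-division convention $b' = (a+b)^{-1}bc$ and $c' = (a+b)^{-1}ac$, and getting either sidedness wrong would break the telescoping $a+b-b = a$ that makes $c'$ come out clean. Once the sidedness conventions in \eqref{defabc} are pinned down consistently, the computation is a short chain of applications of Lemma~\ref{rear1lem}.
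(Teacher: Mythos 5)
Your proof is correct. The paper itself offers no written proof of Lemma~\ref{rear2lem} --- it is stated as an elementary identity, and its non-commutative counterpart (Lemma~\ref{rear2lemnc}) is dispatched ``by direct calculation'' --- so the implicit argument there is simply to clear denominators and check that both sides reduce to the same rational expression, namely $\frac{(a+b)(1-d)-ac}{1-c-d}$. Your route is genuinely different: you factor the identity into two applications of Lemma~\ref{rear1lem}, first in the form $\frac{1}{1-\frac{c}{1-d}}=1+\frac{c}{1-c-d}$ (which, after multiplying by $b$ and adding $a$, is exactly the rearrangement \eqref{identwoco}), and then a second time with $x=b'$ after rewriting $1-c-d$ as $1-b'-(c-b'+d)$, the telescoping $c-b'=c-\frac{bc}{a+b}=\frac{ac}{a+b}=c'$ closing the argument. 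What this buys is a structural explanation of the identity as a composition of the single elementary move \eqref{rear1}, which is exactly how these rearrangements are deployed in the proofs of Theorems~\ref{defymut} and \ref{defymutnc}; the direct calculation is shorter but opaque. One caveat: your sidedness bookkeeping is unnecessary for the statement at hand, since Lemma~\ref{rear2lem} lives in Section~\ref{sectiontwo} where all variables commute (invertibility of $a+b$ just means $a+b\neq 0$); and read as a prediction of the non-commutative lift, your left-division convention $b'=(a+b)^{-1}bc$ is not the one the paper adopts --- Lemma~\ref{rear2lemnc} has $\bb'=\bc\,\bb\,(\ba+\bb)^{-1}$, $\bc'=\bc\,\ba\,(\ba+\bb)^{-1}$, and a right-hand side of the structurally different shape $\left(1-\bd-(1-\bc')^{-1}\bb'\right)^{-1}\ba'$, so the commutative identity \eqref{rear2} does not transcribe verbatim. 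This does not affect the validity of your proof of the commutative lemma, which is complete.
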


We now turn to the proof of Theorem \ref{defymut}.
\begin{proof}
Let $J_i(\bm)$ be defined inductively as follows: 
$$
J_i(\bm)=\wy_{2i-1}+{\wy_{2i}\over 1- J_{i+1}(\bm)},\ (i\geq 0),\qquad
J_{r+1}(\bm) = \wy_{2r+1}(\bm),
$$
with the convention that $\wy_{-1}=0$ and $\wy_0=1$. Note that $J_0(\bm) =
J^{(r)}(\wy(\bm))$.
We refer below to the ``level'' $i$ of the fraction to denote the part
of the fraction $J^{(r)}$ where
$J_i$ appears.
Let
$$\begin{array}{llll}
a=ty_{2i-3}(\bm), & b=ty_{2i-2}(\bm),
&  c=ty_{2i-1}(\bm),  &  d=ty_{2i}(\bm),\\  
e=ty_{2i+1}(\bm), &
f=ty_{2i+2}(\bm), & u={1\over 1-J_{i+2}(\bm)}. & \end{array}
$$
together with their primed counterparts, which are functions of $\bm'$.

We now consider the effects of the mutations of Def. \ref{mutationony} on
$J^{(r)}(\wy(\bm))$.
\begin{itemize}
\item{\rm \bf Case (i):}
We first transform the Jacobi-like form into the local canonical form 
for $J_{i-1}(\bm)$ by applying \eqref{identwoco} at the level $i$, namely:
\begin{equation*}
J_{i-1}(\bm)=a +{b\over 1-c-d-{d e\over 1-e-f u}}=a+{b\over 1-c-{d\over 1-{e\over 1-f u}}}
\end{equation*}
Next, we apply Lemma \ref{rear2lem} to the second level of this fraction, namely:
\begin{equation*}
J_{i-1}(\bm)=a +{b\over 1- {c'\over 1-{d'\over 1-e'-f u}}}, \ {\rm with}\ c'=c+d,\ d'=de/c', \ {\rm and}\  e'=ce/c'
\end{equation*}
Finally, we put this back into Jacobi-like form, by applying \eqref{identwoco} backwards at the level $i-1$.
The complete transformation reads:
\begin{equation*}
J_{i-1}(\bm)=a +{b\over 1-c-d-{d e\over 1-e-f u}}=J_{i-1}(\bm')=a+b +{bc'\over 1-c'-{d'\over 1-e'-f u}}
\end{equation*}
\par
\begin{tabular}{ccc}
\psset{unit=1.5mm,linewidth=.4mm,dimen=middle}
\begin{pspicture}(0,0)(30,14)
\psline(0,0)(0,5)(5,10)
\rput(0,0){\pscircle*[linecolor=red](0,0){.7}}
\rput(0,5){\pscircle*[linecolor=red](0,0){.7}}
\rput(5,10){\pscircle*[linecolor=red](0,0){.7}}
\rput(4,0){$i-1$}
\rput(2,5){$i$}
\rput(9,10){$i+1$}
\psline(20,0)(20,12)
\psline[linestyle=dotted](20,12)(20,14)
\rput(24,0){$i-1$}
\rput(22,5){$i$}
\rput(24,10){$i+1$}
\multips(17.75,0)(0,5){3}{\pscircle(0,0){2}}
\multips(20,0)(0,5){3}{\pscircle*[linecolor=red](0,0){.7}}
\rput(15,0){\blue\small $a$}
\rput(12,5){\blue \small $c+d$}
\rput(15.75,5){$\times$}
\rput(15,10){\blue \small $e$}
\rput(21,2.5){\blue \small $b$}
\rput(22,7.5){\blue\small $de$}
\rput(21,12.5){\blue\small $f$}
\rput(20,7.5){$\times$}
\end{pspicture}
&\psset{unit=1.5mm,linewidth=.4mm,dimen=middle}
\begin{pspicture}(0,0)(10,14)
\psline{->}(2,6)(7,6)
\end{pspicture}
 &
\psset{unit=1.5mm,linewidth=.4mm,dimen=middle}
\begin{pspicture}(0,0)(30,14)
\psline(0,0)(5,5)(5,10)
\rput(0,0){\pscircle*[linecolor=red](0,0){.7}}
\rput(5,5){\pscircle*[linecolor=red](0,0){.7}}
\rput(5,10){\pscircle*[linecolor=red](0,0){.7}}
\rput(4,0){$i-1$}
\rput(7,5){$i$}
\rput(9,10){$i+1$}
\psline(20,0)(20,12)
\psline[linestyle=dotted](20,12)(20,14)
\rput(24,0){$i-1$}
\rput(22,5){$i$}
\rput(24,10){$i+1$}
\multips(17.75,0)(0,5){3}{\pscircle(0,0){2}}
\multips(20,0)(0,5){3}{\pscircle*[linecolor=red](0,0){.7}}
\rput(12.5,0){\blue\small $a+b$}
\rput(15.75,0){$\times$}
\rput(15,5){\blue \small $c'$}
\rput(15,10){\blue \small $e'$}
\rput(22,2.5){\blue \small $bc'$}
\rput(20,2.5){$\times$}
\rput(21.5,7.5){\blue\small $d'$}
\rput(21,12.5){\blue\small $f$}
\end{pspicture}
\end{tabular}
\vskip .5in
\item{\rm \bf Case (ii):}
We apply directly Lemma \ref{rear2lem} to the level $i$,
namely:
\begin{equation*}
J_{i-1}(\bm)=a +{b\over 1-c-{d\over 1-e-f u}}=a +{b\over 1-{c'\over 1- {d'+{d'f'\over e'}u \over 1-e'-f'u}}}, \ {\rm with}
\left\{ \begin{matrix}
c'=c+d\\
d'=de/c'\\
e'=ce/c'\\
f'=cf/c'
\end{matrix}\right.
\end{equation*}
We bring this back to Jacobi-like form by applying first \eqref{identoneco} backward on its first level, and
then \eqref{identwoco} backward on its level zero, namely:
\begin{equation*}
J_{i-1}(\bm)=a +{b\over 1-{c'\over 1+{d'\over e'} -{{d'\over e'}\over 1- e'-f' u}}}
=a +b +{bc'\over 1-c'+{d'\over e'}-{{d'\over e'}\over 1- e'-f' u}}
\end{equation*}
The complete transformation reads:
\begin{equation*}
J_{i-1}(\bm)=a +{b\over 1-c-{d\over 1-e-f u}}=J_{i-1}(\bm')=a +b +{bc'\over 1-c'+{d'\over e'}-{{d'\over e'}\over 1- e'-f' u}}
\end{equation*}
\par
\begin{tabular}{ccc}

\psset{unit=1.5mm,linewidth=.4mm,dimen=middle}
\begin{pspicture}(0,0)(30,18)
\psline(0,0)(0,10)
\multips(0,0)(0,5){3}{\pscircle*[linecolor=red](0,0){.7}}

\rput(4,0){$i-1$}
\rput(2,5){$i$}
\rput(4,10){$i+1$}
\psline(20,0)(20,15)
\psline[linestyle=dotted](20,15)(20,18)
\rput(24,0){$i-1$}
\rput(22,5){$i$}
\rput(24,10){$i+1$}
\multips(17.75,0)(0,5){3}{\pscircle(0,0){2}}
\multips(20,0)(0,5){3}{\pscircle*[linecolor=red](0,0){.7}}
\rput(15,0){\blue\small $a$}
\rput(15,5){\blue \small $c$}
\rput(15,10){\blue \small $e$}
\rput(21,2.5){\blue \small $b$}
\rput(21,7.5){\blue\small $d$}
\rput(21,12.5){\blue\small $f$}
\end{pspicture}
&
\psset{unit=1.5mm,linewidth=.4mm,dimen=middle}
\begin{pspicture}(0,0)(10,18)
\psline{->}(2,6)(7,6)
\end{pspicture}
&
\psset{unit=1.5mm,linewidth=.4mm,dimen=middle}
\begin{pspicture}(0,0)(30,18)
\psline(0,0)(5,5)(0,10)
\rput(0,0){\pscircle*[linecolor=red](0,0){.7}}
\rput(5,5){\pscircle*[linecolor=red](0,0){.7}}
\rput(0,10){\pscircle*[linecolor=red](0,0){.7}}
\rput(4,0){$i-1$}
\rput(7,5){$i$}
\rput(4,10){$i+1$}
\psline(20,0)(20,15)
\psline[linestyle=dotted](20,15)(20,18)
\rput(24,0){$i-1$}
\rput(22,5){$i$}
\rput(24,10){$i+1$}
\multips(17.75,0)(0,5){3}{\pscircle(0,0){2}}
\multips(20,0)(0,5){3}{\pscircle*[linecolor=red](0,0){.7}}
\rput(12.5,0){\blue\small $a+b$}
\rput(15.75,0){$\times$}
\rput(12,5){\blue \small $c'-\frac{d'}{e'}$}
\rput(15.75,5){$=$}
\rput(15,10){\blue \small $e'$}
\rput(22,2.5){\blue \small $bc'$}
\rput(20,2.5){$\times$}
\rput(22,7.5){\blue\small $\frac{d'}{e'}$}
\rput(20,7.5){$=$}
\rput(21,12.5){\blue\small $f'$}
\end{pspicture}\end{tabular}
\vskip .5in
\end{itemize}

Note that for $i>1$, we always have $J_0(\bm')=J_0(\bm)$,
or equivalently $F_\bm(t)=F_{\bm'}(t)$.
When $i=1$, we have $a=0$, $b=1$ and in all cases $c'=y_1(\bm')$, so the final formula
amounts to $J_0(\bm)=1+t J_0(\bm')y_1(\bm')$, or equivalently
$F_\bm(t)=1+tF_{\bm'}(t)y_1(\bm)$. 

This proves the theorem and its corollary.
\end{proof}

\subsubsection{Explicit expression for the weights $y_{i}(\bm)$}

We can now identify the definitions of the mutations
on the initial data variables $\mu_i:\ R(\bm)\mapsto R(\bm')$ and the action on
weights $\mu_i:y(\bm)\to y(\bm')$.

Corollary \ref{Ffunction}
may be used as a definition of the series $F_\bm(t)$, by using induction
over mutations, once we specify the initial condition. We can do so by
giving the value of $F_{\bm_0}(t)$ as a function of $R(\bm_0)$, 
the variables corresponding to the fundamental Motzkin path. This 
will define the initial values $y_i(\bm_0)$ of the weights. Using
Definition \ref{mutationony}, we can then determine $y(\bm)$ for any
$\bm$ inductively.

In Ref. \cite{DFK3}, we used the integrability of the $Q$-system
\eqref{Qsys} to show that the generating function
$$F_0(t)=\sum_{n\geq 0} t^n R_{1,n}/R_{1,0}$$
is expressed explicitly in terms of the variables $R(\bm_0)$ as the
finite continued fraction:
$$ F_0(t)=1+t{y_1\over 1-ty_1-t{y_2\over 1-t y_3-t{y_4\over \ddots {{} \over 1-ty_{2r-1}-t{y_{2r}\over y_{2r+1}}}}}}$$
where 
\begin{equation}\label{initweights}
 y_{2i-1}={R_{i,1}R_{i-1,0}\over R_{i,0}R_{i-1,1}} \qquad y_{2i}={R_{i+1,0}R_{i-1,1}\over R_{i,0}R_{i,1}} 
 \end{equation}
Comparing this expression with that for $F_\bm(t)$ \eqref{defgeneF},
one sees that the choice of weights $y_i(\bm_0)=y_i$ as in \eqref{initweights}
gives
$$ F_{\bm_0}(t)=F_0(t)$$

We can now derive explicit expressions for the variables $y(\bm)$ for
any $\bm$ using Definition \ref{mutationony}.

\begin{thm}\label{linkmuta}
For any Motzkin path $\bm=(m_1,m_2,...,m_r)$, 
the weights $y_i(\bm)$ defined by  \eqref{caseone}-\eqref{casetwo} with the initial
conditions 
$y_i(\bm_0)$ as in \eqref{initweights} are expressed in terms of the
variables $R(\bm)$ as:
\begin{equation}\label{yweights}
y_{2i-1}(\bm) = \frac{c_{i,m_i+1}}{c_{i-1,m_{i-1}+1}},\qquad
y_{2i}(\bm)=d_{i,m_i+1} L_{i,i+1}
(L_{i,i-1})^{-1} ,
\end{equation}
where
$$
L_{i,i+\varepsilon} = \left\{ \begin{array}{ll}
    \frac{c_{i+\varepsilon,m_{i+\varepsilon}+1}}{c_{i+\varepsilon,m_{i}+1}},& {\rm if} \  m_{i}=m_{i+\varepsilon}+\varepsilon; \\
 1, & \hbox{\rm otherwise}.\end{array}\right. \nonumber
$$
and
$$
c_{i,m}=\frac{R_{i,m}}{R_{i,m-1}},\quad d_{i,m} = \frac{R_{i+1,m}R_{i-1,m-1}}{R_{i,m}R_{i,m-1}}.
$$
\end{thm}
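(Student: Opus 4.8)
The plan is to argue by induction on the number of elementary mutations separating $\bm$ from the fundamental path $\bm_0=(0,\dots,0)$. As observed after the definition of Cases (i) and (ii), every Motzkin path in the fundamental domain $\cM$ is produced from $\bm_0$ by a finite sequence of such forward mutations (global translations, under which \eqref{yweights} is manifestly covariant, reduce the general case to this one), so it suffices to verify \eqref{yweights} at $\bm_0$ and to show that it is preserved by a single $\mu_i$ of either type. Since the right-hand side of \eqref{yweights} depends only on $\bm$ and on the data $R(\bm)$ (itself fixed by $\bm$), establishing preservation along one mutation automatically yields consistency along every path to $\bm$, so no separate well-definedness check is needed.

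For the base case all $m_j=0$, so no node lies one unit above a neighbour and every factor $L_{i,i\pm1}$ is trivial. The formula then reduces to $y_{2i-1}(\bm_0)=c_{i,1}/c_{i-1,1}$ and $y_{2i}(\bm_0)=d_{i,1}$, and inserting the definitions of $c,d$ together with the boundary values $R_{0,k}=R_{r+1,k}=1$ reproduces the initial weights \eqref{initweights}; this is a direct computation that fixes the normalization of the construction.

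For the inductive step I assume \eqref{yweights} at $\bm$ and apply $\mu_i$. On the data side the mutation replaces $(R_{i,m_i},R_{i,m_i+1})$ by $(R_{i,m_i+1},R_{i,m_i+2})$, the single new variable being produced by the $Q$-system relation \eqref{Qsys},
\[ R_{i,m_i+2}\,R_{i,m_i}=R_{i,m_i+1}^2+R_{i-1,m_i+1}R_{i+1,m_i+1}, \]
whose right-hand neighbours $R_{i\pm1,m_i+1}$ belong to $R(\bm)$ in both Case (i) and Case (ii). On the weight side I use the transformations \eqref{caseone}--\eqref{casetwo} of Definition~\ref{mutationony}; only $y_{2i-1},y_{2i},y_{2i+1}$ (and $y_{2i+2}$ in Case (ii)) are altered. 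The heart of the computation for the altered odd weight is the telescoping identity $y_{2i-1}+y_{2i}=c_{i,m_i+2}/c_{i-1,m_{i-1}+1}$, which follows by combining the inductive expressions for $y_{2i-1}$ and $y_{2i}$ over the common denominator $c_{i-1,m_{i-1}+1}$ and invoking the displayed relation to replace $R_{i,m_i+2}R_{i,m_i}-R_{i,m_i+1}^2$; the remaining altered weights reduce to this same relation.

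The main obstacle is not these telescopings but the bookkeeping of the factors $L_{i,i\pm1}$. A mutation changes the local height profile at the nodes $i-1,i,i+1$ (and $i+2$ in Case (ii)), so the conditions $m_j=m_{j\pm1}\pm1$ defining the $L$-factors switch on or off; in particular the even weights $y_{2(i-1)}$ and $y_{2(i+1)}$ at the neighbouring nodes, whose own indices $m_{i\pm1}$ do not move, may nevertheless change their formula-expressions because their $L$-factors reference the shifting index $m_i$. The subtle point, which must be checked configuration by configuration, is that in Case (ii) the newly activated factor $L_{i,i+1}(\bm')=c_{i+1,m_i+1}/c_{i+1,m_i+2}$ involves the variable $R_{i+1,m_i+2}$ lying outside $R(\bm')$, and this is precisely the variable introduced by $d_{i,m_i+2}$ in the same weight $y_{2i}(\bm')$; the two occurrences cancel, leaving a Laurent monomial in $R(\bm')$ alone. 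I expect organizing the finitely many local configurations of Cases (i) and (ii) and verifying that each $L$-factor switches in step with the $c$- and $d$-ratios, so that all out-of-data variables cancel, to be the delicate and most laborious part of the argument.
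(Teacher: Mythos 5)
Your proposal is correct and follows essentially the same route as the paper, whose proof is exactly the two-line sketch ``induction under mutations from $\bm_0$, explicitly checking the relations \eqref{caseone}--\eqref{casetwo} on the expressions \eqref{yweights} by use of the $Q$-system'' --- your telescoping identity $y_{2i-1}+y_{2i}=c_{i,m_i+2}/c_{i-1,m_{i-1}+1}$ is precisely the $Q$-system step, and your observation about the out-of-cluster variable $R_{i+1,m_i+2}$ cancelling between $d_{i,m_i+2}$ and $L_{i,i+1}(\bm')$ in Case (ii) checks out. The only caveat is one you inherit from the paper rather than introduce: the base case matches \eqref{initweights} only up to an apparent index transposition in the printed even weight there ($d_{i,1}=R_{i+1,1}R_{i-1,0}/(R_{i,1}R_{i,0})$, consistent with the noncommutative Theorem \ref{ncweightcalc}, versus $R_{i+1,0}R_{i-1,1}/(R_{i,0}R_{i,1})$ in \eqref{initweights}).
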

\begin{proof}
By induction under mutations, from $\bm_0$ to any Motzkin path $\bm$.
We explicitly check the relations \eqref{caseone}-\eqref{casetwo} on the expressions
\eqref{yweights}, by use of the $Q$-system. 
\end{proof}
\begin{cor}
  The series $F_\bm(t)$ is the following generating function
 for solutions of the $Q$-system:
\begin{equation}\label{genefone}
F_\bm(t)=\sum_{n\geq 0} t^n R_{1,n+m_1}/R_{1,m_1}
\end{equation}
\end{cor}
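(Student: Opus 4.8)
The plan is to prove \eqref{genefone} by induction on the number of forward mutations needed to reach $\bm$ from the fundamental Motzkin path $\bm_0=(0,\ldots,0)$, using Corollary \ref{Ffunction} as the inductive engine and Theorem \ref{linkmuta} to pin down the single weight that actually enters the recursion. As recorded in Section \ref{mutacosec}, the two elementary moves of Cases (i) and (ii) suffice to reach any Motzkin path, up to an overall translation which I defer to the very end; it is therefore enough to check that the candidate generating function is compatible with each of the two mutation rules.

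For the base case $\bm=\bm_0$ I would simply invoke the identity $F_{\bm_0}(t)=F_0(t)=\sum_{n\geq 0}t^n R_{1,n}/R_{1,0}$ established just above Theorem \ref{linkmuta}; since $m_1=0$ here, this is precisely \eqref{genefone} for $\bm_0$. For the inductive step, let $\bm'=\mu_i(\bm)$ be a forward mutation and assume \eqref{genefone} holds for $\bm$. When $i>1$ the first entry is unchanged, $m_1'=m_1$, and Corollary \ref{Ffunction} gives $F_{\bm'}(t)=F_\bm(t)$, so the formula for $\bm'$ follows verbatim.

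The substantive case is $i=1$, where Corollary \ref{Ffunction} reads $F_\bm(t)=1+t\,y_1(\bm)\,F_{\bm'}(t)$ and I would solve for $F_{\bm'}$. The key computational input is the explicit value of $y_1(\bm)$: specializing Theorem \ref{linkmuta} at $i=1$ and using the boundary condition $R_{0,k}=1$, which forces $c_{0,m}=1$, yields $y_1(\bm)=c_{1,m_1+1}=R_{1,m_1+1}/R_{1,m_1}$. Substituting the inductive hypothesis for $F_\bm(t)$, peeling off the $n=0$ term (which equals $1$), factoring out one power of $t$, and dividing by $t\,y_1(\bm)$ then gives $F_{\bm'}(t)=\sum_{n\geq 0}t^n R_{1,n+m_1+1}/R_{1,m_1+1}$, which is exactly \eqref{genefone} for $\bm'$ since $m_1'=m_1+1$.

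The only delicate points are bookkeeping rather than deep obstacles. One must confirm that the two mutation cases genuinely connect $\bm_0$ to every Motzkin path in the fundamental domain $\cM$ (this is the content cited from Section \ref{mutacosec}), and then extend to arbitrary $\bm$ via the translational invariance $R_{1,n}(R(\bm))=R_{1,n+k}(R(k+\bm))$, under which both sides of \eqref{genefone} transform consistently. Given Corollary \ref{Ffunction} and Theorem \ref{linkmuta}, the algebraic verification in the $i=1$ step is routine, so I expect the care to go mainly into the index shift produced by dividing out $t\,y_1(\bm)$ and into matching the normalization of the leading term.
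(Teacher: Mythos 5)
Your proposal is correct and takes essentially the same route as the paper: the published proof is exactly this induction over mutations using Corollary \ref{Ffunction}, with the $i>1$ case leaving $F_\bm(t)$ unchanged and the $i=1$ case producing the index shift via $y_1(\bm)=R_{1,m_1+1}/R_{1,m_1}$. Your additional bookkeeping --- extracting $y_1(\bm)=c_{1,m_1+1}$ from Theorem \ref{linkmuta} using $c_{0,m}=1$, and handling translations out of the fundamental domain --- only makes explicit details the paper leaves implicit.
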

\begin{proof}
  Whenever a mutation $\mu_i:\bm\mapsto \bm'$, $i>1$ is applied, the
  function $F_{\bm}(t)=F_{\bm'}(t)$ is preserved, hence its series
  coefficient remain the same. When $i=1$ however, we have the
  transformation $F_\bm(t)=1+ty_1(\bm)F_{\bm'}(t)=1+t
  {R_{1,m_1+1}\over R_{1,m_1}}F_{\bm'}(t)$, and therefore the
  coefficients of the series $F_{\bm'}(t)$ change according to
  \eqref{genefone}.
\end{proof}

\section{Non-commutative weights and evolutions}\label{section3}
In this section, we introduce a non-commutative version of the path partition functions and weight evolutions which appear in
the $A_r$ Q-system. Most of the
key ideas of the previous section generalize to the non-commutative setting because paths are
themselves non-commutative objects, and respect ordering of
non-commutative weights.

\subsection{Paths with non-commutative weights and 
non-commutative
  continued fractions} 

Let $\{\by_i^{\pm1}:\ 1\leq i
\leq 2r+1\}$ be non-commutative variables and let  $\mathcal A$ be the skew field of rational functions in $
\{\by_1^{\pm1},...,\by_{2r+1}^{\pm1}\}$.
Let $G_\bm$ denote the  graph used in Section \ref{Gmintro}.
We associate to each edge of $G_\bm$ a
non-commutative weight, which is an element in $\cA[t]$. 

 A path on  a graph with non-commutative weights is defined to have
a weight equal to the product of weights of the edges traversed by the
path, written {\it in the order in which the steps are taken}, from
left to right. For example, if the path is of the form
$p=(i_1,i_2,...,i_n)$, then its weight is
${\mathbf wt}(p)=\bw_{i_1,i_2}\bw_{i_2,i_3}\cdots \bw_{i_{n-1},i_n}$, where
$\bw_{i,j}$ is the weight of edge from vertex $i$ to vertex $j$.

\subsubsection{Non-commutative weights}
For each Motzkin path in the fundamental domain, $\bm\in\cM$, let
$\by(\bm)\equiv (\by_1(\bm),...,\by_{2r+1}(\bm))\in \cA^{2r+1}$ be a
collection of $2r+1$ non-commutative variables. Let $I_\pm = \{i: m_i =
  m_{i-1}\pm1\}$ as before.
To $\by(\bm)$ we associate
the variables
${\widehat{\by}}(\bm)=({\widehat{\by}}_1(\bm),...,{\widehat
  \by}_{2r+1}(\bm))\in \cA[t]^{2r+1}$ defined as follows:
\begin{equation}\label{noncoweights}
\begin{array}{lll}
{\widehat{\by}}_{2i-1}=t(\by_{2i-1}+\by_{2i}),&
{\widehat{\by}}_{2i}=t^2\by_{2i+1}\by_{2i},
&\hbox{if $i\in I_+$}; \\
{\widehat{\by}}_{2i-1}=t\by_{2i-1}-\by_{2i+1}^{-1}\by_{2i},&
{\widehat{\by}}_{2i}=\by_{2i+1}^{-1}\by_{2i},
&\hbox{if $i\in I_-$;}\\
{\widehat{\by}}_{2i-1}=t\by_{2i-1},& {\widehat{\by}}_{2i}=t\by_{2i},
&\hbox{otherwise}.\hfill \\
\end{array}
\end{equation}
Here, we have omitted the reference to the Motzkin path $\bm$. Note
that these are non-commutative versions of the variables $\widehat{y}$
defined in Equation \eqref{Gweights}.

Finally, we define the weight on the edge from $i$ to $j$ of $G_\bm$ to be
$\bw_{i,j}$, where
\begin{equation*}
\bw_{i,i}={\widehat{\by}}_{2i-1}(\bm), \quad 
\bw_{i,i+1}=1, \quad 
\bw_{i+1,i}={\widehat{\by}}_{2i}(\bm).
\end{equation*}

\subsubsection{Non-commutative partition functions}
Let $\mathcal P_{a,b}$ be the set of paths on $G_\bm$ from vertex $a$
to vertex $b$. Each path $p\in \cP_{a,b}$ has a non-commutative weight ${\mathbf wt}(p)\in
\mathcal A$ defined like in Section \ref{commupf} as the product of step weights, in the 
order in which the steps are taken. The partition function, or generating
function, of paths from $a$ to $b$ is defined to be
\begin{equation}
\bZ(G_\bm)_{a,b}=\sum_{p\in \cP_{a,b}} {\mathbf wt}(p).
\end{equation}

\subsubsection{Non-commutative continued fractions}
\begin{defn}
For any $\bx=(\bx_1,...,\bx_{2r+1})$ with $\bx_i\in \cA[t]$, we 
define the non-commutative Jacobi-like finite continued
fraction $\bJ^{(r)}(\bx)$ inductively by $\bJ^{(1)}(\bx_{1})=(\bone -\bx_1)^{-1}$ and
\begin{equation}\label{NCJfrac}
\bJ^{(k)}(\bx_1,\bx_2,...,\bx_{2k+1})=
\left({1}-\bx_1 -\bJ^{(k-1)}(\bx_3,\bx_4,...,\bx_{2k+1})\bx_2\right)^{-1}.
\end{equation}
\end{defn}

As in the commutative case, we may consider such a
fraction as an 
element of $\cA[[t]]$, that is, as a power series in $t$.

It is clear that $\bZ(\bm)_{1,1}=\bJ^{(r)}({\widehat{\by}}(\bm))$
in terms of the weights \eqref{noncoweights}.
We also define the generating function $\bF_\bm(t)$, in analogy with
the commutative case, to be
\begin{equation} \label{ncJfrac} \bF_\bm(t)=1+t\,
  \bJ^{(r)}\big({\widehat{\by}}_1(\bm),...,{\widehat{\by}}_{2r+1}(\bm)\big)\,
  \by_1(\bm)
\end{equation}
\begin{lemma}\label{ncpositivity}
The coefficient of $t^n$ in the power series expansion of
$\bF_\bm(t)$ is a Laurent polynomial in $\by(\bm)$ with
non-negative integer coefficients.  
\end{lemma}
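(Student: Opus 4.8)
The plan is to mirror the proof of Lemma \ref{positinvol} from the commutative case, using a sign-reversing involution on a refined graph, but now being careful that the weights are non-commutative and that the order of traversal is respected throughout. The only source of minus signs in the weights \eqref{noncoweights} is the term $t\by_{2i-1}-\by_{2i+1}^{-1}\by_{2i}$ appearing in the loop weight $\bw_{i,i}$ when $i\in I_-$. First I would split each such loop at vertex $i$ (for $i\in I_-$) into two loops $\ell_i^+$ and $\ell_i^-$, carrying the non-commutative weights $t\by_{2i-1}$ and $-\by_{2i+1}^{-1}\by_{2i}$ respectively, exactly as in the commutative picture. The resulting refined graph $G'_\bm$ has the property that every edge weight is either $1$, a manifestly positive Laurent monomial in $\by(\bm)$, or the single ``bad'' monomial $-\by_{2i+1}^{-1}\by_{2i}$ attached to $\ell_i^-$; crucially, $\bZ(G_\bm)_{1,1}=\bF_\bm(t)$-data is unchanged by this decomposition since the loop weight is additive.

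Next I would define the involution $\varphi$ on paths $p\in\cP_{1,1}(G'_\bm)$ precisely as before: locate the first occurrence (reading the path in traversal order) of either a step through some $\ell_i^-$ or a consecutive subsequence $i\to i+1\to i$ with $i\in I_-$, and swap the one for the other, leaving the rest of the path fixed. The key algebraic point I must verify is that this swap reverses the sign of the \emph{non-commutative} weight. The step through $\ell_i^-$ contributes the factor $-\by_{2i+1}^{-1}\by_{2i}$ at that position in the ordered product, while the subsequence $i\to i+1\to i$ contributes $\bw_{i,i+1}\bw_{i+1,i}=1\cdot\widehat{\by}_{2i}=\by_{2i+1}^{-1}\by_{2i}$ (using the $i\in I_-$ line of \eqref{noncoweights}). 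Since these two factors occupy the same position in the ordered product and all other factors to their left and right are identical in $p$ and $\varphi(p)$, the total weights satisfy ${\mathbf wt}(\varphi(p))=-{\mathbf wt}(p)$ — here it is essential that the two competing local pieces produce the \emph{same} monomial up to sign, so that the surrounding non-commuting factors match exactly and no reordering is needed.

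I would then conclude that $\varphi$ is an involution whose non-fixed points cancel in pairs, so that $\bZ(G'_\bm)_{1,1}$ reduces to the sum over fixed points $p=\varphi(p)$, i.e.\ paths using no $\ell_i^-$ and no subsequence $i\to i+1\to i$ for $i\in I_-$. Each such surviving path has a weight that is an ordered product of factors drawn only from $\{1,\ t\by_{2i-1},\ t(\by_{2i-1}+\by_{2i}),\ t^2\by_{2i+1}\by_{2i},\ t\by_{2i}\}$ together with the final right-multiplication by $\by_1(\bm)$ in \eqref{ncJfrac}, all of which are Laurent polynomials in $\by(\bm)$ with non-negative integer coefficients. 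As products and sums of such expressions stay in the same class, the coefficient of each $t^n$ in $\bF_\bm(t)$ is a non-negative-integer Laurent polynomial, as claimed.

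The main obstacle, and the step deserving the most care, is the weight-reversal verification in the non-commutative setting: I must check that $\varphi$ acts \emph{locally} on the ordered product, changing exactly one factor from $-\by_{2i+1}^{-1}\by_{2i}$ to $+\by_{2i+1}^{-1}\by_{2i}$ (or vice versa) while leaving every other factor, and their order, untouched. Because the loop $\ell_i^-$ and the excursion $i\to i+1\to i$ both begin and end at vertex $i$, they are interchangeable as local segments without disturbing the connectivity or the traversal order of the remainder of the path, which is what makes the cancellation valid even though the variables do not commute. I would note explicitly that, unlike a naive sign argument, nothing here relies on commuting the bad monomial past the others — the cancellation is positional — which is precisely why the commutative involution survives intact in the non-commutative world.
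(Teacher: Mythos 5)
Your proof is correct and follows essentially the same route as the paper, which likewise decomposes each loop at $i\in I_-$ into $\ell_i^+$ and $\ell_i^-$ and reuses the involution $\varphi$ of Lemma \ref{positinvol}, noting only that ``weights respect the order of steps''; your explicit verification that the swap replaces $-\by_{2i+1}^{-1}\by_{2i}$ by $\bw_{i,i+1}\bw_{i+1,i}=\by_{2i+1}^{-1}\by_{2i}$ at the \emph{same position} in the ordered product is precisely the detail the paper leaves implicit. (One negligible slip: surviving paths may also use the edge weight $\by_{2i+1}^{-1}\by_{2i}$ for $i\in I_-$ outside the forbidden configurations, which you omit from your list of allowed factors, but it is a positive Laurent monomial, so the conclusion stands.)
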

\begin{proof}
  The proof is identical to the proof in the commutative case
  (c.f. Lemma \ref{positinvol}), since weights respect the order of steps 
in the paths.   Suppose $i\in I_-$, so that $\bw_{i,i}$
  has a negative term.  Decompose the loop $i\to i$ with weight
  $\bw_{ii}={\widehat{\by}}_{2i-1}$ into two separate loops from $i$ to $i$,
  one, $l_i^+$, with weight $t\by_{2i-1}$ and the other, $l_i^-$, with weight
  $-\by_{2i+1}^{-1}\by_{2i}=-\bw_{i+1,i}$. Using the involution
  $\varphi$ defined in Lemma \ref{positinvol} on paths on the
  resulting graph $G'_\bm$, we again conclude that we have a
  cancellation of all paths $p\neq \varphi(p)$ in the partition function.
\end{proof}

An alternative proof of Lemma \ref{ncpositivity} uses the structure of non-commutative continued
fractions. 
One may simply rearrange the continued fraction expression
\eqref{ncJfrac} (from bottom to top) whenever $i\in I_-$, by
use the following non-commutative version of the identity
\eqref{identoneco}.  For any $\ba,\bb,\bc,\bu \in \cA$:
\begin{equation}\label{rearponc} 
\ba -\bc^{-1}\bb +({1}-\bc -\bu)^{-1} \bc^{-1}\bb =\ba+({1}-\bc -\bu)^{-1}(\bb+\bu\bc^{-1}\bb)
\end{equation}

Once again, this is done at the cost of introducing multiple branching
of the new fraction, since $\bu$ now appears in two places. Moreover,
new ``redundant" weights appear, from the new term
$\bu\bc^{-1}\bb$. These are, however, positive Laurent monomials in
$\by(\bm)$.

For use below, we note that we also have an analogue of the identity
\eqref{identwoco}:
For any
$\ba,\bb,\bc,\bu\in \cA$,
\begin{equation}\label{otherearnc}
\ba+ \bb +({1}-\bc -\bu)^{-1} \bc \bb =\ba +\big({1}-({\bf
  1}-\bu)^{-1}\bc\big)^{-1} \bb
\end{equation}

\subsubsection{Canonical form of continued fractions}\label{canoninc}
As in the commutative case, the fraction $\bF_\bm(t)$ can be rearranged
into a ``canonical" form, a (finite, multiply branching) continued
fraction whose coefficients are all Laurent monomials in 
$\by$, with coefficients equal to $t$. 

This is done as follows. From the bottom to the top level of the
fraction ($i=r,r-1,...,1$), we rearrange the fraction whenever
$m_{i}\neq m_{i-1}$. Let
$(\ba,\bb,\bc)=t(\by_{2i-1},\by_{2i},\by_{2i+1})$ and use \eqref{rearponc} if $i\in I_-$,
\eqref{otherearnc} if $i\in I_+$, to rearrange the fraction. This includes a last
``level zero" step of the continued fraction as well (case $i=0$),
which corresponds to the conventions $\by_{-1}=0$ and $\by_0={1}$.

The resulting ``canonical" continued fraction is interpreted as the
generating function for paths from and to the root vertex on the graph
$\Gamma_\bm$ (see Appendix \ref{appendixa}), but now with
non-commutative skeleton weights $t\by_i(\bm)$. 

In particular, the non-commutative
``redundant" weights $\bw_{i,j}$, $i-j>1$, generalizing
\eqref{redunco} now read, as a consequence of iterations of
\eqref{rearponc}:
\begin{equation}\label{redunc}
\bw_{i,j}=t \left(\prod_{a=i-1}^{j+1} \by_{2a}
\by_{2a-1}^{-1}\right)\by_{2j}
\end{equation}
where the product is taken in decreasing order, $a=i-1,i-2,...,j+1$.
\begin{example}\label{fundnc}
In the case of the fundamental Motzkin path $\bm_0=(0,0,...,0)$, we have
\begin{equation}\label{nccfzero}
\bF_{\bm_0}(t)=1+ t \bJ^{(r)}(t\by_1,t\by_2,...,t\by_{2r+1})\by_1 
\end{equation}
with $\by_i\equiv \by_i(\bm_0)$. The canonical form is obtained by applying \eqref{otherearnc}
at level zero, with $\ba=0$, $\bb=1$ and $\bc=t\by_1$, with the result:
\begin{equation}\label{nczerocf}
\bF_{\bm_0}(t)=\left(1- t\big(1-t\big(1-t\by_3-t(\cdots (1-t\by_{2r-1}
-t(1-t\by_{2r+1})^{-1}\by_{2r})^{-1}\cdots\big)^{-1}\by_4\big)^{-1}\by_2\big)^{-1}\by_1\right)^{-1}
\end{equation}
This is the generating function for paths on the graph $\Gamma_{\bm_0}$ of Example
\ref{exzero}, from vertex $1'$ to
itself, with the non-commutative weights $\bw_{i,i'}=\bw_{i-1,i}=1$, $\bw_{i',i}=t\by_{2i-1}$, 
$\bw_{i,i-1}=t\by_{2i-2}$ for $i\geq 2$ and $\bw_{1',1}=1$, $\bw_{1,1'}=t\by_1$.
\end{example}

\begin{example}\label{stilncex}
  In the case $\bm=\bm_1=(0,1,2,...,r-1)$, we have $m_i=m_{i-1}+1$
  for all $i$.  The corresponding continued fraction reads, with
  $\by_i\equiv \by_i(\bm_1)$:
\begin{eqnarray}
&&\ \ \ \bF_{\bm_1}(t)= \label{stielncf} \\
&& \!\!\!\!\!\!\!\!\!\!\!\!\!
1+t\left({1}-t\by_1-t\by_2-t^2\big(..(1-t\by_{2r-1}-t\by_{2r}-t^2(1-t\by_{2r+1})^{-1}
\by_{2r+1}\by_{2r})^{-1}..\big)^{-1}\by_2\by_1\right)\by_1 \nonumber \\
&&\quad \qquad =\left(1-t\big(1 -t(..(1-t(1-t \by_{2r+1})^{-1}\by_{2r})^{-1}..)^{-1}\by_2\big)^{-1}\by_1\right)^{-1}
\nonumber 
\end{eqnarray}
where the first line is the Jacobi-like expression \eqref{ncJfrac} and
the second line has the canonical form.
\end{example}
The second identity in Example \ref{stilncex} is the natural non-commutative
generalization of the Stieltjes continued fraction expansion.  In the
following, we will refer to the case $\bm=\bm_1$ as the ``Stieltjes
point".

The graph $\Gamma_{\bm_1}$ associated to this example is a chain of
$2r+2$ vertices labeled $1,2,...,2r+2$ (see Appendix \ref{appendixa}),
with weights $\bw_{i,i+1}={1}$ and $\bw_{i+1,i}=t \by_i$,
$i=1,2,...,2r+1$.

\begin{example}
In the case $r=3$ and $\bm=(0,1,0)$, we have, with $\by_i\equiv \by_i(0,1,0)$:
\begin{eqnarray*}
&&\ \ \  \bF_{(0,1,0)}(t)=\\
&& \!\!\!\!\!\!\!\!\!\!\!\!\!1+t\left(1 -t(\by_1+\by_2)-t^2\big(1-t\by_3+\by_5^{-1}\by_4
-(1 -t\by_5-t(1 -t \by_7)^{-1}\by_6)^{-1}\by_5^{-1}\by_4 \big)^{-1}\by_3\by_2\right)\by_1\\
&&\!\!\!\!\!\!\!\!\!\!\!\!\! = \left(1 -t\big(1 -t\big(1 -t(1 -t (1 -t \by_5-t(1-t\by_7)^{-1}\by_6)^{-1}
(\by_4+(1-t\by_7)^{-1}\by_6\by_5^{-1}\by_4))^{-1}\by_3)^{-1}\by_2\big)^{-1}\by_1\right)^{-1}
\end{eqnarray*}
where the first expression is the Jacobi-like form \eqref{ncJfrac},
and the second is the canonical form, obtained by first applying
\eqref{rearponc} with $\ba=t\by_3$, $\bb=t\by_4$ and $\bc=t\by_5$
while $\bu=1 -t \by_5-t(1-t\by_7)^{-1}\by_6$, then \eqref{otherearnc}
with $\ba=t\by_1$, $\bb=t\by_2$, $\bc=t\by_3$, and finally
\eqref{otherearnc} at ``level zero", with $\ba=0$, $\bb=1$ and
$\bc=t\by_1$.  We see the emergence of the ``redundant" weight
$t\by_6\by_5^{-1}\by_4$. See the corresponding graph $\Gamma_{(0,1,0)}$ 
in Appendix \ref{appendixa}.
\end{example}

\subsection{Mutations}In the non-commutative case, we do
not have an evolution equation for cluster variables. Instead, we
start by defining mutations of the weights $\by$. We will then show
that these mutations have the same effect on
the generating functions $F_\bm(t)$, considered as continued
fractions, as the non-commutative analogues of the rearrangement
Lemmas \ref{rear1lem} and \ref{rear2lem}.

We consider the set of Motzkin paths $\bm\in \cM$. We use the same
notion of mutations on Motzkin paths as in the commutative
case. Recall that mutations act on Motzkin paths as follows:
$\mu_i(\bm) = \bm'$ where $\bm'$ differs from $\bm$ only in that
$m_i'=m_i+1$. In order to obtain any Motzkin path $\bm\in \cM$ via a
sequence of mutations, we
need only consider the two types of mutations described in cases (i)
and (ii) of Section \ref{mutacosec}.
\subsubsection{Mutations of weights} 

We now define the action of mutations on $\cA$, as follows.
\begin{defn}\label{mutationsonncy}
Let $\mu_i(\bm)=\bm'$. Define $\by':=\by(\bm')$ as a function of
$\by:=\by(\bm)$ as follows:
\begin{eqnarray}
\hbox{\bf Case (i):} & \left\{\begin{array}{l} 
\by'_{2i-1} = \by_{2i-1}+\by_{2i},\\
\by'_{2i} = \by_{2i+1}\by_{2i} (\by_{2i-1}+\by_{2i})^{-1},\\
\by'_{2i+1} = \by_{2i+1}\by_{2i-1}(\by_{2i-1}+\by_{2i})^{-1}.
\end{array} 
\right.\label{caseonenc}\\
\hbox{\bf Case (ii):} & \left\{\begin{array}{l} 
\by'_{2i-1} = \by_{2i-1}+\by_{2i},\\
\by'_{2i} = \by_{2i+1}\by_{2i} (\by_{2i-1}+\by_{2i})^{-1},\\
\by'_{2i+1} = \by_{2i+1}\by_{2i-1}(\by_{2i-1}+\by_{2i})^{-1},\\
\by'_{2i+2} = \by_{2i+2}\by_{2i-1}(\by_{2i-1}+\by_{2i})^{-1}.\\
\end{array} 
\right.\label{casetwonc}
\end{eqnarray}
with all other $\by'_j=\by_j$ unchanged.
\end{defn}

Our main goal in this section is the proof of Theorem \ref{defymutnc},
concerning the action of mutations on the generating functions $\bF_\bm(t)$.
Towards this end, we prove some non-commutative analogues of the
continued fraction rearrangement lemmas of Section \ref{sectiontwo}.

\subsubsection{Mutations as non-commutative rearrangements}

In addition to the two identities \eqref{rearponc} and \eqref{otherearnc},
we have the two following local rearrangement lemmas for 
non-commutative continued fractions.

\begin{lemma}\label{rear1lemnc}
For all $\ba,\bb\in \cA$ we have
\begin{equation}\label{rear1nc}
{1}+({1}-\ba -\bb)^{-1} \ba =\left( {1}-({1}-\bb)^{-1} \ba \right)^{-1}
\end{equation}
\end{lemma}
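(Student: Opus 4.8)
The statement to prove is the non-commutative identity
$$
1+(1-\mathbf a-\mathbf b)^{-1}\mathbf a=\bigl(1-(1-\mathbf b)^{-1}\mathbf a\bigr)^{-1},
$$
which is the non-commutative analogue of Lemma \ref{rear1lem}. Since we work in the skew field $\cA$ (or, more safely, in $\cA[[t]]$ where all the relevant $1-(\cdots)$ are invertible as formal power series in $t$), the cleanest route is a direct algebraic manipulation that avoids ever dividing on the wrong side. The key point is to remember that non-commutativity forces me to track left/right multiplication scrupulously: identities like $x^{-1}y^{-1}=(yx)^{-1}$ are fine, but I must never silently commute $\mathbf a$ past $\mathbf b$.

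\textbf{Step 1: reduce both sides to a common inverse.} The plan is to show each side equals $\bigl(1-(1-\mathbf b)^{-1}\mathbf a\bigr)^{-1}$. The right-hand side is already in that form, so I only need to massage the left-hand side. First I rewrite $1=(1-\mathbf a-\mathbf b)^{-1}(1-\mathbf a-\mathbf b)$ so that the left-hand side becomes a single left-factor $(1-\mathbf a-\mathbf b)^{-1}$ times something:
$$
1+(1-\mathbf a-\mathbf b)^{-1}\mathbf a
=(1-\mathbf a-\mathbf b)^{-1}\bigl[(1-\mathbf a-\mathbf b)+\mathbf a\bigr]
=(1-\mathbf a-\mathbf b)^{-1}(1-\mathbf b).
$$

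\textbf{Step 2: invert and simplify.} Now I use $(XY)^{-1}=Y^{-1}X^{-1}$ on the right-hand side of the claimed identity. I compute
$$
\bigl(1-(1-\mathbf b)^{-1}\mathbf a\bigr)^{-1}
=\Bigl((1-\mathbf b)^{-1}\bigl[(1-\mathbf b)-\mathbf a\bigr]\Bigr)^{-1}
=\bigl(1-\mathbf b-\mathbf a\bigr)^{-1}(1-\mathbf b),
$$
where in the middle step I factored out $(1-\mathbf b)^{-1}$ on the left from $1-(1-\mathbf b)^{-1}\mathbf a=(1-\mathbf b)^{-1}\bigl((1-\mathbf b)-\mathbf a\bigr)$, and then applied the reversal rule for the inverse of a product. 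Since $1-\mathbf b-\mathbf a=1-\mathbf a-\mathbf b$, this matches exactly the expression obtained in Step 1, completing the proof.

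\textbf{Main obstacle.} There is no deep difficulty here; this is a short formal identity and the author's proof is likely a one-line verification. The only genuine pitfall is bookkeeping: one must resist applying commutative-style cancellation and instead factor consistently from one side (I factored $(1-\mathbf b)^{-1}$ out on the \emph{left} in both places, which is what makes the product-reversal rule line the two computations up). A secondary point worth a sentence is justifying invertibility of the relevant elements; this is immediate if we read the identity in $\cA[[t]]$ with $\mathbf a,\mathbf b$ having no constant term (as is the case in all the continued-fraction applications, where $\mathbf a,\mathbf b\in t\cA[[t]]$), and is assumed given otherwise as part of the hypothesis that we are working in the skew field $\cA$.
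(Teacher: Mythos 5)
Your proof is correct and matches the paper's approach: the paper proves Lemma \ref{rear1lemnc} simply ``by direct calculation'' (noting it is the level-zero case of \eqref{otherearnc}), and your two-step verification---factoring $(1-\ba-\bb)^{-1}$ and $(1-\bb)^{-1}$ out on the left and using the reversal rule $(XY)^{-1}=Y^{-1}X^{-1}$---is exactly such a direct calculation, carried out with the correct left/right bookkeeping. Nothing is missing.
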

\begin{proof} By direct calculation. This is the ``level zero" application 
of identity \eqref{otherearnc}.
\end{proof}

\begin{lemma}\label{rear2lemnc}
For all $\ba,\bb,\bc,\bd \in \cA$, with $\ba+\bb$ invertible, we have
\begin{equation}\label{rear2nc}
\ba+\left({1}-(1-\bd)^{-1}\bc \right)^{-1} \bb =
\left({1}-\bd-({1}-\bc')^{-1}\bb'\right)^{-1} \ba'
\end{equation}
where $\ba',\bb',\bc'$ are defined by
\begin{equation}\label{defabcnc}
\ba'=\ba+\bb \qquad \bb'=\bc \bb (\ba+\bb)^{-1} \qquad \bc'=\bc \ba (\ba+\bb)^{-1}
\end{equation}
\end{lemma}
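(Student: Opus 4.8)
The plan is to strip the common right factor $\ba'=\ba+\bb$ and collapse everything to a single-resolvent identity, using only the rearrangement \eqref{otherearnc}, Lemma \ref{rear1lemnc}, and the two algebraic relations that the definitions \eqref{defabcnc} force. First I would record those relations. From $\bb'=\bc\bb(\ba+\bb)^{-1}$ we get $\bb'\ba'=\bc\bb$, and summing the two definitions gives
\[
\bb'+\bc'=\bc\bb(\ba+\bb)^{-1}+\bc\ba(\ba+\bb)^{-1}=\bc(\bb+\ba)(\ba+\bb)^{-1}=\bc .
\]
The identity $\bc=\bb'+\bc'$ is the hinge of the whole argument.

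Next I would rewrite the left-hand side. Reading \eqref{otherearnc} from right to left with $\bu=\bd$ (and the leading $\ba$ retained), the factor $\bigl(1-(1-\bd)^{-1}\bc\bigr)^{-1}\bb$ becomes $\bb+(1-\bc-\bd)^{-1}\bc\bb$, so that
\[
\ba+\bigl(1-(1-\bd)^{-1}\bc\bigr)^{-1}\bb
=\ba+\bb+(1-\bc-\bd)^{-1}\bc\bb
=\Bigl(1+(1-\bc-\bd)^{-1}\bb'\Bigr)\ba',
\]
where I used $\ba+\bb=\ba'$ and $\bc\bb=\bb'\ba'$. It then remains only to fold the bracket into one resolvent. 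I would substitute $\bc=\bb'+\bc'$, so that $1-\bc-\bd=1-\bb'-(\bc'+\bd)$, and apply Lemma \ref{rear1lemnc} with its ``$\ba$'' equal to $\bb'$ and its ``$\bb$'' equal to $\bc'+\bd$. This gives
\[
1+(1-\bc-\bd)^{-1}\bb'=\Bigl(1-(1-\bc'-\bd)^{-1}\bb'\Bigr)^{-1},
\]
and multiplying on the right by $\ba'$ produces the claimed right-hand side.

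The main obstacle is purely the bookkeeping of where $\bd$ lands, and it deserves a warning. The relation $\bc=\bb'+\bc'$ carries $\bd$ \emph{into} the inner resolvent, so the identity that actually emerges is
\[
\ba+\bigl(1-(1-\bd)^{-1}\bc\bigr)^{-1}\bb=\bigl(1-(1-\bd-\bc')^{-1}\bb'\bigr)^{-1}\ba',
\]
with $\bd$ sitting alongside $\bc'$, exactly as in the commutative Lemma \ref{rear2lem}, whose right-hand side is $a'/(1-b'/(1-c'-d))$. If instead $\bd$ is placed outside the inner resolvent, as in the printed display \eqref{rear2nc}, the two sides already disagree in the commutative limit (a one-line check with scalars shows $\ba+\bb/(1-\bc/(1-\bd))$ is not $\ba'/(1-\bd-\bb'/(1-\bc'))$). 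I would therefore read the right-hand side of \eqref{rear2nc} as $\bigl(1-(1-\bd-\bc')^{-1}\bb'\bigr)^{-1}\ba'$; with that reading the proof above is complete in two rearrangement steps.
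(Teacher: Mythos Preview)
Your proof is correct, and so is your diagnosis of the typo. The paper's own proof is only the phrase ``By direct calculation,'' so your two-step reduction via \eqref{otherearnc} and Lemma~\ref{rear1lemnc} is a genuinely more structured route than what the paper records. The key algebraic observations $\bb'+\bc'=\bc$ and $\bb'\ba'=\bc\bb$ are exactly what make the factoring work, and each step checks in the non-commutative setting (addition in $\cA$ is commutative, so $\bb+\ba=\ba+\bb$ is legitimate).

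On the placement of $\bd$: you are right that \eqref{rear2nc} as printed is incorrect. Your scalar check already shows this, and two further pieces of internal evidence in the paper confirm it. First, the commutative Lemma~\ref{rear2lem} has $d$ sitting with $c'$ inside the inner fraction, not outside. Second, and decisively, when the paper actually \emph{applies} Lemma~\ref{rear2lemnc} in the Case~(i) and Case~(ii) computations just below, the result displayed is
\[
\big(1-(1-\be'-\bu\bbf)^{-1}\bd'\big)^{-1}\bc',
\]
i.e.\ the analogue of $\bd$ (here $\bu\bbf$) appears alongside $\be'$ inside the inner resolvent, matching your corrected reading $(1-(1-\bd-\bc')^{-1}\bb')^{-1}\ba'$ and not the printed \eqref{rear2nc}. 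So the lemma is used in its correct form even though it is stated with a misprint.
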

\begin{proof} By direct calculation.
\end{proof}

Together with the identities \eqref{rearponc}-\eqref{otherearnc}, these
two lemmas may be used to rearrange the non-commutative continued
fraction expression of $\bF_\bm(t)\mapsto \bF_{\bm'}(t)$
\eqref{ncJfrac}.


Define the non-commutative version of the fractions $J_{i}$ as in
the proof of Lemma \ref{rear2lem}.
\begin{defn} The non-commutative finite continued fraction
  $\bJ_i(\bm)$ is defined inductively as follows:
$$
\bJ_{i}(\bm) = \widehat{\by}_{2i-1} +
(1-\bJ_{i+1}(\bm))^{-1}\widehat{\by}_{2i},\qquad
\bJ_{r+1} (\bm) = \widehat{\by}_{2r+1}(\bm).
$$
\end{defn}
Note that $\bJ_0(\bm)=\bJ^{(r)}({\widehat{\by}}(\bm))$,
and $\bF_\bm(t) = 1+t \bJ_0(\bm)\by_1$, with the convention that
${\widehat{\by}}_{-1}=0$ and ${\widehat{\by}}_0=1$.

Let us fix $i$, and define:
$$ \ba=t\by_{2i-3},  \bb=t\by_{2i-2},  \bc=t\by_{2i-1},
\bd=t\by_{2i},  \be=t\by_{2i+1}, 
\bbf=t\by_{2i+2}, \bu=(1 -\bJ_{i+2}(\bm))^{-1},$$ 
where $\by_i:=\by_i(\bm)$. Similarly, we use the same notation for the primed
counterparts $\by_i':=\by_i(\bm')$.  

Starting from $\bF_\bm(t)$ in Jacobi-like form
\eqref{ncJfrac}, we apply fraction rearrangements on the part $\bJ_{i-1}$ of
$\bF_\bm(t)$. 
\begin{itemize}
\item{\bf Case (i)}: We first transform the Jacobi-like form into the local canonical form 
for $\bJ_{i-1}(\bm)$ by applying \eqref{otherearnc} at the second level, namely:
\begin{eqnarray*}
\bJ_{i-1}(\bm)&=&\ba +\big(1 -\bc-\bd -(1 -\be -\bu\bbf)^{-1}\be \bd \big)^{-1}\bb \\
&=&\ba +\big(1 -\bc -(1 -(1 -\bu\bbf)^{-1}\be)^{-1}\bd  \big)^{-1}\bb
\end{eqnarray*}
Next, we apply Lemma \ref{rear2lemnc} to the second level, namely:
\begin{equation*}
\bJ_{i-1}(\bm)=\ba +\big(1 -(1 -(1 -\be'-\bu\bbf)^{-1} \bd')^{-1}\bc'  \big)^{-1}\bb
\end{equation*}
with $\bc'=\bc+\bd$, $\bd'=\be\bd(\bc')^{-1}$ and $\be'=\be\bc (\bc')^{-1}$. 
Finally, we put this back in Jacobi-like form, by applying \eqref{otherearnc} backwards to the first level, namely:
\begin{equation*}
\bJ_{i-1}(\bm')=\bJ_{i-1}(\bm)=\ba+\bb +\big(1 -\bc'-(1 -\be'-\bu\bbf)^{-1} \bd'\big)^{-1}\bc' \bb
\end{equation*}
\par
\begin{center}
\begin{tabular}{ccc}
\psset{unit=2mm,linewidth=.4mm,dimen=middle}
\begin{pspicture}(0,0)(25,14)
\psline(0,0)(0,5)(5,10)
\rput(0,0){\pscircle*[linecolor=red](0,0){.7}}
\rput(0,5){\pscircle*[linecolor=red](0,0){.7}}
\rput(5,10){\pscircle*[linecolor=red](0,0){.7}}
\rput(4,0){$i-1$}
\rput(2,5){$i$}
\rput(9,10){$i+1$}
\psline(20,0)(20,12)
\psline[linestyle=dotted](20,12)(20,14)
\rput(24,0){$i-1$}
\rput(22,5){$i$}
\rput(24,10){$i+1$}
\multips(17.75,0)(0,5){3}{\pscircle(0,0){2}}
\multips(20,0)(0,5){3}{\pscircle*[linecolor=red](0,0){.7}}
\rput(15,0){\blue\small $\ba$}
\rput(12,5){\blue \small $\bc+\bd$}
\rput(15.75,5){$\times$}
\rput(15,10){\blue \small $\be$}
\rput(21,2.5){\blue \small $\bb$}
\rput(22,7.5){\blue\small $\be\bd$}
\rput(21,12.5){\blue\small $\bbf$}
\rput(20,7.5){$\times$}
\end{pspicture}
&\psset{unit=2mm,linewidth=.4mm,dimen=middle}
\begin{pspicture}(0,0)(10,14)
\psline{->}(2,6)(7,6)
\end{pspicture}
 &
\psset{unit=2mm,linewidth=.4mm,dimen=middle}
\begin{pspicture}(0,0)(30,14)
\psline(0,0)(5,5)(5,10)
\rput(0,0){\pscircle*[linecolor=red](0,0){.7}}
\rput(5,5){\pscircle*[linecolor=red](0,0){.7}}
\rput(5,10){\pscircle*[linecolor=red](0,0){.7}}
\rput(4,0){$i-1$}
\rput(7,5){$i$}
\rput(9,10){$i+1$}
\psline(20,0)(20,12)
\psline[linestyle=dotted](20,12)(20,14)
\rput(24,0){$i-1$}
\rput(22,5){$i$}
\rput(24,10){$i+1$}
\multips(17.75,0)(0,5){3}{\pscircle(0,0){2}}
\multips(20,0)(0,5){3}{\pscircle*[linecolor=red](0,0){.7}}
\rput(12,0){\blue\small $\ba+\bb$}
\rput(15.75,0){$\times$}
\rput(14.5,5){\blue \small $\bc'$}
\rput(14.5,10){\blue \small $\be'$}
\rput(22.5,2.5){\blue \small $\bc'\bb$}
\rput(20,2.5){$\times$}
\rput(21.5,7.5){\blue\small $\bd'$}
\rput(21,12.5){\blue\small $\bbf$}
\end{pspicture}
\end{tabular}
\end{center}
\vskip .5in

\item{\bf Case (ii)}: We apply directly Lemma \ref{rear2lemnc} to the second level of $\bJ_{i-1}(\bm)$,
namely:
\begin{eqnarray*}
\bJ_{i-1}(\bm)&=&\ba +\big(1 -\bc-\big(1 -\be -\bu\bbf \big)^{-1}\bd \big)^{-1}\bb\\
&=&\ba +\big(1 -(1 - (1 -\be'-\bu\bbf')^{-1}
(\bd'+\bu\bbf'(\be')^{-1}\bd'))^{-1}\bc'\big)^{-1}\bb\\
\end{eqnarray*}
with $\bc'=\bc+\bd$, $\bd'=\be\bd(\bc')^{-1}$, $\be'=\be\bc(\bc')^{-1}$ and $\bbf'=\bbf\bc(\bc')^{-1}$. 
To bring this back to Jacobi-like form, we first apply \eqref{rearponc} backward on the third level, and
then \eqref{otherearnc} backward on the first one:
\begin{eqnarray*}
\bJ_{i-1}(\bm)&=&\ba +\big(1 -(1 +(\be')^{-1}\bd'- 
(1 -\be'-\bu\bbf')^{-1}(\be')^{-1}\bd')^{-1}\bc'\big)^{-1}\bb\\
&=&\ba +\bb +\big(1 -\bc' +(\be')^{-1}\bd'- 
(1 -\be'-\bu\bbf')^{-1}(\be')^{-1}\bd'\big)^{-1}\bc'\bb
\end{eqnarray*}
\par
\begin{center}
\begin{tabular}{ccc}

\psset{unit=2mm,linewidth=.4mm,dimen=middle}
\begin{pspicture}(0,0)(25,18)
\psline(0,0)(0,10)
\multips(0,0)(0,5){3}{\pscircle*[linecolor=red](0,0){.7}}

\rput(4,0){$i-1$}
\rput(2,5){$i$}
\rput(4,10){$i+1$}
\psline(20,0)(20,15)
\psline[linestyle=dotted](20,15)(20,18)
\rput(24,0){$i-1$}
\rput(22,5){$i$}
\rput(24,10){$i+1$}
\multips(17.75,0)(0,5){3}{\pscircle(0,0){2}}
\multips(20,0)(0,5){3}{\pscircle*[linecolor=red](0,0){.7}}
\rput(15,0){\blue\small $\ba$}
\rput(15,5){\blue \small $\bc$}
\rput(15,10){\blue \small $\be$}
\rput(21,2.5){\blue \small $\bb$}
\rput(21,7.5){\blue\small $\bd$}
\rput(21,12.5){\blue\small $\bbf$}
\end{pspicture}
&
\psset{unit=2mm,linewidth=.4mm,dimen=middle}
\begin{pspicture}(0,0)(10,18)
\psline{->}(2,6)(7,6)
\end{pspicture}
&
\psset{unit=2mm,linewidth=.4mm,dimen=middle}
\begin{pspicture}(0,0)(30,18)
\psline(0,0)(5,5)(0,10)
\rput(0,0){\pscircle*[linecolor=red](0,0){.7}}
\rput(5,5){\pscircle*[linecolor=red](0,0){.7}}
\rput(0,10){\pscircle*[linecolor=red](0,0){.7}}
\rput(4,0){$i-1$}
\rput(7,5){$i$}
\rput(4,10){$i+1$}
\psline(26,0)(26,15)
\psline[linestyle=dotted](26,15)(26,18)
\rput(30,0){$i-1$}
\rput(28,5){$i$}
\rput(30,10){$i+1$}
\multips(23.75,0)(0,5){3}{\pscircle(0,0){2}}
\multips(26,0)(0,5){3}{\pscircle*[linecolor=red](0,0){.7}}
\rput(17.5,0){\blue\small $\ba+\bb$}
\rput(21.75,0){$\times$}
\rput(15,5){\blue \small $\bc'-\be'^{-1}\bd'$}
\rput(21.75,5){$=$}
\rput(20.5,10){\blue \small $\be'$}
\rput(28.5,2.5){\blue \small $\bc'\bb$}
\rput(26,2.5){$\times$}
\rput(30.5,7.5){\blue\small $\be'^{-1}\bd'$}
\rput(26,7.5){$=$}
\rput(27,12.5){\blue\small $\bbf'$}
\end{pspicture}\end{tabular}
\end{center}
\vskip .5in
\end{itemize}
Note that in both cases, the relation of the primed variables to the unprimed
variables is the same as the relation of the mutated variables $y_j'$
to $y_j$ in Definition \ref{mutationsonncy}.

Clearly, for $i>1$, we always have $\bJ_0(\bm')=\bJ_0(\bm)$,
or equivalently $\bF_\bm(t)=\bF_{\bm'}(t)$.
When $i=1$, we have $\ba=0$, $\bb=1$ and in all cases $\bc'=\by_1(\bm')$, so the final formula
amounts to $\bJ_0(\bm)=1+t \bJ_0(\bm')\by_1(\bm')$, or equivalently
$\bF_\bm(t)=1+t\bF_{\bm'}(t)\by_1(\bm)$.

We therefore have the following theorem:
\begin{thm}\label{defymutnc}
  Under the mutation $\mu_i$ on $\cA$, $\mu_i: \by(\bm)\mapsto
  \by(\bm')$ of Definition \ref{mutationsonncy}, the function
  $\bF_\bm(t)$ is transformed as:
\begin{eqnarray}
i=1:\qquad \bF_\bm(t)&=&1+t\bF_{m'}(t)\by_1(\bm)\label{rrt}, \\
i>1:\qquad  \bF_\bm(t)&=&\bF_{\bm'}(t)\label{eqf}.
\end{eqnarray}
\end{thm}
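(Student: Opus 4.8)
The plan is to prove the theorem as a direct corollary of the local continued-fraction rearrangements set up just above. Since the mutation $\mu_i$ of Definition \ref{mutationsonncy} alters only the weights $\by_{2i-1},\by_{2i},\by_{2i+1}$ (and $\by_{2i+2}$ in Case (ii)), its effect on $\bF_\bm(t)$ is confined to a bounded window of the Jacobi-like fraction $\bJ_0(\bm)=\bJ^{(r)}(\widehat{\by}(\bm))$. First I would make this locality precise: writing $\bF_\bm(t)=1+t\,\bJ_0(\bm)\,\by_1(\bm)$ and unfolding the recursion $\bJ_{i-1}(\bm)=\widehat{\by}_{2i-3}+(1-\bJ_i(\bm))^{-1}\widehat{\by}_{2i-2}$, the tail $\bu=(1-\bJ_{i+2}(\bm))^{-1}$ is built entirely from weights $\by_{2i+3},\dots$ fixed by $\mu_i$, and the outer levels $0,\dots,i-2$ likewise depend only on unchanged weights. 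Hence it suffices to show that the affected block $\bJ_{i-1}(\bm)$, carrying the \emph{fixed} tail $\bu$, is transformed into $\bJ_{i-1}(\bm')$ computed from the mutated weights $\by(\bm')$ through \eqref{noncoweights} applied to the new path $\bm'$.

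Next, with the abbreviations $\ba,\bb,\bc,\bd,\be,\bbf,\bu$ fixed as above, I would carry out the two chains of rearrangements case by case. For Case (i) I would pass from the Jacobi-like form of $\bJ_{i-1}(\bm)$ to local canonical form via \eqref{otherearnc}, apply Lemma \ref{rear2lemnc}, and then apply \eqref{otherearnc} backwards to return to Jacobi form; for Case (ii) I would apply Lemma \ref{rear2lemnc} directly and then run \eqref{rearponc} and \eqref{otherearnc} backwards. In every case the primed data produced by Lemma \ref{rear2lemnc}, namely $\bc'=\bc+\bd$, $\bd'=\be\bd(\bc')^{-1}$, $\be'=\be\bc(\bc')^{-1}$ (and $\bbf'=\bbf\bc(\bc')^{-1}$ in Case (ii)), coincide term-for-term with the mutation rules \eqref{caseonenc}--\eqref{casetwonc}. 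Because Lemmas \ref{rear1lemnc} and \ref{rear2lemnc} and the identities \eqref{rearponc}--\eqref{otherearnc} are exact identities in $\cA$, each rearrangement leaves the value of the fraction invariant, so $\bF_\bm(t)$ is literally unchanged and equals the final rearranged expression.

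The step I expect to be the real obstacle is the bookkeeping of $I_\pm$-membership. Under $\mu_i$ the local shape of the Motzkin path changes, so indices whose $\widehat{\by}$-weights were computed by the $I_+$, $I_-$, or ``otherwise'' branch of \eqref{noncoweights} may switch branches; for instance in Case (i) the step at $i$ enters $I_+$ while the step at $i+1$ leaves it. I would therefore verify that the rearranged block, once rewritten through \eqref{noncoweights} \emph{for the path $\bm'$}, is exactly $\bJ_{i-1}(\bm')$, and this is precisely where the non-commutative ordering must be tracked with care, since the inverses $(\bc')^{-1}$ and $(\be')^{-1}$ appear on prescribed sides of the factors. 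Finally I would dispose of the boundary case $i=1$ separately: the conventions $\by_{-1}=0$ and $\by_0=1$ force $\ba=0$ and $\bb=1$, and in all branches $\bc'=\by_1(\bm')$, so the transformation collapses to $\bJ_0(\bm)=1+t\,\bJ_0(\bm')\,\by_1(\bm')$, equivalently $\bF_\bm(t)=1+t\,\bF_{\bm'}(t)\,\by_1(\bm)$, giving \eqref{rrt}; for $i>1$ the unchanged outer levels yield $\bJ_0(\bm')=\bJ_0(\bm)$ and hence \eqref{eqf}.
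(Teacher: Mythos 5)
Your proposal is correct and follows essentially the same route as the paper's own proof: the same localization of the mutation to the block $\bJ_{i-1}(\bm)$ carrying the fixed tail $\bu$, the same case-by-case rearrangement chains (Case (i) via \eqref{otherearnc}, Lemma \ref{rear2lemnc}, then \eqref{otherearnc} backwards; Case (ii) via Lemma \ref{rear2lemnc} followed by \eqref{rearponc} and \eqref{otherearnc} backwards), and the same treatment of the boundary case $i=1$ with $\ba=0$, $\bb=1$ and $\bc'=\by_1(\bm')$. The only nuance is that you flag explicitly the $I_\pm$-membership bookkeeping under $\mu_i$, which the paper handles implicitly through its choice of which identity applies at each level and its accompanying diagrams.
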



\subsection{An explicit expression for the weights}

Although we do not have evolution equations of the form \eqref{Qsys}
in the non-commutative case, we make a definition of the variables
$\bR_n$ as follows.

\begin{defn}
Let $R_nR_0^{-1}$ be defined as the coefficients in the power series
expansion of $F_{\bm_0}(t)$, as follows:
\begin{equation}
\bF_{m_0}(t)=\sum_{n=0}^\infty t^n \bR_n\bR_0^{-1}.
\end{equation}
\end{defn}
In particular, $\by_1(\bm_0) = R_1 R_0^{-1}$. To make the contact with
the commutative case of Section 2, one should compare $\bR_n$ with
$R_{1,n}$ of that section.

In this section, we relate the weights $\by(\bm)$ to the coefficients
of the generating series The main result is Theorem \ref{ncweightcalc}
below, which gives an explicit expression for all $\by(\bm)$ in terms
of the $\bR_n$'s.

\subsubsection{Stieltjes point and quasi-determinants}

The generating series $\bF_{m_0}(t)$ may be written as the
non-commutative ``Stieltjes" continued fraction $\bF_{\bm_1}(t)$ in
its canonical form \eqref{stielncf} at the Stieltjes point
$\bm=\bm_1$.  Since the sequence of forward mutations from $\bm_0$ to
$\bm_1$ only involves mutations $\mu_i$ with $i>1$, the generating
series is preserved at each step, and $\bF_{m_0}=\bF_{m_1}$.

The coefficients $\by_i(\bm_1)$ of the non-commutative Stieltjes
continued fraction \eqref{stielncf} are expressed in terms of the
$\bR_n$'s via a non-commutative generalization of the celebrated
Stieltjes formula (Theorem \ref{stilnc} below). For this, we need a
number of definitions.  In the following, $A$ denotes a square
$n\times n$ matrix with coefficients $a_{i,j}\in \cA$.

We recall the notion
of quasi-determinant \cite{GGRW}.
\begin{defn}[Quasi-determinant]\label{quasidet}
  The quasi-determinant of $A$ with respect to the entry $(p,q)$ is
  defined inductively by the formula:
\begin{eqnarray*}
\vert A\vert_{p,q}&=& 
a_{p,q}-\sum_{i\neq p,j\neq q} a_{p,j} \left(\vert A^{p,q}\vert_{i,j}\right)^{-1} a_{i,q}\\
\vert(a_{1,1})\vert_{1,1}&=&a_{1,1}
\end{eqnarray*}
where $A^{p,q}$ denotes the $n-1\times n-1$ matrix obtained from $A$ by erasing row
$p$ and column $q$, while keeping the row and column labeling of $A$.
\end{defn}

We refer to \cite{GGRW} for a review of the many properties of the quasi-determinant. 
For now we will only need the following:
\begin{prop}[\cite{GGRW}, Proposition 1.4.6]\label{vaniprop}
The three following statements are equivalent:
\begin{itemize}
\item (i) $\vert A\vert_{p,q}=0$ 
\item (ii) the $p$-th row of $A$ is a left linear combination of the other rows
\item (iii) the $q$-th column of $A$ is a right linear combination of the other columns
\end{itemize}
\end{prop}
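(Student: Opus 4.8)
The plan is to reduce the statement to the Schur-complement form of the quasi-determinant and then read off all three conditions as the solvability of a single linear system over the skew field $\cA$. First I would normalize the indices: since simultaneously relabelling rows and columns carries $|A|_{p,q}$ to the corresponding quasi-determinant of the relabelled matrix and preserves both the notion of a row being a left linear combination of the others and that of a column being a right linear combination of the others, there is no loss of generality in taking $p=q=n$. I then write $A$ in block form
\[
A=\begin{pmatrix} B & b \\ c & d \end{pmatrix},\qquad
B=A^{n,n},\quad b=(a_{i,n})_{i<n},\quad c=(a_{n,j})_{j<n},\quad d=a_{n,n},
\]
where $B$ is the $(n-1)\times(n-1)$ submatrix whose invertibility is exactly the hypothesis under which $|A|_{n,n}$ is defined.

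The key input is the hereditary identity for quasi-determinants from \cite{GGRW}, namely $(|B|_{i,j})^{-1}=(B^{-1})_{j,i}$ for all $i,j<n$. Substituting this into Definition \ref{quasidet} collapses the double sum into a matrix product and yields the Schur-complement expression
\[
|A|_{n,n}=d-\sum_{i,j<n} a_{n,j}\,(B^{-1})_{j,i}\,a_{i,n}=d-cB^{-1}b.
\]
This is the step I expect to be the main obstacle, in the sense that everything hinges on it: one must know that the entries of $B^{-1}$ are precisely the inverses of the quasi-determinants of $B$, which is where the invertibility of $B$ is genuinely used and which must be imported from \cite{GGRW} (or re-established by a short induction on $n$ from Definition \ref{quasidet}).

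With the Schur form in hand the three equivalences are immediate linear algebra, carried out with scalars on the correct side. For (i)$\Leftrightarrow$(iii), the $n$-th column $\binom{b}{d}$ is a right linear combination of the first $n-1$ columns $\binom{B}{c}$ precisely when there is a column vector $x\in\cA^{n-1}$ with $Bx=b$ and $cx=d$; invertibility of $B$ forces $x=B^{-1}b$, so such an $x$ exists if and only if $cB^{-1}b=d$, i.e. if and only if $|A|_{n,n}=0$. For (i)$\Leftrightarrow$(ii) the argument is the mirror image: the $n$-th row $(c\ \ d)$ is a left linear combination of the rows $(B\ \ b)$ exactly when a row vector $y$ satisfies $yB=c$ and $yb=d$, whence $y=cB^{-1}$ and the consistency condition is again $cB^{-1}b=d$. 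Chaining these gives (ii)$\Leftrightarrow$(i)$\Leftrightarrow$(iii), and undoing the initial relabelling of rows and columns recovers the statement for general $(p,q)$.
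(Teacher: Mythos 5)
Your proof is correct. The paper offers no proof of this statement---it is quoted directly from \cite{GGRW}, Proposition 1.4.6---so the only meaningful comparison is with that source, and your argument is essentially the standard one given there: after relabelling to $p=q=n$, the inverse formula $(\vert B\vert_{i,j})^{-1}=(B^{-1})_{j,i}$ from \cite{GGRW} turns Definition \ref{quasidet} into the Schur-complement form $\vert A\vert_{n,n}=d-cB^{-1}b$, after which all three conditions collapse to the single solvability condition $cB^{-1}b=d$, with the scalars correctly placed on the left for rows and on the right for columns. You are also right to single out the inverse-entry identity as the one genuinely imported ingredient; it is precisely what makes $\vert A\vert_{p,q}$ well defined when $A^{p,q}$ is invertible over the skew field $\cA$.
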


Note that the quasi-determinant is not a generalization of the determinant, but rather of
the ratio of the determinant by a minor:
\begin{prop}\label{quasicomm}
If $\cA$ is commutative, and the minor $|A^{p,q}|\neq 0$, 
then the quasi-determinant $|A|_{p.q}$
reduces to a ratio of the determinant of $A$ by its $p,q$ minor:
\begin{equation}
\vert A \vert_{p,q} = {\vert A \vert \over \vert A^{p,q}\vert}
\end{equation}
\end{prop}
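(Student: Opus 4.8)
The plan is to prove Proposition \ref{quasicomm} by a double induction on the size $n$ of the matrix $A$, using the inductive definition of the quasi-determinant in Definition \ref{quasidet} together with the classical cofactor (Laplace) expansion of the ordinary determinant. First I would record the base case $n=1$: there $|A|_{1,1}=a_{1,1}$, $|A|=a_{1,1}$, and the empty minor $|A^{1,1}|=1$, so the formula holds trivially. For the inductive step I assume the claim for all matrices of size $n-1$ (for every choice of expansion entry whose minor is nonzero), and I expand the defining formula
\[
|A|_{p,q}=a_{p,q}-\sum_{i\neq p,\,j\neq q} a_{p,j}\,\bigl(|A^{p,q}|_{i,j}\bigr)^{-1}\,a_{i,q}.
\]
Since $\cA$ is commutative, the induction hypothesis applies to each $(n-1)\times(n-1)$ quasi-determinant $|A^{p,q}|_{i,j}$ appearing here: it equals $\det A^{p,q}/\det (A^{p,q})^{i,j}$, where $(A^{p,q})^{i,j}$ is the $(n-2)\times(n-2)$ matrix obtained by deleting two further rows and columns from $A$.

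The key bookkeeping step is then to substitute these ratios back in and clear denominators over the common factor $\det A^{p,q}$. This turns the sum into
\[
|A|_{p,q}=\frac{1}{\det A^{p,q}}\left(a_{p,q}\det A^{p,q}-\sum_{i\neq p,\,j\neq q} a_{p,j}\,a_{i,q}\,\det (A^{p,q})^{i,j}\right),
\]
so the entire content of the proposition reduces to the purely commutative determinantal identity
\[
\det A = a_{p,q}\det A^{p,q}-\sum_{i\neq p,\,j\neq q} a_{p,j}\,a_{i,q}\,\det (A^{p,q})^{i,j}.
\]
I would establish this last identity by a standard argument: expand $\det A$ along row $p$, then expand each of the resulting $(n-1)\times(n-1)$ minors along the column that corresponds to $q$, and collect the terms. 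The diagonal term (the $j=q$ contribution) produces $a_{p,q}\det A^{p,q}$, while the off-diagonal terms reassemble, with the correct signs, into the double sum over $i\neq p$, $j\neq q$ of $a_{p,j}a_{i,q}\det(A^{p,q})^{i,j}$. Careful tracking of the sign factors $(-1)^{p+j}$, $(-1)^{p+q}$ and the shifts coming from deleting rows and columns is where the verification lives; crucially, all these global signs are \emph{absorbed} into the ratio, so that no net sign survives in $|A|_{p,q}=|A|/|A^{p,q}|$. This is precisely why the statement is the \emph{unsigned} ratio of $\det A$ by the minor $\det A^{p,q}$ rather than a signed cofactor quotient.

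I expect the main obstacle to be exactly this sign accounting in the determinantal identity: the two successive Laplace expansions introduce signs that depend on the relative positions of the deleted indices, and one must check that the algebraic signs of $\det(A^{p,q})^{i,j}$ coming from the second expansion combine correctly so that the final expression is the genuine $\det A$ with no spurious factor of $(-1)^{p+q}$. A clean way to handle this, which I would adopt, is to note that both $|A|_{p,q}$ and the ratio $\det A/\det A^{p,q}$ are unchanged under permuting rows and columns to move the $(p,q)$ entry into the $(1,1)$ position (a row permutation multiplies both $\det A$ and $\det A^{p,q}$ by the same sign, and the quasi-determinant $|A|_{p,q}$ is likewise invariant under such simultaneous relabeling by Definition \ref{quasidet}). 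This reduces the whole problem to the case $(p,q)=(1,1)$, where the two Laplace expansions carry only the transparent signs $(-1)^{1+j}$ and the identity is the familiar expansion of $\det A$ minored by its first row and first column, eliminating the sign difficulty entirely.
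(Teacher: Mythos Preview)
The paper does not actually prove this proposition; it records it as a standard fact from \cite{GGRW}, so there is no paper proof to compare against. Your inductive strategy---unwinding Definition~\ref{quasidet} and matching it against a double Laplace expansion of $\det A$---is exactly the standard argument.

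There is, however, a genuine slip in the intermediate identity you display:
\[
\det A \;=\; a_{p,q}\det A^{p,q}\;-\;\sum_{i\neq p,\,j\neq q} a_{p,j}\,a_{i,q}\,\det (A^{p,q})^{i,j}
\]
is false already for $n=3$, $p=q=1$ (try a generic $3\times 3$ matrix: the cross terms $a_{12}a_{31}a_{23}$ and $a_{13}a_{21}a_{32}$ come out with the wrong sign). The double Laplace expansion contributes alternating signs to the terms of the sum, and these do \emph{not} simply get absorbed: they must be matched against the signs hidden in the induction hypothesis. Indeed the correct commutative reduction carries a factor $(-1)^{p+q}$ in general (see \cite{GGRW}, Prop.~1.2.6); the paper's statement is tacitly for the case $p+q$ even, in particular the $(1,1)$ case actually used later. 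Your instinct to reduce to $(p,q)=(1,1)$ is sound, but even there the inner quasi-determinants $|A^{1,1}|_{i,j}$ contribute nontrivial signs that you must track against those from the Laplace expansion; the cancellation happens, but it is not automatic.

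A second, smaller gap: applying the induction hypothesis to $|A^{p,q}|_{i,j}$ requires $\det\bigl((A^{p,q})^{i,j}\bigr)\neq 0$, which does not follow from $\det A^{p,q}\neq 0$ alone. The usual remedy is to prove the identity first for a generic matrix (independent indeterminate entries, so every minor is nonzero) and then specialize.
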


Finally, we also need:
\begin{defn}[Quasi-Wronskian]\label{ncwdef}
The quasi-Wronskians of the sequence $(\bR_M)_{M\in \Z}$ are defined 
for $i\geq 1$ and $n\in \Z$ as the following $i\times i$ quasi-determinants:
\begin{equation} \label{wronskdefnc}
\Delta_{i,n}=
\left\vert \begin{matrix} 
\bR_{n+i-1} & \bR_{n+i-2} & \cdots & \cdots & \bR_{n} \\
\bR_{n+i-2} & \bR_{n+i-3} & \cdots & \cdots & \bR_{n-1}\\
\vdots & \vdots &  & & \vdots \\
\bR_n & \bR_{n-1} & \cdots & \cdots & \bR_{n-i+1}
\end{matrix} \right\vert_{1,1}
\end{equation}
\end{defn}

\begin{example}\label{initdelt}
We have for $i=1,2,3$:
\begin{eqnarray*}
\Delta_{1,n}&=&\bR_n\\
\Delta_{2,n}&=&\bR_{n+1}-\bR_n \bR_{n-1}^{-1} \bR_n \\
\Delta_{3,n}&=&\bR_{n+2}-\bR_n(\bR_{n-2}-\bR_{n-1}\bR_n^{-1}\bR_{n-1})^{-1}\bR_n
-\bR_{n+1}(\bR_{n-1}-\bR_{n-2}\bR_{n-1}^{-1}\bR_n)^{-1}\bR_n\\
&&-\bR_n(\bR_{n-1}-\bR_n\bR_{n-1}^{-1}\bR_{n-2})^{-1}\bR_{n+1}
-\bR_{n+1}(\bR_n-\bR_{n-1}\bR_{n-2}^{-1}\bR_{n-1})^{-1}\bR_{n+1}
\end{eqnarray*}
\end{example}

We are now ready for the non-commutative Stieltjes formula \cite{GKLLRT}. We start from
the (infinite) non-commutative continued fraction $\bS(t;\bx)$
with coefficients $\bx_i\in \cA$, $i\in \Z_{>0}$.
It is defined iteratively by
\begin{eqnarray}
\bS(t;\bx_1)&=&({1}-t \bx_1 )^{-1}\nonumber \\
\bS(t;\bx_1,\bx_2,...,\bx_k)&=&\bS(t;\bx_1,...,\bx_{k-2},({1}-t \bx_k)^{-1}\bx_{k-1})\nonumber \\
\bS(t;\bx)&=&\lim_{k\to\infty} \bS(t;\bx_1,\bx_2,...,\bx_k)\label{ncst}
\end{eqnarray}

\begin{thm}\label{stilnc}
The coefficients $\bx_i$ of the non-commutative continued fraction $S(t;\bx)$
are related to the coefficients of the corresponding power series 
$\bS(t;\bx)=\sum_{n\geq 0}t^n \bR_n\bR_0^{-1}$ via:
\begin{eqnarray}
\bx_{2i-1}&=&\Delta_{i,i}\Delta_{i,i-1}^{-1} \label{oddxnc} \\
\bx_{2i}&=&\Delta_{i+1,i}\Delta_{i,i}^{-1} \label{evenxnc}
\end{eqnarray}
for all $i>0$.
\end{thm}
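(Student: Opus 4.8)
The plan is to run an induction that peels the non-commutative Stieltjes fraction one level at a time and matches each peeling with a unit shift of the window defining the quasi-Wronskians $\Delta_{i,n}$. First I would note that the correspondence between the coefficients $\bx_i$ and the moments $\mu_n:=\bR_n\bR_0^{-1}$ (with $\mu_0=\bone$) is invertible in a triangular way: expanding the definition \eqref{ncst} as a power series in $t$ shows that $\mu_n$ is a non-commutative polynomial in $\bx_1,\dots,\bx_n$ whose only term containing $\bx_n$ is $\bx_n\bx_{n-1}\cdots\bx_1$, the remaining terms involving only $\bx_1,\dots,\bx_{n-1}$. Hence $\bx_n$ is determined by $\mu_0,\dots,\mu_n$ and conversely. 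Consequently it suffices to show that the elements defined by the right-hand sides of \eqref{oddxnc}--\eqref{evenxnc} obey the same recursion as the genuine continued-fraction coefficients; equivalently, that declaring $\bx_i$ to be those quasi-Wronskian ratios produces a fraction $\bS(t;\bx)$ with power series $\sum_n t^n\mu_n$.

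Next I would exploit the tail structure. Writing $\bS^{[j]}(t)$ for the fraction \eqref{ncst} truncated to start at $\bx_j$, the inductive definition gives $\bS^{[j]}=(\bone-t\,\bS^{[j+1]}\bx_j)^{-1}$, so $\bS=\bS^{[1]}$ and $\bS^{[j+1]}=t^{-1}\big(\bone-(\bS^{[j]})^{-1}\big)\bx_j^{-1}$. Reading off coefficients for $j=1$, the moments of the tail $\bS^{[2]}$ are $\nu_m=-\sigma_{m+1}\bx_1^{-1}$, where $\sigma_n$ are the coefficients of the reciprocal series $\bS^{-1}=\sum_n t^n\sigma_n$; one verifies $\nu_0=\bone$ from $\sigma_1=-\mu_1$. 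Thus peeling one level corresponds to taking the reciprocal of the generating series, shifting the index by one, and renormalizing on the right by $\bx_1^{-1}$, with a parity flip exchanging the roles of the odd and even coefficients. Applying the theorem to the tail $\bS^{[2]}$ reduces the claim for $\bS$, by induction on $i$, to a single window-shift statement: the quasi-Wronskians $\Delta_{i,n}$ built from $\nu$ coincide, up to right normalization by powers of $\bx_1$, with those built from $\bR$ with all indices shifted by one.

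The heart of the proof is therefore a quasi-Wronskian identity realizing this shift. Here I would use the non-commutative Sylvester (Desnanot--Jacobi) identity for quasi-determinants from \cite{GGRW}, which expresses the quasi-determinant of a Hankel-type matrix through quasi-determinants of its bordered submatrices, together with the vanishing criterion of Proposition \ref{vaniprop}. The latter shows that the ``orthogonality'' relations imposed by the continued fraction are exactly the statement that the relevant bordered quasi-Wronskians reduce to the $\Delta_{i,n}$ of \eqref{wronskdefnc}. Concretely, I would introduce the non-commutative orthogonal polynomials whose coefficients are bordered versions of \eqref{wronskdefnc}, establish their three-term recurrence by a quasi-determinant expansion, and identify the recurrence data with the products $\bx_{2i}\bx_{2i-1}$ and sums $\bx_{2i+1}+\bx_{2i}$ of the Stieltjes coefficients; comparison then yields \eqref{oddxnc}--\eqref{evenxnc}. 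The base case $i=1$ is immediate: $\Delta_{1,n}=\bR_n$ gives $\bx_1=\bR_1\bR_0^{-1}=\mu_1$, and expanding \eqref{ncst} to order $t^2$ gives $\bx_2=(\bR_2-\bR_1\bR_0^{-1}\bR_1)\bR_1^{-1}=\Delta_{2,1}\Delta_{1,1}^{-1}$, matching $\Delta_{2,1}$ of Example \ref{initdelt}.

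The main obstacle I expect is the bookkeeping of left versus right multiplication throughout. Unlike the commutative Stieltjes formula, every step is sensitive to the order of factors: the weights multiply on the right in path order, the quasi-determinant is taken with respect to the corner $(1,1)$, and the renormalization $\bx_1^{-1}$ above appears on the right. Ensuring that the non-commutative Sylvester identity is applied so that the shifts and normalizations compose consistently, and that the inductive hypothesis is stated with the correct sidedness, is the delicate point; the underlying algebraic identities are the non-commutative analogues of the classical ones, available in \cite{GGRW}, so once the conventions are pinned down the verification is a finite computation matching the two sides of \eqref{oddxnc}--\eqref{evenxnc}.
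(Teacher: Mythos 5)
Your preliminary steps are sound and match what a full proof would need: the triangular invertibility of the map from $(\bx_n)$ to the moments, the tail relation $\bS^{[j]}=(\bone-t\,\bS^{[j+1]}\bx_j)^{-1}$, the tail moments $\nu_m=-\sigma_{m+1}\bx_1^{-1}$ with $\nu_0=\bone$, and the base cases $\bx_1=\Delta_{1,1}\Delta_{1,0}^{-1}$, $\bx_2=\Delta_{2,1}\Delta_{1,1}^{-1}$ are all correct. (For calibration: the paper does not prove this theorem internally at all; its proof is the citation to Corollary 8.3 of \cite{GKLLRT}, so you are attempting to reconstruct that result.) The genuine gap is in the pivot of your induction, the ``window-shift statement.'' Since the tail has coefficients $(\bx_2,\bx_3,\dots)$, applying the theorem inductively to the tail and comparing with \eqref{oddxnc}--\eqref{evenxnc} forces the identities $\Delta^{\nu}_{i,i}(\Delta^{\nu}_{i,i-1})^{-1}=\Delta_{i+1,i}\Delta_{i,i}^{-1}$ and $\Delta^{\nu}_{i+1,i}(\Delta^{\nu}_{i,i})^{-1}=\Delta_{i+1,i+1}\Delta_{i+1,i}^{-1}$: peeling one level must \emph{enlarge} the quasi-Wronskian by one row and column while moving the offset, not translate $n\mapsto n+1$. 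Your claim that the $\Delta^{\nu}_{i,n}$ equal the $\Delta_{i,n+1}$ ``up to right normalization by powers of $\bx_1$'' is already contradicted by your own base case: $\Delta^{\nu}_{1,1}=\nu_1=\Delta_{2,1}\Delta_{1,1}^{-1}=(\bR_2-\bR_1\bR_0^{-1}\bR_1)\bR_1^{-1}$, which, because of the subtracted term, is not $\Delta_{1,2}=\bR_2$ times any right power of $\bx_1=\bR_1\bR_0^{-1}$.

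The identity you actually need is the quasi-determinantal analogue of the reciprocal-series Hankel duality. In the commutative limit, with $f=\sum\mu_nt^n$ and $g=1/f$, the size-$k$ Hankel of $g$ at offset $1$ equals $(-1)^k$ times the size-$k$ Hankel of $f$ at offset $1$, while the size-$k$ Hankel of $g$ at offset $2$ equals $(-1)^k$ times the size-$(k{+}1)$ Hankel of $f$ at offset $0$ (these are rectangular Schur functions related by $h$/$e$ conjugation, so the peeling advances half a step along the staircase of rectangles rather than shifting indices). The noncommutative version of this duality -- that inverting the series exchanges complete-type and elementary-type quasi-Schur functions with conjugated rectangular shapes -- is precisely the substance of the quasi-Schur machinery of \cite{GGRW} and \cite{GKLLRT} that culminates in the Corollary 8.3 the paper cites; as written, your plan assumes it at the key step rather than proving it. Your fallback route via noncommutative orthogonal polynomials and contraction to a Jacobi-type fraction (with loop data of the form $\bx_{2i-1}+\bx_{2i}$ and descent data $\bx_{2i+1}\bx_{2i}$, compare \eqref{noncoweights} and the J-fraction result, Theorem 8.12 of \cite{GKLLRT}) is a viable alternative repair, but you mingle the two strategies without carrying either through, and in the noncommutative setting the disentangling of the Stieltjes coefficients from the Jacobi data is itself order-sensitive and must be done explicitly.
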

\begin{proof} See Ref. \cite{GKLLRT}, Corollary 8.3.
\end{proof}

Comparing $\bS(t;\bx)$ \eqref{ncst} with the finite canonical continued fraction at the Stieltjes point 
$\bF_{\bm_1}(t)$ \eqref{stielncf},
we see that the latter is a {\it finite truncation} of the former,
which may be implemented by imposing that $\bx_{2r+2}=0$, while $\bx_i=\by_i(\bm_1)$ for
$i=1,2,...,2r+1$. From
Theorem \ref{stilnc}, this finite truncation condition is equivalent to the equation
$\Delta_{r+2,r+1}=0$, and Theorem \ref{stilnc} gives us
an evaluation of the non-commutative weights at the 
Stieltjes point in terms of the $\bR_n$'s. In fact, we have:

\begin{thm}\label{maineq}
The coefficients $\bR_n$ of the power series expansion
of the finite continued fraction $\bF_{\bm_1}(t)$ satisfy:
\begin{equation}\label{nctrunc}
\Delta_{r+2,n}=0 \quad (n>r) 
\end{equation}
\end{thm}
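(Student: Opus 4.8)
The plan is to deduce the whole family \eqref{nctrunc} from the single structural fact that $\bF_{\bm_1}(t)$ is a \emph{finite} continued fraction, hence a rational function of $t$ of bounded degree; this forces the coefficients $\bR_n$ to satisfy one fixed left-linear recursion of order $r+1$ valid for \emph{all} $n\geq r+1$, and the vanishing of the size-$(r+2)$ quasi-Wronskians $\Delta_{r+2,n}$ then follows at once from the characterization of vanishing quasi-determinants in Proposition \ref{vaniprop}. Note that Theorem \ref{stilnc} by itself only controls the ``diagonal'' pair $\Delta_{i,i},\Delta_{i,i-1}$, so it delivers $\Delta_{r+2,r+1}=0$ (the case $n=r+1$) but cannot reach the other shifts; the extension to all $n>r$ is the real content.

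First I would write the canonical Stieltjes fraction \eqref{stielncf} in resolvent form. Reading it from the inside out as $T_k=(\bone-tT_{k+1})^{-1}\by_k$ with $T_{2r+1}=\by_{2r+1}$ and $\bF_{\bm_1}(t)=(\bone-tT_1)^{-1}$, the substitution $T_k=\bB_k^{-1}\bP_k$ linearizes the recursion into
\begin{equation*}
\bB_k=\bB_{k+1}-t\,\bP_{k+1},\qquad \bP_k=\bB_{k+1}\,\by_k,
\end{equation*}
with $\bB_{2r+2}=\bone$, $\bP_{2r+2}=0$, yielding $\bF_{\bm_1}(t)=\bB_0(t)^{-1}\bP_0(t)$, where $\bP_0=\bB_1$ and $\bB_0=\bB_1-t\bP_1$. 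A routine degree count along this recursion gives $\deg_t\bB_0=r+1$ and $\deg_t\bP_0\leq r$, while $\bB_k(0)=\bone$ for every $k$ by induction, so in particular $\bB_0(0)=\bone$ is invertible. This is the step that uses the truncation (finiteness at level $2r+1$) in an essential way: it is exactly what pins the denominator degree at $r+1$.

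Next I would read off the recursion. From $\bB_0(t)\bF_{\bm_1}(t)=\bP_0(t)$ together with $\bF_{\bm_1}(t)=\sum_n t^n\bR_n\bR_0^{-1}$, and writing $\bB_0(t)=\sum_{j=0}^{r+1}t^j\,\bb_j$ with $\bb_0=\bone$, the coefficient of $t^N$ on the left vanishes for every $N>r$ since $\deg_t\bP_0\leq r$; right-multiplying by $\bR_0$ gives
\begin{equation*}
\sum_{j=0}^{r+1}\bb_j\,\bR_{N-j}=0\qquad (N\geq r+1),
\end{equation*}
a single left-coefficient recursion of order $r+1$ with invertible leading coefficient $\bb_0=\bone$, holding for all $N\geq r+1$. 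The instance $N=r+1$ reproduces the truncation condition $\Delta_{r+2,r+1}=0$ recorded before the theorem via Theorem \ref{stilnc}; the gain here is that the \emph{same} recursion persists for every larger $N$.

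Finally I would convert this into the vanishing of the quasi-Wronskians. The matrix underlying $\Delta_{r+2,n}$ is Hankel, with $(a,b)$ entry $\bR_{n+r+3-a-b}$, so its $a$-th row is $(\bR_{n+r+2-a},\dots,\bR_{n+1-a})$. Forming the left combination of rows with coefficients $\bb_{a-1}$ ($a=1,\dots,r+2$) produces, in column $b$, precisely $\sum_{j=0}^{r+1}\bb_j\bR_{N-j}$ with $N=n+r+2-b$; for $n>r$ every such $N$ satisfies $N\geq n\geq r+1$, so each entry vanishes by the recursion. Since $\bb_0=\bone$ is invertible, this exhibits the first row as a left linear combination of the remaining rows, and Proposition \ref{vaniprop}(ii) gives $\Delta_{r+2,n}=0$ for all $n>r$. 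The main obstacle is the non-commutative bookkeeping of sides: one must verify that the weights $\by_k$ sitting on the \emph{right} in \eqref{stielncf} generate a \emph{left}-coefficient recursion matched to the row version of Proposition \ref{vaniprop}, and that the leading coefficient is genuinely invertible so that one row can be solved for in terms of the others.
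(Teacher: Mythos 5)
Your proof is correct and follows essentially the same route as the paper's: both express $\bF_{\bm_1}(t)$ as a left fraction $\bB_0(t)^{-1}\bP_0(t)$ with $\deg \bB_0=r+1$, $\deg \bP_0\leq r$ and $\bB_0(0)=\bone$, extract the left-coefficient linear recursion $\sum_{j=0}^{r+1}\bb_j\bR_{N-j}=0$ valid for all $N\geq r+1$, and conclude via Proposition \ref{vaniprop} that the first row of the Hankel matrix underlying $\Delta_{r+2,n}$ ($n>r$) is a left linear combination of the other rows. Your explicit linearization $T_k=\bB_k^{-1}\bP_k$ of the Stieltjes fraction is merely an unwound, inside-out form of the paper's denominator recursion \eqref{recursionP} (it produces the same polynomial $P$), so the two arguments coincide in substance.
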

\begin{proof}
As a consequence of the definition \eqref{stielncf}, we may write (with $\by_i\equiv\by_i(\bm_1)$):
$$\bF_{\bm_1}(t)=P(t;\by_1,...,\by_{2r+1})^{-1}Q(t;\by_2,...,\by_{2r+1})=P(t)^{-1}Q(t)$$
with polynomials $P,Q\in \cA[t]$, such that $P(t),Q(t)={1}+O(t)$
and deg$(P)=r+1$ and deg$(Q)=r$. 
Indeed, from the inductive definition \eqref{ncst} we get
\begin{equation}\label{recursionP}
  P(t;\by_1,...,\by_{s+1})=P(t;\by_1,...,\by_{s})-t
  \by_{s+1}P(t;\by_1,...,\by_{s-1})
\end{equation} 
while $P(t;\by_1)=1-t \by_1$, and $Q(t;\by_2,...,\by_{s})=P(t;0,\by_2,...,\by_s)$.
Writing $P(t;\by_1,...,\by_{2r+1})=\sum_{i=0}^{r+1} (-t)^i  P_i $,
we conclude that the $\bR_n$'s satisfy a linear recursion relation of the form
\begin{equation}\label{recurnc}
\sum_{i=0}^{r+1} (-1)^iP_i \bR_{n+r+1-i}=0
\end{equation}
for all $n\geq 0$, with $P_0={1}$ and
$P_{r+1}=\by_{2r+1}\by_{2r-1}\cdots \by_1$. 
This implies that the first row of the Wronskian matrix of \eqref{wronskdefnc}
is a left linear combination of its other rows, and the theorem follows from Proposition
\ref{vaniprop}.
\end{proof}

\begin{remark}[Conserved quantities]\label{consrem}
  Note that the coefficients $P_m$, $m=0,1,...,r+1$ of the linear
  recursion relation \eqref{recurnc} are conserved quantities under
  the evolution of the weights of Definition \ref{mutationsonncy}. Indeed,
  throughout this evolution, the denominator $D_{\bm}^{-1}$ of the
  generating function $\bF_\bm=D_\bm^{-1}N_\bm$ (where both factors
  are equal to 1 modulo $t$) remains invariant. Therefore,
  $D_\bm=P(t;\by_1,...,\by_{2r+1})$ defined above, is the generating
  function for the $r+1$ conserved quantities $P_m$ (with
  $P_0=1$). Note, however, that in general these quantities do not
  commute with each other. 
\end{remark}

\begin{remark}\label{hardparticles} The conserved quantities
  $P_m$ have an interpretation in terms of non-commutative hard particle
  partition function on a chain, as in \cite{DFK3} in the commutative
  case. That is, $P_m$ is the sum over configurations of $m$ hard
  particles on the integer segment $[1,2r+1]$, with a weight
  $y_{i_m}y_{i_{m-1}}\cdots y_{i_1}$ per
  particle configuration occupying vertices $i_1,i_2,...,i_m$ such
  that $i_{j+1}-i_{j}>1$. This interpretation is a consequence of the
  recursion relation \eqref{recursionP}.

  More generally, one can show that for any Motzkin path $\bm$, the
  conserved quantities, when expressed in terms of the weights
  $\by_i(\bm)$, are (non-commutative) hard-particle partition
  functions on certain graphs $G_\bm$ introduced in \cite{DFK3}.
\end{remark}
For later use, it will be convenient to extend the definition of $\bR_n$ to $n<0$. This is 
easily done by extending the linear recursion relation \eqref{recurnc} to any $n\in \Z$.
For instance, this allows to define $\bR_{-1}=P_{r+1}^{-1}\sum_{i=0}^{r} (-1)^{r+i}P_i \bR_{r-i}$.
The main equation extends accordingly to 
\begin{equation}\label{maintrunc}
\Delta_{r+2,n}=0 \quad {\rm for}\ {\rm all}\ \ n\in \Z
\end{equation}

\begin{remark}
If $\cA$ is commutative, and $R_{i,n}$ are solutions to the
$A_r$ $Q$-system \eqref{Qsys}, 
we have  $\bR_{n}=R_{1,n}=R_n$.
Then according to Proposition \ref{quasicomm} we have:
$$ \Delta_{i,n}= {\det\left( (R_{n+i+1-a-b})_{1\leq a,b \leq i} \right)\over 
\det\left( (R_{n+i+1-a-b})_{2\leq a,b \leq i} \right)}={R_{i,n}\over R_{i-1,n-1}}$$
The finite truncation condition \eqref{maintrunc} is equivalent to $R_{r+2,n}=0$, for all $n\in\Z$.
\end{remark}

\subsubsection{Properties: from Pl\"ucker to Hirota}

In this section, we derive a recursion relation for the quasi-Wronskians, which
is a non-commutative analogue of the discrete Hirota equation. To this end, we must
first review a few useful properties of quasi-determinants \cite{GGRW}.

We recall the so-called heredity property of quasi-determinant
(see \cite{GGRW}, Theorem 1.4.3). 
This
states that we get the same result if we apply the definition 
\ref{quasidet} to any $r\times r$ decomposition
of the matrix $A$ into possibly rectangular blocks, 
viewing the blocks as some new non-commutative matrix elements, and then
evaluate the quasi-determinant of the resulting matrix. This holds only if
the $(p,q)$ entry of $A$ belongs to a square block, and all expressions
(involving inverses) are well defined.

Using this property, we immediately get the following two lemmas:

\begin{lemma}\label{here1} 
Let $A$ be a $k+1\times k+1$ matrix with last column such that $a_{i,k+1}=0$
for $i=1,2,...,k$. Then for $1\leq p,q\leq k$, we have:
\begin{equation}\vert A \vert_{p,q} =\vert A^{k+1,k+1}\vert_{p,q}\end{equation}
\end{lemma}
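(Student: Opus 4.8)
The plan is to prove Lemma \ref{here1} as a direct consequence of the heredity property of quasi-determinants quoted just above. The statement concerns a $(k+1)\times(k+1)$ matrix $A$ whose last column is zero except possibly in the bottom entry $a_{k+1,k+1}$, and asserts that for $1\leq p,q\leq k$ the quasi-determinant $\vert A\vert_{p,q}$ equals the quasi-determinant $\vert A^{k+1,k+1}\vert_{p,q}$ of the top-left $k\times k$ block. Since the distinguished entry $(p,q)$ lies in the top-left block, the idea is to choose a block decomposition of $A$ that isolates this block, apply heredity, and let the triangular structure of the last column collapse the off-diagonal contributions.

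Concretely, I would decompose $A$ into a $2\times 2$ array of blocks: the top-left $k\times k$ block $B=A^{k+1,k+1}$, the top-right $k\times 1$ column (which is the zero vector by hypothesis), the bottom-left $1\times k$ row, and the bottom-right $1\times 1$ scalar $a_{k+1,k+1}$. Because the $(p,q)$ entry sits inside the square block $B$, the heredity property of Theorem 1.4.3 of \cite{GGRW} applies, so I may first compute the quasi-determinant of the resulting $2\times 2$ block matrix (treating the blocks as noncommutative entries), provided all the indicated inverses exist. The key step is then to evaluate this $2\times 2$ block quasi-determinant using Definition \ref{quasidet}: the correction term to $\vert B\vert_{p,q}$ involves a product of the form (top-right block)$\cdot(\,\cdot\,)^{-1}\cdot$(bottom-left block), and since the top-right block is zero, this correction term vanishes identically. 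What remains is exactly $\vert B\vert_{p,q}=\vert A^{k+1,k+1}\vert_{p,q}$, which is the claim.

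I would present the argument at the level of the block structure rather than chasing the inductive definition \ref{quasidet} through all $k+1$ indices, since heredity is precisely the tool that licenses this shortcut. The one point requiring care is the well-definedness hypothesis built into the heredity statement: one must observe that the relevant inverses appearing in the block evaluation are exactly those already assumed to exist in forming $\vert A\vert_{p,q}$ and $\vert A^{k+1,k+1}\vert_{p,q}$, so no new invertibility assumption is smuggled in. The main (and only) obstacle is thus bookkeeping: making sure the chosen block decomposition is legitimate, namely that the entry $(p,q)$ genuinely belongs to the square top-left block, which holds precisely because $p,q\leq k$. Once the zero top-right block kills the cross term, the lemma follows immediately.
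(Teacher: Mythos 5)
Your proposal is correct and follows essentially the same route as the paper: the paper's proof likewise decomposes $A$ into the square block $B=A^{k+1,k+1}$, the zero column $z$, the row $u=(a_{k+1,1},\dots,a_{k+1,k})$ and the scalar $x=a_{k+1,k+1}$, applies heredity to get $\vert A\vert_{p,q}=\vert B-zx^{-1}u\vert_{p,q}$, and concludes since $z=0$. Your added remark that the required inverses are exactly those already implicit in forming the two quasi-determinants is a sound observation the paper leaves to its general caveat on heredity.
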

\begin{proof} By the heredity property, we decompose the matrix $A$ into a first $k\times k$ 
square block $B=A^{k+1,k+1}$ which contains the $(p,q)$ entry of $A$, 
a zero $k\times 1$ column vector $z$, 
a $1\times k$ row $u=(a_{k+1,1},...,a_{k+1,k})$ and the single matrix element
$x=a_{k+1,k+1}$, to get:
$\left\vert \begin{matrix}B & z \\
u & x \end{matrix}\right\vert_{p,q}=\vert B- z x^{-1} u\vert_{p,q}=\vert B\vert_{p,q}$.
\end{proof}

\begin{lemma}\label{here2} 
Let $A$ be a $k+1\times k+1$ matrix with first column such that $a_{i,1}=0$
for $i=1,2,...,k$. Then we have:
\begin{equation}
\vert A \vert_{k+1,1} =a_{k+1,1} \end{equation}
\end{lemma}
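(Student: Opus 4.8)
The plan is to apply the heredity property exactly as in the proof of Lemma \ref{here1}, but now decomposing $A$ so that the distinguished entry $(k+1,1)$ sits in its own $1\times 1$ block, and so that the vanishing hypothesis on the first column collapses the quasi-determinant to that single entry. First I would partition $A$ into four blocks compatible with the splitting of the index set $\{1,\dots,k+1\}$ into $\{1,\dots,k\}$ and $\{k+1\}$ in the rows, and into $\{1\}$ and $\{2,\dots,k+1\}$ in the columns. Explicitly, write
\begin{equation*}
A=\begin{pmatrix} z & B \\ x & u \end{pmatrix},
\end{equation*}
where $z$ is the $k\times 1$ first column restricted to the top $k$ rows (which is the zero vector by hypothesis, since $a_{i,1}=0$ for $i=1,\dots,k$), $B$ is the $k\times k$ top-right block, $x=a_{k+1,1}$ is the bottom-left scalar entry carrying the index $(k+1,1)$, and $u$ is the $1\times k$ bottom-right row.

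Next I would observe that the distinguished entry $(k+1,1)$ lies in the $1\times 1$ block $x$, which is square, so the heredity property applies and lets me compute $\vert A\vert_{k+1,1}$ as the quasi-determinant of the $2\times 2$ block matrix with the block entry $x$ singled out. Applying Definition \ref{quasidet} to this $2\times 2$ arrangement gives
\begin{equation*}
\vert A\vert_{k+1,1}=x - u\,(\vert B\vert)^{-1}\,z,
\end{equation*}
where $\vert B\vert$ denotes the relevant quasi-determinant of the block $B$. Since $z=0$, the correction term $u(\vert B\vert)^{-1}z$ vanishes identically, leaving $\vert A\vert_{k+1,1}=x=a_{k+1,1}$, which is exactly the claim.

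The only thing to check carefully is that the hypotheses of the heredity property hold, namely that every inverse appearing in the block evaluation is well defined; but the correction term is annihilated by the factor $z=0$ before any such inverse matters, so no genuine invertibility obstruction arises. I expect the main subtlety — such as it is — to be purely bookkeeping: making sure the block decomposition places $(k+1,1)$ in a square (here $1\times 1$) block as required by heredity, and matching the row/column labels so that the zero first column $z$ is precisely the one contracted against the inverse block. Once that is arranged, the result is immediate and parallels Lemma \ref{here1} verbatim, with the roles of the last row/column replaced by the first column and the bottom-left entry.
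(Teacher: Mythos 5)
Your proof is correct and is essentially identical to the paper's: the same block decomposition $A=\begin{pmatrix} z & B \\ x & u\end{pmatrix}$ with zero column $z$, the $1\times 1$ square block $x=a_{k+1,1}$, and heredity reducing $\vert A\vert_{k+1,1}$ to $x-uB^{-1}z=x$. One small notational caveat: the middle factor is the matrix inverse $B^{-1}$ of the $k\times k$ block itself (the quasi-determinant of the $1\times 1$ block matrix $A^{k+1,1}$ is the block $B$, not a scalar quasi-determinant of $B$), though since $z=0$ this does not affect the conclusion.
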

\begin{proof}
By the heredity property, we decompose the matrix $A$ into a zero
$k\times 1$ column $z$, the $k\times k$ block $B=A^{k+1,1}$, the single matrix element 
$x=a_{k+1,1}$, and the $1\times k$ row vector $u=(a_{k+1,2},...,a_{k+1,k+1})$:
$$ \vert A\vert_{k+1,1} = \left\vert \begin{matrix} z & B \\
x & u
\end{matrix}\right\vert_{k+1,1}=\vert x-u B^{-1} z \vert_{k+1,1}=x=a_{k+1,1}$$
\end{proof}

Quasi-determinants and minors are related via the so-called homological relations:
\begin{prop}[Homological Relations \cite{GGRW}, Theorem 1.4.2]
\begin{equation}\label{homo}
\left(\vert A^{k,j} \vert_{i,\ell} \right)^{-1} \vert A\vert_{i,j}
=-\left(\vert A^{i,j} \vert_{k,\ell} \right)^{-1} \vert A\vert_{k,j}
\end{equation}
\end{prop}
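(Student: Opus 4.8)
The plan is to reduce the homological relation to an elementary identity among the entries of the inverse matrix $B=A^{-1}$, using the fundamental characterization of quasi-determinants as inverse matrix entries. Recall from \cite{GGRW} that whenever $A$ is invertible over $\cA$ and the relevant entries are invertible, one has $\vert A\vert_{p,q}=(b_{q,p})^{-1}$, where $b_{q,p}=(A^{-1})_{q,p}$. Since \eqref{homo} is an identity of noncommutative rational functions in the entries $a_{i,j}$, it suffices to prove it at the generic point, where $A$ and all submatrices involved are invertible and every quasi-determinant appearing is defined; the identity then persists under any specialization for which all terms make sense. In particular we may freely assume $A$ invertible and express everything through $B$.

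The key intermediate step is a submatrix inversion formula: for $A^{k,j}$ (delete row $k$ and column $j$, so its inverse has rows indexed by $I\setminus\{j\}$ and columns by $I\setminus\{k\}$), I claim
\[
(A^{k,j})^{-1}_{\ell,i}=b_{\ell,i}-b_{\ell,k}(b_{j,k})^{-1}b_{j,i},\qquad \ell\neq j,\ i\neq k.
\]
This is proved by direct verification that the right-hand side, call it $N$, satisfies $A^{k,j}N=I$: using $\sum_{s\in I}a_{t,s}b_{s,i}=\delta_{t,i}$ one gets $\sum_{s\neq j}a_{t,s}b_{s,i}=\delta_{t,i}-a_{t,j}b_{j,i}$ and, since $\delta_{t,k}=0$ for the surviving rows $t\neq k$, $\sum_{s\neq j}a_{t,s}b_{s,k}=-a_{t,j}b_{j,k}$; the rank-one correction in $N$ is exactly what cancels the unwanted term $-a_{t,j}b_{j,i}$, leaving $\delta_{t,i}$. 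Applying the inverse-entry characterization to the $(n-1)\times(n-1)$ matrix $A^{k,j}$ then yields $\vert A^{k,j}\vert_{i,\ell}=\big(b_{\ell,i}-b_{\ell,k}(b_{j,k})^{-1}b_{j,i}\big)^{-1}$, and symmetrically $\vert A^{i,j}\vert_{k,\ell}=\big(b_{\ell,k}-b_{\ell,i}(b_{j,i})^{-1}b_{j,k}\big)^{-1}$ by interchanging the roles of $i$ and $k$.

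With these formulas the two sides of \eqref{homo} are computed directly. Using $\vert A\vert_{i,j}=(b_{j,i})^{-1}$ and $\vert A\vert_{k,j}=(b_{j,k})^{-1}$, the left-hand side is
\[
(\vert A^{k,j}\vert_{i,\ell})^{-1}\vert A\vert_{i,j}=\big(b_{\ell,i}-b_{\ell,k}(b_{j,k})^{-1}b_{j,i}\big)(b_{j,i})^{-1}=b_{\ell,i}(b_{j,i})^{-1}-b_{\ell,k}(b_{j,k})^{-1},
\]
while the right-hand side is
\[
-(\vert A^{i,j}\vert_{k,\ell})^{-1}\vert A\vert_{k,j}=-\big(b_{\ell,k}-b_{\ell,i}(b_{j,i})^{-1}b_{j,k}\big)(b_{j,k})^{-1}=b_{\ell,i}(b_{j,i})^{-1}-b_{\ell,k}(b_{j,k})^{-1}.
\]
The two expressions coincide, which establishes \eqref{homo}.

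The one genuinely technical point — and the main obstacle — is the submatrix inversion formula together with the index bookkeeping it requires: because $k\neq j$ the object $A^{k,j}$ is not a principal submatrix, so its inverse is not a plain Schur complement of $B$ but a rank-one modification governed by the single entry $b_{j,k}$, which must be (generically) invertible for the expression to make sense. Once that identity is in place the rest is purely formal. A self-contained alternative that avoids the inverse matrix is a double induction on the size of $A$ directly from Definition \ref{quasidet}, expanding $\vert A\vert_{i,j}$ and $\vert A\vert_{k,j}$ along their rows and matching terms via the heredity property; this is closer to the original argument of \cite{GGRW} but combinatorially heavier, so I would prefer the inverse-matrix computation above.
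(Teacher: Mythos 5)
Your proof is correct, but note that there is nothing in the paper to compare it against: the proposition is imported verbatim from \cite{GGRW}, Theorem 1.4.2, and the authors give no argument of their own. Your route is in fact the standard one for this result, and close in spirit to the original derivation in \cite{GGRW}, which likewise rests on the inverse-entry characterization $\vert A\vert_{p,q}=(b_{q,p})^{-1}$ with $B=A^{-1}$. The substantive content of your write-up checks out: the verification $A^{k,j}N=\bI$ for $N_{\ell,i}=b_{\ell,i}-b_{\ell,k}(b_{j,k})^{-1}b_{j,i}$ is correct (the term $\sum_{s\neq j}a_{t,s}b_{s,k}=-a_{t,j}b_{j,k}$ for $t\neq k$ is exactly what makes the rank-one correction cancel $-a_{t,j}b_{j,i}$), and over a skew field a one-sided inverse of a square matrix is automatically two-sided, so $N=(A^{k,j})^{-1}$ — a point worth stating explicitly since you only verify one side. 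Your index bookkeeping is consistent with the paper's convention (Definition \ref{quasidet}) that deleted-row-and-column submatrices retain the labeling of $A$, and both sides of \eqref{homo} do collapse to the common value $b_{\ell,i}(b_{j,i})^{-1}-b_{\ell,k}(b_{j,k})^{-1}$, with the sign emerging correctly from $\delta_{t,k}=0$. Finally, the genericity reduction — prove the identity in the free skew field where all inverses exist, then specialize wherever every subexpression is defined — is the standard rational-identity principle used throughout the quasi-determinant literature, and you invoke it appropriately; it is needed here precisely because $b_{j,k}$ must be invertible for your rank-one formula, even though $b_{j,k}$ does not appear in \eqref{homo} itself. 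What the inverse-matrix computation buys over the alternative you sketch (double induction from Definition \ref{quasidet} via heredity) is that all four quasi-determinants are linearized simultaneously into entries of a single matrix $B$, reducing the identity to a two-term cancellation rather than a combinatorial matching of expansion terms.
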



Quasi-determinants obey the following non-commutative generalizations of the Pl\"ucker 
relations. Let $A$ be a $k\times m$ matrix with elements in $\cA$, and $m>k$.
Let $A(i_1,...,i_n)$ denote the $k\times n$ matrix made of the columns
of $A$ indexed $i_1,...,i_n$. 

\begin{defn}
For any ordered sequence $I$ of column indices of $A$, we define the
left quasi-Pl\"ucker coordinates
\begin{equation}
q_{i,j}^I=\left(\vert A(i,I)\vert_{s,i}\right)^{-1} \vert A(j,I)\vert_{s,j}
\end{equation}
\end{defn}
Note that the definition is independent of the choice of row index $s$.
Moreover, if $j$ is among $I$, then $q_{i,j}^I=0$.

We may now write the generalized Pl\"ucker relations:

\begin{prop}[Non-commutative Pl\"ucker Relations, \cite{GGRW}, Theorem 4.4.2]
Let $M=m_1,m_2,...,m_{k-1}$ and $L=\ell_1,\ell_2,...,\ell_k$ be two 
(ordered) sequences
of column indices of $A$, then for any $i\not\in L$:
\begin{equation}\label{ncpluck}
\sum_{j\in L \setminus M} q_{i,j}^M \, q_{j,i}^{L\setminus \{j\}} =1
\end{equation}
\end{prop}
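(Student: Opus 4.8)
The plan is to descend from quasi-determinants to the linear algebra of the columns of $A$ over the skew field $\cA$, using the interpretation of the left quasi-Pl\"ucker coordinates as coefficients in column expansions. Write $A_1,\dots,A_m\in\cA^k$ for the columns of $A$, with right scalar multiplication, so that any $k+1$ of them are right-linearly dependent. I work on the generic locus where every $k\times k$ submatrix $A(i,M)$ and $A(\ell,L\setminus\{\ell\})$ occurring below is invertible; since \eqref{ncpluck} is a rational identity in the entries of $A$, it suffices to verify it there (equivalently, formally in $\cA$). I also assume $i\notin M$, which is needed for $q_{i,j}^M$ to be defined.

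The key input, and the step I expect to be the main obstacle, is the following reading of $q_{i,j}^I$ for $|I|=k-1$ with $A(i,I)$ invertible: $q_{i,j}^I$ is exactly the coefficient of $A_i$ in the unique expansion
$$A_j=A_i\,q_{i,j}^I+\sum_{\ell\in I}A_\ell\,d_\ell,\qquad d_\ell\in\cA.$$
This is the left-handed Cramer rule for quasi-determinants \cite{GGRW}: solving the square system $A(i,I)\mathbf x=A_j$ for the component attached to the column labelled $i$ amounts to replacing that column by $A_j$, which turns $A(i,I)$ into $A(j,I)$ and yields $(|A(i,I)|_{s,i})^{-1}|A(j,I)|_{s,j}=q_{i,j}^I$ by the definition of the quasi-Pl\"ucker coordinate. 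The stated independence of the row index $s$ makes this well defined, and the vanishing $q_{i,j}^I=0$ when $j\in I$ matches the fact that then $A_j$ already lies in the right-span of the $A_\ell$. Deriving this coefficient lemma cleanly from the recalled material (Cramer's rule, or the homological relations \eqref{homo} together with Proposition \ref{vaniprop}) is the technical heart; once it is available the relation follows quickly.

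Granting the lemma, I finish as follows. Since $i\notin L$ and the $k$ columns $\{A_\ell:\ell\in L\}$ form a basis, expand $A_i=\sum_{\ell\in L}A_\ell\,c_\ell$ with $c_\ell\in\cA$. Applying the lemma to the $(k-1)$-set $L\setminus\{\ell\}$, and using that $A_i-A_\ell c_\ell$ lies in the right-span of $\{A_{\ell'}:\ell'\in L\setminus\{\ell\}\}$, identifies each coefficient as $c_\ell=q_{\ell,i}^{L\setminus\{\ell\}}$ by uniqueness. Now let $\phi:\cA^k\to\cA$ be the right-linear functional sending a vector to its $A_i$-coordinate relative to the basis $\{A_i\}\cup\{A_m:m\in M\}$; the lemma gives $\phi(A_\ell)=q_{i,\ell}^M$, while $\phi(A_i)=1$. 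Applying $\phi$ to the expansion of $A_i$ and using right-linearity yields
$$1=\phi(A_i)=\sum_{\ell\in L}\phi(A_\ell)\,c_\ell=\sum_{\ell\in L}q_{i,\ell}^M\,q_{\ell,i}^{L\setminus\{\ell\}}.$$
Finally, the terms with $\ell\in M\cap L$ vanish because $q_{i,\ell}^M=0$ whenever $\ell\in M$, so the sum collapses to $\sum_{\ell\in L\setminus M}$, which is precisely \eqref{ncpluck}.
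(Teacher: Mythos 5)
Your proof is correct, and there is nothing in the paper to compare it against: the paper states this proposition as a quotation of \cite{GGRW}, Theorem 4.4.2, with no internal proof. What you have reconstructed is in substance the Gelfand--Gelfand--Retakh--Wilson argument. Your key lemma --- that $q_{i,j}^I$ is the $A_i$-coordinate of $A_j$ in the unique right-coefficient expansion over the column basis $\{A_i\}\cup\{A_\ell:\ell\in I\}$ --- is exactly the left Cramer's rule characterization of quasi-Pl\"ucker coordinates in \cite{GGRW} (replacing the column labelled $i$ in $A(i,I)$ by $A_j$ produces $A(j,I)$ with the quasi-determinant taken at the replaced slot, so $\vert A(i,I)\vert_{s,i}\,x_i=\vert A(j,I)\vert_{s,j}$ for every row $s$); as a byproduct it proves the row-independence of $q_{i,j}^I$ and the vanishing $q_{i,j}^I=0$ for $j\in I$, both of which the paper records without justification. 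Granting the lemma, your two applications are sound: expanding $A_i=\sum_{\ell\in L}A_\ell c_\ell$ (legitimate since $i\notin L$ and $A(L)$ is generically invertible) and identifying $c_\ell=q_{\ell,i}^{L\setminus\{\ell\}}$ by uniqueness, then applying the right-linear coordinate functional $\phi$ attached to the basis $\{A_i\}\cup\{A_m:m\in M\}$, with $\phi(A_i)=1$ and $\phi(A_\ell)=q_{i,\ell}^M$; the vanishing of $q_{i,\ell}^M$ for $\ell\in M$ correctly collapses the sum over $L$ to one over $L\setminus M$, yielding \eqref{ncpluck}. Your genericity reduction (verify the identity over the free skew field, where all the relevant quasi-determinants are defined and invertible, then specialize) is the standard mode of reasoning for quasideterminantal identities, and your explicit remark that $i\notin M$ is needed for $q_{i,j}^M$ to make sense flags a hypothesis the paper leaves tacit. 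I see no gaps.
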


With these properties in mind, we now turn to our main theorem:

\begin{thm}[Discrete Non-commutative Hirota equation]\label{dischironc}
The quasi-Wronskians obey the following relations
\begin{equation}\label{nchiro}
\Delta_{i+1,n}=\Delta_{i,n+1}-\Delta_{i,n}(\Delta_{i,n-1}^{-1}-\Delta_{i-1,n}^{-1})\Delta_{i,n}
\end{equation}
for $i\geq 1$ and $n\in \Z$, with the convention $\Delta_{0,n}^{-1}=0$.
\end{thm}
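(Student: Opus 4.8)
The plan is to recognize \eqref{nchiro} as a noncommutative Desnanot--Jacobi (Lewis Carroll) identity for a single Hankel matrix, and then to establish that identity with the quasideterminant machinery assembled above. First I would fix $i\geq 1$ and $n$ and introduce the $(i+1)\times(i+1)$ matrix $M=(\bR_{n+i+2-a-b})_{1\le a,b\le i+1}$, which by Definition \ref{ncwdef} and \eqref{wronskdefnc} is exactly the matrix whose $(1,1)$ quasideterminant is $\Delta_{i+1,n}$. The decisive observation is that the other four quasi-Wronskians in \eqref{nchiro} are the corner minors of this same $M$: deleting the last row and column gives $|M^{i+1,i+1}|_{1,1}=\Delta_{i,n+1}$; deleting the first row and column gives $|M^{1,1}|_{2,2}=\Delta_{i,n-1}$; deleting both outer rows and columns gives $|M^{1,i+1}_{1,i+1}|_{2,2}=\Delta_{i-1,n}$; and deleting a single off-diagonal corner gives $|M^{i+1,1}|_{1,2}=|M^{1,i+1}|_{2,1}=\Delta_{i,n}$. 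Each identification is a one-line index shift in the entry $\bR_{n+i+2-a-b}$, checked directly against \eqref{wronskdefnc}, using the convention of Definition \ref{quasidet} that $M^{p,q}$ retains the row and column labels of $M$.

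With these identifications, \eqref{nchiro} becomes the purely quasideterminantal statement
\begin{equation}
|M|_{1,1}=|M^{i+1,i+1}|_{1,1}-|M^{i+1,1}|_{1,2}\Big(|M^{1,1}|_{2,2}^{-1}-|M^{1,i+1}_{1,i+1}|_{2,2}^{-1}\Big)|M^{1,i+1}|_{2,1},
\end{equation}
in which the two equal factors $\Delta_{i,n}$ sit on the left and right exactly as in the theorem. I would prove this identity using the tools already set up: the heredity Lemmas \ref{here1} and \ref{here2} to pass freely between a minor of $M$ and the same quasideterminant computed in a bordered ambient matrix, so that all six quasideterminants are read off one object with consistent labels, and then the homological relations \eqref{homo} together with the noncommutative Pl\"ucker relation \eqref{ncpluck}. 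Concretely, I would apply \eqref{ncpluck} to the $(i+1)\times(i+2)$ Hankel matrix obtained from $M$ by adjoining the neighbouring column built from $\bR_{n-1},\dots,\bR_{n-1-i}$, choosing the external column and the index sets so that the Pl\"ucker sum $\sum_j q^M_{i,j}\,q^{L\setminus\{j\}}_{j,i}=1$ has exactly three terms, matching the three summands $\Delta_{i,n+1}$, $-\Delta_{i,n}\Delta_{i,n-1}^{-1}\Delta_{i,n}$ and $+\Delta_{i,n}\Delta_{i-1,n}^{-1}\Delta_{i,n}$; equivalently, one may first derive the noncommutative Desnanot--Jacobi identity itself from \eqref{homo} and \eqref{ncpluck} and then read it off. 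The base case $i=1$ is immediate and fixes the convention: with $\Delta_{0,n}^{-1}=0$ the right-hand side collapses to $\bR_{n+1}-\bR_n\bR_{n-1}^{-1}\bR_n$, which is precisely $\Delta_{2,n}$ as computed in Example \ref{initdelt} (and, commutatively, the $i=1$ case of the $Q$-system \eqref{Qsys}).

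I expect the main obstacle to be the noncommutative bookkeeping in the last step. Unlike the commutative Desnanot--Jacobi identity, where the correction collapses to a single product $\det(M^{1,i+1})\det(M^{i+1,1})$, here the order of factors is rigid and the single commutative term must split into the two ordered corrections governed by $\Delta_{i,n-1}^{-1}$ and $\Delta_{i-1,n}^{-1}$. The care therefore lies in tracking the pivot $(p,q)$ at which each quasideterminant is evaluated, in keeping the two copies of $\Delta_{i,n}$ on the correct sides of $(\Delta_{i,n-1}^{-1}-\Delta_{i-1,n}^{-1})$, and in verifying that the homological relations \eqref{homo} supply precisely the relative sign between the two correction terms. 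Once the pivots are matched the rest is routine index chasing.
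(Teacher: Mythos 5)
Your reduction of \eqref{nchiro} to a quasideterminantal identity for the single Hankel matrix $M=(\bR_{n+i+2-a-b})_{1\le a,b\le i+1}$ is sound: the five identifications of corner quasiminors with $\Delta_{i+1,n}$, $\Delta_{i,n+1}$, $\Delta_{i,n-1}$, $\Delta_{i-1,n}$ and $\Delta_{i,n}$ (the latter at the two off-diagonal corners, with pivots $(1,2)$ and $(2,1)$) all check out against \eqref{wronskdefnc}, and your base case $i=1$ is correct. The genuine gap is in the mechanism you propose for proving the resulting identity. A three-term instance of \eqref{ncpluck} on your $(i+1)\times(i+2)$ bordered Hankel matrix is impossible by counting: with $k=i+1$ rows one has $|L|=i+1$, $|M|=i$, and since $|L\cup M|\le i+2$ it follows that $|L\cap M|\ge i-1$, hence $|L\setminus M|\le 2$ --- the Pl\"ucker sum has at most two terms. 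Worse, even on a wider matrix each Pl\"ucker term $q_{i,j}^{M}\,q_{j,i}^{L\setminus\{j\}}$ is a product of exactly two ratio-type factors, so no choice of index sets can make the terms match the three-factor sandwiches $\Delta_{i,n}\,\Delta_{i,n-1}^{-1}\,\Delta_{i,n}$ and $\Delta_{i,n}\,\Delta_{i-1,n}^{-1}\,\Delta_{i,n}$ term by term. Your fallback (``derive the noncommutative Desnanot--Jacobi identity and read it off'') is circular as stated: the identity with the split correction $(\Delta_{i,n-1}^{-1}-\Delta_{i-1,n}^{-1})$ \emph{is} the theorem, and the known quasideterminant Sylvester/Jacobi identity from \cite{GGRW} has a single correction term whose factors are pivoted at opposite corners, not at $(1,1)$; transporting those pivots is precisely where the hard work lives.

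For comparison, the paper's proof accepts the two-term limitation and works around it: it applies \eqref{ncpluck} to an $(i+1)\times(i+3)$ matrix bordered by two unit columns, with $M=(3,\dots,i+2)$, $L=(2,\dots,i+1,i+3)$, obtaining the two-term identity $\Delta_{i+1,n}\,\theta_{i+1,n}^{-1}+\varphi_{i+1,n}\,\varphi_{i+1,n+1}^{-1}=1$, where $\theta$ and $\varphi$ are the auxiliary bordered quasideterminants \eqref{defphitheta}. Lemma \ref{here1} identifies $\theta_{i+1,n}=\Delta_{i,n+1}$, and two applications of the homological relations \eqref{homo} yield the recursion $\varphi_{i+1,n}=-\Delta_{i,n}\Delta_{i,n-1}^{-1}\varphi_{i,n-1}$ \eqref{phiprop}; the splitting you flag as ``the main obstacle'' --- one commutative correction becoming two ordered noncommutative ones --- is then produced by combining the forms \eqref{encorehiro} and \eqref{otherhiro} at sizes $i+1$ and $i$, exploiting the Hankel shift structure through the $\varphi$'s. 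This final combination (the three-line computation closing the paper's proof) is the actual content of the theorem beyond the Pl\"ucker step; deferring it as ``routine index chasing'' leaves the proof incomplete, since without the $\varphi$-recursion there is no mechanism connecting the two-term Pl\"ucker output to the two sandwiched corrections.
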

\begin{proof}
Note that using the initial values $i=1,2$ of Example \ref{initdelt}, we get
$\Delta_{2,n}=\bR_{n+1}-\bR_n\bR_{n-1}^{-1}\bR_n
=\Delta_{1,n+1}-\Delta_{1,n}\Delta_{1,n-1}^{-1}\Delta_{1,n}$, which agrees with \eqref{nchiro}
for $i=1$,
given the convention for $\Delta_{0,n}^{-1}=0$. We now turn to the general proof of the identity \eqref{nchiro}
for $i>1$. Let us apply the Pl\"ucker relation \eqref{ncpluck} to the following
$i+1\times i+3$ matrix and column index sets: 
$$A=\begin{pmatrix}
{1} & \bR_{n+i} & \bR_{n+i-1}& \cdots & \bR_n & 0\\
0 & \bR_{n+i-1} & \bR_{n+i-2} & \cdots &\bR_{n-1} & 0\\
\vdots & \vdots & \vdots &  & \vdots & \vdots \\
0 & \bR_{n+1} & \bR_n & \cdots & \bR_{n-i+1} & 0\\
0 & \bR_n & \bR_{n-1} & \cdots & \bR_{n-i} & 1
\end{pmatrix}, \qquad \left\{ \begin{matrix}
M&=&(3,4,...,i+1,i+2)\\
L&=&(2,3,...,i+1,i+3)
\end{matrix}\right.$$
so that $L\setminus M=2,i+3$. We now compute the various
Pl\"ucker coordinates involved in the sum \eqref{ncpluck}.
Introducing
\begin{equation}
\label{defphitheta}\theta_{i+1,n}={\small \left\vert \begin{matrix} 
\bR_{n+i} & \bR_{n+i-1}& \cdots & \bR_{n+1} & 0\\
\bR_{n+i-1} & \bR_{n+i-2}& \cdots & \bR_{n} & 0\\
\vdots         & \vdots         & \vdots & \vdots & \vdots\\
\bR_{n+1} & \bR_n & \cdots & \bR_{n-i+2} & 0 \\
\bR_n & \bR_{n-1} & \cdots & \bR_{n-i+1} & {1}
\end{matrix}\right\vert_{1,1} },
\varphi_{i+1,n}={\small \left\vert \begin{matrix} 
0 & \bR_{n+i-1} & \bR_{n+i-2}& \cdots & \bR_n \\
0 & \bR_{n+i-2} & \bR_{n+i-3} & \cdots &\bR_{n-1}\\
\vdots & \vdots & \vdots &  & \vdots \\
0 & \bR_n & \bR_{n-1} & \cdots & \bR_{n-i+1}\\
{1} & \bR_{n-1} & \bR_{n-2} & \cdots & \bR_{n-i}
\end{matrix}\right\vert_{1,1}}
\end{equation}
we find
$$q_{1,2}^M=\Delta_{i+1,n}\qquad q_{2,1}^{L\setminus \{2\}}=\theta_{i+1,n}^{-1}\qquad 
q_{1,i+3}^M=\varphi_{i+1,n}\qquad 
q_{i+3,1}^{L\setminus \{i+3\}}=\varphi_{i+1,n+1}^{-1}
$$
so the Pl\"ucker relation turns into
\begin{equation}\label{debut}
\Delta_{i+1,n} \theta_{i+1,n}^{-1}+\varphi_{i+1,n}\varphi_{i+1,n+1}^{-1}=1
\end{equation}
By direct application of Lemma \ref{here1}, we compute $\theta_{i+1,n}=\Delta_{i,n+1}$.
To compute $\varphi_{i+1,n}$ we apply the homological relation \eqref{homo}
with $i=j=1$, $k=i+1$, $\ell=2$ and with the matrix $A$ with entries:
$a_{x,1}=\delta_{x,i+1}$, and $a_{x,y}=\bR_{n-x-y+i}$ for $y>1$.
We get $\varphi_{i+1,n}=\vert A\vert_{1,1}=-\vert A^{i+1,1}
\vert_{1,2}\Big(\vert A^{1,1}\vert_{i+1,2}\Big)^{-1}\vert A\vert_{i+1,1}$. 
The first factor is $\vert A^{i+1,1}\vert_{1,2}=\Delta_{i,n}$
by definition, while the last factor is $\vert A\vert_{i+1,1}=a_{i+1,1}={1}$
by applying Lemma \ref{here2} with $k=i$.
We are left with the task of computing $\vert A^{1,1}\vert_{i+1,2}=\psi_{i,n-1}$, where
$$ \psi_{i,n} =\left\vert \begin{matrix} 
\bR_{n+i-1} & \bR_{n+i-2} & \cdots & \cdots & \bR_{n} \\
\bR_{n+i-2} & \bR_{n+i-3} & \cdots & \cdots & \bR_{n-1}\\
\vdots & \vdots &  & & \vdots \\
\bR_n & \bR_{n-1} & \cdots & \cdots & \bR_{n-i+1}
\end{matrix} \right\vert_{i,1}
$$
Again, we use the homological relation \eqref{homo} with $a_{x,y}=\bR_{n-x-y+i+1}$,
$i=i$, $j=k=1$ and $\ell=2$ to get:
$$\psi_{i,n}=
-\vert A^{1,1}\vert_{i,2} \left( \vert A^{i,1}\vert_{1,2}\right)^{-1} \vert A\vert_{1,1}=
-\psi_{i-1,n-1} \Delta_{i-1,n}^{-1}\Delta_{i,n}
$$
Recalling that $\varphi_{i+1,n}=-\Delta_{i,n}\psi_{i,n-1}^{-1}$, this turns into the recursion
relation
\begin{equation}
\label{phiprop}
\varphi_{i+1,n}=-\left( \Delta_{i,n}\Delta_{i,n-1}^{-1}\right) \varphi_{i,n-1}
\end{equation}
Collecting all the results, we re-express \eqref{debut} as:
\begin{eqnarray}
\Delta_{i+1,n}&=& \Delta_{i,n+1}-\varphi_{i+1,n}\varphi_{i+1,n+1}^{-1}\Delta_{i,n+1}
\label{encorehiro} \\
&=& \Delta_{i,n+1}+\varphi_{i+1,n}\varphi_{i,n}^{-1}\Delta_{i,n}\label{otherhiro}
\end{eqnarray}
We deduce that
\begin{eqnarray*}
\Delta_{i,n}(\Delta_{i-1,n}^{-1}-\Delta_{i,n-1}^{-1})&=&
\Delta_{i,n}\Delta_{i-1,n}^{-1}(\Delta_{i,n-1}-\Delta_{i-1,n})\Delta_{i,n-1}^{-1}\\
&=& \Delta_{i,n}\Delta_{i-1,n}^{-1} \varphi_{i,n-1}
\varphi_{i-1,n-1}^{-1}\Delta_{i-1,n-1}\Delta_{i,n-1}^{-1}\\
&=& \varphi_{i+1,n}\varphi_{i,n}^{-1}=(\Delta_{i+1,n}- \Delta_{i,n+1})\Delta_{i,n}^{-1}
\end{eqnarray*}
and the theorem follows.
\end{proof}

\begin{remark}\label{hirodiscont}
Theorem \ref{dischironc} is a discrete analogue of the continuous formula
of Theorem 9.7.2 of \cite{GGRW}. Note that
the continuous formula contains a misprint (there should be no  $-1$ power for the
term in parentheses).
\end{remark}

\begin{remark}\label{inftyq}
In the case when $\cA$ is commutative, we write $\bR_n=R_n$, and define for $n\in \Z$:
$r_{0,n}=1$, $r_{1,n}=R_n$ and
$r_{i,n}=\det_{1\leq i,j\leq i} R_{n+i+j-i-1}$. 
Then, according to Proposition \ref{quasicomm}, we have 
$\Delta_{i,n}=r_{i,n}/r_{i-1,n-1}$.
It is easy to check that in this case the Hirota equation \eqref{nchiro}
is equivalent to the so-called $A_{\infty/2}$
$Q$-system:
$$ r_{i,n+1}r_{i,n-1}=r_{i,n}^2+r_{i+1,n}r_{i-1,n}\qquad (i>0,n\in\Z)$$
Further restricting to $r_{r+2,n}=0$ for all $n\in \Z$ amounts to the finite truncation 
condition \eqref{maintrunc}. We may then pick the $A_r$ boundary condition $r_{r+1,n}=1$
for which $r_{i,n}=R_{i,n}$, the solution of the $A_r$ $Q$-system. 
\end{remark}

\begin{remark}\label{consnc}
The truncation equation $\Delta_{r+2,n}=0$ \eqref{maintrunc} implies the existence of a conserved quantity. 
Indeed, writing the Hirota equation in the form \eqref{encorehiro} for $i=r+1$, we get
that $\varphi_{r+2,n+1}=\varphi_{r+2,n}$ is independent of $n$, due to the truncation equation.
\end{remark}

\subsubsection{General expressions for the weights}

By applying the non-commutative Hirota equation \eqref{nchiro}, we may compute explicitly 
all the weights $\by(\bm)$, $\bm\in\cM$, as given by Theorem
\ref{defymutnc}, in terms of the $\bR_n$'s. 
Let us first define new variables for $i\geq 0$ and $n\in \Z$:
\begin{equation}\label{defCD}
D_{i,n}=\Delta_{i+1,n}\Delta_{i,n}^{-1} \qquad 
C_{i,n}= (-1)^i \varphi_{i+1,n}
\end{equation}
with the initial conditions $D_{0,n}=0$ and $C_{0,n}={1}$. Note that \eqref{phiprop}
implies the relation 
\begin{equation}\label{recuC}
C_{i,n}=\Delta_{i,n}\Delta_{i,n-1}^{-1} C_{i-1,n-1}
\end{equation}
hence we have
\begin{equation}\label{evalC}
C_{i,n}=\prod_{j=0}^{i-1}\Delta_{i-j,n-j}\Delta_{i-j,n-j-1}^{-1}
\end{equation}
Note also that Remark \ref{consnc} implies that 
\begin{equation}\label{conserveC}
C_{r+1,n}=C_{r+1}
\end{equation}
is independent of $n$.
\begin{remark}
In the commutative case $\bR_n=R_n$, with $r_{0,n}=1$, $r_{1,n}=R_n$ and $r_{i,n}$ solution of the
$A_{\infty/2}$ $Q$-system of Remark \ref{inftyq}, we have
$$ \Delta_{i,n}={r_{i,n}\over r_{i-1,n-1}}\quad C_{i,n}=
{r_{i,n}\over r_{i,n-1}}
\quad D_{i,n}={r_{i+1,n}r_{i-1,n+1}\over r_{i,n+1}r_{i,n}}$$
If we further restrict to the case of the $A_r$ $Q$-system 
(by imposing $r_{r+2,n}=0$ and $r_{r+1,n}=1$), we note that the variables $C_{i,m}$
and $D_{i,m}$ reduce to the variables $c_{i,n}$ and $d_{i,n}$ of Theorem \ref{linkmuta}.
\end{remark}

The variables $C,D$ obey the following relations:

\begin{lemma}\label{identCD}
We have
\begin{eqnarray}
D_{i,n+1}&=& C_{i+1,n+1}C_{i,n}^{-1} D_{i,n} C_{i-1,n}C_{i,n+1}^{-1}\label{done}\\
C_{i,n+1}C_{i-1,n}^{-1}&=&D_{i,n}+C_{i,n}C_{i-1,n}^{-1}\label{dtwo}\\ 
C_{i+1,n}C_{i,n-1}^{-1}&=&D_{i,n}+C_{i+1,n}C_{i,n}^{-1}\label{dthree}
\end{eqnarray}
\end{lemma}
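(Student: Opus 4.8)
The plan is to derive all three identities from the definitions \eqref{defCD}, the telescoping recursion \eqref{recuC}, and the two equivalent forms of the Hirota equation, \eqref{encorehiro} and \eqref{otherhiro}, that were already established in the course of proving Theorem \ref{dischironc}. The single computational device I would use throughout is that \eqref{recuC} can be rewritten as $C_{i,n+1}C_{i-1,n}^{-1}=\Delta_{i,n+1}\Delta_{i,n}^{-1}$ together with its obvious index shifts, so that every product of a $C$ with the inverse of a neighbouring $C$ collapses to a ratio of consecutive quasi-Wronskians. Combined with the sign rule $C_{i,n}=(-1)^i\varphi_{i+1,n}$ from \eqref{defCD}, this lets me translate each $\varphi$-relation into a relation among the $C_{i,n}$ and $D_{i,n}$.

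I would treat \eqref{done} first, since it is pure telescoping and needs no Hirota input. Using \eqref{recuC} I replace $C_{i+1,n+1}C_{i,n}^{-1}$ by $\Delta_{i+1,n+1}\Delta_{i+1,n}^{-1}$ and $C_{i-1,n}C_{i,n+1}^{-1}$ by $\Delta_{i,n}\Delta_{i,n+1}^{-1}$; substituting $D_{i,n}=\Delta_{i+1,n}\Delta_{i,n}^{-1}$, the right-hand side of \eqref{done} becomes $\Delta_{i+1,n+1}\Delta_{i+1,n}^{-1}\Delta_{i+1,n}\Delta_{i,n}^{-1}\Delta_{i,n}\Delta_{i,n+1}^{-1}$, which telescopes to $\Delta_{i+1,n+1}\Delta_{i,n+1}^{-1}=D_{i,n+1}$.

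For \eqref{dtwo} I would start from \eqref{otherhiro}. The sign rule turns the product $\varphi_{i+1,n}\varphi_{i,n}^{-1}$ into $-C_{i,n}C_{i-1,n}^{-1}$, so \eqref{otherhiro} reads $\Delta_{i+1,n}=\Delta_{i,n+1}-C_{i,n}C_{i-1,n}^{-1}\Delta_{i,n}$; right-multiplying by $\Delta_{i,n}^{-1}$ and using $\Delta_{i,n+1}\Delta_{i,n}^{-1}=C_{i,n+1}C_{i-1,n}^{-1}$ yields $D_{i,n}=C_{i,n+1}C_{i-1,n}^{-1}-C_{i,n}C_{i-1,n}^{-1}$, which is exactly \eqref{dtwo}. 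For \eqref{dthree} I would instead use \eqref{encorehiro}, but shifted by $n\mapsto n-1$: in $C$-variables the product $\varphi_{i+1,n-1}\varphi_{i+1,n}^{-1}$ becomes $C_{i,n-1}C_{i,n}^{-1}$, so the shifted equation reads $\Delta_{i+1,n-1}=\Delta_{i,n}-C_{i,n-1}C_{i,n}^{-1}\Delta_{i,n}$, hence $\Delta_{i+1,n-1}\Delta_{i,n}^{-1}=1-C_{i,n-1}C_{i,n}^{-1}$. On the other side, \eqref{recuC} gives $C_{i+1,n}=\Delta_{i+1,n}\Delta_{i+1,n-1}^{-1}C_{i,n-1}$, so after left-factoring this $C_{i+1,n}$ out of \eqref{dthree} and cancelling $\Delta_{i+1,n}$, the claim reduces precisely to the shifted Hirota relation just obtained.

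The only genuine obstacle is the index and sign bookkeeping: one must check that $C_{i,n}=(-1)^i\varphi_{i+1,n}$ produces exactly the signs that make the $\varphi$-products in \eqref{otherhiro} and \eqref{encorehiro} reduce to $-C_{i,n}C_{i-1,n}^{-1}$ and $+C_{i,n-1}C_{i,n}^{-1}$ respectively, and that \eqref{dthree} requires the $n\mapsto n-1$ shift of \eqref{encorehiro} rather than its unshifted form. All inverses appearing are well defined since the $\Delta_{i,n}$ are invertible elements of the skew field $\cA$, so no structural input beyond the intermediate identities of Theorem \ref{dischironc} is needed.
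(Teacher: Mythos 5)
Your proof is correct, and for \eqref{done} and \eqref{dtwo} it coincides with the paper's own argument: \eqref{done} is exactly the telescoping consequence of \eqref{recuC}, and \eqref{dtwo} is \eqref{otherhiro} right-multiplied by $\Delta_{i,n}^{-1}$, with your sign bookkeeping $\varphi_{i+1,n}\varphi_{i,n}^{-1}=-C_{i,n}C_{i-1,n}^{-1}$ checking out. The only point of divergence is \eqref{dthree}: the paper derives it from \eqref{dtwo} itself, conjugating $D_{i,n-1}+C_{i,n-1}C_{i-1,n-1}^{-1}$ on the left by $\Delta_{i+1,n}\Delta_{i+1,n-1}^{-1}$ and on the right by $\Delta_{i,n-1}\Delta_{i,n}^{-1}$ and then applying \eqref{dtwo} at the shifted index $(i,n-1)$, whereas you return to the other form \eqref{encorehiro} of the Hirota identity, shift $n\mapsto n-1$ (where indeed $\varphi_{i+1,n-1}\varphi_{i+1,n}^{-1}=+C_{i,n-1}C_{i,n}^{-1}$), substitute $C_{i+1,n}=\Delta_{i+1,n}\Delta_{i+1,n-1}^{-1}C_{i,n-1}$ from \eqref{recuC}, and cancel $\Delta_{i+1,n}$ from the left of all three terms; every step is multiplication by an invertible element of the skew field, so the reduction is a genuine equivalence and the argument is sound. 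The two routes are equivalent in substance, since \eqref{encorehiro} and \eqref{otherhiro} differ precisely by \eqref{phiprop}, i.e.\ by \eqref{recuC} up to sign; your version has the mild aesthetic advantage of exhibiting \eqref{dtwo} and \eqref{dthree} symmetrically as the two forms of the Hirota relation rewritten in the $(C,D)$ variables, while the paper's version avoids invoking \eqref{encorehiro} a second time by recycling the already-proved \eqref{dtwo}.
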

\begin{proof} Eq.\eqref{done} holds by definition, as 
a consequence of \eqref{recuC}. Eq.\eqref{dtwo} is equivalent
to the Hirota equation in the form \eqref{otherhiro}, upon multiplication by $\Delta_{i,n}^{-1}$.
Using \eqref{dtwo} and \eqref{recuC} we compute
\begin{eqnarray*}
D_{i,n}+C_{i+1,n}C_{i,n}^{-1}&=&
\Delta_{i+1,n}\Delta_{i+1,n-1}^{-1}(D_{i,n-1}+C_{i,n-1}C_{i-1,n-1}^{-1})
\Delta_{i,n-1}\Delta_{i,n}^{-1}\\
&=&\Delta_{i+1,n}\Delta_{i+1,n-1}^{-1}C_{i,n}C_{i-1,n-1}^{-1}
\Delta_{i,n-1}\Delta_{i,n}^{-1}=C_{i+1,n}C_{i,n-1}^{-1}
\end{eqnarray*}
and the lemma follows.
\end{proof}

We are now ready for the main theorem of this section:

\begin{thm}\label{ncweightcalc}
The weights $\{\by_i(\bm)\}$ are expressed in terms of the $\bR_n$'s as:
\begin{eqnarray}
\by_{2i-1}(\bm)&=&C_{i,m_i+1}C_{i-1,m_{i-1}+1}^{-1}\label{oddync} \\
\by_{2i}(\bm)&=&\left\{  \begin{matrix} 
C_{i+1,m_{i+1}+1}C_{i+1,m_{i}+1}^{-1} & {\rm if} \ m_i=m_{i+1}+1\\
{1} & {\rm otherwise} 
\end{matrix}\right\}\nonumber \\
&\times&
D_{i,m_i+1} \times \left\{ \begin{matrix} 
C_{i-1,m_{i}+1}C_{i-1,m_{i-1}+1}^{-1} & {\rm if} \ m_i=m_{i-1}-1\\
{1} & {\rm otherwise} 
\end{matrix}\right\}\label{evenync} 
\end{eqnarray}
\end{thm}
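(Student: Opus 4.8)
The plan is to prove \eqref{oddync}--\eqref{evenync} by induction over the graph of mutations connecting Motzkin paths, taking the Stieltjes point $\bm_1=(0,1,\dots,r-1)$ as the base case. Two ingredients drive the argument: at $\bm_1$ the weights are given directly by the non-commutative Stieltjes formula of Theorem \ref{stilnc}, and the identities of Lemma \ref{identCD} are exactly the relations needed to show that the right-hand sides of \eqref{oddync}--\eqref{evenync} transform under a mutation $\mu_i$ in precisely the way dictated by Definition \ref{mutationsonncy}. Since the mutation graph on $\cM$ is connected and each $\mu_i$ together with its inverse preserves the validity of the formula, establishing it at $\bm_1$ suffices.

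For the base case, at $\bm_1$ one has $m_i=i-1$, so both conditional factors in \eqref{evenync} are trivial (every local step is ascending), giving $\by_{2i-1}(\bm_1)=C_{i,i}C_{i-1,i-1}^{-1}$ and $\by_{2i}(\bm_1)=D_{i,i}$. Using the recursion \eqref{recuC} one gets $C_{i,i}C_{i-1,i-1}^{-1}=\Delta_{i,i}\Delta_{i,i-1}^{-1}$, and by \eqref{defCD} one has $D_{i,i}=\Delta_{i+1,i}\Delta_{i,i}^{-1}$; these are exactly $\bx_{2i-1}$ and $\bx_{2i}$ of Theorem \ref{stilnc}. Hence the proposed formula reproduces the Stieltjes weights $\by_i(\bm_1)$, and since the forward mutations $\bm_0\to\bm_1$ all have index $i>1$ and preserve $\bF$, this is the correct anchoring point.

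For the inductive step I assume \eqref{oddync}--\eqref{evenync} at $\bm$ and apply $\mu_i$ in case (i) or case (ii) (the only two types needed). Only $\by_{2i-1},\by_{2i},\by_{2i+1}$ (and $\by_{2i+2}$ in case (ii)) change, so I first check that the conditional factors of every other even weight are unaffected by the single change $m_i\mapsto m_i+1$, confirming that the formula leaves them invariant, in agreement with Definition \ref{mutationsonncy}. For the weights that do change, I substitute the inductive expressions into the mutation rules \eqref{caseonenc}--\eqref{casetwonc} and verify they equal the formula evaluated at $\bm'$: the relation $\by_{2i-1}'=\by_{2i-1}+\by_{2i}$ becomes \eqref{dtwo}; the relation for $\by_{2i}'$ becomes \eqref{done}; and the relations for $\by_{2i+1}'$ and $\by_{2i+2}'$ follow by direct non-commutative cancellation once $\by_{2i-1}'$ is known, with the mirror identity \eqref{dthree} entering when a descending conditional factor is the one that toggles. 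Because paths record the order of steps, every factor in \eqref{oddync}--\eqref{evenync} is multiplied in a fixed order, and the ordered forms of Lemma \ref{identCD} match that order exactly, so the verifications go through verbatim in the skew field $\cA$.

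The main obstacle I anticipate is the bookkeeping of the two conditional factors in \eqref{evenync}: in each of cases (i) and (ii) I must determine precisely which of the conditions $m_i=m_{i+1}+1$ and $m_i=m_{i-1}-1$ switch between ``active'' and ``otherwise'' as $m_i$ increases by one, and pair each switch with the correct member of \eqref{dtwo}, \eqref{dthree}, \eqref{done}. For instance, in case (ii) the ascending factor attached to $\by_{2i}$ switches on after the mutation, whereas in case (i) it does not; getting these togglings right is where errors are easy to make. Once the active factors are correctly identified, the remaining algebra reduces in each instance to a single application of one identity of Lemma \ref{identCD} together with cancellation, so the difficulty is entirely in the case analysis rather than in the computation.
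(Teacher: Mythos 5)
Your proposal follows essentially the same route as the paper's own proof: anchor the formula at the Stieltjes point $\bm_1$ via Theorem \ref{stilnc} (using \eqref{recuC} and \eqref{defCD} to identify $C_{i,i}C_{i-1,i-1}^{-1}$ and $D_{i,i}$ with the Stieltjes coefficients $\bx_{2i-1}$, $\bx_{2i}$), then induct over mutations of types (i) and (ii) using the identities of Lemma \ref{identCD}; the paper merely phrases the same argument by provisionally assuming \eqref{oddync}--\eqref{evenync} at $\bm_0$, proving the forward inductive step, and validating the assumption at $\bm_1$, which is equivalent to your anchoring-plus-connectivity formulation since the mutation maps are invertible. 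One minor correction to your anticipated bookkeeping: in the paper's verification, \eqref{dthree} is never actually invoked in the inductive step --- the togglings you flag (the ascending factor of $\by_{2i}$ and the descending factor of $\by_{2i+2}$ switching on in case (ii)) are absorbed by \eqref{done} together with direct non-commutative cancellation once $\by_{2i-1}'$ has been computed via \eqref{dtwo}.
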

\begin{proof}
As already pointed out, the recursions \eqref{caseonenc}-\eqref{casetwonc} determine
uniquely the weights $\by(\bm)$ in terms of the initial data $\by(\bm_0)$
at the fundamental point $\bm_0=(0,0,...,0)$. However, 
Theorem \ref{stilnc} with $\bx_{2r+2}=0$ gives expressions for the weights
$\bx_i=\by_i(\bm_1)$ at the Stieltjes point $\bm_1=(0,1,2,...,r-1)$. 
So we will {\it assume} that eqns. \eqref{oddync}-\eqref{evenync} hold at the fundamental point,
and prove by induction that they hold everywhere. We then simply have to check that 
\eqref{oddync}-\eqref{evenync} hold at the Stieltjes point and the theorem will follow.

Using \eqref{oddync}-\eqref{evenync} for $m_i=i-1$, we get:
$$\by_{2i-1}(\bm_1)= C_{i,i}C_{i-1,i-1}^{-1}\quad \by_{2i}(\bm_1)=D_{i,i} $$
These reduce respectively to \eqref{oddxnc} by \eqref{recuC} and to
\eqref{evenxnc} by \eqref{defCD}. So the solution \eqref{oddync}-\eqref{evenync}) holds at
the Stieltjes point. 

We are left with proving the inductive step: assume \eqref{oddync}-\eqref{evenync} 
hold at some point $\bm$, with weights $\by_i\equiv \by_i(\bm)$, 
then show that it does at $\bm'=\mu_i \bm$, with weights $\by'_i\equiv \by_i(\bm')$.
We must distinguish the two usual cases:

\noindent{$\bullet$} {\bf Case (i)}: $m_{i-1}=m_{i}=m$, $m_{i+1}=m+1$. 
By \eqref{oddync}-\eqref{evenync} we have:
$$\by_{2i-1}=C_{i,m+1}C_{i-1,m+1}^{-1}, \  \  \by_{2i}=D_{i,m+1}, 
\ \ \by_{2i+1}=C_{i+1,m+2}C_{i,m+1}^{-1}$$
The evolutions \eqref{caseonenc} give:
\begin{eqnarray*}
\by_{2i-1}'&=&\by_{2i-1}+\by_{2i}=C_{i,m+2}C_{i-1,m+1}^{-1}
\quad \qquad ({\rm by}\ {\rm eq.}\ {\rm \eqref{dtwo})}\\
\by_{2i}'&=&\by_{2i+1}\by_{2i}(\by_{2i-1}')^{-1}
=D_{i,m+2} 
\qquad \qquad  ({\rm by}\ {\rm eq.}\ {\rm \eqref{done})}\\
\by_{2i+1}'&=&\by_{2i+1}\by_{2i}(\by_{2i-1}')^{-1}
=C_{i+1,m+2}C_{i,m+2}^{-1}
\end{eqnarray*}
and \eqref{oddync}-\eqref{evenync} hold for $\bm'$, with $m_{i-1}'=m$ and $m_i'=m_{i+1}'=m+1$.

\noindent{$\bullet$} {\bf Case (ii)}: $m_{i-1}=m_{i}=m_{i+1}=m$. 
Now we have by \eqref{oddync}-\eqref{evenync}:
$$\by_{2i-1}=C_{i,m+1}C_{i-1,m+1}^{-1}, \  \  \by_{2i}=D_{i,m+1}, 
\ \ \by_{2i+1}=C_{i+1,m+1}C_{i,m+1}^{-1}, \ \ \by_{2i+2}=A D_{i+1,m+1}$$
where $A=C_{i+1,m+1}C_{i,m+1}^{-1}$ if $m_{i+2}=m-1$, and $A=1$ otherwise.
The evolutions \eqref{casetwonc} give:
\begin{eqnarray*}
\by_{2i-1}'&=&\by_{2i-1}+\by_{2i}=C_{i,m+2}C_{i-1,m+1}^{-1}
\quad \qquad \qquad\qquad \qquad({\rm by}\ {\rm eq.}\ {\rm \eqref{dtwo})}\\
\by_{2i}'&=&\by_{2i+1}\by_{2i}(\by_{2i-1}')^{-1}
=C_{i,m+1}C_{i+1,m+2}^{-1}D_{i,m+2} 
\qquad \quad  ({\rm by}\ {\rm eq.}\ {\rm \eqref{done})}\\
\by_{2i+1}'&=&\by_{2i+1}\by_{2i-1}(\by_{2i-1}')^{-1}
=C_{i+1,m+1}C_{i,m+2}^{-1}\\
\by_{2i+2}'&=&\by_{2i+2}\by_{2i-1}(\by_{2i-1}')^{-1}
=A D_{i+1,m+1}C_{i,m+1}C_{i,m+2}^{-1}
\end{eqnarray*}
and \eqref{oddync}-\eqref{evenync} hold for $\bm'$, with $m_{i-1}'=m_{i+1}'=m$ and $m_i'=m+1$.
\end{proof}

\begin{example}\label{incmotz}
The particular set of non-decreasing Motzkin paths with $m_1\leq m_2 \leq \cdots \leq m_r$
is obtained by only applying mutations pertaining to the case (i). The corresponding weights are particularly
simple:
$$ \by_{2i-1}(\bm)=C_{i,m_i+1}C_{i-1,m_{i-1}+1}^{-1},\qquad \by_{2i}(\bm)=D_{i,m_i+1}.$$
In particular, for the fundamental point $\bm_0$, we get 
$$\by_{2i-1}(\bm_0)=C_{i,1}C_{i-1,1}^{-1},\qquad \by_{2i}(\bm_0)=D_{i,1}.$$
This can be compared to
the so-called $J$-fraction expressions of
\cite{GKLLRT}, Theorem. 8.12.
\end{example}

\begin{example}\label{atwoex}
In the case $r=2$, we have the three Motzkin paths $(0,0)$, $(0,1)$ and $(1,0)$. We have
$C_{0,n}={1}$ and $C_{3,n}=C_3$. The weights are:
$$ \begin{matrix}
\by_1(0,0)&=&C_{1,1}\\
\by_2(0,0)&=&D_{1,1}\\
\by_3(0,0)&=&C_{2,1}C_{1,1}^{-1}\\
\by_4(0,0)&=&D_{2,1}\\
\by_5(0,0)&=&C_{3}C_{2,1}^{-1}
\end{matrix}
\qquad
\begin{matrix}
\by_1(0,1)&=&C_{1,1}\\
\by_2(0,1)&=&D_{1,1}\\
\by_3(0,1)&=&C_{2,2}C_{1,1}^{-1}\\
\by_4(0,1)&=&D_{2,2}\\
\by_5(0,1)&=&C_{3}C_{2,2}^{-1}
\end{matrix}
\qquad
\begin{matrix}
\by_1(1,0)&=&C_{1,2}\\
\by_2(1,0)&=&C_{2,1}C_{2,2}^{-1}D_{1,2}\\
\by_3(1,0)&=&C_{2,1}C_{1,2}^{-1}\\
\by_4(1,0)&=&D_{2,1}C_{1,1}C_{1,2}^{-1}\\
\by_5(1,0)&=&C_{3}C_{2,1}^{-1}
\end{matrix}
$$
In the formulation as paths on $\Gamma_\bm$ via the canonical form of the continued fractions,
the only ``redundant" weight occurs in the last case $\bm=(1,0)$, where
$\bw_{3,1}=t\by_4\by_3^{-1}\by_2=tD_{2,1}C_{1,1}C_{2,2}^{-1}D_{1,2}$.
Note that weights may involve $\bR_n$'s with $n<0$ in their definition
(e.g. $\Delta_{i,n}$ for $n<i-1$).
\end{example}


\begin{thm}\label{finlut}
We have
$$F_\bm(t)=\sum_{n\geq 0} t^n \bR_{1,n+m_1}\bR_{1,m_1}^{-1}$$
In other words, the quantity $\bR_{1,n+m_1}\bR_{1,m_1}^{-1}$ is the
partition function of paths from vertex $1$ to itself on $\Gamma_\bm$,
and with the non-commutative weights \eqref{noncoweights} associated
to \eqref{oddync}-\eqref{evenync}.  As such it is a Laurent polynomial
of these weights, with non-negative integer coefficients.
\end{thm}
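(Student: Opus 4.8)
The plan is to establish the stated power-series identity by induction on the number of mutations needed to reach $\bm$ from the fundamental Motzkin path $\bm_0=(0,\dots,0)$, using the transformation rules for $\bF_\bm(t)$ recorded in Theorem~\ref{defymutnc}; the positivity assertion will then drop out of Lemma~\ref{ncpositivity}. Writing $\bR_n$ for the single-indexed variables of this section, extended to all $n\in\Z$ by the linear recursion, the claim reads $\bF_\bm(t)=\sum_{n\ge0}t^n\,\bR_{n+m_1}\bR_{m_1}^{-1}$. The base case $\bm=\bm_0$ has $m_1=0$ and is nothing but the defining relation $\bF_{\bm_0}(t)=\sum_{n\ge0}t^n\bR_n\bR_0^{-1}$. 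Since mutations of types (i) and (ii) generate every Motzkin path in $\cM$ from $\bm_0$, it remains only to check that the identity survives a single forward mutation $\mu_i$.

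First I would dispatch the case $i>1$: such a mutation leaves $m_1$ untouched, and by \eqref{eqf} it satisfies $\bF_\bm(t)=\bF_{\bm'}(t)$, so the identity passes verbatim to $\bm'$. The substance lies in the case $i=1$, where $m_1'=m_1+1$ and \eqref{rrt} gives $\bF_\bm(t)=1+t\,\bF_{\bm'}(t)\,\by_1(\bm)$. The crucial input is the explicit value of $\by_1(\bm)$: specializing \eqref{oddync} to $i=1$ (and using $C_{0,\cdot}=\bone$), then invoking \eqref{recuC} with $\Delta_{1,n}=\bR_n$, yields $\by_1(\bm)=C_{1,m_1+1}=\bR_{m_1+1}\bR_{m_1}^{-1}$. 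I would then solve the mutation relation for $\bF_{\bm'}(t)$, substitute the inductive hypothesis for $\bF_\bm(t)$, strip the $n=0$ term (which is $\bone$), and right-multiply by $\by_1(\bm)^{-1}=\bR_{m_1}\bR_{m_1+1}^{-1}$. The middle factor $\bR_{m_1}^{-1}\bR_{m_1}=\bone$ cancels, leaving $\bF_{\bm'}(t)=\sum_{k\ge0}t^k\,\bR_{k+m_1'}\bR_{m_1'}^{-1}$, which is exactly the claim at $\bm'$ and closes the induction.

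With the series identity in hand, the remaining assertions assemble from earlier results. By Lemma~\ref{ncpositivity} the coefficient of $t^n$ in $\bF_\bm(t)$ is a Laurent polynomial in $\by(\bm)$ with non-negative integer coefficients, and by the identity just established this coefficient equals $\bR_{n+m_1}\bR_{m_1}^{-1}$; its reading as a partition function of weighted paths from vertex $1$ to itself on $\Gamma_\bm$ is the one supplied by the canonical form of the continued fraction in Section~\ref{canoninc}. I expect the only genuinely delicate point to be the non-commutative bookkeeping in the $i=1$ step: because $\by_1(\bm)$ multiplies $\bF_{\bm'}(t)$ from the right in \eqref{rrt}, one must check that the telescoping $\bR_{m_1}^{-1}\bR_{m_1}=\bone$ occurs on the correct side---this is precisely what advances the trailing factor from $\bR_{m_1}^{-1}$ to $\bR_{m_1+1}^{-1}$. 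Everything else is a one-line manipulation of power series.
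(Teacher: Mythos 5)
Your proof is correct and follows essentially the same route as the paper's own: induction on mutations starting from $\bm_0$, with the case $i>1$ dispatched by \eqref{eqf}, the case $i=1$ handled via \eqref{rrt} together with the identification $\by_1(\bm)=C_{1,m_1+1}=\bR_{m_1+1}\bR_{m_1}^{-1}$, and positivity imported from Lemma \ref{ncpositivity}. Your explicit attention to the right-multiplication and the telescoping cancellation $\bR_{m_1}^{-1}\bR_{m_1}=\bone$ is exactly the bookkeeping implicit in the paper's displayed computation.
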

\begin{proof}
The statement is proved by induction on mutations. It holds for $\bm=\bm_0$,
with $m_1=(\bm_0)_1=0$. Any given $\bm\in\cM$ is obtained by an iterative application
of mutations $\mu_i$ of type (i) or (ii). 
Assume the theorem holds for some $\bm\in\cM$.
Let us form $\bm'=\mu_i\bm$. When $i>1$, we have 
$\bF_m(t)=\bF_{m'}(t)=\sum_{n\geq 0} t^n\bR_{n+m_1}\bR_{m_1}^{-1}$ by \eqref{eqf}: hence
the {\it same} quantity $\bR_{n+m_1}\bR_{m_1}^{-1}$
is interpreted as the partition function for paths on $\Gamma_{\bm'}$ with
the weights associated to $\by_i(\bm')$, and the theorem holds for $\bm'$, as $m_1'=m_1$.
If $i=1$, then we have
\eqref{rrt}. Using $\by_1(\bm)=C_{1,m_1+1}=\bR_{m_1+1}\bR_{m_1}^{-1}$, we get
\begin{eqnarray*}
\bF_{\bm}(t)&=&\sum_{n\geq 0}t^n\bR_{n+m_1}\bR_{m_1}^{-1}={1}+t \bF_{\bm'}(t)\,\bR_{m_1+1}\bR_{m_1}^{-1}\\
\bF_{\bm'}(t)&=&\sum_{n\geq 0}t^n\bR_{n+m_1+1}\bR_{m_1+1}^{-1}
\end{eqnarray*}
and the theorem holds for $\bm'$, with $m_1'=m_1+1$.
\end{proof}

\begin{example}\label{atwopf}
Let us apply Theorem \ref{finlut} to the calculation of $\bR_3$ in the case $r=2$,
at the points $\bm=(0,0),(0,1),(1,0)$ respectively.
We have
\begin{eqnarray*}
\bm=(0,0):\ \ \bR_3\bR_0^{-1}&=& (\by_1+\by_2)^2\by_1+(\by_3+\by_4)\by_2\by_1\\
&=& (C_{1,1}+D_{1,1})^2C_{1,1}+(C_{2,1}+D_{2,1}C_{1,1})C_{1,1}^{-1}D_{1,1}C_{1,1}\\
\bm=(0,1):\ \ \bR_3\bR_0^{-1}&=& (\by_1+\by_2)^2\by_1+\by_3\by_2\by_1\\
&=&(C_{1,1}+D_{1,1})^2C_{1,1}+C_{2,2}C_{1,1}^{-1}D_{1,1}C_{1,1}\\
\bm=(1,0):\ \ \bR_3\bR_1^{-1}&=&\by_1^2+\by_2\by_1+\by_{3,1}\by_1\\
&=&(C_{1,2})^2+(C_{2,1}+D_{2,1}C_{1,1})C_{2,2}^{-1}D_{1,2}C_{1,2}
\end{eqnarray*}
respectively equal to the partition functions for non-commutative weighted paths from the root to itself
on the corresponding graphs $\Gamma_\bm$, with weights $\by_i=\by_i(\bm)$
given in Example \ref{atwoex},
and with respectively $3,3,2$ steps towards the root. The identities between these expressions are 
obtained from identities of Lemma \ref{identCD} relating $C$ and $D$:
$$ C_{1,1}+D_{1,1}=C_{1,2},\quad C_{2,1}+D_{2,1}C_{1,1}=C_{2,2}, \quad
C_{1,1}^{-1}D_{1,1}=C_{2,2}^{-1}D_{1,2}C_{1,2} $$
\end{example}

\section{Application I: T-system}\label{section4}
Here, we apply the construction of the previous section to the $A_r$ $T$-system \eqref{Tsystem}, which can be written as a non-commutative $Q$-system as shown in \cite{DFK09a}. The $A_r$ boundary condition
implies a truncation relation $T_{r+2,j,k}=0$ for all $j,k\in\Z$, similar to that studied in Section \ref{section3}, so we can use the results of Section \ref{section3} to give an
alternative derivation of the $T$-system positivity results of \cite{DFK09a}.

\subsection{The $A_r$ $T$-system as non-commutative $Q$-system}
We may view the $T$-system equations \eqref{Tsystem} as matrix elements of a relation in a non-commutative 
algebra in a particular representation. We consider the
non-commuting variables $\bT_{i,k}\in\cA$ which have a faithful representation 
on a Hilbert space $\mathcal H$ with basis $\{ |i\rangle \}_{i\in \Z}$:
\begin{equation}\label{tact}
\bT_{i,k} \, | j+k+i\rangle = T_{i,j,k} \, | j-k-i\rangle  \qquad (i\geq 0; \, j,k\in\Z)
\end{equation}
Introducing the shift operator $\bd$ which acts on the same representation as $\bd \, |t\rangle =|t-1\rangle$ we may rewrite
the $A_r$ boundary conditions $T_{0,j,k}=T_{r+1,j,k}=1$ as:
\begin{equation}\label{bcncQ}
\bT_{0,k}=\bd^{2k},\qquad \bT_{r+1,k}=\bd^{2(k+r+1)},
\end{equation}
while the finite truncation condition implies
$$ \bT_{r+2,k}=0 \qquad (k\in\Z) $$

Note that all variables $\bT_{i,k}$, $0\leq i\leq r+1$, are invertible for generic non-vanishing
matrix elements $T_{i,j,k}$, with 
\begin{equation}\label{tactinv}
(\bT_{i,k})^{-1} \, | j-k-i\rangle = {1\over T_{i,j,k}} \, | j+k+i\rangle  \qquad (0\leq i\leq r+1;\, j,k\in\Z)
\end{equation}

We have the following

\begin{prop}\label{ncQ}
The $A_r$ $T$-system \eqref{Tsystem} is obtained as matrix elements in $\mathcal H$  of the following non-commutative $Q$-system:
\begin{equation}\label{NCQ}
\bT_{i,k+1}(\bT_{i,k})^{-1} \bT_{i,k-1}=\bT_{i,k} + \bT_{i+1,k}(\bT_{i,k})^{-1}\bT_{i-1,k}
\quad (1\leq i\leq r;\, k\in\Z)
\end{equation}
with the boundary conditions \eqref{bcncQ}.
\end{prop}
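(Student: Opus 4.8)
The plan is to prove the operator identity \eqref{NCQ} by evaluating both sides on an arbitrary basis vector $|m\rangle\in\mathcal H$ and checking that the resulting scalar relation is exactly the $T$-system \eqref{Tsystem}. The key preliminary observation is that every operator occurring in \eqref{NCQ} is \emph{homogeneous} with respect to the basis label: for fixed $i,k$ each integer $m$ can be written uniquely as $m=j+k+i$, so by \eqref{tact} one has $\bT_{i,k}|m\rangle=T_{i,j,k}\,|m-2(k+i)\rangle$, while by \eqref{tactinv} the inverse $(\bT_{i,k})^{-1}$ shifts the label by $+2(k+i)$. Hence each factor shifts $|m\rangle$ by a fixed amount depending only on its indices, and a composition of operators is tracked simply by adding these shifts and multiplying the (scalar, hence commuting) coefficients $T_{i,j,k}$.

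First I would compute the left-hand side $\bT_{i,k+1}(\bT_{i,k})^{-1}\bT_{i,k-1}\,|m\rangle$ by applying the three factors in turn. Setting $j_0=m-k-i$, the factor $\bT_{i,k-1}$ sends $|m\rangle$ to a multiple of $|m-2(k+i-1)\rangle$, then $(\bT_{i,k})^{-1}$ sends this to $|m+2\rangle$, and finally $\bT_{i,k+1}$ sends $|m+2\rangle$ to $|m-2(k+i)\rangle$. Collecting the scalar factors, the coefficient of $|m-2(k+i)\rangle$ is $T_{i,j_0+1,k+1}\,T_{i,j_0+1,k-1}\,T_{i,j_0+2,k}^{-1}$. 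I would then compute the right-hand side $\bigl(\bT_{i,k}+\bT_{i+1,k}(\bT_{i,k})^{-1}\bT_{i-1,k}\bigr)|m\rangle$: the diagonal term gives $T_{i,j_0,k}\,|m-2(k+i)\rangle$, and the triple product $\bT_{i+1,k}(\bT_{i,k})^{-1}\bT_{i-1,k}$ again lands on $|m-2(k+i)\rangle$, with coefficient $T_{i+1,j_0+1,k}\,T_{i-1,j_0+1,k}\,T_{i,j_0+2,k}^{-1}$.

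The decisive point is that both sides produce multiples of the \emph{same} vector $|m-2(k+i)\rangle$, so \eqref{NCQ} collapses to a single scalar equation. Equating coefficients and clearing the common denominator $T_{i,j_0+2,k}$ gives $T_{i,j_0+1,k+1}T_{i,j_0+1,k-1}=T_{i,j_0,k}T_{i,j_0+2,k}+T_{i+1,j_0+1,k}T_{i-1,j_0+1,k}$, which upon writing $j=j_0+1$ is precisely \eqref{Tsystem}. The boundary conditions \eqref{bcncQ} then follow by substituting $T_{0,j,k}=T_{r+1,j,k}=1$ into \eqref{tact}: one reads off $\bT_{0,k}|t\rangle=|t-2k\rangle=\bd^{2k}|t\rangle$ and $\bT_{r+1,k}|t\rangle=|t-2(k+r+1)\rangle=\bd^{2(k+r+1)}|t\rangle$.

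The only genuine difficulty is the index bookkeeping together with the well-definedness of the compositions. At each stage I must verify that the intermediate vector has the correct form $|j'+k'+i'\rangle$ for the next factor to act via \eqref{tact} or \eqref{tactinv}; since $j$ ranges over all of $\Z$ this is automatic, but the shifts must be tracked exactly so that the surviving column indices are $j_0+1$ and $j_0+2$ as claimed. The assumption that legitimizes forming the products is the genericity condition stated before \eqref{tactinv}, which guarantees that $(\bT_{i,k})^{-1}$ exists as an operator on $\mathcal H$; granting this, no further subtlety arises.
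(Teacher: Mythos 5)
Your proof is correct and follows exactly the paper's argument: apply both sides of \eqref{NCQ} to a basis vector, use \eqref{tact} and \eqref{tactinv} to track the shifts and scalar coefficients, and observe that the resulting scalar identity is \eqref{Tsystem} divided by $T_{i,j+1,k}$ (the paper starts from $|j+k-1+i\rangle$, which is your $|m\rangle$ with $j=j_0+1$). Your index bookkeeping checks out throughout, so there is nothing to add.
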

\begin{proof} Applying \eqref{NCQ} to the vector $|j+k-1+i\rangle$ and using 
\eqref{tact} and \eqref{tactinv} yields \eqref{Tsystem} divided by $T_{i,j+1,k}$ times $|j-k+1-i\rangle$.
\end{proof}

\begin{remark}
Note that the operators $\bT_{i,k}$ are only ``mildly" non-commuting, in the sense
that the operators $\{\bd^m \bT_{i,k} \bd^{-m-2i-2k}\}_{m,k\in\Z}$ are all diagonal when acting on $\mathcal H$, and therefore commute
with each other. 
In a sense the $\bd$ operator is the only source of non-commutation. It was interpreted in 
\cite{DFK09a}
as a time-shift operator, and the non-commutation of the variables is due to their time-ordering.
Nevertheless, this gives us a first non-trivial example of a non-commutative $Q$-system.
\end{remark}

\subsection{Quasi-Wronskian formulation}

The $A_r$ $T$-system is a particular case of the non-commutative finite truncation equation
\eqref{maintrunc}. We start from the solution $T_{i,j,k}$ of the system \eqref{Tsystem}
and the operators $\bT_{i,k}$ \eqref{tact}-\eqref{tactinv}. This gives rise to the sequence
$\bR_n=\bT_{1,n}\in\cA$, $n\in\Z$.
Defining $\Delta_{i,n}$, $n\in\Z$ as the quasi-Wronskians \eqref{wronskdefnc}
of this sequence, we have the following action of the quasi-Wronskians
on the basis $\{|t\rangle\}_{t\in\Z}$:

\begin{thm}\label{actdelt}
\begin{equation}\label{deltactsys}
\Delta_{i,n} \, |j+n+i\rangle= {T_{i,j,n}\over T_{i-1,j,n-1}} \, |j-n-i \rangle
\qquad (1\leq i\leq r+2;\, j,n\in \Z)
\end{equation}
\end{thm}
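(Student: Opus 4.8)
The plan is to induct on $i$, using the non-commutative Hirota equation \eqref{nchiro} of Theorem \ref{dischironc}, which holds for the quasi-Wronskians of any sequence and in particular for $\bR_n=\bT_{1,n}$. The guiding observation is that every operator built from the $\bR_m=\bT_{1,m}$ by sums, products and inverses acts \emph{diagonally} on the basis: by \eqref{tact}--\eqref{tactinv} each $\bT_{1,m}$ sends $|t\rangle$ to a scalar multiple of $|t-2m-2\rangle$, and such shift-and-scale operators are closed under these three operations. Hence each $\Delta_{i,n}$ sends $|t\rangle$ to a scalar multiple of $|t-2(n+i)\rangle$, and the whole content of the theorem is the value of that scalar. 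The base case $i=1$ is immediate: $\Delta_{1,n}=\bR_n=\bT_{1,n}$, so by \eqref{tact} and $T_{0,j,n-1}=1$ one gets $\Delta_{1,n}|j+n+1\rangle=T_{1,j,n}|j-n-1\rangle=\tfrac{T_{1,j,n}}{T_{0,j,n-1}}|j-n-1\rangle$, as claimed.

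For the inductive step I would assume \eqref{deltactsys} at levels $i-1$ and $i$ (with the convention $\Delta_{0,n}^{-1}=0$ at the bottom), apply \eqref{nchiro} to the vector $|j+n+i+1\rangle$, and evaluate term by term. Writing $|j+n+i+1\rangle=|j+(n+1)+i\rangle$, the first Hirota term gives $\Delta_{i,n+1}|j+n+i+1\rangle=\tfrac{T_{i,j,n+1}}{T_{i-1,j,n}}|j-n-i-1\rangle$. In the second term I read the factors right to left: the rightmost $\Delta_{i,n}$, acting on $|(j+1)+n+i\rangle$, produces $|j+1-n-i\rangle$ with scalar $c_1=\tfrac{T_{i,j+1,n}}{T_{i-1,j+1,n-1}}$; the middle factor $\Delta_{i,n-1}^{-1}-\Delta_{i-1,n}^{-1}$, applied to $|j+1-n-i\rangle$, sends it to $|j+n+i-1\rangle$ (both summands land on the same ket, by the induction hypothesis together with inversion of \eqref{deltactsys}) with scalar $c_2=\tfrac{T_{i-1,j,n-2}}{T_{i,j,n-1}}-\tfrac{T_{i-2,j,n-1}}{T_{i-1,j,n}}$; finally the leftmost $\Delta_{i,n}$, acting on $|(j-1)+n+i\rangle$, produces $|j-n-i-1\rangle$ with scalar $c_3=\tfrac{T_{i,j-1,n}}{T_{i-1,j-1,n-1}}$. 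Since these coefficients are scalars (matrix elements of the mutually commuting family), both Hirota terms land on $|j-n-i-1\rangle$, precisely the ket predicted by \eqref{deltactsys} at level $i+1$, and the theorem reduces to the single scalar identity $\tfrac{T_{i,j,n+1}}{T_{i-1,j,n}}-c_1c_2c_3=\tfrac{T_{i+1,j,n}}{T_{i,j,n-1}}$.

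The crux is this last scalar identity, and I expect it to be the main obstacle; happily it is resolved by exactly two applications of the $T$-system \eqref{Tsystem}. First, \eqref{Tsystem} at level $i-1$ and height $n-1$ reads $T_{i-1,j,n}T_{i-1,j,n-2}=T_{i-1,j+1,n-1}T_{i-1,j-1,n-1}+T_{i,j,n-1}T_{i-2,j,n-1}$, so the numerator of $c_2$ equals $T_{i-1,j+1,n-1}T_{i-1,j-1,n-1}$; substituting this and cancelling collapses $c_1c_2c_3$ to $\tfrac{T_{i,j+1,n}T_{i,j-1,n}}{T_{i,j,n-1}T_{i-1,j,n}}$. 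The remaining difference then equals $\tfrac{T_{i,j,n+1}T_{i,j,n-1}-T_{i,j+1,n}T_{i,j-1,n}}{T_{i-1,j,n}T_{i,j,n-1}}$, and a second application of \eqref{Tsystem}, now at level $i$ and height $n$, rewrites the numerator as $T_{i+1,j,n}T_{i-1,j,n}$, yielding $\tfrac{T_{i+1,j,n}}{T_{i,j,n-1}}$ as required.

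To finish, I would note that the conventions $T_{0}=1$ and $T_{-1}=0$ (the latter matching $\Delta_{0,n}^{-1}=0$) make the step uniform for all $i\ge 1$, so the induction closes and \eqref{deltactsys} holds for $1\le i\le r+2$. The only genuine subtlety to keep in mind is the invertibility of the $\Delta_{i,n}$ needed to run the Hirota recursion; this holds for generic nonvanishing matrix elements $T_{i,j,k}$, as already observed around \eqref{tactinv}, since each $\Delta_{i,n}$ acts as a shift times the nonzero scalar $T_{i,j,n}/T_{i-1,j,n-1}$.
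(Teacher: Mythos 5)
Your proof is correct and takes essentially the same route as the paper: induction on $i$, applying the non-commutative Hirota equation \eqref{nchiro} to $|j+n+i+1\rangle$, using that each $\Delta_{i,n}$ acts as a shift composed with a scalar so that both Hirota terms land on $|j-n-i-1\rangle$, and resolving the resulting scalar identity by exactly the same two applications of the $T$-system \eqref{Tsystem} (first at level $i-1$, height $n-1$, then at level $i$, height $n$). Your leftmost coefficient $c_3=T_{i,j-1,n}/T_{i-1,j-1,n-1}$ is in fact the correct one, silently fixing a small typo in the paper's displayed computation (which prints $T_{i-1,j,n-1}$ in that denominator even though the subsequent cancellation requires $T_{i-1,j-1,n-1}$).
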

\begin{proof}
By induction on $i$. For $i=1$, we have $\Delta_{1,n}=\bR_n$, and $\bR_n|j+n+1\rangle
=T_{1,j,n}\, |j-n-1\rangle$, which amounts to \eqref{deltactsys} for $i=1$, as $T_{0,j,k}=1$
for all $j,k$. Assume \eqref{deltactsys} holds for all $\beta\leq i$. Using the Hirota equation
\eqref{nchiro}, and the recursion hypotheses we get:
$\Delta_{i+1,n}|j+n+i+1\rangle=\delta_{i+1,n}\, |j-n-i-1\rangle$, with
\begin{eqnarray*}
\delta_{i+1,n}&=&
{T_{i,j,n+1}\over T_{i-1,j,n}} 
+{T_{i,j-1,n}\over T_{i-1,j,n-1}}\left(
{T_{i-2,j,n-1}\over T_{i-1,j,n}}-{T_{i-1,j,n-2}\over T_{i,j,n-1}}
\right)
{T_{i,j+1,n}\over T_{i-1,j+1,n-1}}\\
&=&{T_{i,j,n+1}\over T_{i-1,j,n}} -{T_{i,j-1,n}T_{i,j+1,n}\over T_{i-1,j,n}T_{i,j,n-1}}\\
&=&{T_{i+1,j,n}\over T_{i,j,n-1}}
\end{eqnarray*}
where the second line follows from \eqref{Tsystem} for $i\to i-1$ and $k=n-1$, while the third follows
from \eqref{Tsystem} with $k=n$.
\end{proof}

\begin{remark}[Relation to the known determinant relation of \cite{BazResh}]
The $T$-system \eqref{Tsystem}  
may be simplified by first eliminating the $T_{i,j,n}$,
$i>1$ and $j,n\in\Z$ in terms of the $T_{j,n}=T_{1,j,n}$ by use of the Desnanot-Jacobi
formula, as the following determinant:
$$ T_{i,j,n}=\det(\Theta_{i,j,n}), \quad 
\Big(\Theta_{i,j,n}\Big)_{a,b}= T_{j-a+b,n+i+1-a-b} \ \ (1\leq a,b \leq i)$$
We note that 
$$ {T_{i,j,n}\over T_{i-1,j,n-1}}=\vert \Theta_{i,j,n} \vert_{1,1} $$
so the diagonal operator $\bd^{-n-i}\Delta_{i,n}\bd^{-n-i}$ 
has the eigenvalue $\vert\Theta_{i,j,n}\vert_{1,1}$
when acting on $|j\rangle$, for all $j\in \Z$.
\end{remark}

Applying Theorem \ref{actdelt} for $i=r+2$, we find:

\begin{cor}
The $T$-system solution $(\bR_k)_{k\in\Z}$ satisfies the finite truncation equation 
\eqref{maintrunc}.
\end{cor}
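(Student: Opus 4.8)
The plan is to read off the vanishing of $\Delta_{r+2,n}$ directly from its action on the Hilbert space $\mathcal H$, exploiting that the representation \eqref{tact} is faithful. The key input is Theorem \ref{actdelt}, whose stated range of validity explicitly includes $i=r+2$, so no extension of that result is needed.

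First I would specialize Theorem \ref{actdelt} to $i=r+2$, which gives, for all $j,n\in\Z$,
\[
\Delta_{r+2,n}\,|j+n+r+2\rangle = \frac{T_{r+2,j,n}}{T_{r+1,j,n-1}}\,|j-n-r-2\rangle.
\]
Next I would invoke the $A_r$ truncation condition $T_{r+2,j,n}=0$ for all $j,n\in\Z$ (equivalently $\bT_{r+2,k}=0$), which accompanies the boundary condition $T_{r+1,j,k}=1$ and is precisely the finite-truncation hypothesis under which the $T$-system was identified with a non-commutative $Q$-system in Proposition \ref{ncQ}. Since the numerator on the right then vanishes, the entire expression is zero, so $\Delta_{r+2,n}$ annihilates the basis vector $|j+n+r+2\rangle$.

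Finally, I would let $j$ range over all of $\Z$ with $n$ fixed: then $j+n+r+2$ runs through every integer, so $\Delta_{r+2,n}$ kills every basis vector $|t\rangle$, $t\in\Z$, of $\mathcal H$. Because the action of $\cA$ on $\mathcal H$ is faithful, an element annihilating the whole basis must be the zero element, whence $\Delta_{r+2,n}=0$ in $\cA$. As $n\in\Z$ was arbitrary, this is exactly the finite truncation equation \eqref{maintrunc}, and the corollary follows.

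There is essentially no serious obstacle here: the statement is a one-line consequence of Theorem \ref{actdelt} once the quasi-Wronskian action has been established. The only point requiring care is the logical passage from ``acts as zero on every basis vector'' to ``equals zero in $\cA$'', which is legitimate precisely because the representation \eqref{tact}--\eqref{tactinv} is faithful; and one should confirm that $T_{r+2,j,n}=0$ is part of the $A_r$ input data rather than something to be derived, which it is, being the defining truncation that cuts the infinite $A_{\infty/2}$ hierarchy of Remark \ref{inftyq} down to rank $r$.
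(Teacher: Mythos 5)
Your proposal is correct and follows exactly the paper's route: the paper proves the corollary simply by applying Theorem \ref{actdelt} at $i=r+2$, where the numerator $T_{r+2,j,n}$ vanishes by the $A_r$ truncation (stated in Section \ref{section4} as $\bT_{r+2,k}=0$), so $\Delta_{r+2,n}$ kills every basis vector and, by faithfulness of the representation \eqref{tact}--\eqref{tactinv}, vanishes in $\cA$. You have merely made explicit the faithfulness step and the provenance of $T_{r+2,j,n}=0$, both of which the paper leaves implicit, so there is nothing to correct.
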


We may therefore apply the results of Section \ref{section3}. The $C,D$ operators \eqref{defCD}
act on the basis $\{|t\rangle\}_{t\in\Z}$ as:
\begin{eqnarray*} C_{i,n}\,|t\rangle &=&
{T_{i,t+n-i,n}\over T_{i,t+n-i+1,n-1}} \, |t-2i\rangle 
\qquad 
C_{i,n}^{-1}\,|t\rangle =
{T_{i,t+n+i+1,n-1}\over T_{i,t+n+i,n}} \, |t+2i\rangle \\
D_{i,n}\, |t\rangle &=&{T_{i+1,t+n+i-1,n}T_{i-1,t+n+i,n-1}
\over T_{i,t+n+i-1,n-1}T_{i,t+n+i,n}}\, |t-2\rangle 
\end{eqnarray*}
Using the notation
\begin{equation}\label{lamudef}
c_{i,t,m}={T_{i,t,m}\over T_{i,t+1,m-1}} 
\qquad d_{i,t,m}=
{T_{i+1,t,m} T_{i-1,t+1,m-1}\over T_{i,t+1,m}T_{i,t,m-1}}
\end{equation}
and the fact that expressions of the form $C_{i,m}C_{i,p}^{-1}$ act diagonally, 
Theorem \ref{ncweightcalc} yields non-commutative weights $\by_i(\bm)$ which act
on the basis $\{|t\rangle\}_{t\in\Z}$ as:
\begin{eqnarray}
\by_{2i-1}(\bm) |t\rangle &=& {c_{i,t+m_{i}+i-1,m_{i}+1}
\over c_{i-1,t+m_{i-1}+i,m_{i-1}+1}}\, |t-2\rangle  \label{tweight1}\\
\by_{2i}(\bm) |t\rangle &=&\left\{ \begin{matrix} 
{c_{i+1,t+m_{i+1}+i,m_{i+1}+1}\over c_{i+1,t+m_i+i,m_i+1}} & 
{\rm if} \ m_i=m_{i+1}+1\\
1 & {\rm otherwise} 
\end{matrix}\right\}\times d_{i,t+m_i+i,m_i+1} \nonumber \\
&&\qquad \times
\left\{ \begin{matrix} 
{c_{i-1,t+m_i+i,m_i+1}\over c_{i-1,t+m_{i-1}+i,m_{i-1}+1}} & 
{\rm if} \ m_i=m_{i-1}-1\\
1 & {\rm otherwise} 
\end{matrix}\right\}\,  |t-2\rangle \label{tweight2}
\end{eqnarray}

\begin{remark}
To be able to compare the results \eqref{tweight1}-\eqref{tweight2} to those of  \cite{DFK09a}, we note that
here, 
in the non-commutative weighted path interpretation, the up steps have weights ${\bf 1}$,
acting as the identity on the basis $\{|t\rangle\}_{t\in\Z}$, while all the down weights
(including the redundant ones) act as a diagonal operator times $\bd^2$. Interpreting $t$ as
a discrete time variable, we see that up steps use no time, 
while down steps use two units of time. In Ref. \cite{DFK09a},
the time-dependent weighted path formulation of the $T$-system solutions uses skeleton
down steps as well as up steps that all use one unit of time (i.e. are the product of a diagonal operator by $\bd$), while the redundant steps
may go back in time (a step $i\to i-k$, $k>1$ uses $2-k$ units of time, 
hence the corresponding weight is a diagonal operator times $\bd^{2-k}$). The two descriptions
are equivalent up to a ``gauge" transformation, namely a conjugation of the weights 
\eqref{tweight1}-\eqref{tweight2}:
$\by_{2i-1}\to \bd^{-i-1} \by_{2i-1}\bd^{i}$ and $\by_{2i}\to \bd^{-i-1} \by_{2i}\bd^{i}$.
\end{remark}

\section{Application II: quantum Q-system}\label{section5}
\subsection{Quantum cluster algebra}

The solutions of $Q$-system for $A_r$ have been shown to form particular
clusters in a suitable cluster algebra \cite{Ke07}. On the other hand,
cluster algebras are known to have quantum analogues, 
the so-called quantum cluster algebras \cite{BZ}. In particular, there exists a quantum deformation of the $Q$-system relations.

The cluster algebra for the $A_r$ $Q$-system is a rank $2r$ cluster algebra
with trivial coefficients, and with initial cluster $\bx_0$ and exchange matrix $B_0$:
$$ \bx_0=(R_{1,0},R_{2,0},...,R_{r,0},R_{1,1},R_{2,1},...,R_{r,1})\qquad  B_0=\begin{pmatrix}
0 & -C^t \\ C & 0\end{pmatrix} $$
where $C$ is the Cartan matrix for $A_r$.
The subset of (forward) mutations we have considered so far is such that each mutation amounts
to one application of the $Q$-system equation, allowing to update the cluster
via a substitution of the form: 
$R_{i,n-1}\to R_{i,n+1}=(R_{i,n}^2+R_{i+1,n}R_{i-1,n})/R_{i,n-1}$ (mutation $\mu_i$
if $n$ is odd, $\mu_{r+i}$ if $n$ is even). Recall also that the compound mutation
$\mu=\mu_{2r}\mu_{2r-1}\cdots \mu_1$ acts on $\bx_0$ as a global
translation of all the $n$ indices by $+2$ (i.e. $R_{i,n}\to R_{i,n+2}$ for all $i$), preserving $B_0$.
The ``half-translation" of all indices by $+1$ is realized by acting only with 
$\mu_{\rm even}=\mu_r\mu_{r-1}\cdots \mu_1$
or $\mu_{\rm odd}=\mu_{2r}\mu_{2r-1}\cdots \mu_{r+1}$ in alternance,
which reverse the sign of the exchange matrix.

A quantum cluster algebra with trivial coefficients corresponding to this involves 
the {\it same} exchange matrix, subject to the same evolution under mutation, but
the cluster variables are now non-commuting variables
$\bR_{i,n}\in \cA$ whose commutation relations within each cluster are dictated 
by the exchange matrix as follows.
We have for any pair $(x_i,x_j)$ of variables within the same cluster the following $q$-commutation relation:
\begin{equation}\label{xqcomm} 
x_ix_j=q^{\Lambda_{i,j}} x_jx_i
\end{equation}
where $\Lambda$ is a suitably normalized $2r\times 2r$ matrix proportional to the inverse of the 
exchange matrix. Note that in a different cluster the commutation relations 
are different, as they are entirely dictated by the exchange matrix, 
and may involve other cluster variables.

For the initial cluster, in order for $\Lambda$
to have only integer entries, we choose 
$\Lambda=(r+1)B_0^{-1}=\begin{pmatrix} 0 &\lambda \\-\lambda & 0\end{pmatrix}$, where 
$\lambda=(r+1)C^{-1}$. We obtain the following $q$-commutation relations between variables
of the fundamental cluster:

\begin{lemma}
For $1\leq i,j \leq r$, we have:
\begin{eqnarray}
\bR_{i,0}\bR_{j,1}&=&q^{\lambda_{i,j}} \bR_{j,1}\bR_{i,0}
\label{qcomm1} \\
\bR_{i,n}\bR_{j,n}&=&\bR_{j,n}\bR_{i,n} \qquad  (n=0,1)\ \label{qcomm2} 
\end{eqnarray}
where
\begin{equation}\label{lambdamat}
\lambda_{i,j}={\rm Min}(i,j)\Big( r+1- {\rm Max}(i,j)\Big) \qquad (1\leq i,j \leq r)
\end{equation}
\end{lemma}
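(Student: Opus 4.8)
The plan is to read off both relations directly from the explicit shape of the matrix $\Lambda=(r+1)B_0^{-1}$, so that the entire content of the lemma reduces to computing $B_0^{-1}$ and then the inverse of the $A_r$ Cartan matrix $C$. Since the quantum commutation rule \eqref{xqcomm} is governed entirely by $\Lambda$, once $\Lambda$ is known in closed form both \eqref{qcomm1} and \eqref{qcomm2} are immediate.

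First I would invert the block-antidiagonal exchange matrix. Since
\[
B_0=\begin{pmatrix} 0 & -C^t \\ C & 0\end{pmatrix},
\]
a direct block computation gives
\[
B_0^{-1}=\begin{pmatrix} 0 & C^{-1} \\ -(C^t)^{-1} & 0\end{pmatrix}.
\]
Using that the $A_r$ Cartan matrix is symmetric ($C^t=C$, hence $(C^t)^{-1}=C^{-1}$), this is block-antidiagonal with upper-right block $C^{-1}$ and lower-left block $-C^{-1}$. Multiplying by $r+1$ yields exactly $\Lambda=\begin{pmatrix} 0 & \lambda \\ -\lambda & 0\end{pmatrix}$ with $\lambda=(r+1)C^{-1}$, consistent with the setup.

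Next I would establish the explicit form of $\lambda$. This is the one computational step with genuine content: one needs the classical formula
\[
(C^{-1})_{i,j}=\frac{1}{r+1}\,{\rm Min}(i,j)\big(r+1-{\rm Max}(i,j)\big),
\]
from which $\lambda_{i,j}=(r+1)(C^{-1})_{i,j}={\rm Min}(i,j)(r+1-{\rm Max}(i,j))$, which is precisely \eqref{lambdamat}. I would verify this by checking $\sum_k C_{i,k}(C^{-1})_{k,j}=\delta_{i,j}$ directly, exploiting the tridiagonal structure of $C$ (diagonal $2$, off-diagonals $-1$): the second difference in $k$ of the piecewise-linear function $k\mapsto {\rm Min}(k,j)(r+1-{\rm Max}(k,j))$ vanishes except for a single spike at $k=j$, with the boundary rows producing the normalizing factor $r+1$. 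Alternatively one may simply invoke the standard identification of $(C^{-1})_{i,j}$ with the fundamental-weight inner product $\langle\omega_i,\omega_j\rangle$ for $A_r$.

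Finally I would match indices with the ordering of the initial cluster $\bx_0=(\bR_{1,0},\ldots,\bR_{r,0},\bR_{1,1},\ldots,\bR_{r,1})$, so that $\bR_{i,0}$ occupies position $i$ and $\bR_{j,1}$ occupies position $r+j$, and feed this into \eqref{xqcomm}. For one variable drawn from each block the relevant entry is $\Lambda_{i,r+j}=\lambda_{i,j}$, giving \eqref{qcomm1}; for two variables in the same block ($n=0$ or $n=1$) the relevant entry sits in a vanishing diagonal block, so $\Lambda=0$ and the two variables commute, giving \eqref{qcomm2}. The main obstacle is therefore not conceptual but purely the bookkeeping in verifying the inverse Cartan formula; everything else is forced by the block-antidiagonal shape of $B_0^{-1}$ together with the symmetry of $C$.
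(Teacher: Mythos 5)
Your proposal is correct and follows essentially the same route as the paper, whose entire proof is the one-line remark that the lemma follows ``by explicitly inverting the Cartan matrix $C$''; you simply fill in the block inversion of $B_0$, the standard formula $(C^{-1})_{i,j}=\frac{1}{r+1}\,{\rm Min}(i,j)\big(r+1-{\rm Max}(i,j)\big)$ verified via the tridiagonal second-difference computation, and the bookkeeping matching positions $i$ and $r+j$ in $\bx_0$ to the blocks of $\Lambda$. All three steps check out, so this is a valid (and more detailed) rendering of the paper's argument.
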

\begin{proof}
By explicitly inverting the Cartan matrix $C$, with entries  $2\delta_{i,j}-\delta_{|i-j|,1}$.
\end{proof}

Note that the relations \eqref{qcomm1} and \eqref{qcomm2} are preserved under the above-mentioned 
half-translation of all $n$ indices by $+1$, as $B_0\to -B_0$, and therefore $\Lambda\to -\Lambda$
for $\mu_{\rm even}$ and vice versa for $\mu_{\rm odd}$.
Hence the relations \eqref{qcomm1}-\eqref{qcomm2} extend to
\begin{eqnarray}
\bR_{i,n}\bR_{j,n+1}&=&q^{\lambda_{i,j}} \bR_{j,n+1}\bR_{i,n}
\label{qcom1} \\
\bR_{i,n}\bR_{j,n}&=&\bR_{j,n}\bR_{i,n} \ \label{qcom2} 
\end{eqnarray}
for all $n\in \Z$.
In particular, the initial sequence $\bR_n=\bR_{1,n}$, $n\in\Z$, has the $q$-commutation relations:
\begin{equation}
\bR_n \bR_{n+1} =q^r \, \bR_{n+1}\bR_n
\end{equation}

In the following we consider the {\it same} subset of cluster mutations as in the commuting $A_r$
$Q$-system cluster algebra, leading to a subset of cluster
nodes indexed by Motzkin paths $\bm\in \cM$. Each of these clusters $\bx_\bm$
is now made of $2r$ non-commuting
variables $\{\bR_{i,m_i},\bR_{i,m_i+1} \}_{i=1}^r$, ordered as in the commuting case.
As before, these will play the role of possible initial conditions for the quantum $Q$-system below.

\subsection{Quantum $Q$-system}

The particular form of the quantum $Q$-system we choose is borrowed from 
the $T$-system case \eqref{NCQ}, 
and compatible up to multiplicative redefinition of the variables with the mutations
of the quantum cluster algebra.

\begin{defn}
We define the quantum $Q$-system for $A_r$ as:
\begin{equation}\label{Qquant}
\bR_{i,k+1}(\bR_{i,k})^{-1}\bR_{i,k-1}=\bR_{i,k}+\bR_{i+1,k}(\bR_{i,k})^{-1}\bR_{i-1,k}
\qquad (i\geq 1;\, k\in\Z)
\end{equation}
with the boundary conditions
\begin{equation}\label{boundQquant}
\bR_{0,k}={\bf 1}\qquad  \bR_{r+1,k}={\bf 1} \qquad (k\in \Z)
\end{equation}
\end{defn}

Note that the latter imply a relation 
\begin{equation}\label{truncR}
\bR_{r+2,k}=0 \qquad (k\in \Z)
\end{equation}
as $\lambda_{r+1,r+1}=0$. We also have $\bR_{-1,n}=0$ as $\lambda_{0,0}=0$.

Using the 
$q$-commutations \eqref{qcom1}-\eqref{qcom2}, we may rewrite the system \eqref{Qquant}
in a simpler form:

\begin{lemma}
The $A_r$ quantum $Q$-system is equivalent to:
\begin{equation}\label{quantQ}
q^{\lambda_{i,i}} \bR_{i,k+1}\bR_{i,k-1}=(\bR_{i,k})^2+\bR_{i+1,k}\bR_{i-1,k}
\qquad (i\geq 1;\, k\in\Z)
\end{equation}
with $\lambda_{i,j}$ as in \eqref{lambdamat}, and
with the same boundary condition \eqref{boundQquant}.
\end{lemma}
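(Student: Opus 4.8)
The plan is to prove the equivalence by a purely algebraic manipulation of \eqref{Qquant}, using only the $q$-commutation relations \eqref{qcom1}--\eqref{qcom2}. Since every step will be either multiplication by an invertible element or the application of an (invertible) $q$-commutation, the argument automatically yields an ``if and only if'', so the two forms are genuinely equivalent rather than one merely implying the other.

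First I would record the single elementary commutation relation that is actually needed. Setting $i=j$ and $n=k-1$ in \eqref{qcom1} gives $\bR_{i,k-1}\bR_{i,k}=q^{\lambda_{i,i}}\bR_{i,k}\bR_{i,k-1}$; multiplying by $(\bR_{i,k})^{-1}$ on the left and then on the right rearranges this to $(\bR_{i,k})^{-1}\bR_{i,k-1}=q^{\lambda_{i,i}}\bR_{i,k-1}(\bR_{i,k})^{-1}$. This is the only place where a factor of $q$ is produced.

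Next I would transform the two sides of \eqref{Qquant}. Substituting the previous relation into the left-hand side $\bR_{i,k+1}(\bR_{i,k})^{-1}\bR_{i,k-1}$ pulls the inverse to the far right and extracts the scalar, giving $q^{\lambda_{i,i}}\bR_{i,k+1}\bR_{i,k-1}(\bR_{i,k})^{-1}$. For the right-hand side I would use \eqref{qcom2}: since $\bR_{i-1,k}$ and $\bR_{i,k}$ carry the same time index $k$ they commute, so $(\bR_{i,k})^{-1}\bR_{i-1,k}=\bR_{i-1,k}(\bR_{i,k})^{-1}$; hence $\bR_{i+1,k}(\bR_{i,k})^{-1}\bR_{i-1,k}=\bR_{i+1,k}\bR_{i-1,k}(\bR_{i,k})^{-1}$. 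Writing the lone term as $\bR_{i,k}=(\bR_{i,k})^2(\bR_{i,k})^{-1}$, the right-hand side becomes $\big((\bR_{i,k})^2+\bR_{i+1,k}\bR_{i-1,k}\big)(\bR_{i,k})^{-1}$. Equating the two rewritten sides and cancelling the common right factor $(\bR_{i,k})^{-1}$ (i.e. multiplying on the right by $\bR_{i,k}$) produces exactly \eqref{quantQ}.

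The computation is short and essentially forced; the one point where an error could creep in — and the main thing to verify carefully — is the direction and exponent of each $q$-commutation. Specifically, one must confirm that the sole surviving power of $q$ comes from the time-shifted pair $(\bR_{i,k},\bR_{i,k-1})$ and carries exponent $+\lambda_{i,i}$, whereas the equal-time neighbours $\bR_{i\pm1,k}$ contribute no factor at all because \eqref{qcom2} makes all variables with a common $k$ commute. Once the relation $(\bR_{i,k})^{-1}\bR_{i,k-1}=q^{\lambda_{i,i}}\bR_{i,k-1}(\bR_{i,k})^{-1}$ is fixed with the correct exponent, the remaining manipulations are reversible and the equivalence of \eqref{Qquant} and \eqref{quantQ} follows.
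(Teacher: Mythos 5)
Your proof is correct and follows exactly the computation the paper leaves implicit: the lemma is stated there as a direct consequence of the $q$-commutations \eqref{qcom1}--\eqref{qcom2} with no written proof, and your argument — extracting $(\bR_{i,k})^{-1}\bR_{i,k-1}=q^{\lambda_{i,i}}\bR_{i,k-1}(\bR_{i,k})^{-1}$ from \eqref{qcom1} with $j=i$, $n=k-1$, commuting $(\bR_{i,k})^{-1}$ past the equal-time factor $\bR_{i-1,k}$ via \eqref{qcom2}, and cancelling the common invertible right factor $(\bR_{i,k})^{-1}$ — is the intended derivation, with reversibility giving the stated equivalence. Your exponent check is also consistent with the paper's $A_2$ example, where $q^{2}=q^{\lambda_{1,1}}$ (for $r=2$) appears on the left of the rewritten system \eqref{quantQ}.
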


\begin{remark}
The relation \eqref{quantQ} is equivalent to the relevant quantum 
cluster mutations of \cite{BZ} up to a scalar renormalization
of the cluster variables. Indeed, defining the variables
 $\bX_{i,n}=q^{{r+1\over 4}\lambda_{i,i}} \bR_{i,n}$, we find:
\begin{equation}\label{quantQsys}
\bX_{i,k+1}=q^{\lambda_{i,i}}\, (\bX_{i,k-1})^{-1}(\bX_{i,k})^2
+q^{{1\over 2}(\lambda_{i,i-1}+\lambda_{i,i+1})}\, (\bX_{i,k-1})^{-1}\bX_{i-1,k}\bX_{i+1,k}
\end{equation}
in agreement with the definitions of \cite{BZ}. Note that boundary conditions remain the same: 
$\bX_{0,n}=\bX_{r+1,n}={\bf 1}$ for all $n\in \Z$.
\end{remark}

\subsection{Quasi-Wronskian formulation}

The quasi-Wronskians \eqref{wronskdefnc} of the variables $\bR_n=\bR_{1,n}$ 
may be expressed as simple monomials
in terms of the solutions of the quantum $Q$-system \eqref{Qquant}-\eqref{boundQquant}.
We have:

\begin{thm}\label{quantdelt}
\begin{equation}\label{deltaquant}
\Delta_{i,n}=q^{-{1\over 2}(i-1)(2r+2-i)}\, \bR_{i,n}\bR_{i-1,n-1}^{-1} \qquad (i>1,n\in \Z)
\end{equation}
\end{thm}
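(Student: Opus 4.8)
The plan is to prove \eqref{deltaquant} by induction on $i$, using the discrete non-commutative Hirota equation \eqref{nchiro} of Theorem \ref{dischironc} as the recursion that propagates the formula from levels $i-1,i$ to level $i+1$. Write $\gamma_i=q^{-\frac12(i-1)(2r+2-i)}$, so the claim reads $\Delta_{i,n}=\gamma_i\bR_{i,n}\bR_{i-1,n-1}^{-1}$. The induction is anchored at $i=1$, where $\Delta_{1,n}=\bR_n=\bR_{1,n}$ by definition and $\gamma_1=1$, $\bR_{0,n-1}={\bf 1}$ by the boundary condition \eqref{boundQquant}; the degenerate convention $\Delta_{0,n}^{-1}=0$ is matched by $\bR_{-1,n}=0$. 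Before starting I would record the exponent arithmetic: with $e_i=\frac12(i-1)(2r+2-i)$ one has $e_{i+1}-e_i=r+1-i$, hence $\gamma_{i+1}=\gamma_i\,q^{-(r+1-i)}$, and with $\lambda$ as in \eqref{lambdamat} one has $\lambda_{i,i}-\lambda_{i,i-1}=r+1-i$. These two identities are exactly what will make all the $q$-powers collapse.

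First I would substitute the inductive ansatz for $\Delta_{i,n+1},\Delta_{i,n},\Delta_{i,n-1}^{-1},\Delta_{i-1,n}^{-1}$ into the right-hand side of \eqref{nchiro}. The linear term $\Delta_{i,n+1}=\gamma_i\bR_{i,n+1}\bR_{i-1,n}^{-1}$ must be matched against the new variable $\bR_{i+1,n}$: using the quantum $Q$-system \eqref{quantQ} at level $i$ to write $\bR_{i+1,n}=(q^{\lambda_{i,i}}\bR_{i,n+1}\bR_{i,n-1}-\bR_{i,n}^2)\bR_{i-1,n}^{-1}$, and then commuting the spurious $\bR_{i,n-1}$ through $\bR_{i-1,n}^{-1}$ by \eqref{qcom1}, the target $\gamma_{i+1}\bR_{i+1,n}\bR_{i,n-1}^{-1}$ splits precisely into $\gamma_i\bR_{i,n+1}\bR_{i-1,n}^{-1}$ plus a remainder $-\gamma_i\,q^{-(r+1-i)}\bR_{i,n}^2\bR_{i-1,n}^{-1}\bR_{i,n-1}^{-1}$; the net $q$-power here is $\lambda_{i,i}-(r+1-i)-\lambda_{i,i-1}=0$, so the linear pieces cancel identically.

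The heart of the argument is then to show that the two quadratic Hirota terms $\Delta_{i,n}\Delta_{i,n-1}^{-1}\Delta_{i,n}$ and $\Delta_{i,n}\Delta_{i-1,n}^{-1}\Delta_{i,n}$ reproduce this remainder. After substituting the ansatz, both are degree-six words sharing the common left factor $\bR_{i,n}\bR_{i-1,n-1}^{-1}$ and right factor $\bR_{i,n}\bR_{i-1,n-1}^{-1}$, so their difference factors as $\bR_{i,n}\bR_{i-1,n-1}^{-1}\,B\,\bR_{i,n}\bR_{i-1,n-1}^{-1}$ with $B=\bR_{i-1,n-2}\bR_{i,n-1}^{-1}-q^{-(r+2-i)}\bR_{i-2,n-1}\bR_{i-1,n}^{-1}$. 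The plan is to evaluate $B$ via the quantum $Q$-system \eqref{quantQ} one level down (level $i-1$), which rewrites the interior product $\bR_{i,n-1}\bR_{i-2,n-1}$ in terms of $\bR_{i-1,n}\bR_{i-1,n-2}$ and $\bR_{i-1,n-1}^2$, and then to push the outcome back through the outer factors using \eqref{qcom1}--\eqref{qcom2}, reaching $q^{-(r+1-i)}\bR_{i,n}^2\bR_{i-1,n}^{-1}\bR_{i,n-1}^{-1}$. This is the same collapse-through-the-equation mechanism as in the $T$-system computation of Theorem \ref{actdelt}.

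I expect the main obstacle to be exactly this last step: the $q$-commutation bookkeeping. Unlike a naive $q$-commuting sequence, the $\bR_n=\bR_{1,n}$ do not $q$-commute across time distance $2$ by a single power of $q$ (one checks directly that the two orderings of $\bR_{n-1}\bR_{n+1}$ differ by distinct powers on the two monomials $\bR_{1,n}^2$ and $\bR_{2,n}$), so every reordering must be carried out through the adjacent-time and equal-time relations \eqref{qcom1}--\eqref{qcom2} alone, with each factor $\lambda_{i,j}$ from \eqref{lambdamat} tracked carefully. The delicate point is to verify that all these $\lambda$-contributions telescope to the single prefactor $q^{-(r+1-i)}$ forced by $\gamma_{i+1}=\gamma_i\,q^{-(r+1-i)}$; the partial check above, that the linear term already has net power $q^0$, strongly suggests the full cancellation succeeds, but completing it is where the real labor lies. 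As a safeguard I would also confirm the case $i=2$ separately and directly from \eqref{quantQ} at level $1$ together with $\bR_{n-1}\bR_n=q^r\bR_n\bR_{n-1}$, since this already exhibits the required cancellation in its simplest instance.
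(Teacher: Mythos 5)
Your proposal is correct and follows essentially the same route as the paper's proof: induction on $i$ anchored at $\Delta_{1,n}=\bR_{1,n}$, propagation via the Hirota equation \eqref{nchiro}, the quantum $Q$-system \eqref{quantQ} invoked at levels $i$ and $i-1$ (the paper packages this as the vanishing combination $B_{i,n}=(\bR_{i,n})^2+\bR_{i+1,n}\bR_{i-1,n}-q^{\lambda_{i,i}}\bR_{i,n+1}\bR_{i,n-1}=0$), and exactly the exponent identities $a_{i+1}-a_i=r+1-i=\lambda_{i,i}-\lambda_{i,i-1}$ you record, with your linear-term remainder $q^{-a_{i+1}}\bR_{i,n}^2\bR_{i-1,n}^{-1}\bR_{i,n-1}^{-1}$ matching the paper's verbatim. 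The only organizational difference worth noting is that the paper sidesteps the distance-two same-level reordering you correctly worry about by first factoring $\Delta_{i,n-1}^{-1}-\Delta_{i-1,n}^{-1}=\Delta_{i,n-1}^{-1}\left(\Delta_{i-1,n}-\Delta_{i,n-1}\right)\Delta_{i-1,n}^{-1}$, so the $Q$-system is applied to the un-inverted difference, which collapses to the single monomial $q^{-a_i}\,\bR_{i-1,n-1}^2\,\bR_{i-2,n-1}^{-1}\bR_{i-1,n-2}^{-1}$ and leaves only adjacent-time and equal-time commutations \eqref{qcom1}--\eqref{qcom2} for the final telescoping.
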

\begin{proof}
By induction on $i$. Define 
$\Gamma_{i,n}=q^{-{1\over 2}(i-1)(2r+2-i)}\, \bR_{i,n}\bR_{i-1,n-1}^{-1}$, the r.h.s. of \eqref{deltaquant}
and $a_i={1\over 2}(i-1)(2r+2-i)$.
For $i=1$, we have $\Delta_{1,n}=\bR_{1,n}=\Gamma_{1,n}$
from the boundary condition \eqref{boundQquant}. 
Assume \eqref{deltaquant} holds for all $1\leq j \leq i$.
We use the Hirota equation \eqref{nchiro} to express
\begin{equation}\label{hire}
\Delta_{i+1,n}-\Gamma_{i+1,n}=\Gamma_{i,n+1}-\Gamma_{i+1,n}-
\Gamma_{i,n}\Gamma_{i,n-1}^{-1}(\Gamma_{i,n-1}-\Gamma_{i-1,n})\Gamma_{i-1,n}^{-1}\Gamma_{i,n}
\end{equation}
Moreover, defining $B_{i,n}=(\bR_{i,n})^2+\bR_{i+1,n}\bR_{i-1,n}-q^{\lambda_{i,i}}\bR_{i,n+1}\bR_{i,n-1}$
($=0$ due to the quantum Q-system equation \eqref{quantQ}), we find that
$$ \Gamma_{i-1,n}-\Gamma_{i,n-1}=q^{-a_i} \Big((\bR_{i-1,n-1})^2-B_{i-1,n-1}\Big)
\bR_{i-2,n-1}^{-1}\bR_{i-1,n-2}^{-1}$$
due to the $q$-commutation relations \eqref{qcom1}-\eqref{qcom2}
and the identity $\lambda_{i-1,i-1}-\lambda_{i-2,i-1}-a_{i}=-a_{i-1}$.
Substituting this into \eqref{hire}, we get
\begin{eqnarray*}
\Delta_{i+1,n}-\Gamma_{i+1,n}&=&q^{-a_{i+1}}\bR_{i,n}^2
\bR_{i-1,n}^{-1}\bR_{i,n-1}^{-1}\\
&&-q^{-a_i} \Gamma_{i,n}\Gamma_{i,n-1}^{-1}(\bR_{i-1,n-1})^2
\bR_{i-2,n-1}^{-1}\bR_{i-1,n-2}^{-1}
\Gamma_{i-1,n}^{-1}\Gamma_{i,n}\\
&=&0
\end{eqnarray*}
by use of the $q$-commutation relations \eqref{qcom1}-\eqref{qcom2}.
\end{proof}

\begin{remark}\label{quantumwronsk}
Theorem \ref{quantdelt} allows to interpret $\bR_{i,n}$ as a ``quantum Wronskian", namely 
a ``quantum determinant" of the Wronskian matrix used in Definition \ref{ncwdef}.
Indeed, we may write
$$\bR_{i,n}=q^{{i(i-1)\over 6}(3r+2-i)}\, \Delta_{i,n}\Delta_{i-1,n-1}\cdots \Delta_{1,n-i+1}
\qquad (i\geq 1)$$
which is a product of principal quasi-minors of the Wronskian matrix, like in the standard definitions of
quantum determinants \cite{EtingofRetakh} (note however that here the quasi-minors do not commute).
In the cases $i=1,2,3$, using
the $q$-commutations and the quantum $Q$-system, we find
the following expressions for the quantum Wronskian:
\begin{eqnarray*}
\bR_{1,n}&=&\bR_n\\
\bR_{2,n}&=&q^r \bR_{n+1}\bR_{n-1}-\bR_{n}^2\\
\bR_{3,n}&=& q^{3r-1}\bR_{n+2}\bR_n\bR_{n-2}
-q^{2r-1}(\bR_{n+2}\bR_{n-1}^2+\bR_{n+1}^2\bR_{n-2})\\
&&+q^{r-1}(q^{r+1}+1)\bR_{n+1}\bR_n\bR_{n-1}-\bR_n^3
\end{eqnarray*}
\end{remark}

Applying Theorem \ref{quantdelt} for $i=r+2$, and using the equation \eqref{truncR}, we get:

\begin{cor}
The solution of the quantum $Q$-system \eqref{Qquant}-\eqref{boundQquant} satisfies the finite truncation
condition $\Delta_{r+2,n}=0$ for all $n\in \Z$.
\end{cor}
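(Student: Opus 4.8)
The plan is to derive the statement as an immediate consequence of Theorem \ref{quantdelt}, whose inductive proof already carries all of the analytic weight (the Hirota equation \eqref{nchiro} together with the $q$-commutations \eqref{qcom1}-\eqref{qcom2} and the quantum $Q$-system \eqref{quantQ}). Once the closed formula \eqref{deltaquant} for $\Delta_{i,n}$ is available for every $i>1$, the corollary requires nothing more than a substitution at the top index $i=r+2$.

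Concretely, I would specialize \eqref{deltaquant} to $i=r+2$. The scalar prefactor becomes $q^{-\frac{1}{2}r(r+1)}$, since $(i-1)(2r+2-i)=(r+1)r$ at $i=r+2$; this is an invertible power of $q$ and plays no role in the vanishing. The remaining factor is $\bR_{r+2,n}\,\bR_{r+1,n-1}^{-1}$. Here $\bR_{r+1,n-1}={\bf 1}$ by the boundary condition \eqref{boundQquant}, so its inverse is simply ${\bf 1}$ and the expression is well defined. By \eqref{truncR} we have $\bR_{r+2,n}=0$ for all $n\in\Z$, and therefore
\[
\Delta_{r+2,n}=q^{-\frac{1}{2}r(r+1)}\,\bR_{r+2,n}\,\bR_{r+1,n-1}^{-1}=0\qquad(n\in\Z),
\]
which is the assertion.

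There is essentially no obstacle at this final step: the only thing to check is that Theorem \ref{quantdelt} is genuinely applicable at $i=r+2$, that is, that its induction reaches this index. This is the case, because the inductive passage from level $i$ to $i+1$ invokes the quantum $Q$-system \eqref{quantQ} only at the interior index $i-1$, and for $i+1=r+2$ this is index $r$, which lies in the valid range $1\le i\le r$. One could alternatively bypass Theorem \ref{quantdelt} and argue directly from the vanishing criterion of Proposition \ref{vaniprop}, exhibiting the first row of the quasi-Wronskian matrix as a left linear combination of the remaining rows; but the substitution route above is shorter and entirely self-contained, so I would prefer it.
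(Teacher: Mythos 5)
Your proof is correct and is exactly the paper's argument: the corollary is obtained by specializing Theorem \ref{quantdelt} to $i=r+2$ and invoking $\bR_{r+2,n}=0$ from \eqref{truncR} together with $\bR_{r+1,n-1}={\bf 1}$ from \eqref{boundQquant}. Your extra check that the induction in Theorem \ref{quantdelt} genuinely reaches $i=r+2$ (using \eqref{quantQ} at the admissible index $r$) is a welcome, if implicit in the paper, verification.
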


We may therefore apply the results of Section \ref{section3}. Assuming $\bR_{i,n}$ is a solution
of \eqref{Qquant}-\eqref{boundQquant}, we consider the usual subset of clusters $\bx_\bm$,
also viewed as initial data for the quantum $Q$-system.
Recall that the commutation relations between the quantum cluster variables within 
each cluster $\bx_\bm$ are determined by the exchange matrix $B_\bm$.
Using the explicit expressions for the
exchange matrix $B_\bm$ of Ref. \cite{DFK3}, it is an easy exercise to compute the $q$-commutation
relations between the cluster variables within any given cluster $\bx_\bm$. We find that:

\begin{lemma}\label{qcomgen}
\begin{equation}\label{genqcomm}
\bR_{i,n}\bR_{j,n+p}=q^{p\lambda_{i,j}}\, \bR_{j,n+p}\bR_{i,n}
\end{equation}
for all $n,p,i,j$ such that the variables belong to the same cluster $\bx_\bm$. 
\end{lemma}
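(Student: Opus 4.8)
The plan is to prove \eqref{genqcomm} by induction on the number of mutations needed to produce the cluster $\bx_\bm$ from the fundamental cluster $\bx_{\bm_0}$, using only the two mutation types (i) and (ii) of Section \ref{mutacosec} together with the quantum $Q$-system \eqref{quantQ}. The key point is that \eqref{genqcomm} is an intrinsic statement about the fixed elements $\bR_{i,n}\in\cA$, and that each forward mutation introduces exactly one new cluster variable, whose commutators with the surviving ones can be computed from the recursion \eqref{quantQ}. For the base case $\bm=\bm_0$ (and its global translates) every variable of $\bx_{\bm_0}$ sits at level $0$ or $1$, so each pair has $|p|\le 1$ and \eqref{genqcomm} is exactly \eqref{qcom1}--\eqref{qcom2}, which hold for all $n\in\Z$.

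For the inductive step I would write $\bm'=\mu_k\bm$ with $m'_k=m_k+1$. The cluster $\bx_{\bm'}$ is obtained from $\bx_\bm$ by deleting $\bR_{k,m_k}$ and adjoining the single new variable $\bR_{k,m_k+2}$, while $\bR_{k,m_k+1}$ and all $\bR_{j,\cdot}$ with $j\neq k$ are common to both clusters. Hence by the induction hypothesis \eqref{genqcomm} already holds for every pair of $\bx_{\bm'}$ not involving $\bR_{k,m_k+2}$, and it suffices to commute the new variable past each surviving variable $\bR_{j,\ell}$, with $\ell\in\{m_j,m_j+1\}$.

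First I would use the quantum $Q$-system in the form $q^{\lambda_{k,k}}\bR_{k,m_k+2}\bR_{k,m_k}=(\bR_{k,m_k+1})^2+\bR_{k+1,m_k+1}\bR_{k-1,m_k+1}$ to express $\bR_{k,m_k+2}$ through $\bR_{k,m_k}$, $\bR_{k,m_k+1}$ and $\bR_{k\pm1,m_k+1}$. A direct check of the local profile of $\bm$ shows that in both cases (i) and (ii) one has $m_k+1\in\{m_{k\pm1},m_{k\pm1}+1\}$, so all these constituents lie in $\bx_\bm$ (the boundary variables $\bR_{0,\cdot}=\bR_{r+1,\cdot}=\bone$ cause no trouble). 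Multiplying this relation on the left by $\bR_{j,\ell}$ and commuting through each factor via the induction hypothesis in $\bx_\bm$, the case $j=k$ is immediate from \eqref{qcom1} with $i=j=k$ (the only surviving $k$-variable being $\bR_{k,m_k+1}$), while for $j\neq k$ both terms on the right acquire the \emph{same} factor $q^{2(m_k+1-\ell)\lambda_{j,k}}$ precisely because $\lambda_{j,k-1}+\lambda_{j,k+1}=2\lambda_{j,k}$. This is the off-diagonal content of $\lambda=(r+1)C^{-1}$, i.e. $\lambda_{j,k-1}-2\lambda_{j,k}+\lambda_{j,k+1}=-(r+1)\delta_{j,k}$, with the convention $\lambda_{j,0}=\lambda_{j,r+1}=0$ matching the boundary conditions. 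Commuting $\bR_{j,\ell}$ past $\bR_{k,m_k}$ (again the induction hypothesis) and cancelling $\bR_{k,m_k}$ on the right then yields $\bR_{j,\ell}\bR_{k,m_k+2}=q^{(m_k+2-\ell)\lambda_{j,k}}\bR_{k,m_k+2}\bR_{j,\ell}$, which is \eqref{genqcomm} with $p=m_k+2-\ell$; the opposite ordering follows by the symmetry $i\leftrightarrow j$, $p\mapsto-p$.

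The main obstacle I anticipate is bookkeeping rather than any conceptual difficulty: one must verify that every variable entering the quantum $Q$-system expansion of $\bR_{k,m_k+2}$ genuinely belongs to $\bx_\bm$, keep the sub-cases $j=k$ and $j\neq k$ cleanly separate, and track the powers of $q$ so that the Cartan-inverse identity collapses them to the single exponent $(m_k+2-\ell)\lambda_{j,k}$. As a consistency check, this self-contained derivation reproduces the commutation matrix $\Lambda_\bm=(r+1)B_\bm^{-1}$ read off directly from the explicit exchange matrices $B_\bm$ of \cite{DFK3}: all $B_\bm$ are skew-symmetric iterated mutations of $B_0$, and compatibility then forces $\Lambda_\bm$ to be proportional to $B_\bm^{-1}$ in the sense of \cite{BZ}.
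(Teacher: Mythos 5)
Your proposal is correct, and it takes a genuinely different route from the paper's. The paper disposes of Lemma \ref{qcomgen} in one line: within each cluster the commutation matrix $\Lambda_\bm$ is dictated by the exchange matrix via the Berenstein--Zelevinsky compatibility condition, and since the explicit $B_\bm$ are known from \cite{DFK3}, one simply checks that $\Lambda_\bm=(r+1)B_\bm^{-1}$ has entries $p\lambda_{i,j}$ in the stated positions --- the lemma is declared ``an easy exercise'' in that formalism. You instead derive the relations intrinsically, by induction over the forward mutations of types (i) and (ii), using the quantum $Q$-system \eqref{quantQ} to expand the single new variable $\bR_{k,m_k+2}$ and then collapsing the $q$-exponents via the Cartan-inverse identity $\lambda_{j,k-1}-2\lambda_{j,k}+\lambda_{j,k+1}=-(r+1)\delta_{j,k}$ (with $\lambda_{j,0}=\lambda_{j,r+1}=0$). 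I verified your bookkeeping: for $j\neq k$ both terms of the expansion acquire the common factor $q^{2(m_k+1-\ell)\lambda_{j,k}}$, the back-commutation past $\bR_{k,m_k}$ contributes $q^{(\ell-m_k)\lambda_{j,k}}$, and $2(m_k+1-\ell)+(\ell-m_k)=m_k+2-\ell$ gives exactly the exponent in \eqref{genqcomm}; the constituents $\bR_{k,m_k+1},\bR_{k\pm1,m_k+1}$ do all lie in $\bx_\bm$ in both local profiles, and cancellation of $\bR_{k,m_k}$ is legitimate in the skew field $\cA$. What each approach buys: the paper's is shorter and stays entirely inside the quantum cluster algebra framework, with consistency across clusters guaranteed structurally by the mutation rule for compatible pairs $(\Lambda_\bm,B_\bm)$; yours avoids computing $B_\bm^{-1}$ altogether and shows directly that the quantum $Q$-system recursion \emph{propagates} the $q$-commutations, which is more informative (it is in effect a consistency proof of the recursion with the $\Lambda$-structure). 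One caveat worth making explicit: your base case invokes \eqref{qcom1}--\eqref{qcom2} for \emph{all} $n\in\Z$, and that extension is itself obtained in the paper by the half-translation argument ($B_0\to-B_0$ under $\mu_{\rm even}$, $\mu_{\rm odd}$) inside the quantum cluster algebra --- so your derivation is not entirely independent of that formalism, but given those inputs it is complete.
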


\begin{remark}
Lemma \ref{qcomgen} does not cover for instance the case $j=i$, $p=2$,
as $\bR_{i,n}$ and $\bR_{i,n+2}$ never belong to the same cluster $\bx_\bm$. Using
the quantum $Q$-system and the valid $q$-commutations, we find for instance
$$ \bR_{i,n}\bR_{i,n+2}=q^{2\lambda_{i,i}}\,
\left( q^{-r-1}\, \bR_{i,n+2}\bR_{i,n}+( q^{-r-1}-1)\bR_{i,n+1}^2\right)$$
However, we may apply Lemma \ref{qcomgen} to compute the commutation of $\Delta_{i,n}$
and $\Delta_{j,m}$, using Theorem \ref{quantdelt}, when the
four variables $\bR_{i,n},\bR_{i-1,n-1},\bR_{j,m},\bR_{j-1,m-1}$ belong to the same
cluster $\bx_\bm$:
$$ \Delta_{i,n}\Delta_{j,m}=q^{(n-m)(\lambda_{i,j-1}+\lambda_{j,i-1})+r(j-i)}\, 
\Delta_{j,m} \Delta_{i,n}$$
\end{remark}

Using the $q$-commutation relations \eqref{genqcomm}
we easily get the following:

\begin{lemma}\label{uselem}
The quantities $C_{i,n}$ and $D_{i,n}$ \eqref{defCD} are expressed as:
$$C_{i,n}=q^{i(i-1)\over 2} \, \bR_{i,n}\bR_{i,n-1}^{-1} \qquad D_{i,n}= \bR_{i+1,n}\bR_{i,n}^{-1}
\bR_{i,n-1}^{-1}\bR_{i-1,n-1}$$
\end{lemma}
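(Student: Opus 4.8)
The plan is to reduce both identities to Theorem \ref{quantdelt}, which already expresses each quasi-Wronskian as the two-term monomial $\Delta_{i,n}=q^{-a_i}\bR_{i,n}\bR_{i-1,n-1}^{-1}$ with $a_i=\tfrac12(i-1)(2r+2-i)$, and then to reorganize the resulting words in the $\bR$'s using only the nearest-neighbour $q$-commutations \eqref{qcom1}--\eqref{qcom2}. The crucial structural point is that every variable occurring in the computation carries a time index differing by at most one from those it must pass, so I never invoke the delicate $p\ge 2$ commutations (which, as in the Remark following Lemma \ref{qcomgen}, already entangle the quantum $Q$-system itself); the relations \eqref{qcom1}--\eqref{qcom2} hold for all $n$ and are all that is needed, which is what makes the manipulation legitimate inside the whole algebra $\cA$.

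For $D_{i,n}$ I would argue directly from the definition $D_{i,n}=\Delta_{i+1,n}\Delta_{i,n}^{-1}$. Substituting Theorem \ref{quantdelt} and using $a_i-a_{i+1}=i-r-1$ gives $D_{i,n}=q^{\,i-r-1}\bR_{i+1,n}\bR_{i,n-1}^{-1}\bR_{i-1,n-1}\bR_{i,n}^{-1}$. It then remains only to commute $\bR_{i,n}^{-1}$ leftward into its target slot: first past $\bR_{i-1,n-1}$, which by \eqref{qcom1} contributes $q^{-\lambda_{i-1,i}}$, and then past $\bR_{i,n-1}^{-1}$, which contributes $q^{+\lambda_{i,i}}$. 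The accumulated exponent is $(i-r-1)-\lambda_{i-1,i}+\lambda_{i,i}$, and since $\lambda_{i,i}-\lambda_{i-1,i}=r+1-i$ (immediate from \eqref{lambdamat}) this vanishes, leaving exactly $\bR_{i+1,n}\bR_{i,n}^{-1}\bR_{i,n-1}^{-1}\bR_{i-1,n-1}$.

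For $C_{i,n}$ I would induct on $i$ via the recursion $C_{i,n}=\Delta_{i,n}\Delta_{i,n-1}^{-1}C_{i-1,n-1}$ from \eqref{recuC}, with base case $C_{0,n}={\bf 1}=\bR_{0,n}\bR_{0,n-1}^{-1}$ since $\bR_{0,n}={\bf 1}$. Theorem \ref{quantdelt} collapses the $q$-power in $\Delta_{i,n}\Delta_{i,n-1}^{-1}=\bR_{i,n}\bR_{i-1,n-1}^{-1}\bR_{i-1,n-2}\bR_{i,n-1}^{-1}$, while the inductive hypothesis supplies $C_{i-1,n-1}=q^{(i-1)(i-2)/2}\bR_{i-1,n-1}\bR_{i-1,n-2}^{-1}$. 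The reordering then involves only index-$(i-1)$ and index-$i$ variables at the two consecutive times $n-1,n-2$: the same-time pair $\bR_{i,n-1},\bR_{i-1,n-1}$ commutes by \eqref{qcom2}, a single application of \eqref{qcom1} moves $\bR_{i,n-1}^{-1}$ to the far right (factor $q^{-\lambda_{i-1,i}}$), and the conjugation $\bR_{i-1,n-1}^{-1}\bR_{i-1,n-2}\bR_{i-1,n-1}=q^{\lambda_{i-1,i-1}}\bR_{i-1,n-2}$ supplies the decisive factor $q^{\lambda_{i-1,i-1}}$. Combining exponents and using $\lambda_{i-1,i-1}-\lambda_{i-1,i}=i-1$, the inductive power $(i-1)(i-2)/2$ is promoted to $i(i-1)/2$, and the $\bR$-word reduces to $\bR_{i,n}\bR_{i,n-1}^{-1}$, completing the induction.

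The only real obstacle is the $q$-exponent bookkeeping: an incorrectly oriented commutation flips the sign of a $\lambda$-exponent and destroys the cancellation, so one must record each elementary move as a labelled instance of \eqref{qcom1} or \eqref{qcom2} and verify at the end that the net exponent collapses to precisely the two elementary identities $\lambda_{i,i}-\lambda_{i-1,i}=r+1-i$ and $\lambda_{i-1,i-1}-\lambda_{i-1,i}=i-1$, both read off directly from \eqref{lambdamat}. Since every move stays within one time-unit, no appeal to Lemma \ref{qcomgen} for $p\ge 2$ is required, and the entire argument is valid verbatim in $\cA$.
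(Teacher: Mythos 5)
Your proof is correct and follows essentially the same route the paper intends, which states the lemma as an immediate consequence of Theorem \ref{quantdelt} and the $q$-commutation relations without writing out the computation; your substitution of the quasi-Wronskian monomials into $D_{i,n}=\Delta_{i+1,n}\Delta_{i,n}^{-1}$ and into the recursion \eqref{recuC} for $C_{i,n}$, together with the exponent identities $\lambda_{i,i}-\lambda_{i-1,i}=r+1-i$ and $\lambda_{i-1,i-1}-\lambda_{i-1,i}=i-1$, checks out in every detail. Your observation that working through \eqref{recuC} keeps all time shifts within one unit, so that only \eqref{qcom1}--\eqref{qcom2} are needed and the cluster-membership caveat of Lemma \ref{qcomgen} (in particular the problematic same-index $p\geq 2$ case) is never invoked, is a careful point the paper glosses over.
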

In particular, the value of
the conserved quantity \eqref{conserveC} follows from Lemma \ref{uselem} and the boundary condition
\eqref{boundQquant}:
$$ C_{r+1} = q^{r(r+1)\over 2}. $$
Note that in this case, this conserved quantity is central.

\begin{thm}\label{mainquanthm}
Let $\bR_{i,n}$ be a solution of the quantum $Q$-system \eqref{Qquant}-\eqref{boundQquant} 
and $\bR_n=\bR_{1,n}$.
The quantity $\bR_{n+m_1}\bR_{m_1}^{-1}$ for $n\geq 0$,
when expressed in terms of initial data $\bx_\bm$, is the partition function for non-commutative
weighted paths on the graph $\Gamma_\bm$ from the root to itself with $n$ steps towards the root
and with skeleton weights:
\begin{eqnarray}
\qquad \by_{2i-1}(\bm)&=&
q^{i-1} \, \bR_{i,m_i+1}\bR_{i,m_i}^{-1} 
 \bR_{i-1,m_{i-1}}\bR_{i-1,m_{i-1}+1}^{-1}
\label{oddyq} \\
\by_{2i}(\bm)&=&\left\{  \begin{matrix} 
\bR_{i+1,m_{i+1}+1}\bR_{i+1,m_{i+1}}^{-1}
\bR_{i+1,m_{i}}\bR_{i+1,m_{i}+1}^{-1}
& {\rm if} \ m_i=m_{i+1}+1\\
{\bf 1} & {\rm otherwise} 
\end{matrix}\right\}\nonumber \\
&&\times\, 
 \bR_{i+1,m_i+1}\bR_{i,m_i+1}^{-1}
 \bR_{i,m_i}^{-1}\bR_{i-1,m_i} \nonumber \\
&& \times \left\{ \begin{matrix} 
\bR_{i-1,m_{i}+1}\bR_{i-1,m_{i}}^{-1}
\bR_{i-1,m_{i-1}}\bR_{i-1,m_{i-1}+1}^{-1}
& {\rm if} \ m_i=m_{i-1}-1\\
{\bf 1} & {\rm otherwise} 
\end{matrix}\right\}\label{evenyq} 
\end{eqnarray}
that are all non-commutative Laurent monomials of the initial data, times some possible integer powers of $q$.
As such it is a Laurent polynomial of the initial data with
coefficients in $\Z_+[q,q^{-1}]$.
\end{thm}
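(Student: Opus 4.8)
The plan is to reduce everything to the general non-commutative machinery of Section~\ref{section3} and then substitute the quantum-$Q$-system expressions for the building blocks. First I would invoke the finite truncation condition $\Delta_{r+2,n}=0$ (the corollary to Theorem~\ref{quantdelt}), which guarantees that the sequence $(\bR_n)_{n\in\Z}$ with $\bR_n=\bR_{1,n}$ is exactly of the type treated in Section~\ref{section3}. This lets me apply Theorem~\ref{finlut} verbatim: the quantity $\bR_{n+m_1}\bR_{m_1}^{-1}$ is the partition function of non-commutative weighted paths from the root to itself on $\Gamma_\bm$, with skeleton weights given by Theorem~\ref{ncweightcalc}, namely the expressions \eqref{oddync}--\eqref{evenync} in the variables $C_{i,n}$ and $D_{i,n}$, and it is a Laurent polynomial of these weights with non-negative integer coefficients.

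Next I would translate the weights into the quantum-$Q$-system variables using Lemma~\ref{uselem}, which gives $C_{i,n}=q^{i(i-1)/2}\bR_{i,n}\bR_{i,n-1}^{-1}$ and $D_{i,n}=\bR_{i+1,n}\bR_{i,n}^{-1}\bR_{i,n-1}^{-1}\bR_{i-1,n-1}$. Substituting into \eqref{oddync} gives $\by_{2i-1}(\bm)=C_{i,m_i+1}C_{i-1,m_{i-1}+1}^{-1}$; the two powers of $q$ combine to $q^{i(i-1)/2-(i-1)(i-2)/2}=q^{i-1}$, yielding \eqref{oddyq} directly. Substituting into \eqref{evenync}, the middle factor $D_{i,m_i+1}$ becomes the clean monomial appearing in the middle line of \eqref{evenyq}, while the two optional side factors (built from $C$'s with a common first index) have cancelling $q$-powers, producing exactly the bracketed factors of \eqref{evenyq}.

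The substance of the argument is then to check that \eqref{oddyq}--\eqref{evenyq} are genuine Laurent monomials in the cluster $\bx_\bm=\{\bR_{j,m_j},\bR_{j,m_j+1}\}_{j=1}^r$, times an integer power of $q$. For $\by_{2i-1}$ this is immediate, since every $\bR$ has second index in $\{m_{i-1},m_{i-1}+1,m_i,m_i+1\}$. For $\by_{2i}$ the middle monomial contains the potentially out-of-cluster variables $\bR_{i+1,m_i+1}$ and $\bR_{i-1,m_i}$: using the Motzkin constraint $|m_i-m_{i\pm1}|\le 1$, the first lies outside $\bx_\bm$ precisely when $m_i=m_{i+1}+1$, and the second precisely when $m_i=m_{i-1}-1$. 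These are exactly the two conditions under which the corresponding side factor is switched on, and I would verify that each active side factor telescopes against the adjacent out-of-cluster variable in the middle factor, using the extended $q$-commutations \eqref{qcom1}--\eqref{qcom2} (equivalently \eqref{genqcomm}) to bring the offending variable next to its inverse. Each such reordering only produces a power of $q$.

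The main obstacle is precisely this last bookkeeping: one must handle the local configurations of $(m_{i-1},m_i,m_{i+1})$ separately, track the powers of $q$ generated both by Lemma~\ref{uselem} and by each $q$-commutation, and confirm that all out-of-cluster factors cancel in each case. Once this is done, positivity is automatic: Theorem~\ref{finlut} expresses $\bR_{n+m_1}\bR_{m_1}^{-1}$ as a non-negative integer combination of ordered products of the $\by_i(\bm)$; since each $\by_i(\bm)$ is now an integer power of $q$ times a Laurent monomial in $\bx_\bm$, and since re-ordering any product into a fixed normal form via \eqref{genqcomm} introduces only powers of $q$ and never signs, the result is a Laurent polynomial in the initial data with coefficients in $\Z_+[q,q^{-1}]$.
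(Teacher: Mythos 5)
Your proposal is correct and follows essentially the same route as the paper's (very terse) proof: apply Theorem \ref{finlut} with the weights of Theorem \ref{ncweightcalc}, rewrite them via Lemma \ref{uselem}, observe that the redundant weights \eqref{redunc} are likewise $q$-power times Laurent monomials, and note that reordering via the $q$-commutations can only produce coefficients in $\Z_+[q,q^{-1}]$. The extra bookkeeping you supply---the computation $q^{i(i-1)/2-(i-1)(i-2)/2}=q^{i-1}$ and the check that the bracketed side factors in \eqref{evenyq} switch on exactly when $\bR_{i+1,m_i+1}$ or $\bR_{i-1,m_i}$ falls outside $\bx_\bm$ and then telescopes against it---is detail the paper leaves implicit in the statement of the theorem itself.
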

\begin{proof}
We apply Theorem \ref{finlut} with the weights
\eqref{ncweightcalc}. We then rewrite all the skeleton 
weights using Lemma \ref{uselem}.
Finally, the redundant weights
\eqref{redunc} are also monomials of the initial data, with a possible integer power of $q$ in factor.
Reordering and rearranging the terms may only produce coefficients in $\Z_+[q,q^{-1}]$.
\end{proof}

\begin{example}
In the case of non-decreasing Motzkin paths (see Example \ref{incmotz} above),
we have the following expressions for the skeleton weights:
\begin{eqnarray*}
\by_{2i-1}(\bm)&=&\left\{ \begin{matrix}
\bR_{i,m_i+1}\bR_{i-1,m_{i-1}+1}^{-1}
\bR_{i,m_i}^{-1} \bR_{i-1,m_{i-1}}
& {\rm if} \ m_i=m_{i-1}\\
\bR_{i,m_i+1}
\bR_{i-1,m_{i-1}+1}^{-1}\bR_{i-1,m_{i-1}}\bR_{i,m_i}^{-1} 
& {\rm if} \ m_i=m_{i-1}+1
 \end{matrix}\right.\\
\by_{2i}(\bm)&=& \bR_{i+1,m_i+1}\bR_{i,m_i+1}^{-1}
 \bR_{i,m_i}^{-1}\bR_{i-1,m_i}
 \end{eqnarray*}
where we have used the $q$ commutations \eqref{genqcomm}
and the identity $\lambda_{i-1,i}-\lambda_{i-1,i-1}+i-1=0$ 
to simplify the monomials.
For the fundamental initial condition $\bx_0=\bx_{\bm_0}$, this gives:
\begin{eqnarray*}
\by_{2i-1}(\bm_0)&=&
\bR_{i,1}\bR_{i-1,1}^{-1}\bR_{i,0}^{-1}\bR_{i-1,0}\\
\by_{2i}(\bm_0)&=& \bR_{i+1,1}\bR_{i,1}^{-1}\bR_{i,0}^{-1}\bR_{i-1,0}
\end{eqnarray*}
For the Stieltjes point $\bx_{\bm_1}$, we have:
\begin{eqnarray*}
\by_{2i-1}(\bm_1)&=&\bR_{i,i}\bR_{i-1,i-1}^{-1}\bR_{i-1,i-2}\bR_{i,i-1}^{-1}\\
\by_{2i}(\bm_1)&=&\bR_{i+1,i}\bR_{i,i}^{-1}\bR_{i,i-1}^{-1}\bR_{i-1,i-1}
\end{eqnarray*}
\end{example}

\begin{example}
Let us consider the case $r=2$ of Example \ref{atwoex}. Setting $\bR_n=\bR_{1,n}$ and $\bS_n=\bR_{2,n}$,
we have the following commutation relations:
\begin{equation}\label{coma2}
\begin{matrix}\bR_n\bR_{n+1}=q^2 \, \bR_{n+1}\bR_n  &  \bS_n\bS_{n+1}=q^2 \, \bS_{n+1}\bS_n &
\bR_n\bS_{n}=\bS_{n}\bR_n \\
\bS_n\bR_{n+1}=q \, \bR_{n+1}\bS_n & \bR_n\bS_{n+1}=q \, \bS_{n+1}\bR_n & \ \ \ (n\in \Z)\end{matrix}
\end{equation}
The $A_2$  quantum $Q$-system reads:
\begin{eqnarray*}
q^2\bR_{n+1}\bR_{n-1}&=& (\bR_n)^2 +\bS_n\\
q^2\bS_{n+1}\bS_{n-1}&=& (\bS_n)^2 +\bR_n
\end{eqnarray*}
and the weights at the various Motzkin paths may be expressed as:
$$ \begin{matrix}
\by_1(0,0)&=&\bR_1\bR_0^{-1}\\
\by_2(0,0)&=&\bS_1\bR_1^{-1}\bR_0^{-1}\\
\by_3(0,0)&=&\bS_1\bR_1^{-1}\bS_0^{-1}\bR_0\\
\by_4(0,0)&=&\bS_1^{-1}\bS_0^{-1}\bR_0\\
\by_5(0,0)&=&\bS_1^{-1}\bS_0
\end{matrix}
\qquad
\begin{matrix}
\by_1(0,1)&=&\bR_1\bR_0^{-1}\\
\by_2(0,1)&=&\bS_1\bR_1^{-1}\bR_0^{-1}\\
\by_3(0,1)&=&\bS_2\bR_1^{-1}\bR_0\bS_1^{-1}\\
\by_4(0,1)&=&\bS_2^{-1}\bS_1^{-1}\bR_1\\
\by_5(0,1)&=&\bS_2^{-1}\bS_1
\end{matrix}
\qquad
\begin{matrix}
\by_1(1,0)&=&\bR_2\bR_1^{-1}\\
\by_2(1,0)&=&\bS_0^{-1}(\bS_1)^2\bR_1^{-1}\bR_2^{-1}\\
\by_3(1,0)&=&\bS_1\bR_1\bS_0^{-1}\bR_2^{-1}\\
\by_4(1,0)&=&\bS_0^{-1}\bS_1^{-1}(\bR_1)^2\bR_2^{-1}\\
\by_5(1,0)&=&\bS_1^{-1}\bS_0
\end{matrix}
$$
where we have used \eqref{coma2} to eliminate the $q$-dependent prefactors.
The redundant weight of the last case $\bm=(1,0)$ is 
$\bw_{3,1}=t\by_4\by_3^{-1}\by_2=tq^{-1}\bR_2^{-1}\bS_0^{-1}$.
Note that the conservation law $\by_5(\bm)\by_3(\bm)\by_1(\bm)=C_3=q^3$ is satisfied for the three 
Motzkin paths.
\end{example}

\begin{remark} One can show that the conserved quantities $P_m$ of
Remark \ref{consrem} commute with each other in this case. This suggests an
analogy with the definition of classical integrability, where the
integrals of the motion are in involution. In fact, in the $A_1$ quantum $Q$-system case,
the Hamiltonian structure and integrability were shown in \cite{FV}. The case of the $A_1$ $T$-system
was treated in \cite{FKV}.
\end{remark}

\begin{remark}
  The quasi-Wronskians $\Delta_{i,n}$ are ``bar-invariant" quantities.
  Let us define an anti-automorphism (called usually the bar
  involution), which we denote by $*$, by $(ab)^*=b^* a^*$ for all
  $a,b\in\cA$, ${\bf 1}^*={\bf 1}$ and $q^*=q^{-1}$, while
  $(\bR_n)^*=\bR_n$ for all $n\in\Z$.  Then, as a trivial consequence
  of the Hirota equation \eqref{nchiro} and of the initial condition
  $\Delta_{1,n}=\bR_n$, we have that
$$(\Delta_{i,n})^*=\Delta_{i,n}$$ 
\end{remark}

\section{Other non-commutative systems}

In this section, we consider further examples of non-commutative recursions which
have the positive Laurent property. In the following,
all the variables we consider belong to some skew field of rational fractions $\cA$.

\subsection{Affine rank $2$ non-commutative systems}

The non-commutative $A_1$ $Q$-system \eqref{aoneNCQsystem} corresponds to one of the so-called
rank two affine cluster algebras, namely with initial exchange matrix $B_0$
related to affine Cartan matrices
of rank 2. These read $B_0=\begin{pmatrix} 0 & -c \\ b & 0 \end{pmatrix}$ with $bc=4$, $b,c\in\Z_{>0}$.
While the $A_1$ case corresponds to $b=c=2$, the two other cases $(b,c)=(1,4)$ or $(4,1)$
were shown in Ref. \cite{DFK09b}
to be related to the following non-commutative system for $\bR_n\in\cA$, $n\in\Z$:
\begin{equation}\label{onefour}
\bR_{n+1}(\bR_n)^{-1}\bR_{n-1} \bR_n= \left\{ \begin{matrix} (\bR_n)^b +1 \hfill & {\rm if} & n\, {\rm odd}\\
(\bR_n)^c +1 \hfill & {\rm if} & n\, {\rm even}
\end{matrix}\right.
\end{equation}
Let us fix $b=1,c=4$. 

\begin{thm}\label{onefourthm}
The solution $\bR_n$ of the system \eqref{onefour} is a positive Laurent polynomial of any initial
data $(\bR_i,\bR_{i+1})$, for any $i,n\in\Z$.
\end{thm}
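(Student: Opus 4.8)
The plan is to adapt to the period-two system \eqref{onefour} the same three-step strategy — integrability, a constant-coefficient linear recursion, and a non-commutative continued-fraction / path model — that was used in Section \ref{cononcosec} for the non-commutative $A_1$ $Q$-system. \emph{First}, since the odd and even equations of \eqref{onefour} differ, the natural symmetry is the shift $n\mapsto n+2$, and I would look for quantities invariant under it. Rewriting \eqref{onefour} in the form $\bR_{n+1}\bR_n^{-1}\bR_{n-1}=(\bR_n^{b_n}+1)\bR_n^{-1}$ (with $b_n\in\{1,4\}$ according to the parity of $n$) and manipulating by left/right multiplication by powers of $\bR_n^{\pm1}$, using invertibility in $\cA$, I expect to isolate a finite family of conserved quantities $\bC,\bK,\dots$ that are constant along each orbit, generalizing the pair $\bC,\bK$ of Section \ref{cononcosec}.

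\emph{Second}, from these conserved quantities I would derive a linear recursion with constant (conserved) coefficients obeyed by the $\bR_n$; because of the parity asymmetry it is most naturally written either within a fixed parity class (a bounded-depth relation among $\bR_{n+2},\bR_n,\bR_{n-2},\dots$) or as a single recursion whose coefficients are the conserved quantities, the non-commutative analogue of $\bR_{n+1}-\bK\bR_n+\bC\bR_{n-1}=0$. \emph{Third}, using this recursion I would form the generating function $\bF(t)=\sum_{n\ge0}t^n\,\bR_{n+i}\bR_i^{-1}$ and show it collapses to a finite non-commutative continued fraction of the Jacobi/Stieltjes type \eqref{ncJfrac}--\eqref{stielncf}. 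By the non-commutative Stieltjes formula (Theorem \ref{stilnc}) its coefficients are quasi-determinants of the sequence $(\bR_n)$, which furnish the path weights explicitly; interpreting $\bF(t)$ as a partition function of non-commutative weighted paths and invoking the sign-reversing involution of Lemma \ref{ncpositivity}, the coefficients $\bR_{n+i}\bR_i^{-1}$ come out as positive Laurent polynomials in the weights, hence in $(\bR_i,\bR_{i+1})$ as soon as the weights are shown to be positive Laurent monomials.

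The main obstacle is precisely this last point: the positivity of the weights. For the $A_r$ $Q$-system it was supplied by Theorem \ref{ncweightcalc}, but \eqref{onefour} is not of that type, so the earlier formulas do not apply and the weights (equivalently the continued-fraction coefficients, or quasi-determinants) must be computed directly for this system. The inhomogeneous $c=4$ branch is the delicate case, since the $\bR_n^4$ term prevents the weights from reducing to the simple quasi-Wronskian monomials of Section \ref{section3}; verifying that they nevertheless simplify to positive Laurent monomials in $(\bR_i,\bR_{i+1})$ — and doing so for both parities of the base index $i$ — is where the real work lies, and I expect it to rest on the explicit form of the conserved quantities found in the first step.
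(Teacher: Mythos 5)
Your plan follows the paper's proof essentially step for step: two conserved quantities $\bC,\bK$, a constant-coefficient linear recursion within a parity class, the generating function $\bF_0(t)=\sum_{n\geq 0}t^n\,\bR_{2n}\bR_0^{-1}$ collapsed to a finite non-commutative continued fraction whose coefficients are computed from the conserved quantities, and then manifestly positive rearrangements plus the symmetries of the system to reach arbitrary initial data $(\bR_i,\bR_{i+1})$ — exactly the paper's route. The only adjustments you would discover in carrying it out: the paper bypasses your quasi-determinant/Stieltjes detour by writing the three weights directly as $\bz_1=\bu_1\bu_0^{-1}$, $\bz_3=\bK-\bz_1$, $\bz_2=(\bK-\bz_1)\bz_1-\bC$, and these come out as manifestly positive Laurent \emph{polynomials} in $(\bR_0,\bR_1)$ (e.g. $\bz_3=\bR_1^{-1}\bR_0^2+(1+\bR_1^{-1})\bR_0^{-2}$) rather than the monomials you anticipate — which is all the path expansion needs, so the sign-reversing involution of Lemma \ref{ncpositivity} is not required either.
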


We sketch the proof below.
As in the $A_1$ case, one finds two conserved quantities and a linear recursion relation for $\bR_n$
with constant coefficients.
Finally, the generating function for $\bR_{2n}\bR_0^{-1}$ reads:
$$\bF_0(t)=\sum_{n\geq 0} t^n \, \bR_{2n} \bR_0^{-1}
=\left(1-t \bz_1 -t^2(1-t \bz_3)^{-1} \bz_2\right)^{-1}$$
where 
\begin{eqnarray*}
\bz_1&=&\bu_1 \bu_0^{-1}=(\bR_1^{-1}+1)\bR_0^{-1}\bR_1\bR_0^{-1}\\
\bz_2&=&(\bK-\bz_1)\bz_1 -\bC= \big( \bR_1^{-1}\bR_0^2\bR_1^{-1}+
(1+\bR_1^{-1})\bR_0^{-2}(1+\bR_1^{-1})
\big)\bR_0^{-1}\bR_1\bR_0^{-1}\\
\bz_3&=&\bK-\bz_1= \bR_1^{-1}\bR_0^2+(1+\bR_1^{-1})\bR_0^{-2}
\end{eqnarray*}
and the Laurent positivity follows for $(\bR_0,\bR_1)$. 
The mutated expressions in terms of $(\bR_i,\bR_{i+1})$
are again obtained by manifestly positive non-commutative rearrangements of $\bF_0(t)$,
and the general Laurent positivity follows from symmetries of the system.

\subsection{Some rank $2k+1$ cases}

For any fixed $k\in\Z_+$,
we consider the ``rank $2k+1$" recursion relation for $n\in\Z$, for variables $\bu_n\in\cA$:
\begin{eqnarray}
\bu_{2n+2k+1} \bu_{2n}&=& \bu_{2n+1}\bu_{2n+2k}+{\bf 1}\label{rk3one}\\
\bu_{2n+1}\bu_{2n+2k+2}&=& \bu_{2n+2k+1}\bu_{2n+2} +{\bf 1}\label{rk3two}
\end{eqnarray}

Admissible initial data for this system are $\bx_i=(\bu_i,\bu_{i+1},\ldots, \bu_{i+2k})$ for all $i\in\Z$.
We have the following positivity theorem.

\begin{thm}\label{positrkthree}
The general solution $\bu_n$ to the system \eqref{rk3one}-\eqref{rk3two} is a positive Laurent
polynomial of any initial data $\bx_i$, for all $n,i\in\Z$.
\end{thm}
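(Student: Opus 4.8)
The plan is to follow exactly the template that the paper has already set up for the $A_r$ $Q$-system, the $T$-system, and the quantum $Q$-system: exhibit conserved quantities, extract a linear recursion with constant coefficients, package the generating function as a non-commutative Jacobi-type continued fraction, and then invoke the positivity machinery of Section \ref{section3}. Concretely, the system \eqref{rk3one}-\eqref{rk3two} is a ``rank $2k+1$'' relation, so I expect it to fit the truncation framework with $r$ chosen so that $2r+1=2k+1$, i.e. $r=k$. The first step is to rewrite \eqref{rk3one}-\eqref{rk3two} in the canonical non-commutative $Q$-system form, producing a sequence $\bR_n\in\cA$ (built from the $\bu_n$) obeying a truncation condition $\Delta_{r+2,n}=0$ of the type \eqref{maintrunc}. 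Once this is in place, Theorem \ref{finlut} immediately yields that $\bR_{n+m_1}\bR_{m_1}^{-1}$ is the partition function of non-commutative weighted paths on $\Gamma_\bm$, hence a Laurent polynomial in the weights $\by_i(\bm)$ with non-negative integer coefficients.

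Next I would make the positivity statement about the \emph{initial data} $\bx_i=(\bu_i,\ldots,\bu_{i+2k})$ explicit. By Theorem \ref{ncweightcalc} the weights $\by_i(\bm)$ are Laurent monomials in the quasi-Wronskian variables $C_{i,n},D_{i,n}$, and these in turn must be expressed as Laurent \emph{monomials} of the chosen initial data $\bx_i$. The key computation, analogous to Theorem \ref{quantdelt} for the quantum case, is to evaluate the quasi-Wronskians $\Delta_{i,n}$ of the sequence $\bR_n$ directly in terms of the $\bu_j$'s and verify that each $\by_i(\bm)$ is a genuine non-commutative Laurent monomial (with no subtractions) in the entries of a single admissible cluster $\bx_i$. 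Combined with Theorem \ref{finlut}, this gives Laurent positivity of $\bR_n$, and hence of $\bu_n$, with respect to the fundamental initial data.

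To reach \emph{arbitrary} initial data $\bx_i$ for all $i\in\Z$, I would use the manifest translational invariance of \eqref{rk3one}-\eqref{rk3two} under $n\mapsto n+2$, exactly as in the commutative $Q$-system discussion (where $R_{\al,n}$ as a function of $R(\bm)$ equals $R_{\al,n+k}$ as a function of $R(k+\bm)$). Under this shift symmetry, positivity for the fundamental cluster propagates to every translate, and the Motzkin-path mutations of Definition \ref{mutationsonncy} connect the remaining admissible clusters; since each mutation \eqref{caseonenc}-\eqref{casetwonc} is a subtraction-free rational transformation, positivity is preserved along the mutation chain. This is precisely the ``symmetries of the system'' argument sketched for Theorem \ref{onefourthm}.

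The main obstacle I anticipate is the first step: verifying that \eqref{rk3one}-\eqref{rk3two}, with its split even/odd structure and the $+\bone$ inhomogeneity, really can be cast as a single non-commutative $Q$-system with a clean truncation $\Delta_{r+2,n}=0$, and identifying the correct $\bR_n$ and the associated conserved quantities and linear recursion. The even/odd alternation means the two equations are not symmetric, so one must check that the resulting $\bR_n$-sequence has $q$-commutation-like or ordering properties compatible with the hypotheses used in Theorem \ref{finlut} and Theorem \ref{ncweightcalc}, and that the weights come out as honest Laurent \emph{monomials} (not merely polynomials) of the $\bx_i$. Everything after that reduces to the already-established positivity of path partition functions.
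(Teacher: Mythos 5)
Your high-level template (conserved quantities $\to$ linear recursion $\to$ continued fraction $\to$ path positivity) is the right one, but the concrete plan misidentifies the integrable structure, and the step you yourself flag as the ``main obstacle'' is precisely where it breaks. Setting $r=k$ because there are $2k+1$ initial variables conflates the cluster rank with the truncation depth of the $A_r$ framework. The system \eqref{rk3one}-\eqref{rk3two} does \emph{not} reduce to a single sequence $\bR_n$ with $\Delta_{k+2,n}=0$: the linear recursion one actually finds (the paper's Lemma \ref{constint}) is a \emph{three-term} relation in steps of $2k$, namely $\bu_{2n+2k}-\bK\,\bu_{2n}+\bu_{2n-2k}=0$ together with its odd counterpart $\bu_{2n+2k+1}-\bu_{2n+1}\bK+\bu_{2n-2k+1}=0$ (note $\bK$ multiplies on opposite sides), with $\bK$ given explicitly by \eqref{valK}. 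So the problem decouples into $2k$ interleaved subsequences $\bu_{2i+2kn}$ and $\bu_{2i+1+2kn}$, $i=0,\dots,k-1$, each an instance of the $r=1$ theory of Section \ref{section3} with $3\times 3$ quasi-Wronskian truncation and conserved quantity $\bC=\bone$ — not one $A_k$-type system. Consequently your plan to invoke Theorems \ref{finlut} and \ref{ncweightcalc} and to check that the weights are Laurent \emph{monomials} of $\bx_i$ cannot be carried out: in the paper's proof the depth-two continued fractions for $\bF_i(t)$ and $\bG_i(t)$ have coefficients $\bx_{2i}=\bu_{2i+2k}\bu_{2i}^{-1}$, $\bz_{2i}=\bK-\bx_{2i}$, $\bbw_{2i}=\bz_{2i}\bx_{2i}-\bone$ (and odd analogues), and these are positive Laurent \emph{polynomials}, not monomials — positivity holds because the $\bx$'s and $\bz$'s are subsums of $\bK$ and the products $\bz_{2i}\bx_{2i}$, $\bx_{2i+1}\bz_{2i+1}$ each contain the term $\bone$. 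Monomiality is unattainable here, but also unnecessary: path partition functions with positive-Laurent-polynomial weights are still positive Laurent polynomials. Relatedly, there is no Motzkin-path/quantum cluster structure to lean on at all — the appendix notes the exchange matrix of the associated (affine $A_{2k}$) cluster algebra is singular.

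The reduction to arbitrary initial data $\bx_i$ and to $n\in\Z$ also has a hole. Translation invariance holds only under $n\mapsto n+2$; shifting all indices by $1$ maps \eqref{rk3one} to \eqref{rk3two} \emph{with all products reversed}, so passing from $\bx_i$ to $\bx_{i+1}$ requires the anti-automorphism $\varphi(\bu_j)=\bu_{j+1}$, and negative $n$ is handled by the reflection anti-automorphism $\psi(\bu_j)=\bu_{2k-1-j}$; this is how the paper reduces everything to $n\geq 0$ and $\bx_0$. Your shift-by-$2$ symmetry only connects the even-indexed clusters, and your fallback — ``each mutation \eqref{caseonenc}-\eqref{casetwonc} is subtraction-free, so positivity is preserved along the mutation chain'' — is not a valid inference even where mutations apply: subtraction-free rational transformations preserve subtraction-freeness, not the Laurent property, which is the entire content of the theorem.
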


This is proved in the Appendix \ref{appendixb}, by  using the integrability of the system
and constructing non-commutative weighted path models for $u_n$. The commutative version of
the system \eqref{rk3one}-\eqref{rk3two} corresponds to the affine $A_{2k}$ cluster algebra \cite{AssemReit}.
Note that the exchange matrix of the fundamental seed is singular, so there is no quantum
cluster algebra version of the system \eqref{rk3one}-\eqref{rk3two}.

\subsection{A non-coprime rank 2 case related to rank 3}

The following non-commutative recursion relation
also satisfies the positive Laurent property:
\begin{equation}\label{firla}\bR_{n+1}(\bR_n)^{-1} \bR_{n-1}=\bR_n+1 \end{equation}
Note that we still have the quasi-commutation relation $\bR_n\bR_{n+1}=\bR_{n+1}\bC\bR_n$
from the conserved quantity $\bC$, and that we may equivalently write 
$\bR_{n+1}\bC\bR_{n-1}=\bR_n(\bR_n+1)$,
or 
\begin{equation}\label{lastla}\bR_{n-1}(\bR_n)^{-1}\bR_{n+1}=\bR_n+\bC^{-1}\end{equation}
In the commuting case, it would correspond to the recursion relation $R_{n+1}R_{n-1}=R_n(R_n+1)$
which does not correspond to a cluster algebra mutation, as the two terms on the r.h.s. are not coprime.
However, this relation may be connected to a non-commutative rank 3 system, with the
positive Laurent property,  as follows.
Introduce the change of variables:
$$ \bR_{2n+1}=\bu_{2n+1}\bu_{2n}, \qquad \bR_{2n}=\bu_{2n-1}\bu_{2n}$$
for some variables $\bu_n$, $n\in \Z$. We have:
\begin{eqnarray*} 
\bu_{2n+1}\bu_{2n-2}&=&\bu_{2n-1}\bu_{2n}+\bone\\
\bu_{2n-1}\bu_{2n+2}&=&\bu_{2n+1}\bu_{2n}+\bC^{-1}
\end{eqnarray*}
This is nothing but the $k=1$ system \eqref{rk3one}-\eqref{rk3two} with an additional 
fixed coefficient $\bC^{-1}$. Choosing the initial data $\bu_0=\bR_1$, $\bu_1=\bone$ and $\bu_2=\bR_2$,
with $\bC^{-1}=\bu_0\bu_2^{-1}\bu_0^{-1}\bu_2$, we may repeat the proof of Appendix 
\ref{appendixb} so as to include the fixed coefficient $\bC^{-1}$, and deduce the 
Laurent positivity property in terms of $(\bR_1,\bR_2)$.

\section{Towards a non-commutative Lindstr\"om-Gessel-Viennot Theorem}

All the cases we have been investigating so far are expressible in
terms of non-commutative weighted path models. More precisely, we have
been able to represent the generating function $\sum_{n\geq 0} t^n\bR_n\bR_0^{-1}$ as the partition
function for paths with non-commutative weights. In the application of the $A_r$ 
case to the quantum $A_r$ $Q$-system of Section \ref{section5}, we also have variables
$\bR_{\al,n}$, $n>1$, expressed as ``quantum Wronskians" (see Remark \ref{quantumwronsk}).
This is actually a non-commutative generalization of the ``Wronskian" expressions
for the $Q$-system solutions in the commuting case \eqref{discwronskco}, which take the form:
$${R_{\al,n+m_1}\over (R_{1,m_1})^\al}
=\det_{1\leq i,j\leq \al}\left( {R_{1,i+j-\al-1+n+m_1}\over R_{1,m_1} }\right) $$
In the cases $\bm=\bm_0$ or $\bm_1$ for instance
(fundamental initial data and Stieltjes point, both with $m_1=0$)
the Lindstr\"om-Gessel-Viennot theorem \cite{LGV1} \cite{LGV2} allows to interpret directly the l.h.s. as the partition
function for a family of $\al$ non-intersecting lattice paths,
namely paths traveled by walkers
starting at the root of the graph $\Gamma_\bm$ at times $0,2,...,2\al-4,2\al-2$ and
returning at times $2n+2\al-2,2n+2\al-4,...,2n+2,2n$, 
and such that no two walkers meet at any vertex of
$\Gamma_\bm$. In the case of general $\bm$, we have a more subtle application of the theorem,
in which the paths must be strongly non-intersecting (see Ref. \cite{DFK3} for details).

It would be extremely interesting to extend the classical 
Lindstr\"om-Gessel-Viennot theorem
to the case of paths with non-commutative weights.
Such a generalization was introduced in \cite{FG} in the context 
of non-commutative Schur functions,
but does not seem to apply directly to the present framework.

So far, we have found such an interpretation only in the fully non-commutative 
$A_1$ $Q$-system case. We wish to interpret the $Q$-system relation
$\bR_{n+1}\bC\bR_{n-1}-\bR_n^2=1$
in terms of pairs of non-intersecting paths on the graph $\Gamma_{(0)}$, i.e. the integer segment $[1,4]$.
Let $\bP_n=\bR_n\bR_0^{-1}$ denote the partition function of paths from vertex $0$ to itself
on $\Gamma_{(0)}$ with the non-commutative weights $\by_1,\by_2,\by_3$ respectively for the steps
$2\to 1$, $3\to 2$ and $4\to 3$, while the up steps $1\to2$, $2\to 3$ and $3\to 4$
receive the trivial weight $1$. Multiplying the $Q$-system relation by $\bC$, we have:
\begin{equation}\label{cnilp}
\bP_{n+1}\bR_0\bC\bP_{n-1}\bR_0\bC-\bP_n\bR_0\bP_n\bR_0\bC=\bC
\end{equation}
To proceed, we note the following important relations between the weights $\by_1,\by_2,\by_3$
and $\bR_0$, $\bC$:
\begin{equation}\label{commutaone}
(\by_3)^n\by_2\by_1\bR_0\bC(\by_1)^n=\by_2\by_1\bR_0\bC=\bR_0^{-1} \qquad 
(\by_1)^n \bR_0 (\by_3)^n =\bR_0 \quad (n\in \Z)
\end{equation}

\begin{figure}
\centering
\includegraphics[width=11.cm]{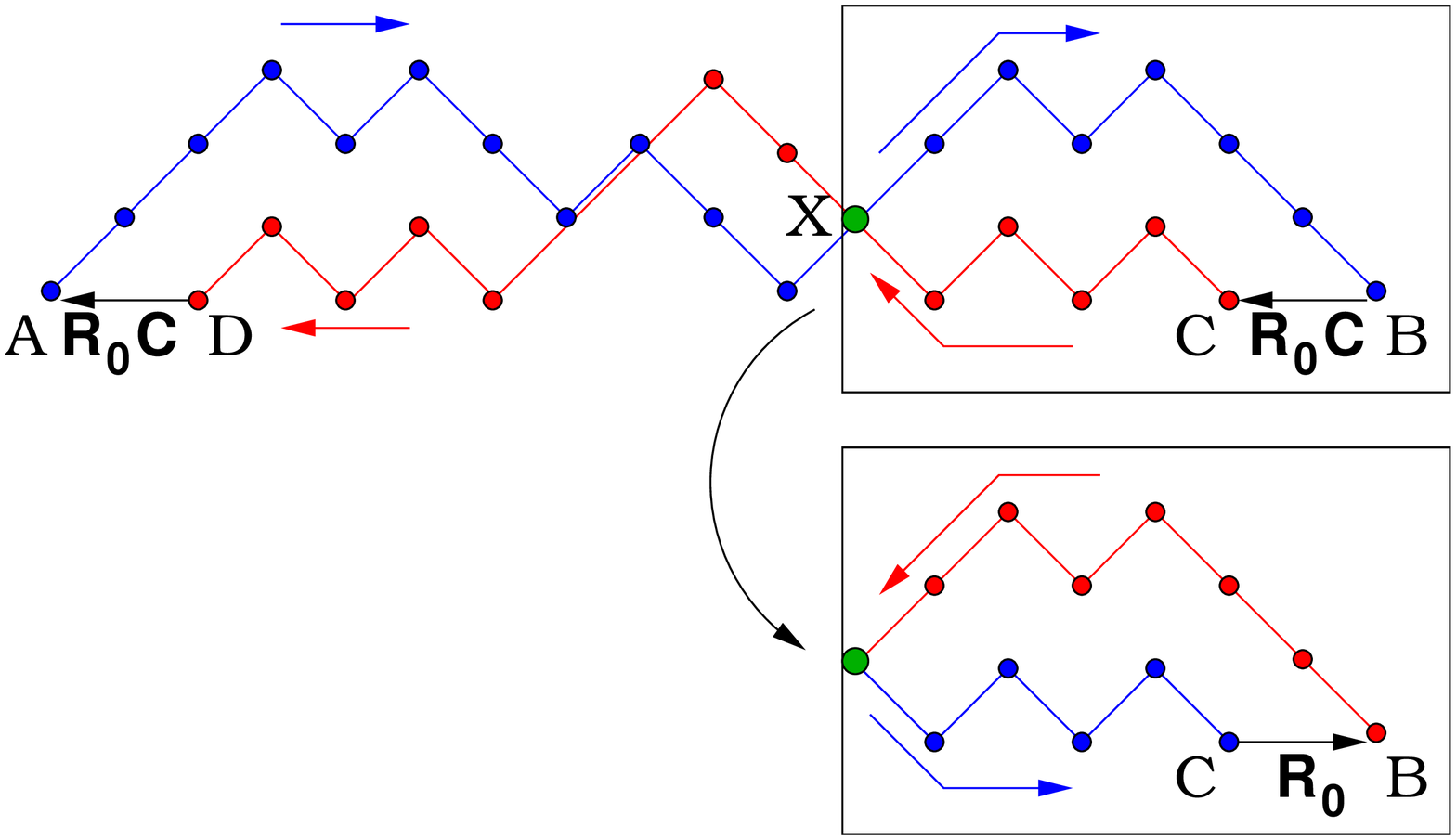}
\caption{\small A typical contribution to $\bP_{n+1}\bR_0\bC\bP_{n-1}\bR_0\bC$ for $n=8$ is made of
a pair of paths $A\to B$ (blue, left to right) and $C\to D$ (red, right to left), 
with two intermediate stations at vertex $1$, $B\to C$ and $D\to A$. 
When paths intersect (like here), this may also be
interpreted as a contribution to $\bP_n\bR_0\bP_n\bR_0\bC$, upon reversing the direction of travel of the 
portions of paths to the right of their rightmost intersection $X$, 
with a weight $\bR_0$ for the step $C\to B$.}
\label{fig:lgv}
\end{figure}

We now interpret the quantity $\bP_{n+1}\bR_0\bC\bP_{n-1}\bR_0\bC$ as the partition function for 
a path loop $A\to B\to C\to D\to A$ (see Fig. \ref{fig:lgv}), where the path $A\to B$ 
(partition function $\bP_{n+1}$) is traveled from left to right
and assigned the usual descending step weights $\by_i$ for a step $i+1\to i$, while the 
portions $B\to C$, $D\to A$ receive the weight $\bR_0\bC$, and the path $C\to D$
(partition function $\bP_{n-1}$)
is traveled from right to left, and receives the usual descending step weights 
$\by_i$ for a step $i+1\to i$.
The contribution for the configuration in the top of
Fig.\ref{fig:lgv} is $\by_3^2\by_2^2\by_1\by_3^2\by_2\by_1\bR_0\bC \by_1^2\by_3\by_2\by_1^3\bR_0\bC$.

\begin{thm} The quantity
\begin{equation}\label{cancelaone}
\bP_{n+1}\bR_0\bC\bP_{n-1}\bR_0\bC-\bP_n\bR_0\bP_n\bR_0\bC
\end{equation}
is the partition function of loops $A\to B\to C\to D\to A$
formed by pairs of {\emph non-intersecting} paths on $[1,4]$ respectively from
$A=(0,1)$ to $B=(2n+2,1)$, then from $C=(2n,1)$ to $D=(2,1)$, with the two 
intermediate stations $B\to C$ and $D\to A$ at the root receiving the weight $\bR_0\bC$.
\end{thm}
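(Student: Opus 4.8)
The plan is to prove a non-commutative Lindstr\"om--Gessel--Viennot statement by a sign-reversing involution: the two monomials in \eqref{cancelaone} play the roles of the two connectivity patterns of an LGV determinant, the relative minus sign supplies the cancelling ``sign,'' and the relations \eqref{commutaone} are precisely what makes the involution weight-preserving in the non-commutative setting.

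First I would read each term of \eqref{cancelaone} as a generating function over loops $A\to B\to C\to D\to A$, as in the paragraph preceding the theorem. In the first term $\bP_{n+1}\bR_0\bC\bP_{n-1}\bR_0\bC$ the left-to-right path $A\to B$ carries $n+1$ descending steps and the right-to-left path $C\to D$ carries $n-1$; in the second term $\bP_n\bR_0\bP_n\bR_0\bC$ both paths carry $n$ descending steps, and the central station weight is $\bR_0$ rather than $\bR_0\bC$. In each case the weight of a configuration is the ordered product of the descending-step weights $\by_i$ (read in order of traversal along each path, with the red path read right-to-left) interspersed with the station weights, exactly as in the sample weight $\by_3^2\by_2^2\by_1\by_3^2\by_2\by_1\bR_0\bC\,\by_1^2\by_3\by_2\by_1^3\bR_0\bC$ quoted above.

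Next I would define the involution $\varphi$: on any configuration whose two paths meet, let $X$ be their rightmost common vertex and let $\varphi$ interchange the two portions lying to the right of $X$, reversing their direction of travel as in Figure \ref{fig:lgv}. Tracking the ``time'' labels shows that this sends the endpoint pattern $\{A\to B,\ C\to D\}$ with descending counts $(n+1,n-1)$ to the pattern $\{(0,1)\to(2n,1),\ (2n+2,1)\to(2,1)\}$ with counts $(n,n)$, i.e. a first-term crossing configuration to a second-term configuration and vice versa, and $\varphi^2=\mathrm{id}$. A parity argument in the $(\text{time},\text{vertex})$ strip shows that the second-term endpoint pattern forces a crossing, so \emph{every} second-term loop is intersecting, whereas the first term also contains the non-intersecting (nested) loops; these fixed-point-free survivors are exactly the loops asserted in the theorem.

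The crux, and the step I expect to be the main obstacle, is checking that $\varphi$ preserves the non-commutative weight. Reversing a tail and reattaching it through a station reverses the order in which its $\by_i$ factors enter the product, so unlike in the commutative LGV theorem equality is not automatic. Here the two identities \eqref{commutaone} are exactly the required tools: $(\by_1)^n\bR_0(\by_3)^n=\bR_0$ lets the reversed tail-weights be slid through the station factor $\bR_0$ without change, while $\by_2\by_1\bR_0\bC=\bR_0^{-1}$ accounts for the reconnection at the root vertex and for the replacement of a central $\bR_0\bC$ by $\bR_0$ that distinguishes the two terms. I would carry out the verification by writing each crossing configuration's weight as (blue head)$\,\cdot\,$(station)$\,\cdot\,$(red head) times the two tail factors, and then applying \eqref{commutaone} factor by factor to transform the first-term ordered product into the second-term ordered product of its image under $\varphi$. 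Once this weight-matching is established the intersecting contributions cancel pairwise in \eqref{cancelaone}, and the difference collapses to the partition function over non-intersecting loops, proving the theorem.
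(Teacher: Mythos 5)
Your strategy coincides with the paper's own proof: both cancel the intersecting configurations by an involution that flips the portions of the two paths beyond their rightmost common vertex $X$, both observe that every configuration contributing to the second term of \eqref{cancelaone} must intersect (so the survivors are exactly the non-intersecting first-term loops), and both invoke the relations \eqref{commutaone} to match weights. But your account of the crux contains a misstatement that, taken literally, would make the verification fail: reversing the direction of travel of a tail does \emph{not} ``reverse the order in which its $\by_i$ factors enter the product.'' Since only steps that descend in the direction of travel carry a weight, reversal exchanges the roles of ascending and descending steps, so the reversed tail is weighted on the \emph{complementary} set of edges. For instance, the blue tail from $X$ to $B$ with weight $\by_3^m\by_2\by_1$, traveled from $B$ back to $X$, acquires weight $\by_3^m\by_2$ — not the reversed word $\by_1\by_2\by_3^m$. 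Thus the equality $W_{XBCX}=W_{XCBX}$ is an identity between products over two different step sets, not a formal word reversal, and trying to prove it from your stated premise would derail the computation.

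What makes the identity checkable — and what your sketch omits — is a rigidity lemma built into the choice of $X$ as the \emph{rightmost} intersection: beyond $X$, the parity constraint on the strip $[1,4]$ forces the blue path to zigzag between vertices $3$ and $4$ before its final descent $3\to2\to1$, and the red path to zigzag between $1$ and $2$ (otherwise the two would meet again to the right of $X$). Consequently $X$ lies either at height $2$ (odd time) or height $3$ (even time), and in each of these two cases both central weights collapse in closed form: in the first case $W_{XBCX}=\by_3^m(\by_2\by_1\bR_0\bC)\by_1^m=\bR_0^{-1}$ while $W_{XCBX}=\by_1^{m+1}\bR_0\,\by_3^m\by_2=\by_1\bR_0\by_2=\bR_0^{-1}$, and in the second case both equal $\bR_1^{-1}$. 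Your matching of the two relations in \eqref{commutaone} to the two phenomena (sliding through the station $\bR_0$; trading $\bR_0\bC$ for $\bR_0$ at the reconnection) is exactly right, but without the forced-zigzag structure the instruction ``apply \eqref{commutaone} factor by factor'' has nothing to act on: \eqref{commutaone} only concerns pure powers of $\by_1$ and $\by_3$ flanking $\bR_0$ (or the block $\by_2\by_1\bR_0\bC$), and it is precisely the rigidity lemma that guarantees the tails have this form. With that lemma and the resulting two-case computation supplied, your argument becomes the paper's proof.
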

\begin{proof}
Recall first that the portions of loop from $B\to C$, and $D\to A$ receive the weight $\bR_0\bC$.
We define the weight for the same portions, but traveled in the other direction ($C\to B$ and $A\to D$)
to be $\bR_0$.
Let us assume the two paths $B\to C$, and $D\to A$ intersect, namely that they have at least a common vertex.
Let us denote by $X$ the rightmost such common vertex.

We may now decompose
the above loop into three pieces: that from $A\to X$, $X\to B\to C\to X$ and that from $X\to A$.
The central portion $X\to B\to C\to X$, with weight $W_{XBCX}$, may be ``flipped" into 
$X\to C\to B\to X$,
with new weight $W_{XCBX}$, by 
reversing the direction of travel, so that 
the new configuration now corresponds to pairs of paths $A\to C$ and $B\to D$
of equal length $2n$, now separated by the two intermediate stations $C\to B$ and $D\to A$ 
with respective weights $\bR_0$ and $\bR_0\bC$, thus contributing
to $\bP_n\bR_0\bP_n\bR_0\bC$. 

We now note that $W_{XCBX}=W_{XBCX}$.
Indeed, depending on the height of $X$, we have the two following possibilities
(the first case is illustrated in Fig.\ref{fig:lgv}):
\begin{itemize}
\item (i) $X=(2t+1,2)$ for some $t$, then the blue path $X\to B$
is $3\to 3\to 4 \to 3\to 4\to \cdots \to 4\to 3\to 2$ (weight $\by_3^m \by_2\by_1$, $m=n-t$)
and the red path $C\to X$ is $1\to 2\to 1\to 2\to \cdots \to 1\to 2$ (weight $\by_1^m$).
The weight of the central loop $X\to B\to C\to X$ is:
$$W_{XBCX}= \by_3^m (\by_2\by_1 \bR_0\bC) \by_1^m=\bR_0^{-1} 
=\by_1\bR_0\by_2= \by_1^{m+1}\bR_0  \by_3^m\by_2=W_{XCBX}$$
by use of \eqref{commutaone}.
\item (ii) $X=(2t+2,3)$ for some $t$, then the blue path $X\to B$
is $3\to 4 \to 3\to 4\to \cdots \to 4\to 3\to 2$ (weight $\by_3^m \by_2\by_1$, $m=n-t$)
and the red path $C\to X$ is $1\to 2\to 1\to 2\to \cdots \to 1\to 2\to 3$ (weight $\by_1^{m-1}$).
The weight of the central loop $X\to B\to C\to X$ is:
$$ W_{XBCX}= \by_3^m (\by_2\by_1 \bR_0\bC) \by_1^{m-1} =\by_3\bR_0^{-1}
=\bR_1^{-1}=\by_2\bR_0=\by_2 \by_1^{m}\bR_0 \by_3^m =W_{XCBX}$$
by use of \eqref{commutaone}.
\end{itemize}

The above flipping procedure is a bijection between pairs of paths $A\to B$ and $C\to D$ that
intersect and pairs of paths $A\to C$ and $B\to D$. They cancel each other in the formula
\eqref{cancelaone}, leaving us only with non-intersecting paths, and the theorem follows.
\end{proof}

Note that there is a unique configuration of non-intersecting paths, where $A\to B$ is 
$1\to 2\to 3\to 4\to 3\to \cdots \to 4\to 3\to 2\to 1$ and $C\to D$ is $1\to 2\to 1\to 2\to \cdots \to 2\to 1$.
The corresponding weight is $\by_3^{n-1} (\by_2\by_1 \bR_0\bC) \by_1^{n-1}\bR_0\bC=\bC$ by
\eqref{commutaone}, and we recover \eqref{cnilp}.

\section{Conclusion: Towards non-commutative cluster algebras}

The rank 2 non-commutative cluster algebras are defined as follows.
Let us consider a bipartite
chain ${\mathbb T}_2$ (Cayley tree of degree 2), with alternating black and white vertices
indexed by $v\in\Z$ (black if $v$ is even, white otherwise), and 
edges with alternating labels $1,2$, and a fixed element $\bC\in\cA$.
To each vertex $v$ of ${\mathbb T}_2$, we attach a cluster, i.e. a vector of the form $\bx_v=(\ba,\bb)$ in $\cA^2$,
where the cluster variables $\ba,\bb$ obey the quasi-commutation relations
$$\begin{matrix}\ba\, \bb=\bb\bC \ba \hfill & {\rm if} &v\, {\rm is}\, {\rm black}\\
\bb\, \ba=\ba\bC \bb \hfill & {\rm if} &v\, {\rm is}\, {\rm white} \end{matrix}$$

To each (doubly oriented) edge labeled $i$ connecting a black and a white vertex,
we associate the two following mutations: (i) the forward mutation $\mu_i^{+}$ corresponding to the oriented
edge from the black vertex to the white (ii) the backward mutation  $\mu_i^{-}=(\mu_i^{+})^{-1}$
corresponding to the oriented edge from the white vertex to the black.
Depending on the color of the vertex $v$,
the forward and backward mutations of $x_v=(\ba,\bb)$ are 
defined as follows:
$$\begin{matrix} v\, {\rm black}: &\mu_1^{+}:(\ba,\bb)\mapsto (\ba',\bb)\hfill &{\rm with} & \ba'\bC \ba =(\bb)^b+\bone \hfill\\
&\mu_2^{-}:(\ba,\bb)\mapsto (\ba,\bb')\hfill &{\rm with} & \bb\bC \bb' =(\ba)^c+\bone \hfill\\
v\, {\rm white}: &\mu_2^{+}:(\ba,\bb)\mapsto (\ba,\bb')\hfill &{\rm with} & \bb'\bC \bb =(\ba)^c+\bone \hfill\\
&\mu_1^{-}:(\ba,\bb)\mapsto (\ba',\bb)\hfill &{\rm with} & \ba\bC \ba' =(\bb)^b+\bone \hfill
\end{matrix}$$
These definitions are compatible with an exchange matrix equal to $B_\bullet=\begin{pmatrix} 0 & -c\\ b &0\end{pmatrix}$
for black vertices and $B_\circ=-B_\bullet$ for white ones, evolving according to the rules of the usual rank 2 cluster
algebras.
The general conjecture of \cite{Kon} may be rephrased as follows: {\it for any positive integers $b,c$,
the cluster variables at any node of ${\mathbb T}_2$ may be expressed as positive Laurent polynomials of 
the cluster variables at any other node of ${\mathbb T}_2$}.
This is easily proved in the ``finite cases" for $bc<4$ (leading to only finitely many distinct cluster
variables in the commuting case), where one can show explicitly that there exists a positive integer $m_{b,c}$ 
such that $\bx_{v+m_{b,c}}=\bC\bx_v\bC^{-1}$, with $m_{1,1}=5$, $m_{1,2}=6$, and $m_{1,3}=8$,
leaving us with only finitely many cluster variables to inspect. In the affine case $bc=4$, this was proved in
\cite{DFK09b} (see the $A_1$ case $b=c=2$ in the introduction,
and the case $b=1,c=4$ Theorem \ref{onefourthm} above), 
while the case $bc>4$ is still open.

To generalize this construction, we may think of using the results of the present paper in the following way.
Assuming that the $A_r$ $Q$-system cluster algebra introduced in \cite{Ke07} may be generalized to a
non-commutative setting, we would expect the relevant cluster variables to still be expressible
in terms of non-commutatively weighted paths on target graphs, in terms of given initial data. In this paper,
we have shown how to introduce mutations of non-commutative weights
via non-commutative continued fractions. 
Concentrating first on the Stieltjes point,  we may trace 
the expressions for $\Delta_{1,n}=\bR_{1,n}$ back to the initial variables 
$\Delta_{\al,\al}$ and $\Delta_{\al,\al-1}$, $\al=1,2,...,r+1$ (see Theorem \ref{stilnc}). 
This seems to indicate, in this case at least, that
the quasi-Wronskians are good non-commutative cluster variables.
However, the general expressions of Theorem \ref{ncweightcalc} for $\bR_{1,n}$
in terms of quasi-Wronskians or the more compact $C,D$ variables
(see \eqref{defCD} and \eqref{evalC}) do not clearly
display which initial data they are related to. We clearly need another definition 
of the non-commutative cluster variables.

We hope to report on this issue in a later publication.

\begin{appendix}
\section{Relation of weighted graphs $G_\bm$ to the
  graphs $\Gamma_\bm$ of \cite{DFK08}}\label{appendixa}
In \cite{DFK08}, we introduced a family of weighted graphs $\Gamma_\bm$
constructed from a Motzkin path $\bm$. Here we recall the definition of these graphs, as well as the sequence of bijections which brings them to form of the weighted graphs $G_\bm$ which we use in the main body of the paper. It is important to understand this connection, because the most basic proof of positivity relies on the fact that the graphs $\Gamma_\bm$ are positively weighted with weights which are monomials in the initial data $R(\bm)$.

In fact, we describe a sequence of simple bijections
$$
\bm \leftrightarrow \Gamma_\bm \leftrightarrow \Gamma'_\bm \leftrightarrow G_\bm.
$$
The graph $\Gamma_\bm'$ is the ``compactified" graph of \cite{DFK09}. Note that in general, only the graph $\Gamma_\bm$ has manifestly positive weights.

Below, let $\bm$ be a fixed Motzkin path of length $r$.
\subsection{The graphs $\Gamma_\bm$}
\begin{enumerate}
\item Decompose the Motzkin path $\bm$ into a disjoint union of maximal ``descending" segments, including those of length one, such that $\bm = \underset{i}{\sqcup} (m_i,m_{i+1}, ..., m_{i+k_i-1})$. Here, 
 $m_{i+a} = m_{i}-a$, $0\leq a < k$ in the segment beginning with index $i$ and having length $k$. 
 \begin{example}\label{samplemot}
 If $\bm = (2,1,2,2,3,2,1,1)$ then 
 $\bm = (2,1)\sqcup (2)\sqcup(2) \sqcup (3,2,1) \sqcup (1)$.
 \end{example}
 For notational purposes, we call the maximal descending segment starting with $\bm_i$ $\bm^{(i)}$, so that $\bm = \sqcup_i \bm^{(i)}$.
\item To each maximal descending segment $\bm^{(i)}$ of length, say, $k$, is associated a graph
$\Gamma_{\bm^{(i)}}$. It has $2(k+1)$ vertices, labeled $(i,i+1,...,i+k)$ and
$(i',(i+1)',...,(i+k)')$. It has the following list of 
oriented edges: $(a,a+1), (a+1,a), (a,a'), (a',a)$ (for all
$a$), and $(b,a)$ for all $k+i\geq b>a+1 \geq 2$. Below we give a picture of the graph $\Gamma_{\bm^{(1)}}$ for the example of the descending subpath $\bm^{(1)}=(3,2,1,0)$.
\begin{center}
\begin{tabular}{ccc}
\psset{unit=2mm,linewidth=.4mm,dimen=middle}
\begin{pspicture}(0,-2)(15,22)
\psline(0,17.5)(15,2.5)
\multips(0,17.5)(5,-5){4}{\pscircle*[linecolor=blue](0,0){.7}}
\rput(2,17.5){$m_k$}
\rput(8,12.5){$m_{k-1}$}
\rput(17,2.5){$m_1$}
\rput(7.5,-2){$\mathbf m^{(1)}$}
\end{pspicture}
&
\psset{unit=2mm,linewidth=.4mm,dimen=middle}
\begin{pspicture}(0,-2)(5,22)
\psline[arrowsize=1.5]{->}(0,10)(5,10)
\end{pspicture}&
\psset{unit=2mm,linewidth=.4mm,dimen=middle}
\begin{pspicture}(0,-2)(12,22)
\psline(5,0)(5,20)
\multips(0,0)(0,5){5}{\psline(5,0)}
\multips(0,0)(0,5){5}{\pscircle*[linecolor=red](0,0){.7}}
\rput(0,10){\psarc[linecolor=blue,arrowsize=1.5]{-}{11.5}{-60}{0}}
\rput(0,10){\psarc[linecolor=blue,arrowsize=1.5]{<-}{11.5}{0}{60}}
\rput(1,7.5){\psarc[linecolor=blue]{-}{8.75}{-60}{0}}
\rput(1,7.5){\psarc[linecolor=blue,arrowsize=1.5]{<-}{8.75}{0}{60}}
\rput(2.5,5){\psarc[linecolor=blue]{-}{6}{-60}{0}}
\rput(2.5,5){\psarc[linecolor=blue,arrowsize=1.5]{<-}{6}{0}{60}}
\rput(1,12.5){\psarc[linecolor=blue]{-}{8.75}{-60}{0}}
\rput(1,12.5){\psarc[linecolor=blue,arrowsize=1.5]{<-}{8.75}{0}{60}}
\rput(2.5,10){\psarc[linecolor=blue]{-}{6}{-60}{0}}
\rput(2.5,10){\psarc[linecolor=blue,arrowsize=1.5]{<-}{6}{0}{60}}
\rput(2.5,15){\psarc[linecolor=blue]{-}{6}{-60}{0}}
\rput(2.5,15){\psarc[linecolor=blue,arrowsize=1.5]{<-}{6}{0}{60}}
\multips(5,0)(0,5){5}{\pscircle*[linecolor=red](0,0){.7}}

\rput(7,0){1}
\rput(-2,0){1'}
\rput(7,5){2}
\rput(-2,5){2'}
\rput(9,20){$k+1$}
\rput(-5,20){$(k+1)'$}
\rput(2,-3){$\Gamma_{\mathbf m^{(1)}}$}
\end{pspicture}
\end{tabular}
\end{center}
\vskip.1in
(In this picture, an unoriented edge on the graph $\Gamma$ is considered to be doubly-oriented, for ease of reading.)
\item Consecutive maximal descending segments 
 are separated by either a ``flat'' step of the form $m_a=m_{a+1}$, or an
``ascending'' step, $m_a+1=m_{a+1}$.
\begin{example}
For the Motzkin path in Example \ref{samplemot}, the step separating $\bm^{(1)}=(2,1)$ and $\bm^{(3)}=(2)$ is ascending, and the step separating $\bm^{(3)}=(2)$ from $\bm^{(4)}=(2)$ is flat.
\end{example}

We ``glue" consecutive graphs $\Gamma_1=\Gamma_{\bm^{(i)}}$ and $\Gamma_2=\Gamma_{\bm^{(j)}}$ where $m^{(i)}$ and $m^{(j)}$ are neighboring descending segments in $\bm$ with $j>i$. The gluing procedure is dictated by whether the segments are separated by an ascending or flat step. We assume $\bm^{(i)}$ has length $k_i$ and $\bm^{(j)}$ has length $k_j$.

\begin{enumerate}
\item If the separation between the segments is flat, we identify the edge and vertices $(j',j)\in \Gamma_2$ with $((k_i+i)',k_i+i)\in \Gamma_1$.
 
\item If the separation between the segments is ``ascending'', we
  identify the vertices and edges $(j,j')\in\Gamma_2$ with $((k_i+i)',k_i+i)\in \Gamma_1$ (i.e., the identification is in the opposite sense). 

\end{enumerate}

\end{enumerate}
We define $\Gamma_\bm$ to be the result of gluing the graphs
corresponding to consecutive strictly descending pieces of $\bm$
according to the procedure given in (4) above. 
\begin{example}\label{exzero}
The purely ascending Motzkin path $\bm=\bm_1=(0,1,...,k)$ maps to the chain with $2k+2$ vertices, connected by doubly-oriented edges. The flat Motzkin path $\bm_0=(0,...,0)$ of length $k$ maps to the graph
\begin{center}

\psset{unit=2mm,linewidth=.4mm,dimen=middle}
\begin{pspicture}(0,-2)(12,22)
\psline(5,0)(5,20)
\multips(0,0)(0,5){5}{\psline(5,0)}
\multips(0,0)(0,5){5}{\pscircle*[linecolor=red](0,0){.7}}
\multips(5,0)(0,5){5}{\pscircle*[linecolor=red](0,0){.7}}

\rput(7,0){1}
\rput(-2,0){1'}
\rput(7,5){2}
\rput(-2,5){2'}
\rput(9,20){$k+1$}
\rput(-5,20){$(k+1)'$}
\rput(-10,10){$\Gamma_{\mathbf m_0}=$}
\end{pspicture}
\end{center}
As an example of the gluing procedure, in the case of $r=3$ and $\bm=(0,1,0)$, we have
\begin{center}

\psset{unit=2mm,linewidth=.4mm,dimen=middle}
\begin{pspicture}(0,-2)(12,22)
\psline(0,0)(5,0)(5,20)(0,20)
\psline(0,15)(5,15)
\rput(2.5,15){\psarc[linecolor=blue]{-}{6}{-60}{0}}
\rput(2.5,15){\psarc[linecolor=blue,arrowsize=1.5]{<-}{6}{0}{60}}
\multips(0,15)(0,5){2}{\pscircle*[linecolor=red](0,0){.7}}
\rput(0,0){\pscircle*[linecolor=red](0,0){.7}}
\multips(5,0)(0,5){5}{\pscircle*[linecolor=red](0,0){.7}}
\rput(-10,10){$\Gamma_{(0,1,0)}=$}
\end{pspicture}
\end{center}
\end{example}
$\Gamma_\bm$ has a root vertex, which is the vertex $1'$ in the picture of $\Gamma_{\bm_0}$, or the bottom-most vertex in $\Gamma_\bm$ in general. 

The weights attached to the graph $\Gamma_\bm$ are as follows.  
Each edge connecting neighboring vertices and pointing towards the root vertex $1'$ carries weight 
$t y_a$, with the edges numbered $a=1,2,...,2r+1$ from the root vertex on 
(with edge $(i',i)$ preceding $(i+1,i)$ in cases where it exists). 
The weights $y_a$ are called the {\it skeleton weights}
of $\Gamma_\bm$.

Let $w_{a,b}$ be the weight corresponding to the edge connecting vertex $a$ to $b$. Then $w_{a,a+1}=w_{a,a'}=1$, $w_{a+1,a}=ty_{2a} $ and $w_{a',a}=t y_{2a-1}$.
The weights $w_{a,b}$ of the edges $(a,b)$ with $(a-b>1)$ are called {\it redundant weights},
and are expressed solely in terms of skeleton weights between the vertices $a,b$ as:
\begin{equation}\label{redunco}
w_{a,b}= t\prod_{i=b+1}^{a}y_{2i-2}
\prod_{i=b+1}^{a-1}y_{2i-1}^{-1}, (a-b\geq 1).
\end{equation}

These graphs have manifestly positive weights.  Let $\bm$ be a Motzkin path with first component $m_1$. Then:
\begin{thm}\cite{DFK08}
The ratio $R_{1,n+m_1}/R_{1,m_1}$
is the coefficient of $t^n$ in the partition function of paths on
$\Gamma_\bm$ from the vertex $1'$ to itself. 
\end{thm}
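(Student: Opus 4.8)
The plan is to split the statement into two independent pieces and recombine them. First I would show that the partition function $Z_{1',1'}(\Gamma_\bm)$ of weighted paths on $\Gamma_\bm$ from the root $1'$ to itself equals the series $F_\bm(t)$ of \eqref{defgeneF}. Granting this, the theorem is immediate: the Corollary to Theorem \ref{linkmuta}, namely \eqref{genefone}, identifies $F_\bm(t)=\sum_{n\geq 0} t^n R_{1,n+m_1}/R_{1,m_1}$, so the coefficient of $t^n$ in $Z_{1',1'}(\Gamma_\bm)$ is exactly $R_{1,n+m_1}/R_{1,m_1}$, as claimed.

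For the core identity $Z_{1',1'}(\Gamma_\bm)=F_\bm(t)$ I would use a first-return (excursion) decomposition. A path from $1'$ to itself is either empty (weight $1$) or begins with the forced step $1'\to 1$ of weight $w_{1',1}=ty_1(\bm)$, performs an excursion among the vertices $\{1,2,\dots\}$ that never revisits $1'$, and ends with the step $1\to 1'$ of weight $w_{1,1'}=1$; hence $Z_{1',1'}(\Gamma_\bm)=1+ty_1(\bm)\,E_\bm(t)$, where $E_\bm(t)$ is the excursion generating function based at vertex $1$. It then suffices to prove that $E_\bm(t)$ equals $J^{(r)}(\wy(\bm))$ as a power series, realized through the canonical (branching) form of the continued fraction \eqref{inducJ}. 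Iterating the first-return decomposition one level at a time produces a continued fraction whose nearest-neighbor coefficients are the skeleton weights $ty_a$ and whose long-range coefficients are the redundant weights \eqref{redunco}; this is precisely the canonical form obtained from $J^{(r)}(\wy(\bm))$ by the rearrangements \eqref{identoneco} and \eqref{identwoco}, since \eqref{redunco} is exactly the weight generated by iterating \eqref{identoneco}. This is the correspondence announced in the Remark following \eqref{defgeneF}.

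Alternatively, I would prove $Z_{1',1'}(\Gamma_\bm)=F_\bm(t)$ by induction on the number of mutations carrying $\bm_0$ to $\bm$. The base case $\bm=\bm_0$ is the ladder $\Gamma_{\bm_0}$, which carries no redundant edges, so its excursion generating function is directly the Jacobi fraction $J^{(r)}(ty_1,\dots,ty_{2r+1})$ and $Z_{1',1'}(\Gamma_{\bm_0})=1+ty_1 J^{(r)}=F_{\bm_0}(t)$. For the inductive step, a mutation $\mu_i$ acts on the continued fraction by the local rearrangements of Theorem \ref{defymut} (Lemmas \ref{rear1lem} and \ref{rear2lem} together with \eqref{identoneco} and \eqref{identwoco}); I would check that this rearrangement is realized by the corresponding local surgery on $\Gamma_\bm$ (the change of descending segment and gluing depicted in the Case (i) and Case (ii) diagrams of Section \ref{mutacosec}), and that the new redundant weights agree with \eqref{redunco}. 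The case $i=1$, where the top level changes and $F_\bm=1+ty_1(\bm)F_{\bm'}$, reproduces the shift $m_1\mapsto m_1+1$ of the root index and is handled exactly as in the Corollary to Theorem \ref{linkmuta}.

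The main obstacle is the bookkeeping in matching the branching structure of the canonical continued fraction to the edge structure of $\Gamma_\bm$ for a general Motzkin path: one must verify that the long-range edges created by the gluing of consecutive descending segments carry exactly the redundant weights \eqref{redunco}, i.e. that iterating \eqref{identoneco} reproduces the products $\prod y_{2i-2}\prod y_{2i-1}^{-1}$, and that no spurious edges or weights appear in the process. Positivity of the resulting coefficients is not a separate concern, since on $\Gamma_\bm$ the skeleton and redundant weights are manifestly positive Laurent monomials in $R(\bm)$, and Lemma \ref{positinvol} already establishes the same fact from the $G_\bm$ side.
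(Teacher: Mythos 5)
Your overall architecture is the same as the paper's, and your reduction is not circular: the statement is imported from \cite{DFK08}, and within this paper its justification consists of exactly the two pieces you isolate, namely the identity \eqref{genefone} (proved via Theorem \ref{linkmuta} and Corollary \ref{Ffunction} from the base case of \cite{DFK3}, with no appeal to the present theorem) and the identification of $F_\bm(t)$ with the resolvent \eqref{resolvent} on $\Gamma_\bm$. Where you genuinely diverge is in how the graph-side identity is established: the paper does it by linear algebra on transfer matrices --- the compactification lemma $((\mathbb I-T_{\Gamma_\bm})^{-1})_{1,1}=((\mathbb I-T_{\Gamma'_\bm})^{-1})_{1,1}$ from \cite{DFK09}, followed by upper/lower unitriangular eliminations $E(\mathbb I-T_{\Gamma'_\bm})F$ which leave the $(1,1)$ resolvent entry invariant and produce precisely the renormalized weights \eqref{equals} and \eqref{crosses}, i.e.\ the $\wy$ of Definition \ref{weightdefco} --- whereas you propose a first-return (excursion) decomposition matched against the canonical form of $J^{(r)}(\wy(\bm))$. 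These are the same computation in two languages (your excursion decomposition is the combinatorial reading of the Schur-complement/elimination step on $\mathbb I-T$), and your version makes the Remark after \eqref{defgeneF} precise; the paper's matrix version buys uniform bookkeeping over all Motzkin paths, which is exactly the ``main obstacle'' you flag, since it verifies in one stroke that the long-range edges carry the redundant weights \eqref{redunco} (compare \eqref{redunc} in the noncommutative setting).

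One local step needs repair. You write $Z_{1',1'}(\Gamma_\bm)=1+ty_1(\bm)\,E_\bm(t)$ with $E_\bm$ the generating function of excursions from $1$ that \emph{never revisit} $1'$, and then assert $E_\bm=J^{(r)}(\wy(\bm))$. These two statements are incompatible: a path from $1'$ to itself may return to $1'$ arbitrarily many times, so with your $1'$-avoiding $E_\bm$ the correct formula is the geometric series $Z_{1',1'}=\bigl(1-ty_1E_\bm\bigr)^{-1}$, not $1+ty_1E_\bm$; on the other hand $J^{(r)}(\wy(\bm))=Z_{1,1}(G_\bm)$ does allow such revisits, because the loop at vertex $1$ created by compactification (weight $ty_1$ in $\Gamma'_\bm$, absorbed into $\wy_1$ in $G_\bm$) is precisely the round trip $1\to 1'\to 1$; this is the content of the paper's relation $Z_{1',1'}(t)=1+ty_1\,((\mathbb I-T_{\Gamma'_\bm})^{-1})_{1,1}$. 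Since $1+x(1-x)^{-1}=(1-x)^{-1}$, the two readings give the same series, so the slip is cosmetic and self-correcting --- but as literally written your decomposition undercounts and the claim $E_\bm=J^{(r)}$ is false. With that fixed, and with the redundant-weight verification you already identify (that iterating \eqref{identoneco} reproduces \eqref{redunco}), your argument is sound, and positivity is indeed not a separate issue, as in Lemma \ref{positinvol}.
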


This theorem can be stated in terms of the resolvent of the transfer matrix 
associated with the graph in the following way. Let $T_{\Gamma_\bm}$ be the matrix with entries 
$(T_{\Gamma_\bm})_{a,b} = w_{a,b}$. 
Then the generating function for paths from vertex $1'$ to itself on $\Gamma_\bm$ is the resolvent
\begin{equation}\label{resolvent}
F_\bm(t) = \left ((\mathbb I - T_{\Gamma_\bm})^{-1}\right)_{1',1'} =\sum_{n\geq 0} t^n R_{1,n+m_1}/R_{1,m_1} .
\end{equation}

\subsection{Compactified graphs}
We now define the graphs $\Gamma'_\bm$, which have the property that 
$$
((\mathbb I - T_{\Gamma_\bm})^{-1})_{1,1}=((\mathbb I - T_{\Gamma'_\bm})^{-1})_{1,1}.
$$
Note that $Z_{1',1'}(t) = 1 + t y_{1} ((\mathbb I - T_{\Gamma'_\bm})^{-1})_{1,1}$.

The compactified graphs $\Gamma'_\bm$ are obtained from $\Gamma_\bm$
as explained in Section 6 of \cite{DFK09}. This process involves an identification of pairs of vertices in the graph $\Gamma_\bm$, such that the final number of vertices is reduced to $r+1$. This  produces loops and also, in certain cases, extra directed edges. 

\begin{enumerate}
\item For subgraphs arising from descending or flat segments of $\bm$, compactified graphs are obtained simply from identification of the vertices $i$ and $i'$ for each $i$, resulting in a loop at vertex $i$ with weight $w_{i',i}$.  All other edges retain their weights.

\item Consider the vertices and edges arising from each ascending segment of length $k$ in $\bm$. This is a chain of $2k+2$ vertices. 
\begin{enumerate}
\item Name the vertices from bottom to top, $1,...,2k+2$ and identify the vertices $2a-1,2a$ ($a = 1,...,k+1$). This leaves $k+1$ vertices.
\item Rename the vertices from bottom to top again, $1,...,k+1$. 
\item Identification of vertices results in a loop at each vertex $a$ with weight inherited from the weight associated with the edge connecting $2a$ to $2a-1$, $w_{2a,2a-1}$. The spine retains weights $w_{2a+1,2a}$ along each edge $(a+1,a)$.
\item In addition we adjoin ascending directed edges, connecting vertex $a$ to $b$ for all $1\leq a<b\leq k+1$, with weight $(-1)^{b-a+1}$.
\end{enumerate}
\end{enumerate}

\begin{remark}\label{appremark}
In fact, the construction of the compactified graph directly from the Motzkin path is easy to describe. The Motzkin path is decomposed into maximal ascending, descending and flat segments, the corresponding compactified graphs and their weights are constructed as above, and the graphs are glued: The top two vertices corresponding to a segment are identified with the bottom two vertices corresponding to the next segment. Incident edges between these vertices are identified. (See pictures below.)
\end{remark}

\begin{lemma}\cite{DFK09} We have the equality of partition functions
$$
\left((\mathbb I - T_{\Gamma_\bm})^{-1}\right)_{1,1}=\left((\mathbb I - T_{\Gamma'_\bm})^{-1}\right)_{1,1}
$$
\end{lemma}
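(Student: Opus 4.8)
The plan is to read both sides as generating functions for weighted paths. By \eqref{resolvent}, the quantity $\left((\mathbb I-T_{\Gamma_\bm})^{-1}\right)_{1,1}$ is the sum over all paths on $\Gamma_\bm$ from the vertex $1$ to itself of their weights, and likewise for $\Gamma'_\bm$. Since the compactification described in Remark \ref{appremark} is carried out one maximal (descending, flat or ascending) segment of $\bm$ at a time, and consecutive segments meet only along a shared boundary pair of vertices, I would first reduce the global identity to a local statement, one segment at a time. Concretely, the $(1,1)$ entry of the resolvent is computed by propagating a single ``self-energy'' through the graph toward the vertex $1$, as in the transfer-matrix recursion underlying the continued fractions of Section \ref{sectiontwo}; it therefore suffices to show that, for each segment, the self-energy transmitted from its top boundary to its bottom boundary is the same whether computed on the segment of $\Gamma_\bm$ or on its compactified image in $\Gamma'_\bm$. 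The two end checks at each glue point, dictated by the flat versus opposite-sense ascending identifications, are then verified separately.

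For a maximal descending or flat segment this is immediate. Each primed vertex $a'$ is pendant: it is joined only to its partner $a$, through the edges of weights $w_{a,a'}=1$ and $w_{a',a}=t y_{2a-1}$, and carries no loop. Eliminating $a'$ by a Schur complement of the transfer matrix --- an operation that preserves the $(1,1)$ entry of the resolvent since $a'\neq 1$ --- contributes precisely the self-energy $w_{a,a'}(1-0)^{-1}w_{a',a}=w_{a',a}$, that is, a loop at $a$ of weight $t y_{2a-1}$, exactly matching rule (1) of the appendix. The redundant edges \eqref{redunco} are untouched, so these blocks of $\Gamma_\bm$ pass unchanged to the corresponding blocks of $\Gamma'_\bm$.

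The real work, and where I expect the main obstacle, is the ascending case. Here the block of $\Gamma_\bm$ is a bare chain on $2k+2$ vertices, the compactification identifies the consecutive vertices $2a-1$ and $2a$ and adjoins the signed long-range ascending edges of weight $(-1)^{b-a+1}$, and these vertices are not pendant, so no single Schur complement realizes the identification. Instead I would compute the upward self-energy of the chain as a finite Jacobi-type continued fraction, and then fold it, working from the top of the chain downward, by repeated application of the rearrangement identity \eqref{identwoco} together with Lemma \ref{rear1lem}, into the branched continued fraction read off from the compactified segment. The loop weights $w_{2a,2a-1}$ and spine weights $w_{2a+1,2a}$ of the compactified chain reproduce the diagonal and sub-diagonal of the Jacobi fraction, while the signed ascending edges emerge from the exchange $\tfrac{bc}{1-c-u}\leftrightarrow\tfrac{b}{1-c/(1-u)}$ inside \eqref{identwoco}, in exactly the way the redundant weights \eqref{redunco} were produced in the main text by \eqref{identoneco}; the sign $(-1)^{b-a+1}$ is the parity recorded by the number of folding steps separating the two endpoints. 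The delicate point --- and the crux of the whole lemma --- is to show that this bookkeeping of ranges and alternating signs reproduces rule (2d) verbatim for every $a<b$, and that the folded fraction glues consistently with the descending or flat blocks on either side, as prescribed by the opposite-sense identification of Remark \ref{appremark}.

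Finally, having matched each segment together with its glue points, the local identities compose to give equality of the full self-energies at the vertex $1$, hence $\left((\mathbb I-T_{\Gamma_\bm})^{-1}\right)_{1,1}=\left((\mathbb I-T_{\Gamma'_\bm})^{-1}\right)_{1,1}$, as claimed. A useful consistency check throughout is the smallest ascending case $k=1$, where the fold introduces no negative weight and the identity reduces on both graphs to the elementary rearrangement yielding $\frac{1}{1-d_1-\frac{d_2}{1-d_3}}$, with $d_i$ the chain down-weights; the genuinely signed edges only appear once $b-a\geq 2$.
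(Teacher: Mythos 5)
Your overall architecture is sound, and your flat/descending step is correct and complete: every primed vertex $a'$ other than the one you keep is pendant, so eliminating it by a Schur complement (which preserves the $(1,1)$ resolvent entry since $a'\neq 1$) produces exactly the loop $w_{a',a}=ty_{2a-1}$ of rule (1). Note, for context, that this paper does not reprove the lemma but imports it from \cite{DFK09}; the closest in-paper technique is the unitriangular elimination proved in the very next lemma, and that is precisely what your argument is missing. The gap is in the ascending case, which you correctly flag as the crux but then do not actually close: the identity \eqref{identwoco} is sign-free, and iterating it on the chain's Stieltjes fraction performs the standard contraction, producing the tridiagonal data $t(y_{2i-1}+y_{2i})$ on the diagonal and $t^2y_{2i}y_{2i+1}$ on the subdiagonal --- i.e.\ the $\times$-weights of Eq.~\eqref{crosses}, which is the graph $G_\bm$, \emph{not} the compactified graph $\Gamma'_\bm$. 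No amount of folding via \eqref{identwoco} can generate the alternating weights $(-1)^{b-a+1}$ of rule (2d); your ``parity of folding steps'' assertion is not a mechanism, and the analogy with \eqref{redunco} is also off target, since those redundant weights are positive Laurent monomials on \emph{descending} long edges arising from \eqref{identoneco}, a different rearrangement.

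What closes the ascending case is the bridge from the contracted tridiagonal fraction to $\mathbb I - T_{\Gamma'_\bm}$ with its signed upper triangle, and this is linear algebra rather than fraction folding: on an ascending block one has $(\mathbb I - T_{\Gamma'_\bm})_{a,b}=-(-1)^{b-a}$ for $b>a$, so replacing row $a$ by (row $a$) $+$ (row $a+1$) for each $a$ --- an upper unitriangular $E$ --- annihilates all entries beyond the superdiagonal and leaves exactly the tridiagonal of \eqref{crosses} up to trivial rescaling; and left multiplication by upper unitriangular $E$ preserves the $(1,1)$ entry because $((EM)^{-1})_{1,1}=(M^{-1}E^{-1})_{1,1}=(M^{-1})_{1,1}$, the first column of $E^{-1}$ being $(1,0,\dots,0)^{T}$. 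Combined with your contraction step this yields the ascending-case equality with an arbitrary self-energy $u$ injected at the top boundary, so your segment-by-segment composition then goes through. (A combinatorial alternative in the spirit of Lemma \ref{positinvol} --- cancelling a long step $a\to b$ against $a\to b-1$ followed by $b-1\to b$ --- is possible but delicate: the naive ``toggle at the first violation'' is not an involution when consecutive ascending jumps occur, so it would need a run-based reformulation; the signed sum over decompositions of a rise of height $h$ into jumps is $\delta_{h,0}+\delta_{h,1}$, which is the correct cancellation, but you would have to organize the involution around maximal ascending runs.) One further point to make explicit: on an ascending bottom segment the vertex ``$1$'' of $\Gamma_\bm$ is the \emph{second} chain vertex (the bottom chain vertex is the root $1'$, absorbed into compact vertex $1$); the resolvent at the bottom chain vertex is the uncontracted Stieltjes fraction and does \emph{not} equal $((\mathbb I - T_{\Gamma'_\bm})^{-1})_{1,1}$, so your $k=1$ consistency check is right only with this identification.
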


In light of remark \ref{appremark}, we can focus on graph segments corresponding to segments of $\bm$ which do not change direction.

\subsection{The graph $G_\bm$}
We now turn to the equivalence of resolvents of the graph $G_\bm$, defined in Section 2,  to those on the graph $\Gamma'_\bm$. The difference is the presence of long directed edges in $\Gamma'_\bm$ (connecting vertices along the spine of the graph with indices which differ by 2 or more).

On the level of the transfer matrix $T_{\Gamma'_\bm}$, these edges are represented by non-zero elements above or below the second diagonals. 
These entries can be eliminated using row (column) elimination for descending (ascending) edges, respectively. Conveniently,
\begin{lemma}
The matrices $E$ and $F$ such that $E(\mathbb I-T_{\Gamma'_\bm})F$ has no non-trivial entries away from the second diagonals are such that $E$ is upper unitriangular and $F$  is lower unitriangular.
\end{lemma}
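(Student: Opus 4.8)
The plan is to exhibit $E$ and $F$ concretely as ordered products of elementary transvections and to read off their triangularity from which secondary diagonal serves as the pivot in each step. First I would write out $M=\mathbb I-T_{\Gamma'_\bm}$ explicitly. Its band part is tridiagonal: the main diagonal records the loops, one secondary diagonal records the weight-one up-steps (so its entries are all equal to $-1$), and the other secondary diagonal records the skeleton down-weights. Both secondary diagonals consist of invertible elements of $\cA[t]$ — the constants $-1$ on one side, and the monomial weights on the other — so either may be used as a pivot. The only entries lying off this band are the long edges: the descending redundant edges \eqref{redunco}, which occupy one off-band triangular region, and the ascending edges of weight $(-1)^{b-a+1}$ produced by compactification, which occupy the opposite one. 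The key structural remark is that descending long edges occur only inside the blocks coming from maximal descending segments of $\bm$, while ascending long edges occur only inside the blocks coming from maximal ascending segments; hence the two off-band regions sit in disjoint diagonal blocks, and the two eliminations will not interfere.

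Next I would eliminate the descending long edges by row operations. To clear an off-band entry in position $(a,b)$ I pivot on the \emph{adjacent} secondary-diagonal entry lying in the \emph{same column} $b$, namely the superdiagonal entry in row $b-1$, and subtract a suitable left multiple of that pivot row from row $a$. Since $b-1>a$, this adds a higher-indexed row into a lower-indexed one, so the transvection $\bone+cE_{a,b-1}$ is upper unitriangular. Performing these clearings in order of decreasing distance from the band (outermost entries first, within each descending block), subtracting the pivot row only touches columns lying strictly closer to the band or further out on the same side, so no already-cleared entry is regenerated. The product $E$ of all these transvections is therefore upper unitriangular.

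Symmetrically, I would eliminate the ascending long edges by column operations: to clear an off-band entry in position $(b,a)$ I pivot on the subdiagonal entry in the \emph{same row} $b$, namely the entry in column $b-1$, and add a right multiple of column $b-1$ into column $a$. Since $b-1>a$, this adds a higher-indexed column into a lower-indexed one, so each such transvection is lower unitriangular and their product $F$ is lower unitriangular. Because the descending and ascending fills live in disjoint diagonal blocks, the row pass acts on one off-band triangle without introducing fill in the region handled by the column pass, and conversely; after both passes $EMF$ has no entries away from the two secondary diagonals, which is exactly the assertion.

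I expect the main obstacle to be precisely this control of fill-in: one must verify that the chosen ordering of transvections, together with the block-localization of the two edge types, guarantees that clearing one long edge never recreates another, and that the row pass and column pass commute in effect. Everything else — invertibility of the pivots and the explicit transvection coefficients extracted from \eqref{redunco} — is a routine computation. Finally I would record the payoff that motivates the precise triangularity: the first row of a lower unitriangular matrix and the first column of an upper unitriangular matrix are both standard basis vectors, whence $(M^{-1})_{1,1}=((EMF)^{-1})_{1,1}$, so the corner resolvent on $\Gamma'_\bm$ coincides with that on the tridiagonal graph $G_\bm$.
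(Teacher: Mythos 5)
Your overall route is viable but genuinely different from the paper's: the paper does no generic entry-by-entry elimination with fill-in bookkeeping, but instead writes $E$ and $F$ in closed form, block by block. For a maximal descending segment it takes $F$ to have a single nontrivial subdiagonal, $F_{i+1,i}=-x_{2i}/x_{2i+1}$, and the telescoping product structure of the redundant weights \eqref{redunco} makes this one bidiagonal matrix annihilate the \emph{entire} lower-triangular fill in one shot; for a maximal ascending segment an explicit upper unitriangular $E$ kills the alternating $\pm1$ fill, again globally, because the signs telescope. This buys the paper not only bandedness but also the explicit new band entries \eqref{equals} and \eqref{crosses} --- precisely the weights $\widehat y$ of Definition \ref{weightdefco} --- which the subsequent identification of $G_\bm$ requires; your existence-only transvection argument proves the lemma's literal statement but leaves that computation still to be done. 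Your closing observation that $E^{-1}e_1=e_1$ and $e_1^{t}F^{-1}=e_1^{t}$ preserve the $(1,1)$-resolvent is correct, and is a more explicit justification than the paper's ``obviously unchanged.''

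Two concrete points, however, need repair. First, you have the two fill regions reversed relative to the paper's convention $(T_{\Gamma'_\bm})_{a,b}=w_{a,b}$: a descending redundant edge runs from a vertex $b$ to a vertex $a<b-1$ and so fills the region \emph{below} the subdiagonal (see the $(3,1)$ entry $-x_2x_4/x_3$ in the paper's displayed descending block), while the ascending $\pm1$ edges fill the region \emph{above} the superdiagonal. A lower-triangle entry in row $b$, column $a$, with $b>a+1$, can only be cleared through band entries in column $a$ (rows $a-1,a,a+1$, all $<b$) or in row $b$; with the left factor required to be upper unitriangular, this forces the descending fill onto the right factor $F$ (column operations adding later columns into earlier ones) and the ascending fill onto $E$ --- the opposite of your assignment (the paper's own prose sentence before the lemma also states the pairing backwards, but its proof is unambiguous). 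Since the lemma's conclusion is label-independent, your scheme survives swapping ``descending'' and ``ascending'' throughout, except that the transvection coefficients you propose to extract from \eqref{redunco} then belong to the column pass, not the row pass. Second, your fill-in control fails as stated: processing ``outermost entries first,'' the pivot row $b-1$ may still carry uncleared fill in columns beyond $b$, and adding it to row $a$ re-contaminates positions in row $a$ whose distance from the band exceeds the class you already processed, so cleared entries \emph{are} regenerated --- the very claim you assert (``no already-cleared entry is regenerated'') is the false step. The standard fix is to process rows bottom-up, so that each pivot row is already fully reduced to band form, and only then go outermost-first within each row; with that ordering (or by simply adopting the paper's closed-form $E$, $F$) your argument closes.
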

The $(1,1)$-resolvent 
$$
((\mathbb I-T_{\Gamma'_\bm})^{-1})_{1,1}
$$
is obviously unchanged under such transformations.
\begin{proof}
It is sufficient to show this for each block of $T'=T_{\Gamma'_\bm}$ corresponding to maximal descending and maximal ascending  pieces of $\bm$. The case of flat pieces is trivial as it has no entries off the second diagonals.

\begin{enumerate}
\item {\bf Descending pieces.} The matrix $\mathbb I - T'$ related  to the vertices in the graph $\Gamma'_{\bm}$ corresponding to a descending path $(k-1,k-2,...,0)$ is the $(k+1)\times (k+1)$-matrix
$$\mathbb I - T' =
\left( \begin{array}{cccccc}
1-x_{1} & -1 &  0 & \cdots &  & 0\\
-x_2 & 1-x_3 & -1 & 0 & \cdots & \\
-\frac{x_2 x_4}{x_3} & -x_4 & 1-x_5 & -1 & 0 & \\
-\frac{x_2 x_4 x_6}{x_3 x_5} & -\frac{x_4 x_6}{x_5}& -x_6 &1-x_7 & -1 & \vdots\\
\vdots & & \ddots & \ddots & \ddots &\\
 & \cdots & & -x_{2k-2} & 1-x_{2k-1} & -1\\
\vdots & \cdots & &-\frac{x_{2k-2}x_{2k}}{x_{2k-1}} & -x_{2k} & 1-x_{2k+1}
\end{array}
\right).
$$
Here, $x_{2a-1} =w_{a',a}$ and $x_{2a} = w_{a+1,a}$. Clearly, there is an lower triangular matrix which eliminates all of the entries below the second diagonal when multiplied on the right:
$$
F_{i,i} = 1, (1\leq i \leq k+1), \quad F_{i+1,i} = -\frac{x_{2i}}{x_{2i+1}} (1\leq k-1),\quad F_{i,j}=0 \hbox{ otherwise}.
$$
The resulting matrix $M=(\mathbb I- T')F$ has non-vanishing entries $M_{i,i+1}=-1$, and for $i\leq k-1$,
\begin{equation}\label{equals}
M_{ii} = 1- (x_{2i-1} - \frac{x_{2i}}{x_{2i+1}}),\quad
M_{i+1,i} = -\frac{x_{2i}}{x_{2i+1}}.
\end{equation}
In addition, the last two columns of the matrix $M$ are those of $\mathbb I-T'$.

Note that we have lost the positivity property of the weights in this case, although the generating function of paths is positive.  Note also that the weights attached to the edges connecting the last two vertices have not changed under this transformation.
\item {\bf Ascending pieces.} The matrix $T'$, attached  to the vertices in the graph $\Gamma'_{\bm}$ corresponding to an ascending path $(0,...,k-1)$ is the $(k+1)\times (k+1)$-matrix
$$\mathbb I - T' =
\left( \begin{array}{ccccccc}
1-x_{1} & -1 & 1 & -1 & 1&\cdots & \pm 1 \\
-x_2 & 1-x_3 & -1 & 1& -1 & \cdots &\\
0 & -x_4 & 1-x_5 & -1 & 1 & & \\
\vdots & 0 & -x_6 & 1-x_7 & -1& \cdots & \\
& & \ddots & \ddots & \ddots& \ddots & \vdots\\
\vdots & & & &-x_{2k-2} & 1-x_{2k-1} & -1\\
0 & \cdots & & & 0 & -x_{2k} & 1-x_{2k+1}
\end{array}
\right).
$$
There is obviously an upper triangular matrix $E$, corresponding to row operations on this matrix, or multiplication on the left, such that $M=E(\mathbb I -T')$ has no entries away from the second off-diagonals. The resolvent is equal to the resolvent of the matrix with non-vanishing entries $M_{i,i+1}=-1$, and if $i\leq k-1$,
\begin{equation}\label{crosses}
M_{ii} = 1-(x_i + x_{i+1}) , \quad M_{i+1,i} = -x_{2i} x_{2i+1}.
\end{equation}
The last two rows of $M$ are those of $\mathbb I-T'$. Note that in this case, we retain the positivity property of the weights.
\end{enumerate} 
\end{proof}
To summarize, we can draw the graphs corresponding to the sequence of transformations from $\bm$ (an ascending, descending or flat path) to $G_\bm$ in graphical notation:
\begin{enumerate}
\item
For ascending paths,
\begin{center}
\begin{tabular}{ccccccc}
\psset{unit=2mm,linewidth=.4mm,dimen=middle}
\begin{pspicture}(0,-2)(15,22)
\psline(15,17.5)(0,2.5)
\multips(15,17.5)(-5,-5){4}{\pscircle*[linecolor=blue](0,0){.7}}
\rput(12,17.5){$m_k$}
\rput(7,12.5){$m_{k-1}$}
\rput(2.5,2.5){$m_1$}
\rput(7.5,-2){$\mathbf m$}
\end{pspicture}
&
\psset{unit=2mm,linewidth=.4mm,dimen=middle}
\begin{pspicture}(0,-2)(5,22)
\psline[arrowsize=1.5]{->}(0,10)(5,10)
\end{pspicture}&
\psset{unit=2mm,linewidth=.4mm,dimen=middle}
\begin{pspicture}(0,-2)(12,22)
\psline(5,0)(5,20)
\psline(0,0)(5,0)
\psline(0,20)(5,20)
\multips(5,0)(0,2.5){9}{\pscircle*[linecolor=red](0,0){.5}}
\rput(0,0){\pscircle*[linecolor=red](0,0){.7}}
\rput(0,20){\pscircle*[linecolor=red](0,0){.7}}
\rput(7,0){2}
\rput(-2,0){1}
\rput(7,2.5){3}

\rput(9,20){$2k+1$}
\rput(-4,20){$2k+2$}
\rput(2,-3){$\Gamma_{\mathbf m}$}
\end{pspicture}
& \psset{unit=2mm,linewidth=.4mm,dimen=middle}
\begin{pspicture}(0,-2)(5,22)
\psline[arrowsize=1.5]{->}(0,10)(5,10)
\end{pspicture}
&
\psset{unit=2mm,linewidth=.4mm,dimen=middle}
\begin{pspicture}(0,-2)(12,22)
\psline(5,0)(5,20)
\multips(.25,-2.5)(0,5){5}{\pscircle(2.5,2.5){2}}
\rput(0,10){\psarc[linecolor=blue,arrowsize=1.5]{->}{11.5}{-60}{0}}
\rput(0,10){\psarc[linecolor=blue,arrowsize=1.5]{-}{11.5}{0}{60}}
\rput(1,7.5){\psarc[linecolor=blue,arrowsize=1.5]{->}{8.75}{-60}{0}}
\rput(1,7.5){\psarc[linecolor=blue,arrowsize=1.5]{-}{8.75}{0}{60}}
\rput(2.5,5){\psarc[linecolor=blue,arrowsize=1.5]{->}{6}{-60}{0}}
\rput(2.5,5){\psarc[linecolor=blue,arrowsize=1.5]{-}{6}{0}{60}}
\rput(1,12.5){\psarc[linecolor=blue,arrowsize=1.5]{->}{8.75}{-60}{0}}
\rput(1,12.5){\psarc[linecolor=blue,arrowsize=1.5]{-}{8.75}{0}{60}}
\rput(2.5,10){\psarc[linecolor=blue,arrowsize=1.5]{->}{6}{-60}{0}}
\rput(2.5,10){\psarc[linecolor=blue,arrowsize=1.5]{-}{6}{0}{60}}
\rput(2.5,15){\psarc[linecolor=blue,arrowsize=1.5]{->}{6}{-60}{0}}
\rput(2.5,15){\psarc[linecolor=blue,arrowsize=1.5]{-}{6}{0}{60}}
\multips(5,0)(0,5){5}{\pscircle*[linecolor=red](0,0){.7}}

\rput(7,0){1}
\rput(7,5){2}
\rput(9,20){$k+1$}
\rput(4,-4){$\Gamma'_{\mathbf m}$}

\end{pspicture}
&\psset{unit=2mm,linewidth=.4mm,dimen=middle}
\begin{pspicture}(0,-2)(5,22)
\psline[arrowsize=1.5]{->}(0,10)(5,10)
\end{pspicture}&
\psset{unit=2mm,linewidth=.4mm,dimen=middle}
\begin{pspicture}(0,-2)(10,22)
\psline(5,0)(5,20)
\multips(.25,-2.5)(0,5){5}{\pscircle(2.5,2.5){2}}
\rput(.75,0){$\times$}
\rput(.75,5){$\times $}
\rput(.75,10){$\times $}
\rput(5,2.5){$\times $}
\rput(5,7.5){$\times $}
\rput(5,12.5){$\times $}

\multips(5,0)(0,5){5}{\pscircle*[linecolor=red](0,0){.7}}

\rput(7,0){1}
\rput(7,5){2}
\rput(9,20){$k+1$}
\rput(4,-4){$G_{\mathbf m}$}

\end{pspicture}
\end{tabular}
\vskip.1in
\end{center}
\item
For descending paths,
\begin{center}
\begin{tabular}{ccccccc}
\psset{unit=2mm,linewidth=.4mm,dimen=middle}
\begin{pspicture}(0,-2)(15,22)
\psline(0,17.5)(15,2.5)
\multips(0,17.5)(5,-5){4}{\pscircle*[linecolor=blue](0,0){.7}}
\rput(2,17.5){$m_k$}
\rput(8,12.5){$m_{k-1}$}
\rput(17,2.5){$m_1$}
\rput(7.5,-2){$\mathbf m$}
\end{pspicture}
&
\psset{unit=2mm,linewidth=.4mm,dimen=middle}
\begin{pspicture}(0,-2)(5,22)
\psline[arrowsize=1.5]{->}(0,10)(5,10)
\end{pspicture}&
\psset{unit=2mm,linewidth=.4mm,dimen=middle}
\begin{pspicture}(0,-2)(12,22)
\psline(5,0)(5,20)
\multips(0,0)(0,5){5}{\psline(5,0)}
\multips(0,0)(0,5){5}{\pscircle*[linecolor=red](0,0){.7}}
\rput(0,10){\psarc[linecolor=blue,arrowsize=1.5]{-}{11.5}{-60}{0}}
\rput(0,10){\psarc[linecolor=blue,arrowsize=1.5]{<-}{11.5}{0}{60}}
\rput(1,7.5){\psarc[linecolor=blue]{-}{8.75}{-60}{0}}
\rput(1,7.5){\psarc[linecolor=blue,arrowsize=1.5]{<-}{8.75}{0}{60}}
\rput(2.5,5){\psarc[linecolor=blue]{-}{6}{-60}{0}}
\rput(2.5,5){\psarc[linecolor=blue,arrowsize=1.5]{<-}{6}{0}{60}}
\rput(1,12.5){\psarc[linecolor=blue]{-}{8.75}{-60}{0}}
\rput(1,12.5){\psarc[linecolor=blue,arrowsize=1.5]{<-}{8.75}{0}{60}}
\rput(2.5,10){\psarc[linecolor=blue]{-}{6}{-60}{0}}
\rput(2.5,10){\psarc[linecolor=blue,arrowsize=1.5]{<-}{6}{0}{60}}
\rput(2.5,15){\psarc[linecolor=blue]{-}{6}{-60}{0}}
\rput(2.5,15){\psarc[linecolor=blue,arrowsize=1.5]{<-}{6}{0}{60}}
\multips(5,0)(0,5){5}{\pscircle*[linecolor=red](0,0){.7}}

\rput(7,0){1}
\rput(-2,0){1'}
\rput(7,5){2}
\rput(-2,5){2'}
\rput(9,20){$k+1$}
\rput(-5,20){$(k+1)'$}
\rput(2,-3){$\Gamma_{\mathbf m}$}
\end{pspicture}
& \psset{unit=2mm,linewidth=.4mm,dimen=middle}
\begin{pspicture}(0,-2)(5,22)
\psline[arrowsize=1.5]{->}(0,10)(5,10)
\end{pspicture}
&
\psset{unit=2mm,linewidth=.4mm,dimen=middle}
\begin{pspicture}(0,-2)(12,22)
\psline(5,0)(5,20)
\multips(.25,-2.5)(0,5){5}{\pscircle(2.5,2.5){2}}
\rput(0,10){\psarc[linecolor=blue,arrowsize=1.5]{-}{11.5}{-60}{0}}
\rput(0,10){\psarc[linecolor=blue,arrowsize=1.5]{<-}{11.5}{0}{60}}
\rput(1,7.5){\psarc[linecolor=blue]{-}{8.75}{-60}{0}}
\rput(1,7.5){\psarc[linecolor=blue,arrowsize=1.5]{<-}{8.75}{0}{60}}
\rput(2.5,5){\psarc[linecolor=blue]{-}{6}{-60}{0}}
\rput(2.5,5){\psarc[linecolor=blue,arrowsize=1.5]{<-}{6}{0}{60}}
\rput(1,12.5){\psarc[linecolor=blue]{-}{8.75}{-60}{0}}
\rput(1,12.5){\psarc[linecolor=blue,arrowsize=1.5]{<-}{8.75}{0}{60}}
\rput(2.5,10){\psarc[linecolor=blue]{-}{6}{-60}{0}}
\rput(2.5,10){\psarc[linecolor=blue,arrowsize=1.5]{<-}{6}{0}{60}}
\rput(2.5,15){\psarc[linecolor=blue]{-}{6}{-60}{0}}
\rput(2.5,15){\psarc[linecolor=blue,arrowsize=1.5]{<-}{6}{0}{60}}
\multips(5,0)(0,5){5}{\pscircle*[linecolor=red](0,0){.7}}

\rput(7,0){1}
\rput(7,5){2}
\rput(9,20){$k+1$}
\rput(4,-4){$\Gamma'_{\mathbf m}$}

\end{pspicture}
&\psset{unit=2mm,linewidth=.4mm,dimen=middle}
\begin{pspicture}(0,-2)(5,22)
\psline[arrowsize=1.5]{->}(0,10)(5,10)
\end{pspicture}&
\psset{unit=2mm,linewidth=.4mm,dimen=middle}
\begin{pspicture}(0,-2)(10,22)
\psline(5,0)(5,20)
\multips(.25,-2.5)(0,5){5}{\pscircle(2.5,2.5){2}}
\rput(.75,0){$=$}
\rput(.75,5){$=$}
\rput(.75,10){$=$}
\rput(5,2.5){$=$}
\rput(5,7.5){$=$}
\rput(5,12.5){$=$}

\multips(5,0)(0,5){5}{\pscircle*[linecolor=red](0,0){.7}}

\rput(7,0){1}
\rput(7,5){2}
\rput(9,20){$k+1$}
\rput(4,-4){$G_{\mathbf m}$}

\end{pspicture}
\end{tabular}
\vskip.1in
\end{center}
\item 
For flat paths,
\begin{center}
\begin{tabular}{ccccccc}
\psset{unit=2mm,linewidth=.4mm,dimen=middle}
\begin{pspicture}(0,-2)(7,22)
\psline(0,17.5)(0,2.5)
\multips(0,17.5)(0,-5){4}{\pscircle*[linecolor=blue](0,0){.7}}
\rput(2,17.5){$m_k$}
\rput(3,12.5){$m_{k-1}$}
\rput(2,2.5){$m_1$}
\rput(4,-2){$\mathbf m$}
\end{pspicture}
&
\psset{unit=2mm,linewidth=.4mm,dimen=middle}
\begin{pspicture}(0,-2)(5,22)
\psline[arrowsize=1.5]{->}(0,10)(5,10)
\end{pspicture}&
\psset{unit=2mm,linewidth=.4mm,dimen=middle}
\begin{pspicture}(0,-2)(12,12)
\psline(5,0)(5,20)
\multips(0,0)(0,5){5}{\psline(5,0)}
\multips(0,0)(0,5){5}{\pscircle*[linecolor=red](0,0){.7}}
\multips(5,0)(0,5){5}{\pscircle*[linecolor=red](0,0){.7}}

\rput(7,0){1}
\rput(-2,0){1'}
\rput(7,5){2}
\rput(-2,5){2'}
\rput(9,20){$k+1$}
\rput(-5,20){$(k+1)'$}
\rput(2,-3){$\Gamma_{\mathbf m}$}
\end{pspicture}
& \psset{unit=2mm,linewidth=.4mm,dimen=middle}
\begin{pspicture}(0,-2)(5,12)
\psline[arrowsize=1.5]{->}(0,10)(5,10)
\end{pspicture}
&
\psset{unit=2mm,linewidth=.4mm,dimen=middle}
\begin{pspicture}(0,-2)(12,22)
\psline(5,0)(5,20)
\multips(.25,-2.5)(0,5){5}{\pscircle(2.5,2.5){2}}
\multips(5,0)(0,5){5}{\pscircle*[linecolor=red](0,0){.7}}

\rput(7,0){1}
\rput(7,5){2}
\rput(9,20){$k+1$}
\rput(4,-4){$\Gamma'_{\mathbf m}$}

\end{pspicture}
&\psset{unit=2mm,linewidth=.4mm,dimen=middle}
\begin{pspicture}(0,-2)(5,12)
\psline[arrowsize=1.5]{->}(0,10)(5,10)
\end{pspicture}&
\psset{unit=2mm,linewidth=.4mm,dimen=middle}
\begin{pspicture}(0,-2)(10,22)
\psline(5,0)(5,20)
\multips(.25,-2.5)(0,5){5}{\pscircle(2.5,2.5){2}}

\multips(5,0)(0,5){5}{\pscircle*[linecolor=red](0,0){.7}}

\rput(7,0){1}
\rput(7,5){2}
\rput(9,20){$k+1$}
\rput(4,-4){$G_{\mathbf m}$}

\end{pspicture}
\end{tabular}
\end{center}
\end{enumerate}
\vskip.1in
In the final graphs $G_{\bm}$ for ascending and descending Motzkin paths, 
the symbols $\times$ and $=$ along certain edges indicates that the weights 
have been renormalized from the initial values $x_{2i-1}$ and $x_{2i}$ to 
the values $x_{2i-1}-x_{2i}/x_{2i+1}$ and $x_{2i}/x_{2i+1}$ for $=$ (see Eq.\eqref{equals})
and to the values $x_{2i-1}+x_{2i}$ and $x_{2i}x_{2i+1}$ for $\times$ (see Eq.\eqref{crosses}).
For $x_i=ty_i$ with the appropriate labeling of skeleton weights,
these are precisely the weights $\widehat y$ of Definition \ref{weightdefco} of Section \ref{sectiontwo}. 

Gluing two path pieces, $\bm$ followed by $\bm'$ into a longer Motzkin path  corresponds to identifying the two top vertices (and incident edges) of $G_\bm$, the lower graph, with the bottom two of $G_{\bm'}$, the top graph. In the cases where the weights have been renormalized in the lower part of the upper path, the renormalized weights appear in the glued graph.

\section{Proof of Theorem \ref{positrkthree}}\label{appendixb}
Due to the invariances of the system \eqref{rk3one}-\eqref{rk3two}, we only need
to express the solutions $\bu_n$ in terms of the initial data 
$\bx_0=(\bu_0,\bu_1,\bu_2,...,\bu_{2k})$. Indeed, introducing the anti-automorphism $\varphi$
such that $\varphi({\bf 1})={\bf 1}$, $\varphi(a b)=\varphi(b)\varphi(a)$ for
all $a,b\in \cA$, and $\varphi(\bu_i)=\bu_{i+1}$, for $i=0,1,...,2k$,
we easily find that $\varphi(\bu_n)=\bu_{n+1}$ for all $n\in \Z$. So if we have
an expression $\bu_n=f_n(\bu_0,\bu_1,...,\bu_{2k})$ for the general solution of \eqref{rk3one}-\eqref{rk3two}
in terms of the initial data $\bx_0$, we may apply $\varphi$ iteratively on it to get
$\bu_{n}=\varphi^i(f_{n-i}(\bu_0,\bu_1,...,\bu_{2k}))=g_{n,i}(\bu_i,\bu_{i+1},\ldots,\bu_{i+2k})$, namely an expression
for $\bu_n$ in terms of any other initial data $\bx_i$, for all $n,i\in\Z$.
We may also restrict ourselves to $\bu_n$ with $n\in\Z_+$. Indeed, let $\psi$ be the anti-automorphism
such that $\psi(\bu_i)=\bu_{2k-1-i}$, $i=0,1,...,2k$, then we have $\psi(\bu_n)=\bu_{2k-1-n}$
for all $n\in \Z$. Hence if we have an expression $\bu_n=f_n(\bu_0,\bu_1,...,\bu_{2k})$ for $n\geq 2k+1$,
then applying $\varphi\psi$ to it gives $\bu_{2k-n}=h_n(\bu_{0},\bu_1,...,\bu_{2k})$, 
namely an expression for $\bu_n$,
$n\leq 0$, in terms of $\bx_0$. Moreover all of the above expressions are positive Laurent polynomials
iff $f_n$ is a Laurent polynomial for $n\geq 0$.

By the above remarks, we only need prove the statement of the theorem
for $n\geq 0$ and $i=0$.
We first need the following lemma, expressing the integrability of the system \eqref{rk3one}-\eqref{rk3two}.

\begin{lemma} \label{constint}
There exist two linear recursion relations of the form:
\begin{eqnarray*}\bu_{2n+2k+1}- \bu_{2n+1}\bK +\bu_{2n-2k+1} &=&0 \\
\bu_{2n+2k}-\bK \bu_{2n} +\bu_{2n-2k} &=&0 
\end{eqnarray*}
for all $n\in \Z$, and with $K$ expressed in terms of $\bx_0$ as:
\begin{equation}\label{valK}
\bK=\bu_0\bu_{2k}^{-1}+\bu_{2k}\bu_0^{-1}+\sum_{j=1}^k \bu_{2j-1}^{-1}(\bu_{2j}^{-1}+\bu_{2j+2}^{-1})
\end{equation}
\end{lemma}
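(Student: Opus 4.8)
The plan is to exploit the $\Z$-symmetry of the system, encoded by the anti-automorphism $\varphi$ introduced above (with $\varphi(\bu_n)=\bu_{n+1}$ for all $n$), to produce a single conserved element $\bK$ governing both recursions. First I would separate the two parities by setting $v_n:=\bu_{2n}$ and $w_n:=\bu_{2n+1}$, so that $\varphi(v_n)=w_n$ and $\varphi(w_n)=v_{n+1}$. In these variables the system \eqref{rk3one}--\eqref{rk3two} reads, for all $n\in\Z$,
\begin{equation*}
w_{n+k}v_n=w_n v_{n+k}+\bone,\qquad w_n v_{n+k+1}=w_{n+k}v_{n+1}+\bone,
\end{equation*}
and one checks immediately that $\varphi$ exchanges these two relations, which is exactly why $\varphi$ is a well-defined symmetry. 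The statement of the lemma is then equivalent to the assertion that there is an element $\bK\in\cA$, independent of $n$, with $w_{n+k}+w_{n-k}=w_n\bK$ and $v_{n+k}+v_{n-k}=\bK v_n$ for all $n$.

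Next I would extract $\bK$ from the first relation alone. Right-multiplying $w_{n+k}v_n=w_nv_{n+k}+\bone$ by $v_n^{-1}$ gives $w_{n+k}=w_nv_{n+k}v_n^{-1}+v_n^{-1}$, while the same relation taken at $n-k$, namely $w_nv_{n-k}=w_{n-k}v_n+\bone$, gives $w_{n-k}=w_nv_{n-k}v_n^{-1}-v_n^{-1}$. Adding, the $v_n^{-1}$ terms cancel and
\begin{equation*}
w_{n+k}+w_{n-k}=w_n\,(v_{n+k}+v_{n-k})\,v_n^{-1}.
\end{equation*}
Hence the element $\bK_n:=w_n^{-1}(w_{n+k}+w_{n-k})=(v_{n+k}+v_{n-k})v_n^{-1}$ is well defined and simultaneously satisfies the odd identity $w_n\bK_n=w_{n+k}+w_{n-k}$ and the even identity $\bK_n v_n=v_{n+k}+v_{n-k}$, for each fixed $n$.

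It remains to prove that $\bK_n$ does not depend on $n$, and this is where $\varphi$ does the work. Applying the anti-automorphism to the even identity $\bK_n v_n=v_{n+k}+v_{n-k}$ yields $w_n\,\varphi(\bK_n)=w_{n+k}+w_{n-k}=w_n\bK_n$, whence $\varphi(\bK_n)=\bK_n$. Applying it instead to the odd identity $w_n\bK_n=w_{n+k}+w_{n-k}$ yields $\varphi(\bK_n)\,v_{n+1}=v_{n+k+1}+v_{n-k+1}=\bK_{n+1}v_{n+1}$, whence $\varphi(\bK_n)=\bK_{n+1}$. Combining the two equalities gives $\bK_{n+1}=\varphi(\bK_n)=\bK_n$, so $\bK:=\bK_n$ is a single conserved element and both linear recursions of the lemma hold for all $n\in\Z$.

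Finally I would evaluate $\bK=\bK_0=\bu_1^{-1}(\bu_{2k+1}+\bu_{1-2k})$, using \eqref{rk3one}--\eqref{rk3two} to eliminate the variables lying outside the initial cluster $\bx_0=(\bu_0,\dots,\bu_{2k})$, and then reorganize the result into the closed form \eqref{valK}. I expect this last bookkeeping step --- tracking the noncommutative monomials produced when solving the defining relations for $\bu_{2k+1}$ (and for the negative-index variables) and grouping them into the manifestly symmetric sum \eqref{valK} --- to be the only genuinely computational part of the argument; the existence and conservation of $\bK$, by contrast, follow cleanly and uniformly in $k$ from the single relation \eqref{rk3one} together with the $\varphi$-symmetry of the system.
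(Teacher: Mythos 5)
Your argument is correct, and its second half takes a genuinely different route from the paper's. The first step is the same: from \eqref{rk3one} taken at $n$ and at $n-k$ you obtain $(\bu_{2n+2k+1}+\bu_{2n-2k+1})\bu_{2n}=\bu_{2n+1}(\bu_{2n+2k}+\bu_{2n-2k})$, i.e.\ your two-sided element $\bK_n$, which is precisely the paper's identity $\bK_n=\bL_n$. For the $n$-independence, the paper computes directly with \eqref{rk3two} at $n$ and at $n-k$, showing $\bu_{2n+1}\bK_n\bu_{2n+2}=\bu_{2n+1}\bL_{n+1}\bu_{2n+2}$ and hence $\bK_n=\bL_{n+1}=\bK_{n+1}$; you instead transport the two identities by the anti-automorphism $\varphi$, getting $\varphi(\bK_n)=\bK_n$ and $\varphi(\bK_n)=\bK_{n+1}$. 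This is legitimate, since $\varphi(\bu_n)=\bu_{n+1}$ for all $n\in\Z$ is established in the appendix before the lemma, and it is exactly there that \eqref{rk3two} enters your proof (as you observe, $\varphi$ maps \eqref{rk3one} to \eqref{rk3two}). So the two proofs use the same ingredients, but your packaging is cleaner and uniform in $k$, and it yields the bar-type invariance $\varphi(\bK)=\bK$ as a byproduct, which the paper only records separately after its proof.

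The one place where you stop short is the closed formula \eqref{valK}, which is part of the statement; the paper is equally terse there (``easily derived by induction on $k$''), and your plan does go through: \eqref{rk3one} at $n=0$ gives $\bu_1^{-1}\bu_{2k+1}=\bu_{2k}\bu_0^{-1}+\bu_1^{-1}\bu_0^{-1}$, while $\bu_1^{-1}\bu_{1-2k}$ telescopes by alternating \eqref{rk3two} at $n=j-1-k$ and \eqref{rk3one} at $n=j-k$, via
\begin{equation*}
\bu_{2j-1}^{-1}\bu_{2j-1-2k}=\bu_{2j-1}^{-1}\bu_{2j}^{-1}+\bu_{2j+1}^{-1}\bu_{2j}^{-1}+\bu_{2j+1}^{-1}\bu_{2j+1-2k},
\end{equation*}
terminating with $\bu_{2k-1}^{-1}\bu_{-1}=\bu_0\bu_{2k}^{-1}+\bu_{2k-1}^{-1}\bu_{2k}^{-1}$. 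Carrying this out, be aware that you will land on $\bK=\bu_0\bu_{2k}^{-1}+\bu_{2k}\bu_0^{-1}+\sum_{j=1}^k\bu_{2j-1}^{-1}(\bu_{2j}^{-1}+\bu_{2j-2}^{-1})$: the index $2j+2$ in the printed \eqref{valK} is a misprint for $2j-2$ (already for $k=1$ the printed formula would involve $\bu_4\notin\bx_0$, whereas a direct check gives $\bK=\bu_2\bu_0^{-1}+\bu_0\bu_2^{-1}+\bu_1^{-1}(\bu_0^{-1}+\bu_2^{-1})$).
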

\begin{proof}
Introduce 
$$ \bK_n=\bu_{2n+1}^{-1}(\bu_{2n+2k+1}+\bu_{2n-2k+1})\qquad \bL_n=(\bu_{2n+2k}+\bu_{2n-2k})\bu_{2n}^{-1} $$
Subtracting \eqref{rk3one} for $n\to n-k$ from that for $n$, we get 
$(\bu_{2n+2k+1}+\bu_{2n-2k+1})\bu_{2n}=\bu_{2n+1}(\bu_{2n-2k}+\bu_{2n+2k})$, hence $\bK_n=\bL_n$.
Moreover, 
\begin{eqnarray*}\bu_{2n+1}K_n\bu_{2n+2}&=&\bu_{2n+1}\bu_{2n+2k+2}-{\bf 1}+\bu_{2n-2k+1}\bu_{2n+2}\\
&=& \bu_{2n+1}(\bu_{2n+2k+2}+\bu_{2n-2k+2})=\bu_{2n+1} L_{n+1} \bu_{2n+2}
\end{eqnarray*}
hence $\bK_n=\bL_{n+1}=\bK_{n+1}=\bK$ is a conserved quantity of the system \eqref{rk3one}-\eqref{rk3two}.
Eq. \eqref{valK} is easily derived by induction on $k$, and the lemma follows.
\end{proof}

Note that $\varphi(\bL_n)=\bK_n$, hence $\varphi(\bK)=\bK$. Analogously, $\psi(\bL_n)=\bK_{k-1-n}$,
hence $\psi(\bK)=\bK$ as well.

Introducing the generating functions 
$$ \bF_i(t)=\sum_{n=0}^\infty t^n \bu_{2i+2kn} \qquad G_i(t)=\sum_{n=0}^\infty t^n \bu_{2i+1+2kn} $$
for $i=0,1,...,k-1$ and the weights:
\begin{eqnarray*}
\bx_{2i}&=&\bu_{2i+2k}\bu_{2i}^{-1}\\  
\bz_{2i}&=&\bK-\bx_{2i}=\bu_{2i-2k}\bu_{2i}^{-1}\\ 
\bbw_{2i}&=&\bz_{2i}\bx_{2i}-{\bf 1}\\
\bx_{2i+1}&=&\bu_{2i+1}^{-1}\bu_{2i+1+2k}\\  
\bz_{2i+1}&=&\bK-\bx_{2i+1}=\bu_{2i+1}^{-1}\bu_{2i+1-2k}\\ 
\bbw_{2i+1}&=&\bx_{2i+1}\bz_{2i+1}-{\bf 1}
\end{eqnarray*}
Lemma \ref{constint} implies the following continued fraction expressions:
\begin{eqnarray*}
\bF_i(t)&=& ({\bf 1}-(\bx_{2i}+\bz_{2i})t+t^2)^{-1}({\bf 1}-\bz_{2i} t) \bu_{2i}\\
&=&\left( {\bf 1}-\bx_{2i} t-({\bf 1}-\bz_{2i}t)^{-1}\bbw_{2i}t^2\right)^{-1}\bu_{2i}\\
\bG_i(t)&=& \bu_{2i+1} ({\bf 1}-\bz_{2i+1} t)({\bf 1}-(\bx_{2i+1}+\bz_{2i+1})t+t^2)^{-1}\\
&=&\bu_{2i+1}\left( {\bf 1}-t\bx_{2i+1}-t^2\bbw_{2i+1}({\bf 1}-\bz_{2i+1} t)^{-1}\right)^{-1}
\end{eqnarray*}
whose series expansions in $t$ have coefficients
that are manifestly positive Laurent polynomials of $(\bx_i,\bz_i,\bbw_i)$. 
Finally, we see that the $\bx$'s are defined recursively by $\bx_0=\bu_{2k}\bu_0^{-1}$
and $\bx_{2j}=\bx_{2j-1}+\bu_{2j-1}^{-1}\bu_{2j}^{-1}$, $\bx_{2j+1}=\bx_{2j}+\bu_{2j+1}^{-1}\bu_{2j}^{-1}$,
hence they form subsums of $\bK$, and so do the $\bz$'s. Also, all $\bx$'s contain the term
$\bu_{2k}\bu_0^{-1}$, while all $\bz$'s contain $\bu_0\bu_{2k}^{-1}$, hence both $\bx_i\bz_i$ and $\bz_i\bx_i$
contain $\bf 1$: we conclude that the $\bx,\bz,\bbw$'s are all positive Laurent polynomials of $\bx_0$.
This completes the proof of Theorem \ref{positrkthree}.

The existence of the linear recursion relations of Lemma \ref{constint} implies by
Proposition \ref{vaniprop} that all $3\times 3$ quasi-Wronskians
$\bW_{2kn+i}=\left\vert (\bu_{2kn+i-2a-2b+8})_{1\leq a,b\leq 3}\right\vert_{1,1}$ must vanish.
Each function $\bF_i(t)$, $i=0,1,...,k-1$, is actually a particular instance of the general non-commutative
case with $r=1$,
discussed in Section \ref{section3}. It is the generating function of paths on the graph $G_1$,
with respective weights $\bw_{1,2}=\bone$ and:
$$(\bw_{1,1},\bw_{2,1},\bw_{2,2})=(t(\by_1+\by_2),t^2\by_3\by_2,t\by_3)=(t\bx_{2i},t^2\bbw_{2i},t\bz_{2i})$$
As we also have a condition $\bbw_{2i}=\bz_{2i}\bx_{2i}-{\bf 1}$, we deduce that the conserved quantity
$C_{2,n}=\bC=\by_3\by_1=\bf 1$. This provides us with examples in the general $A_1$ setting
that are different from the non-commutative $A_1$ $Q$-system of 
Section \ref{cononcosec}. In the latter indeed,
$\bC$ is kept arbitrary, but the value of $\Delta_{2,n}=\bR_{n+1}-\bR_n\bR_{n-1}^{-1}\bR_n$ is fixed by the $Q$-system
relation $\Delta_{2,n}=\bR_n^{-1}\bR_{n-1}\bR_n$. Here we have $\bC=\bf 1$, hence
the recursion relation $\Delta_{2,n+1}=\bR_{n-1}\bR_n^{-1}\Delta_{2,n}$ 
(with $\bR_n=\bu_{2i+2kn}$), but there is no such additional relation. Note that $\bC=\bf 1$ also
implies that there is no quantum version in which $\bC$ would be a non-trivial
central element of $\cA$. This is in agreement with the above-mentioned fact that the exchange matrix
of the associated cluster algebra is singular.

\end{appendix}


\begin{thebibliography}{10}
\bibitem{AssemReit} I. Assem, C. Reutenauer and D. Smith, \emph{Frises}. 
Preprint {\tt arXiv:0906.2026 [math.RA].}

\bibitem{BazResh} V. Bazhanov and N. Reshetikhin, 
Restricted solid-on-solid models connected with simply laced algebras and conformal field theory.
J. Phys. A {\bf 23} (1990) 1477--1492. 

\bibitem{BZ}A. Berenstein, A. Zelevinsky, \emph{Quantum Cluster Algebras},
Adv. Math. {\bf 195} (2005) 405--455. 
{\tt arXiv:math/0404446 [math.QA]}.



    
\bibitem{CZ} P. Caldero and A. Zelevinsky \emph{Laurent 
expansions in cluster algebras via quiver representations}. 
Mosc. Math. J., \textbf{6}  No. 3 (2006), 411-429. {\tt arXiv:math/0604054
    [math.RT]}.
    
        
\bibitem{DFT} P. Di Francesco, \emph{The solution of the $A_r$ $T$-system with
arbitrary boundary}, preprint (2010)
{\tt  arXiv:1002.4427 [math.CO]}.
    
    
\bibitem{DFK08} P. Di Francesco and R. Kedem, \emph{Q-systems as
    cluster algebras II: Cartan matrix of finite type and the polynomial 
    property}, Lett. Math. Phys. {\bf 89} No 3 (2009) 183-216. 
 {\tt arXiv:0803.0362 [math.RT]}.
    

\bibitem{DFK3} P. Di Francesco and R. Kedem, \emph{Q-systems, heaps,
paths and cluster positivity}, Comm. Math. Phys. {\bf 293} No. 3 (2009) 727--802,
DOI 10.1007/s00220-009-0947-5.
{\tt arXiv:0811.3027 [math.CO]}.

\bibitem{DFK09}P. Di Francesco and R. Kedem, 
\emph{$Q$-system cluster algebras, paths and total positivity}, 
SIGMA {\bf 6} (2010) 014, 36 pages,
{\tt arXiv:0906.3421 [math.CO]}.

\bibitem{DFK09a}P. Di Francesco and R. Kedem, \emph{Positivity of the
$T$-system cluster algebra},  Elec. Jour. of Comb. Vol. {\bf 16(1)}
(2009) R140, Oberwolfach preprint OWP 2009-21, 
{\tt arXiv:0908.3122 [math.CO]}.

\bibitem{DFK09b}P. Di Francesco and R. Kedem, 
\emph{Discrete non-commutative integrability: proof of a 
conjecture by M. Kontsevich}, Int. Math. Res. Notices (2010),  
doi:10.1093/imrn/rnq024.
{\tt arXiv:0909.0615 [math-ph]}.


\bibitem{EtingofRetakh} P. Etingof and V. Retakh, \emph{Quantum determinants and 
quasideterminants}, Asian J. Math {\bf 3} (1999) 345--351.

\bibitem{FKV} L.D. Faddeev, R.M. Kashaev and A.Y. Volkov, {\em Strongly coupled quantum discrete Liouville theory. I. Algebraic approach and duality}, 
Commun. Math. Phys. {\bf 219} (2001), pp. 199Ð219.
{\tt arXiv:hep-th/0006156}.


\bibitem{FV} L.D. Faddeev and A.Y. Volkov {\em
Discrete evolution for the zero modes of the quantum Liouville model.}
J. Phys. A {\bf 41} (2008), no. 19, 194008, 12 pp. 
{\tt arXiv:0803.0230 [hep-th]}.




\bibitem{FG} S Fomin and C. Green, {\em Noncommutative Schur functions 
and their applications}, Discrete Math. {\bf 306} No.10-11 (2006) 1080--1096.


\bibitem{FZI} S. Fomin and A. Zelevinsky Cluster Algebras I.
  J. Amer. Math. Soc.   \textbf{15}  (2002),  no. 2, 497--529 {\tt
    arXiv:math/0104151 [math.RT]}.

\bibitem{FZLaurent} S. Fomin And A. Zelevinsky \emph{The Laurent
    phenomenon}.   Adv. in Appl. Math.   \textbf{28}  (2002),  no. 2,
  119--144. {\tt arXiv:math/0104241 [math.CO]}. 
  
\bibitem{FR} E. Frenkel and N. Reshetikhin, {\em The
  $q$-characters of representations of quantum affine algebras and
  deformations of $W$-algebras}. In { Recent developments in quantum affine
  algebras and related topics (Raleigh, NC 1998)}, Contemp. Math. {\bf
  248} (1999),
  163--205.
  
\bibitem{GGRW} I. Gelfand, S. Gelfand, V. Retakh, and R.L. Wilson, 
{\em Quasideterminants.} Adv. Math. {\bf 193} No.1 (2005), 56--141.
{\tt arXiv:math/0208146v4 [math.QA]}.


\bibitem{GKLLRT} I. Gelfand, D. Krob, A. Lascoux, B. Leclerc, 
V. Retakh, and J.-Y. Thibon, {\em Noncommutative symmetric functions.}  
Adv. Math. {\bf 112} No. 2. (1995), 218--348.
{\tt arXiv:hep-th/9407124}



\bibitem{LGV2} I. M. Gessel and X. Viennot, \emph{Binomial
    determinants, paths and hook formulae}, Adv. Math. \textbf{58}
  (1985) 300-321.  

	

\bibitem{Ke07} R. Kedem, \emph{$Q$-systems as cluster algebras}.  J.
  Phys. A: Math. Theor. \textbf{41} (2008) 194011 (14 pages). {\tt
    arXiv:0712.2695 [math.RT]}.

\bibitem{KR} A.~N. Kirillov and N.~Yu. Reshetikhin,
  \emph{Representations of Yangians 
and multiplicity of occurrence of the irreducible components of the
tensor product 
of representations of simple {L}ie algebras}, J. Sov. Math. {\bf 52}
(1990) 3156-3164. 

\bibitem{Kon} M. Kontsevich, private communication.



\bibitem{KNS} A. Kuniba, T. Nakanishi and J. Suzuki, Functional relations in solvable lattice models. 
I. Functional relations and representation theory.
Internat. J. Modern Phys. A {\bf 9} (1994) 5215--5266. 

\bibitem{LGV1} B. Lindstr\"om, \emph{On the vector representations of
	induced matroids}, Bull. London Math. Soc. \textbf{5} (1973)
	85-90. 

\bibitem{PM} G. Musiker and J. Propp \emph{Combinatorial interpretations for rank-two 
cluster algebras of affine type},
Elec. Jour. of Combinatorics {\bf 14} (2007), R15.
{\tt arXiv:math/0602408.}

\bibitem{Nakajima} H. Nakajima, $t$-Analogs of $q$-characters of Kirillov-Reshetikhin modules of quantum affine algebras. 
Represent. Theory {\bf 7} (2003), 259--274.


             
\bibitem{SZ} P. Sherman and A. Zelevinsky,  \emph{Positivity and canonical bases 
in rank 2 cluster algebras of finite and affine types}, 
Mosc. Math. J. \textbf{4}, (2004), no. 4, 947-974, {\tt
arXiv:math/0307082 [math.RT]}.








\end{thebibliography}
\end{document}